\newcommand{\macrocolor}{.}%{green!80!black}%red to see what are macros
\newcommand{\mathmacro}[1]{{\ensuremath{\textcolor{\macrocolor}{#1}}}}
\def\eps{\mathmacro{\varepsilon}}
\def\bR{\mathmacro{\mathbb R}}
\def\bN{\mathmacro{\mathbb N}}
\def\bX{\mathmacro{\mathbb X}}
\def\cD{\mathmacro{\mathcal{D}}}
\def\localG{\mathmacro{\mathcal{L}}}
\def\globalG{\mathmacro{\mathcal{G}}}
\def\redGlobG{\mathmacro{\widetilde{\globalG}}}
\def\unifyG{\mathmacro{\mathcal{U}}}
\renewcommand{\emph}[1]{\textit{\textbf{#1}}}
\newcommand{\rev}[1]{\mathmacro{\mathrm{rev}({#1})}}
\newcommand{\candW}[1]{\mathmacro{\widehat{w}_{#1}}}
\newcommand{\molW}[1]{\mathmacro{{w}_{#1}}}
\newcommand{\sweepseq}{\mathmacro{\mathfrak{s}}}
\newcommand{\edgeseqs}[1]{\mathmacro{\mathfrak{S}_{#1}}}
\newcommand{\dfree}[3][\Delta]{%
    \ifthenelse{\equal{#2}{} }
    {\mathmacro{\cD_{#1}}}
    {\mathmacro{\cD^{(#2,#3)}_{#1}}}
}
\newcommand{\kopt}{\mathmacro{{k_\Delta}}}
\newcommand{\ConcreteCandidates}[1]{\mathmacro{{{C_{#1}}}}}
\newcommand{\df}{\mathmacro{{\mathrm{d}_\mathcal{F}}}}
\newcommand{\Cov}{\mathmacro{{\mathrm{Cov}}}}
\newcommand{\proxyCov}{\mathmacro{{\widehat{\Cov}}}}
\newcommand{\curvespace}[1]{\mathmacro{{\mathbb{X}^d_{#1}}}}
\newcommand{\atomic}[1]{\mathmacro{{\mathcal{A}_{#1}}}}
\newcommand{\molecular}[1]{\mathmacro{{\mathcal{M}_{#1}}}}
\newcommand{\extremal}[1]{\mathmacro{{\mathcal{E}_{#1}}}}
\newcommand{\combinatorial}{\mathmacro{{\mathcal{R}}}}
\newcommand{\bad}{\mathmacro{{\mathcal{B}}}}
\newcommand{\nonprob}{\mathmacro{{\widehat{\mathcal{S}}}}}
\renewcommand{\O}{\mathmacro{{\mathcal{O}}}}
\newcommand{\tildeO}{\mathmacro{{\Tilde{\mathcal{O}}}}}
\newcommand{\coarsity}{{\mathmacro{{\alpha}}}}
\def\moverlay{\mathpalette\mov@rlay}
\def\mov@rlay#1#2{\leavevmode\vtop{%
   \baselineskip\z@skip \lineskiplimit-\maxdimen
   \ialign{\hfil$\m@th#1##$\hfil\cr#2\crcr}}}
\newcommand{\charfusion}[3][\mathord]{
    #1{\ifx#1\mathop\vphantom{#2}\fi
        \mathpalette\mov@rlay{#2\cr#3}
      }
    \ifx#1\mathop\expandafter\displaylimits\fi}
\newcommand{\anne}[1]{\textcolor{blue}{Anne: #1}}
\title{Subtrajectory Clustering and Coverage Maximization in Cubic Time, or Better}
\author{Jacobus Conradi}{University of Bonn, Bonn, Germany}{jacobus.conradi@gmx.de}{https://orcid.org/0000-0002-8259-1187}{Partially funded by the Deutsche Forschungsgemeinschaft (DFG, German Research Foundation) - 313421352 (FOR 2535 Anticipating Human Behavior) and the iBehave Network: Sponsored by the Ministry of Culture and Science of the State of North Rhine-Westphalia. Affiliated with Lamarr Institute for Machine Learning and Artificial Intelligence.}
\author{Anne Driemel}{University of Bonn, Germany}{driemel@cs.uni-bonn.de}{https://orcid.org/0000-0002-1943-2589}{Affiliated with Lamarr Institute for Machine Learning and Artificial Intelligence.}
\authorrunning{Conradi, and Driemel}
\keywords{Clustering, Set cover, Fr\'echet distance, Approximation algorithms}
\begin{document}

\addtocounter{linenumber}{1000}
\maketitle

\begin{abstract}

%%THE CORRECT ONE!!

Many application areas collect unstructured trajectory data. In subtrajectory clustering, one is interested to find patterns in this data using a hybrid combination of segmentation and clustering. We analyze two variants of this problem based on the well-known \textsc{SetCover} and \textsc{CoverageMaximization} problems. In both variants the set system is induced by metric balls under the Fréchet distance centered at polygonal curves. Our algorithms focus on improving the running time of the update step of the generic greedy algorithm by means of a careful combination of sweeps through a candidate space.  In the first variant, we are given a polygonal curve $P$ of complexity $n$, distance threshold $\Delta$ and complexity bound $\ell$ and the goal is to identify a minimum-size set of center curves $\mathcal{C}$, where each center curve is of complexity at most $\ell$ and every point $p$ on $P$ is covered. A point $p$ on $P$ is covered if it is part of a subtrajectory $\pi_p$ of $P$ such that there is a center $c\in\mathcal{C}$ whose Fréchet distance to $\pi_p$ is at most $\Delta$. 
We present an approximation algorithm for this problem with a running time of $\O((n^2\ell + \sqrt{k_\Delta}n^{5/2})\log^2n)$, where $k_\Delta$ is the size of an optimal solution. The algorithm gives a bicriterial approximation guarantee that relaxes the Fréchet distance threshold by a constant factor and the size of the solution by a factor of $\O(\log n)$. 
The second problem variant asks for the maximum fraction of the  input curve $P$ that can be covered using $k$ center curves, where $k\leq n$ is a parameter to the algorithm. 
For the second problem variant, one can show that our techniques lead to an algorithm with a running time of $\O((k+\ell)n^2\log^2 n)$ and similar approximation guarantees. Note that in both algorithms $k,k_\Delta\in O(n)$ and hence the running time is cubic, or better if $k\ll n$.
\end{abstract}

\newpage
\setcounter{tocdepth}{2}
\tableofcontents

\newpage

\section{Introduction} 
Trajectory analysis is a broad field ranging from gait analysis of full-body motion trajectories~\cite{lee2002gait,Qiao2017RealtimeHG,ionescu2013human3}, via traffic analysis~\cite{acmsurvey20} and epidemiological hotspot detection~\cite{HotspotSurvey} to Lagrangian analysis of particle simulations~\cite{VANSEBILLE201849}. Using the current availability of cheap tracking technology, vast amounts of unstructured spatio-temporal data are collected. Clustering is a fundamental problem in this area and can be formulated under various similarity metrics, among them the Fréchet distance. However, the unstructured nature of trajectory data poses the additional challenge that the data first needs to be segmented before the trajectory segments, which we call subtrajectories, can be clustered in a metric space. Subtrajectory clustering aims to find clusters of subtrajectories that represent complex patterns that reoccur along the trajectories~\cite{agarwal2018,buchinGroup20,LiangYWLCXL24}, such as commuting patterns in traffic data.

We study two variants of a problem posed by Akitaya, Brüning, Chambers, and Driemel~\cite{Akitaya2021Covering}, which are based on the well-known set cover and coverage maximization problems. Both rely on the definition of a geometric set system built using metric balls under the Fréchet distance. 
Given a polygonal curve, we are asked to produce a set of `center' curves that either

\begin{compactenum}
    \item has minimum cardinality and covers the entirety of the input curve, or
    \item covers as much as possible of the input curve under cardinality constraints.
\end{compactenum}

In either setting a point $p$ of the input-curve is considered as `covered', if there is a subcurve $\pi_p$ of the input curve that containts $p$ and has small Fréchet distance to some center curve. Note that a point $p$ may be covered by multiple center curves. For the precise formulation refer to \Cref{sec:probdef}. This formulation extends immediately to a set of input curves---instead of one---where one is now tasked to cover the entirety (respectively as much as possible) of all points on all input curves combined.
%In both variants we are given a polygonal curve together with a distance threshold and a complexity bound $\ell$. The goal of the first problem is to identify as few `center' curves as possible such that every point of the input is assigned to some center. %, where each center curve may have complexity at most $\ell$. 
%A point is assigned to a center if it lies on some subtrajectory of the input trajectory, whose Fréchet distance to some center curve is at most the distance threshold. The latter problem asks to cover as much as possible of the curve using a fixed number of center curves.
\begin{comment}
\begin{figure}
\centering
{%
\setlength{\fboxsep}{0pt}%
\setlength{\fboxrule}{1pt}%
\fbox{\includegraphics[width=0.8\textwidth]{interlaced.png}}%
}%
    \caption{Example of $\approx 300$ trajectories of flight paths consisting of $\approx100000$ coordinates across Europe being simplified into $50$ trajectories, each of complexity at most $8$.  \anne{maybe better omit the figure from this version, since we will merge with the other paper that has plenty of figures..}
}
\end{figure}
\end{comment}

%\textcolor{red}{Suggestion for cubic intuition:}
In this paper we study the continuous variant of the problem where we need to cover the continuous set of points given by the input polygonal curve.
%While in this paper, we study the continuous variant---i.e. we need to cover the continuous set of points defined by the input polygonal curve---of the problem, 
It is instructive, however, to first consider a discrete variant of the set cover instance as follows. Given a discrete point sequence $A=a_1,\dots, a_n$, consider the set of contiguous subsequences of the form $A_{ij}=a_i,\dots,a_j$. We say $a_k$ is contained in the set defined by $A_{ij}$ if there exist $i'\leq k \leq j'$ such that $A_{i'j'}$ is similar to $A_{ij}$. Note that a full specification of this set system has size $\Theta(n^3)$ in the worst case. 
This phenomenon is shared with the continuous setting where all known discretizations lead to a cubic-size set system. In this paper, we examine the question if we can obtain subcubic-time algorithms by internally representing the set system and any partial solution implicitly.

%The purpose of the complexity bound is twofold. On the one hand, one would like to extract meaningful centers. The more complexity such a center may have, the more expressive the center can be, and the more complex its representing pattern may be. On the other hand, if $\ell$ is unbounded the solutions may suffer from overfitting.
%Initial works on this problem attempted to cover the input curve with as low-complexity centers as possible, which drastically decreases the practicability of the results~\cite{Brüning2022Faster}. Later approaches~\cite{conradi2023findingcomplexpatternstrajectory,vanderhoog2024fasterdeterministicsubtrajectoryclustering} extended this approach to center curves of non-constant complexity. These works form the basis, on which we build our algorithm and improve their results by up to a linear factor in the input complexity.

\subsection{Basic Notation and Definitions}

An \emph{edge} is a map $e:[0,1]\rightarrow\bR^d$ that is defined by two points $p,q\in\bR^d$ as the linear interpolation $t\mapsto p + t(q-p)$. A \emph{polygonal curve} is a map $P:[0,1]\rightarrow\bR^d$ defined by an ordered set $(p_1,\ldots,p_n)\subset\bR^d$ as the result of concatenating the edges $(e_1,\ldots,e_{n-1})$, where $e_i$ is the edge from $p_i$ to $p_{i+1}$. Any point $P(t)$ is said to lie on the edge $e_i$ if $t$ lies in the preimage of $e_i$. Any point $P(t)$ lies on exactly one edge, except if it defines the endpoint of edge $e_i$ which coincides with the start point of $e_{i+1}$. The number of points $n$ defining the polygonal curve $P$ is its \emph{complexity}, and we denote it by $|P|$. The set of all polygonal curves in $\bR^d$ of complexity at most $\ell$ we denote by $\bX^d_\ell$. For any polygonal curve $P$ and values $0\leq s\leq t\leq 1$ we define $P[s,t]$ to be the subcurve of $P$ from $s$ to $t$, which itself is a polygonal curve of complexity at most $|P|$.

For two polygonal curves $P$ and $Q$ we define the \emph{continuous Fréchet distance} to be 
\[\df(P,Q)=\inf_{f,g}\max_{t\in[0,1]}\|P(f(t))-Q(g(t))\|\]
where $f$ and $g$ range over all non-decreasing surjective functions from $[0,1]$ to $[0,1]$.

We follow the definition introduced in \cite{Akitaya2021Covering} defining the notion of $\Delta$-Coverage. Let $P$ be a polygonal curve in $\bR^d$, and let $\Delta\in\bR_{>0}$ and $\ell\in\bN_{\geq 2}$ be given. For any $\mathcal{Q}\subset\bX^d_\ell$, define the $\Delta$-coverage of $\mathcal{Q}$ on $P$ as
\[\Cov_P(\mathcal{Q},\Delta) = \bigcup_{Q\in\mathcal{Q}}\left(\bigcup_{0\leq s\leq t\leq 1,\,\df(P[s,t],Q)\leq\Delta}[s,t]\right)\subset[0,1].\]
For a curve $Q\in\bX^d_\ell$ we may denote the $\Delta$-coverage $\Cov_P(\{Q\},\Delta)$ by $\Cov_P(Q,\Delta)$.

\subsection{Problem definition}\label{sec:probdef}

\begin{figure*}
    \centering
    \includegraphics{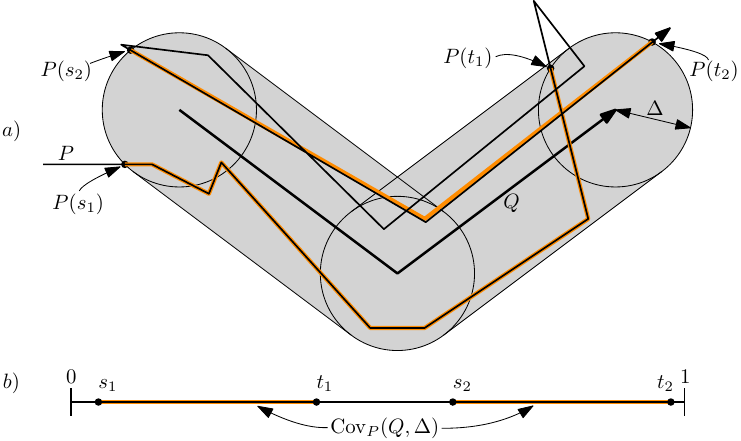}
    \caption{$a)$: Example of all points on $P$ that lie on subcurves of $P$ that have Fréchet distance at most $\Delta$ to a curve $Q$ of complexity $3$. $b)$: The set $\Cov_P(Q,\Delta)\subset[0,1]$.}
    \label{fig:coverage}
\end{figure*}

\subparagraph{Problem 1: Subtrajectory Covering (SC)}
Let $P$ be a polygonal curve in $\bR^d$, and let $\Delta\in\bR_{>0}$ and $\ell\in\bN_{\geq 2}$ be given. The task is to identify a set $\mathcal{Q}$ of curves in $\bX^d_\ell$ minimizing $|\mathcal{Q}|$ such that \(\Cov_P(\mathcal{Q},\Delta)=[0,1]\).

\begin{definition}[Bicriterial approximation for SC]
Let $P$ be a polygonal curve in $\bR^d$, and let $\Delta\in\bR_{>0}$ and $\ell\in\bN_{\geq 2}$ be given. Let $C^*\subset\curvespace{\ell}$ be a set of minimum cardinality, such that $\Cov_P(C^*,\Delta)=[0,1]$. An algorithm that outputs a set $C\subset\curvespace{\ell}$ of size $\alpha|C^*|$ such that $\Cov_P(C,\beta\Delta)=[0,1]$ is called an $(\alpha,\beta)$-approximation for SC.
\end{definition}

\subparagraph{Problem 2: Subtrajectory Coverage Maximization (SCM)}
Let $P$ be a polygonal curve in $\bR^d$, and let $\Delta\in\bR_{>0}$, $\ell\in\bN_{\geq 2}$ and $k\in\bN_{\geq 1}$ be given. The task is to identify a set $\mathcal{Q}$ of $k$ curves in $\bX^d_\ell$ maximizing the Lebesgue-measure \(\left\|\Cov_P(\mathcal{Q},\Delta)\right\|\).

\begin{definition}[Bicriterial approximation for SCM]
Let $P$ be a polygonal curve in $\bR^d$, and let $\Delta\in\bR_{>0}$ and $\ell\in\bN_{\geq 2}$ be given. Let $C^*\subset\curvespace{\ell}$ be a set of size $k$ such that $\left\|\Cov_P(C^*,\Delta)\right\|$ is maximal. An algorithm that outputs a set $C\subset\curvespace{\ell}$ of size $k$ such that 
\(\left\|\Cov_P(C,\beta\Delta)\right\|\geq \alpha\left\|\Cov_P(C^*,\Delta)\right\|\)
is called an $(\alpha,\beta)$-approximation for SCM.
\end{definition}

\subsection{Prior work on Subtrajectory Covering and Coverage Maximization}\label{sec:priorwork}

%\jacobus{Checked for consistency: candidate curve set $\rightarrow$ set system $\rightarrow$ groundset. We should never really talk about the "set system size" as it is ambiguous. Is it the size or the total complexity? They differ by a factor of $n$.}
We first discuss prior work on the SC problem.
The SC problem was introduced by Akitaya, Brüning, Chambers and Driemel \cite{Akitaya2021Covering}.  Their results were later 
%They describe a bicriteria $(\O(\ell\log (\ell\kopt)),19)$-approximation with a running time that dependens on $L/\Delta$, where $L$ denotes the total arc-length of $P$. 
improved by Brüning, Conradi, and Driemel \cite{Brüning2022Faster}. 
%improved upon this and presented a combinatorial  $(\O(\ell\log(\ell\kopt)),11)$-approximation. 
Both works identified a curve $S$ that approximates the input $P$, such that any solution to the problem induces an approximate solution of similar size consisting of only subcurves of $S$. 
This set of subcurves defines a set system which turns out to have low VC-dimension. This enables randomized set cover algorithms with improved approximation factors and expected running time in $\O(n(\kopt)^3(\log^4(\Lambda/\Delta)+\log^3(n/\kopt))+n\log^2n)$, where $\Lambda$ corresponds to the arc length of the curve $P$. 
While the complexity of the main algorithm in \cite{Akitaya2021Covering} depends on the relative arclength of the input curve, the latter \cite{Brüning2022Faster} also identified a set of points on $S$ that define $\O(n^3)$ subcurves of complexity $2$ which induce the set-system of the \textsc{SetCover} instance resulting in an expected running time of $\tildeO(\kopt n^3)$, where $\tildeO(\cdot)$ hides polylogarithmic factors. 
The approximation quality of both approaches depends on $\ell$, $\kopt$ as well as on the VC-dimension of the set system. 
A drawback of these algorithms is that they generate cluster centers that are line segments only and that the algorithm is randomized with large constants in the approximation factors. Conradi and Driemel \cite{conradi2023findingcomplexpatternstrajectory} took a different approach by focusing on the greedy set cover algorithm, which successively adds the candidate curve maximizing the added coverage to a partial solution until the entirety of $[0,1]$ is covered. 
Their algorithm computes $\O(n^3\ell)$ candidate curves that have complexity up to $\ell$ instead of $2$. 
Applying the greedy algorithm, this results in a deterministic $(\O(\log n),11)$-approximation algorithm with running time $\tildeO(\kopt n^4\ell+n^4\ell^2)$. 
Recently, van der Hoog, van der Horst and Ophelders showed~\cite{vanderhoog2024fasterdeterministicsubtrajectoryclustering}  that the cardinality of the candidate curve set can be further reduced to a size of $\O(n^2\log n)$. 
They  improve the quality of the approximating curve $S$ resulting in a $(\O(\log n),4)$-approximation with running time in $\tildeO(\kopt n^3)$.

%The problems SC and SCM are closely related. 
%In fact
As for the SCM problem, by standard sub-modular function maximization arguments \cite{Nemhauser1978,krause11}, the greedy  algorithm used in all these approaches \cite{Akitaya2021Covering,Brüning2022Faster,conradi2023findingcomplexpatternstrajectory,vanderhoog2024fasterdeterministicsubtrajectoryclustering} yields an $(\O(1),\O(1))$-approximation. 
Using this reduction, the candidate curve set identified in \cite{vanderhoog2024fasterdeterministicsubtrajectoryclustering} yields a $(\O(1),4)$-approximation to the SCM problem with a running  time of $\tildeO(k n^3)$.

\subsection{Structure and results}

We start in \Cref{sec:techniques} by describing general algorithmic techniques that are applicable to variants of subtrajectory clustering when using the greedy algorithm that subsequently finds the largest remaining cluster of elements not covered so far. In the remaining sections we use the techniques developed in \Cref{sec:techniques} and apply them to the two problems Subtrajectory Covering and Subtrajectory Coverage Maximization. In the following, we give an overview of the individual sections and how they lead to the main results.

\subsubsection{Sweep-sequences}
In \Cref{sec:approxfreespace,sec:setsystem} we describe how to obtain a set of candidate center curves $\mathcal{C}$ based on known transformations. 
Unlike the recent line of work discussed above \cite{Akitaya2021Covering,Brüning2022Faster,conradi2023findingcomplexpatternstrajectory,vanderhoog2024fasterdeterministicsubtrajectoryclustering}, we do not improve any further on the size of the set $\mathcal{C}$. Instead, our focus lies on the update step of the greedy set cover algorithm. In this step, we have to find the element $c\in \mathcal{C}$ that maximizes the coverage that is added to a partial solution. 
As the coverage of any element in $\mathcal{C}$ may consist of $\Theta(n)$ disjoint intervals and the set of candidates has quadratic size \cite{vanderhoog2024fasterdeterministicsubtrajectoryclustering}, one may be tempted to believe that the running time per round should be at least cubic in $n$. We manage to show that this step can be performed faster if the coverage is maintained implicitly. 
To this end, we identify a new structure, which we call \emph{sweep-sequences}, in \Cref{sec:sweepseq}. 
The goal is to reduce the inherently two-dimensional search-space of subcurves of $S$ to few one-dimensional search-spaces that may be processed via a sequence of sweep algorithms.
In \Cref{sec:combRep,sec:symbolicRepresentation}, we describe how sweep-sequences allow efficient maintenance of the $\Delta$-coverage and with \Cref{thm:pointQuery} describe the main interface to this identified structure enabling our algorithm: for a weighted set $A\subset[0,1]$ of points we are able to compute the total weight of points that lie inside the $\Delta$-coverage of every element in the sweep-sequence.

\subsubsection{Subtrajectory Covering}
In \Cref{sec:cubicClusterin} and \Cref{sec:subcubic} we present two $(\O(\log n),4)$-approximation algorithms for Subtrajecotry Covering. The algorithms discretize the ground-set using known techniques via the arrangement (of size $\tildeO(n^3)$) of the set of $\O(n^2\log n)$  coverages induced by the elements in $\mathcal{C}$.
We obtain a first approximation algorithm for SC with running time in $\tildeO(|\mathcal{C}|n + \kopt|\mathcal{C}|)$. The algorithm first computes the arrangement in $\tildeO(|\mathcal{C}|n)$. Afterwards, each round of the greedy algorithm runs in (roughly) logarithmic time per element in $\mathcal{C}$ by repeatedly querying the subroutine from \Cref{thm:pointQuery}. In \Cref{sec:subcubic} we describe an improvement to this algorithm. Instead of computing the arrangement explicitly, we first identify a representative subset of size roughly $\O(\sqrt{\kopt}n^{3/2})$ of the arrangement which we cover in an initial pass. The solution covering this representative subset already covers almost every element in the arrangement. The remaining roughly $\O(\sqrt{\kopt}n^{5/2})$ elements of the arrangement are then explicitly identified and covered in a second pass with the algorithm from \Cref{sec:cubicClusterin} resulting in the following theorem.

\begin{restatable}{theorem}{coverafast}
\label{thm:coverafast}
    There is a $(96\ln(n)+128,4)$-approximation for SC. Given a polygonal curve $P$ of complexity $n$, $\Delta>0$ and $\ell\leq n$, its running time is in $\O\left(\left(n^2\ell+\sqrt{\kopt}n^{\frac{5}{2}}\right)\log^2n\right)$, where $\kopt$ is the size of the smallest subset $C^*\subset\bX^d_\ell$ such that $\Cov_P(C^*,\Delta)=[0,1]$.
\end{restatable}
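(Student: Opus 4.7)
The plan is to combine two nested applications of the greedy set-cover idea so that the inherently $\tildeO(n^3)$-size arrangement of coverages is never materialized all at once, while keeping the $(\O(\log n),4)$ approximation guarantee.

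First I would set up the candidate set. Using the constructions from \Cref{sec:approxfreespace} and \Cref{sec:setsystem}, I obtain a set $\mathcal{C}\subset\curvespace{\ell}$ of $\O(n^2\log n)$ candidate center curves with the property that any feasible solution over $\curvespace{\ell}$ can be realized over $\mathcal{C}$ by inflating the Fréchet radius by a factor of $4$. This step costs $\O(n^2\ell\log n)$, which accounts for the additive $n^2\ell\log^2 n$ term in the running time. It also establishes the factor of $4$ in the bicriterial guarantee, since from this point on the algorithm only cares about $4\Delta$-coverages of elements of $\mathcal{C}$.

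Next I would recall (or first prove) the ``cubic baseline'' of \Cref{sec:cubicClusterin}: explicitly construct the arrangement $\mathcal{A}$ of the $|\mathcal{C}|$ coverages $\Cov_P(Q,4\Delta)$ on $[0,1]$ in time $\tildeO(|\mathcal{C}|\cdot n)$, give each face a weight equal to its Lebesgue measure, and then run greedy set cover. The update step of the greedy round is realized through a single invocation of \Cref{thm:pointQuery} on the current set of uncovered face-representatives, so each round costs polylogarithmic time per element of $\mathcal{C}$. Standard greedy analysis yields a $(\ln|\mathcal{A}|+1,4)$-approximation, and the total running time is $\tildeO(|\mathcal{C}|\cdot n + \kopt\cdot|\mathcal{C}|)=\tildeO(n^3)$.

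The subcubic improvement in \Cref{sec:subcubic} is the core of the theorem. Instead of computing the full arrangement, I would select a representative subset $R$ of faces of size roughly $\O(\sqrt{\kopt}\,n^{3/2})$. The selection is driven by an $\varepsilon$-net-style argument with $\varepsilon\approx 1/\sqrt{\kopt\cdot n}$ on the bounded-VC-dimension set system induced by $\mathcal{C}$; the size bound matches $\O((1/\varepsilon)\log(1/\varepsilon))$ up to polylog factors. Running the baseline cubic algorithm from Step~2 only on $R$ produces a partial cover whose remaining uncovered portion, by the $\varepsilon$-net property, consists of at most $\O(\sqrt{\kopt}\,n^{5/2})$ faces. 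I then enumerate these residual faces explicitly using the sweep-sequence machinery of \Cref{sec:symbolicRepresentation} and run the baseline a second time on this smaller instance. Both passes are dominated by $\O(\sqrt{\kopt}\,n^{5/2})$ queries to \Cref{thm:pointQuery}, each costing polylogarithmic time, giving the claimed running time $\O((n^2\ell+\sqrt{\kopt}\,n^{5/2})\log^2 n)$. For the approximation factor, each of the two greedy phases contributes a $\ln n+1$ term, the $\varepsilon$-net selection introduces a constant overhead, and these compose multiplicatively with the bicriterial inflation from $\mathcal{C}$ to yield the $96\ln n+128$ bound.

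The hard part, and the source of the large constants, is Step~3: building $R$ of the right size so that its cover provably leaves only $\O(\sqrt{\kopt}\,n^{5/2})$ residual faces while simultaneously being compatible with the sweep-sequence interface (so that the per-query cost remains polylogarithmic). This requires a careful VC-dimension bound for the coverage set system together with a size/coverage trade-off that explicitly balances the two passes at $n^{5/2}\sqrt{\kopt}$; deviating from this balance on either side immediately breaks either the running time or the approximation.
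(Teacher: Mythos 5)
Your high-level plan matches the paper's two-pass structure (coarse pass on a subquadratic representative set, explicit residual pass), but the central mechanism you propose for choosing the representative set does not work, and this is precisely where the paper's main technical novelty lies.

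You invoke an $\varepsilon$-net with $\varepsilon\approx 1/\sqrt{\kopt n}$ and claim this yields both a representative set of size $\O(\sqrt{\kopt}\,n^{3/2})$ and a residual of $\O(\sqrt{\kopt}\,n^{5/2})$ uncovered faces. Neither follows. An $\varepsilon$-net for a bounded-VC set system has size $\O(\varepsilon^{-1}\log\varepsilon^{-1})=\tildeO(\sqrt{\kopt n})$, which is far below your stated $\sqrt{\kopt}\,n^{3/2}$, so the arithmetic is internally inconsistent. More importantly, the $\varepsilon$-net property only says that any \emph{single} heavy range must hit a net point; it says nothing about the complement of a \emph{union} of ranges, which is what the residual uncovered region is. You cannot conclude that covering the net leaves few uncovered arrangement cells; a small greedy cover of the net can still miss $\Theta(n^3)$ cells in this set system. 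The paper avoids this entirely: it does not use VC-dimension or random sampling. Instead it deterministically partitions the ground set (the $\O(n^3)$ atomic intervals) into $\O(n^{\coarsity})$ buckets (``$\coarsity$-coarse intervals'') each containing at most $\O(n^{3-\coarsity})$ atomic intervals, via a rank-selection / weighted-median procedure over implicit sorted lists (\Cref{thm:coarsesubset}, \Cref{lem:coarseIntervals}). The crucial combinatorial fact that bounds the residual is that a partial solution of $K\log n$ candidates has a proxy-coverage consisting of only $\O(Kn\log n)$ disjoint intervals on $[0,1]$, so once all bucket midpoints are covered, the uncovered region is contained in $\O(Kn\log n)$ intervals and therefore meets $\O(Kn\log n)$ buckets, hence $\O(Kn^{4-\coarsity}\log n)$ atomic intervals. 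This interval-count argument, not any $\varepsilon$-net property, is what makes the second pass affordable, and it is what you would have to supply to make the proof go through.

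Two further gaps: you implicitly assume $\kopt$ is known when setting $\varepsilon$, but the algorithm must guess $\kopt$; the paper performs a doubling search on a guess $K$, recomputing the coarsening parameter $\coarsity=\tfrac{3}{2}+\tfrac{\log K}{2\log n}+\tfrac{\log\log n}{\log n}$ for each guess, which is also how the constant $96\ln n+128$ arises (two calls to \textsc{CoverA}, each contributing $48\ln n+64$, summed across a geometric doubling sequence that is dominated by its last term). And your baseline description uses Lebesgue-measure weights on arrangement faces, whereas the paper's SC baseline works with a finite point set (atomic-interval midpoints) and counts covered points; the measure-based version is what appears in the SCM algorithm, not here.
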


As trivially $\kopt\leq\lceil\frac{n}{\ell}\rceil$, \Cref{thm:coverafast} yields an algorithm with (near-)cubic running time. Further, in the case that $\kopt\in\O(n^{1-\eps})$, it yields an algorithm with subcubic running time. This compares favorably to the best known algorithm with running time $\tildeO(\kopt n^3)$~\cite{vanderhoog2024fasterdeterministicsubtrajectoryclustering}.

\subsubsection{Subtrajectory Coverage Maximization}

In \Cref{sec:maximization} we show how the Lebesgue-measure of the coverage of the elements in $\mathcal{C}$ can be approximated efficiently by few piece-wise linear functions. This approximation can also be maintained efficiently by using the identified sweep-sequences from \Cref{sec:combRep,sec:symbolicRepresentation} allowing the evaluation of the aforementioned Lebesgue-measure of the coverage of every element in $\mathcal{C}$ in total time $\tildeO(|C|)$. This results in the following theorem, which compares favorably to the best known algorithm with running time $\tildeO(k n^3)$~\cite{vanderhoog2024fasterdeterministicsubtrajectoryclustering}.

\begin{restatable}{theorem}{thmMainCoverage}
\label{thm:maincoverage}
    Let $\eps>0$. There is a $(\frac{e-1}{16e},4+\eps)$-approximation algorithm for SCM. Given a polygonal curve $P$ of complexity $n$, $\Delta>0$, $\ell\leq n$ and $k>0$, its running time is in $\O((k+\ell)n^2\eps^{-2}\log ^{2} n\log \eps^{-1})$.
\end{restatable}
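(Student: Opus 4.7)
The plan is to apply the standard greedy algorithm for monotone submodular maximization to the candidate set $\mathcal{C}$, where the objective is the Lebesgue measure of the $(4+\eps)\Delta$-coverage. Three ingredients drive the proof. First, the candidate set $\mathcal{C}$ built in \Cref{sec:approxfreespace,sec:setsystem} contains $\tildeO(n^2)$ curves in $\bX^d_\ell$ with the property that the best $k$-subset of $\mathcal{C}$ at threshold $4\Delta$ covers at least a bounded constant fraction of what the best $k$-subset of $\bX^d_\ell$ covers at threshold $\Delta$; this is the source of both the Fréchet blowup of $4$ and the factor $1/16$ in the final ratio. Second, the sweep-sequence machinery of \Cref{sec:combRep,sec:symbolicRepresentation} supports the oracle of \Cref{thm:pointQuery} which, for a weighted set of query points on $P$, returns the total weight lying inside $\Cov_P(c,\Delta')$ simultaneously for all $c\in\mathcal{C}$. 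Third, the piecewise-linear approximation introduced in \Cref{sec:maximization} represents the map $c\mapsto\|\Cov_P(c,\Delta')\|$ by a small collection of functions that can be evaluated to within multiplicative error $(1\pm\eps)$ via a logarithmic sequence of calls to the oracle, at an amortized cost of $\tildeO(n^2\eps^{-2}\log\eps^{-1})$ to evaluate every candidate in $\mathcal{C}$.

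First, I would build $\mathcal{C}$ together with the supporting sweep-sequence data structures; by the complexity bounds from \Cref{sec:approxfreespace,sec:setsystem,sec:combRep} this preprocessing fits in $\O(\ell n^2 \log^2 n)$ time, accounting for the $\ell n^2$ summand in the stated running time. Next, I would run $k$ rounds of greedy: in each round, the piecewise-linear approximation of \Cref{sec:maximization} together with \Cref{thm:pointQuery} identifies the candidate $c\in\mathcal{C}$ whose $(1\pm\eps)$-approximate marginal gain $\|\Cov_P(\{c\}\cup\mathcal{Q},(4+\eps)\Delta)\|-\|\Cov_P(\mathcal{Q},(4+\eps)\Delta)\|$ is maximal, where $\mathcal{Q}$ is the current partial solution. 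The per-round cost is $\O(n^2\eps^{-2}\log^2 n\log\eps^{-1})$, giving the $k n^2$ summand.

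For the approximation guarantee I would argue in two steps. The candidate-set reduction provides a $k$-subset $\hat C^*\subset\mathcal{C}$ with $\|\Cov_P(\hat C^*,4\Delta)\|\geq\frac{1}{16}\|\Cov_P(C^*,\Delta)\|$, where $C^*$ is the SCM optimum over $\bX^d_\ell$. Monotonicity and submodularity of the coverage functional, combined with the Nemhauser--Wolsey--Fisher analysis under an $(1\pm\eps)$-approximate oracle, imply that the greedy solution $C$ satisfies $\|\Cov_P(C,(4+\eps)\Delta)\|\geq (1-1/e)\|\Cov_P(\hat C^*,(4+\eps)\Delta)\|$; after rescaling $\eps$ so that the approximate-oracle loss and the $\eps$-slack in the Fréchet parameter are both absorbed into the stated $(4+\eps)$, the two bounds compose to the claimed ratio $\frac{e-1}{16e}$.

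The main obstacle will be ensuring that the piecewise-linear approximation from \Cref{sec:maximization} is genuinely maintained in an online fashion as $\mathcal{Q}$ grows across the $k$ rounds, rather than being recomputed from scratch; only this amortization keeps the per-round oracle cost at $\tildeO(n^2)$. A secondary difficulty is to show that the constant-factor loss of the candidate-set reduction (the factor $1/16$) composes cleanly with the approximate-oracle greedy bound and with the parameter-rescaling absorbing the $\eps$ slack, so that the final constants are exactly $\frac{e-1}{16e}$ and $4+\eps$ as claimed rather than being amplified.
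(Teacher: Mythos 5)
Your high-level architecture is right, but there are two places where your account diverges from what the paper actually proves and where your proposed argument would not yield the claimed constant.

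First, you attribute the factor $1/16$ entirely to the candidate-set reduction. That reduction (\Cref{lem:supersetfreespace}) only gives a factor $1/8$: every center curve in $\bX^d_\ell$ is dominated by at most $8$ subcurves of $S$. The remaining factor $2$ comes from a different mechanism you omit entirely: the greedy step maximizes the \emph{proxy coverage} $\proxyCov_{A_S}(\cdot)$, not the true coverage $\Cov_{A_S}(\cdot)$. By \Cref{lem:proxyapprox}, the proxy coverage of $e[s,t]$ alone only dominates the true coverage once it is paired with the proxy coverage of the reversal $\rev{e[s,t]}$, so in \Cref{thm:coverageReduction} one compares the optimal $k$-set against the optimal $2k$-set, losing another factor $2$ by subadditivity. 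Without identifying the proxy coverage (and its reversal loss) your decomposition of the constant is off, and it is precisely the proxy coverage together with the sweep-sequence machinery of \Cref{sec:combRep,sec:symbolicRepresentation} that makes the per-round oracle cost $\tildeO(n^2)$ attainable.

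Second, you model $\eps$ as noise in the greedy oracle and invoke the Nemhauser--Wolsey--Fisher analysis \emph{under a $(1\pm\eps)$-approximate oracle}, then claim the oracle loss can be absorbed into the $4+\eps$ Fréchet slack. That does not work: an approximate oracle degrades the \emph{approximation ratio} (to something like $(1-1/e-\O(\eps))$), and the Fréchet radius is an independent parameter — you cannot trade additive slack in one for the other. The paper's actual mechanism is different: the $(1+\eps,\Delta)$-free space $A_S$ (\Cref{thm:approxFreeSpace}) is a \emph{combinatorial} relaxation of the free space, and the proxy coverage defined through $A_S$ is maintained \emph{exactly} as a piecewise-linear function along each sweep-sequence (\Cref{lem:primitiveiterativesum,lem:iterativesum,cor:finito,lem:finito}). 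Greedy is exact on this objective, so the submodular-maximization bound gives the clean factor $(e-1)/e$ with no loss, and the $\eps$ enters only through the containment $\proxyCov_{A_S}(\cdot)\subset\Cov_P(\cdot,(4+\eps)\Delta)$, i.e.\ it only widens the output radius. Composing $1/8 \cdot 1/2 \cdot (e-1)/e$ gives exactly $(e-1)/(16e)$, whereas your route produces a strictly weaker ratio. The repair is to drop the approximate-oracle framing, introduce the proxy coverage and \Cref{thm:coverageReduction} explicitly, and place the $\eps$ only in the free-space relaxation.
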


\subsection{Related work}

Prior to the line of work on the SC and SCM problem discussed in \Cref{sec:priorwork}, Buchin et al.~\cite{BuchinBGLL11} presented one of the earlier works on subtrajectory clustering for both the discrete and continuous Fréchet distance. Their work focuses on finding the largest subtrajectory cluster, where different variants of `largest' are considered. 
They present hardness-results for $(2-\eps)$-approximations for any $\eps$ and a matching polynomial time $2$-approximation algorithm. Gudmundsson and Wong~\cite{abs-2110-15554} present a cubic lower-bound conditioned on SETH for the same problem and show that this lower bound is tight. Note that the condition that the subtrajectories are disjoint is essential to their lower bound construction, while the SC and SCM problems do not have this constraint. The formulation by Buchin et al.~\cite{BuchinBGLL11} was also studied in subsequent work~\cite{GudmundssonV15,buchinGroup2017,BuchinBGHSSSSSW20, buchinGroup20}. 

Agarwal et al.~\cite{agarwal2018} formulate the subtrajectory clustering problem differently, namely based on \textsc{FacilityLocation}. In this problem formulation  there is an opening cost associated with every center curve, and there is a cost for every point on the input that is assigned to a cluster, and a different cost for points that are not assigned. They show conditional NP-hardness results but also give an $O(\log^2 n)$-approximation for well-behaved classes of curves under the discrete Fréchet distance.

\section{Algorithmic toolbox}\label{sec:techniques}
In this section we describe the underlying techniques and subroutines that allow the design of efficient algorithms for the SC and SCM problems.

\subsection{Approximate Free Space}\label{sec:approxfreespace}

We now turn to the central geometric tool that enables many of the techniques we will use lateron, namely the free space diagram and $\Delta$-free space. We generalize the notion of $\Delta$-free spaces slightly to make our techniques applicable for both the SC and SCM problems.

\begin{figure}
    \centering
    \includegraphics[width=\linewidth]{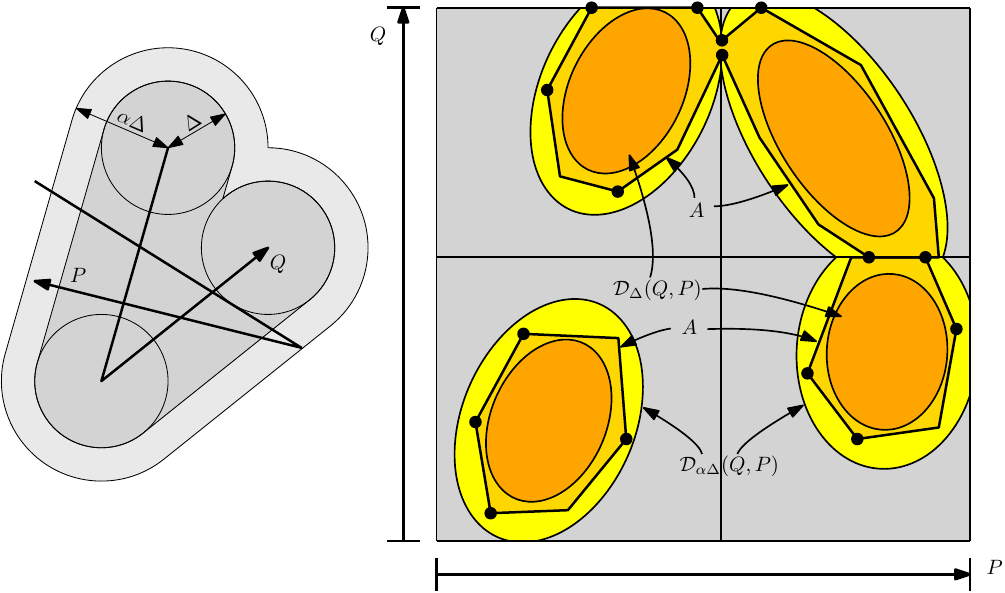}
    \caption{$\Delta$-free space and $\alpha\Delta$-free space of two curves $P$ and $Q$, as well as an $\alpha$-approximate $\Delta$-free space $A$ of $P$ and $Q$. Additionally marked are the extremal points of $A$.}
    \label{fig:free-space}
\end{figure}

\begin{definition}[Free space diagram]
	Let $S$ and $P$ be two polygonal curves parametrized over $[0,1]$.
	The free space diagram of $S$ and $P$ is defined as the joint parametric space $[0,1]^2$ together with a not necessarily uniform grid, where each vertical grid line corresponds to a vertex of $P$ and each horizontal grid line to a vertex of $S$.
	The $\Delta$-\emph{free space} (refer to \Cref{fig:free-space}) of $S$ and $P$ is defined as \[ \dfree{}{}(S,P) = \left\{(x,y)\in[0,1]^2 \mid \|P(x) -S(y)\|\leq\Delta \right\}. \] 
	This is the set of points in the parametric space, whose corresponding points on $S$ and $P$ are at a distance at most $\Delta$.
	The edges of $S$ and $P$ segment the free space diagram into cells.
	We call the intersection of $\dfree{}{}(S,P)$ with the boundary of cells the \emph{free space intervals}.
\end{definition}%

As was noted by Alt and Godau~\cite{AltG95}, the Fréchet distance of $S$ and $P$ is at most $\Delta$, if and only if there is a monotone (in both $x$- and $y$-direction) path from $(0,0)$ to $(1,1)$ in $\Delta$-free space of $S$ and $P$. They further observed that the free space inside any cell is described by an ellipse intersected with the cell and thus is convex and has constant complexity. Observe that two subcurves $P[a,c]$ and $S[b,d]$ have Fréchet distance at most $\Delta$ if and only if there is a monotone (in both $x$- and $y$-direction) path from $(a,b)$ to $(c,d)$ in $\Delta$-free space of $S$ and $P$.

\begin{definition}[Approximate free space]
	Let $S$ and $P$ be two polygonal curves parametrized over $[0,1]$. Let $\Delta\geq 0$ and $\alpha\geq 1$ be given. We say $A$ is an $\alpha$-approximate $\Delta$-free space (or $(\alpha,\Delta)$-free space) of $S$ and $P$, if 
    \begin{compactenum}
        \item $\dfree{}{}(S,P)\subset A\subset\dfree[\alpha\Delta]{}{}(S,P)$,
        \item $A$ is convex inside the interior of every cell of the free space diagram and
        \item any point $p$ on the boundary of a cell of the free space diagram is in $A$ iff for every cell $p$ lies in it is in the closure of the restriction of $A$ to the interior of that cell.
    \end{compactenum}
\end{definition}

The $(1,\Delta)$-free space of $S$ and $P$ is unique and coincides with $\dfree{}{}(S,P)$. By definition any approximate free space of $S$ and $P$ also defines an approximate free space of $\rev{S}$ and $P$ by mirroring it along the $x$-axis. Further, by definition $A$ is not only convex in the interior of every cell but also in the entirety of every cell. Further, similar to free space intervals we also define free space intervals of an approximate free space diagram by the intersection of the approximate free space diagram with the boundary of each cell.

Any monotone path from $(a,b)$ to $(c,d)$ in an $(\alpha,\Delta)$-free space of $S$ and $P$ implies that $\df(P[a,c],S[b,d])\leq\alpha\Delta$. Thus given an $(\alpha,\Delta)$-free space $A$ of $S$ and $P$ we can approximate the $\Delta$-coverage, namely let $\Cov_A(S[b,d])$ be the union of all intervals $[a,c]$, such that there is a monotone path from $(a,b)$ to $(c,d)$ inside $A$, then 
\[\Cov_P(S[b,d],\Delta)\subset\Cov_A(S[b,d])\subset\Cov_P(S[b,d],\alpha\Delta).\]

We extend the notion of extremal points---i.e. local minima and maxima of each cell of the $\Delta$-free space---from \cite{Brüning2022Faster,conradi2023findingcomplexpatternstrajectory,vanderhoog2024fasterdeterministicsubtrajectoryclustering} to approximate free spaces.

\begin{definition}[Extremal points \cite{Brüning2022Faster}]
    Let $A$ be an $(\alpha,\Delta)$-free space. As $A$ is convex in any cell $C$, the set of points of $A$ minimizing the $x$-coordinate in $C$ are described by a vertical line-segment of length at least $0$. We call the start- and end-point of this line-segment the left-most points of $A$ in $C$. Similarly $A$ has at most two bottom-most, rightmost and top-most points in $C$. The union over all left-, bottom-, right- and top-most points of $A$ in every cell defines the set of extremal points of $A$.
\end{definition}

Using symbolic perturbation arguments we may assume that no two extremal points share the same $y$-coordinate and the $y$-coordinates of top-most points is bigger than the $y$-coordinates of left- and right-most points of the same cell, and similarly the $y$-coordinate of bottom-most points is less than the $y$-coordinates of left- and right-most points of the same cell.

Aside from the $(1,\Delta)$-free space we will make use of an approximate $\Delta$-free space in this work, which we describe in the following.

\subsubsection{\boldmath Piecewise linear approximate $\Delta$-free space}

We provide an approximation to the $\Delta$-free space that is \textit{independent} of the dimension of the ambient space the polygonal curves live in. More precisely, the goal of this section is the following theorem.

\begin{theorem}\label{thm:approxFreeSpace}
    Let $P$ and $Q$ be two polygonal curves in $\bR^d$. In $O(|P||Q|\eps^{-2}\log(\eps^{-1}))$ one can compute a $(1+\eps,\Delta)$-free space of $P$ and $Q$ that consists of a convex polygon of complexity $O(\eps^{-2})$ in the interior of every cell.
\end{theorem}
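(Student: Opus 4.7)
The plan is to work cell-by-cell in the free-space diagram, exploit the algebraic structure of the squared distance, and approximate the convex sublevel set inside each cell by a polygon of controlled complexity that is sandwiched between $\dfree{}{}(P,Q)$ and $\dfree[(1+\eps)\Delta]{}{}(P,Q)$.

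First I would fix attention on a single cell $C$ corresponding to an edge $e_P$ of $P$ and an edge $e_Q$ of $Q$. Parametrizing $e_P(x)=p_0+xu$ and $e_Q(y)=q_0+yv$ with $u,v\in\bR^d$, the squared distance
\[
f^2(x,y)=\|u\|^2 x^2+\|v\|^2 y^2-2\langle u,v\rangle xy+2\langle u,p_0-q_0\rangle x-2\langle v,p_0-q_0\rangle y+\|p_0-q_0\|^2
\]
is a convex quadratic in $(x,y)$. Its six coefficients are inner products that can be assembled from $O(|P|+|Q|)$ precomputed inner products of edge vectors and endpoint differences, so that setting up the quadratic in every cell costs $O(d(|P|+|Q|)+|P||Q|)$ in total and only $O(1)$ per cell afterward. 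From this point on the construction in the cell is purely two-dimensional and independent of $d$.

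Next I would approximate the convex set $S_\Delta\coloneqq\{f\leq\Delta\}\cap C$. Since $f^2$ is a positive semidefinite quadratic, the level sets $\{f\le r\}$ are concentric scaled ellipses (with possibly nontrivial $f_{\min}$) whose center and shape matrix are read off from the coefficients of $f^2$. The boundary $\partial S_\Delta$ is therefore a union of up to four ellipse arcs separated by at most four chord segments along $\partial C$. For each ellipse arc I would place vertices on it so that consecutive chords stay inside $\{f\leq(1+\eps)\Delta\}$: starting from one endpoint of the arc, the next vertex is chosen (by a binary search of depth $O(\log\eps^{-1})$ along the arc) as the farthest point such that the maximum of $f$ on the candidate chord stays below $(1+\eps)\Delta$; evaluating this maximum is an $O(1)$ operation because $f$ restricted to a line is a quadratic. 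Convexity of $\{f\leq(1+\eps)\Delta\}$ then upgrades the midpoint check to the whole chord. Taking $A_C$ to be the polygon with these vertices (including the cell-boundary intersection points of $\partial S_\Delta$ as explicit vertices, and following $\partial C$ along the boundary parts of $\partial S_\Delta$) gives a convex polygon with $S_\Delta\subset A_C\subset\{f\leq(1+\eps)\Delta\}\cap C$. The three conditions in the definition of an approximate free space then follow: convexity in the cell is immediate; the sandwich property is precisely the chord-placement guarantee; and boundary consistency is free because the polygon's intersection with each side of the cell is a single interval whose endpoints are the true free-space interval endpoints (included by construction).

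The main obstacle is proving that $O(\eps^{-2})$ vertices suffice. The annular region $\{f\leq(1+\eps)\Delta\}\setminus\{f\leq\Delta\}$ has normal width roughly $\eps\Delta/|\nabla f|$, which can be small when $|\nabla f|$ is large and which degenerates when $\Delta$ is close to $f_{\min}$. The cleanest way to handle this uniformly is to pass to the Mahalanobis coordinates in which the level sets of $f^2$ become concentric Euclidean circles: then $\{f\leq(1+\eps)\Delta\}$ is a circle of radius $\sqrt{(1+\eps)^2\Delta^2-f_{\min}^2}$ while $\{f\leq\Delta\}$ has radius $\sqrt{\Delta^2-f_{\min}^2}$, so the required multiplicative radius ratio is at least a constant multiple of $\eps$ in the worst case and more otherwise. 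A standard chord-length argument for circles shows that placing $O(\eps^{-2})$ uniformly spaced vertices on the inner circle yields chords that lie within radius $(1+\eps)$ times the inner radius, and transporting this polygon back to the original coordinates preserves the sandwich property because the coordinate change is affine. Combining the $O(\eps^{-2}\log\eps^{-1})$ per-cell cost with the $O(|P||Q|)$ cells gives the claimed running time of $O(|P||Q|\eps^{-2}\log\eps^{-1})$.
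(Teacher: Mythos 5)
Your cell-by-cell ellipse-approximation idea is genuinely different from the paper's proof, but as written it produces the \emph{wrong} containment. The paper builds an $O(\eps^{-2})$-facet polytope $D$ in $\bR^3$ with $B_1(0)\subset D\subset B_{1+4\eps}(0)$ (Lemma~\ref{lem:folklore1}), uses the resulting polyhedral norm to compute a per-cell free space that is automatically a polygon (Lemma~\ref{lem:folklore2}), and the containment $\dfree{}{}(S,P)\subset A$ falls out because the polyhedral unit ball \emph{contains} the Euclidean one. Your construction, by contrast, places the polygon vertices \emph{on} $\partial S_\Delta=\{f=\Delta\}\cap C$ and joins consecutive vertices by chords. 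Since $S_\Delta$ is convex and all chosen vertices lie in $S_\Delta$, the resulting polygon is the convex hull of points of $S_\Delta$, hence $A_C\subset S_\Delta$ — the opposite of the required inclusion $\dfree{}{}(P,Q)\cap C = S_\Delta\subset A_C$. Related to this, the stopping rule you propose for the binary search (``farthest point such that $\max_{\text{chord}} f\le(1+\eps)\Delta$'') is vacuous: a chord between two points with $f=\Delta$ stays inside the convex set $\{f\le\Delta\}$, so the maximum of $f$ on it is $\le\Delta<(1+\eps)\Delta$ regardless of how far apart you place the vertices, and the search never triggers.

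The fix is to flip the roles of the two level sets. Either circumscribe a polygon around $S_\Delta$ using supporting lines and argue the circumscribed polygon stays inside $\{f\le(1+\eps)\Delta\}$, or place vertices on the \emph{outer} level set $\{f=(1+\eps)\Delta\}\cap C$ and impose the condition that each chord \emph{separates} the inner set $S_\Delta$ from the corresponding outer arc (equivalently, that the chord does not cut into $\{f\le\Delta\}$), so the inscribed polygon of the outer curve still contains the inner one. Once corrected, the Mahalanobis-coordinates argument you sketch actually proves a \emph{stronger} bound than what the theorem asks for: inscribing a regular $n$-gon in a circle of radius $R$ keeps all chords at distance $\ge R\cos(\pi/n)$ from the center, and $\cos(\pi/n)\ge r/R$ with $r/R\ge 1-c\eps$ requires only $n=O(\eps^{-1/2})$ vertices, not $O(\eps^{-2})$. (The $\eps^{-2}$ in the paper arises because Lemma~\ref{lem:folklore1} approximates a ball in $\bR^3$; your 2D approach avoids that cost.) Your sentence ``placing $O(\eps^{-2})$ uniformly spaced vertices on the inner circle yields chords that lie within radius $(1+\eps)$ times the inner radius'' is not a meaningful constraint — chords of an inscribed polygon lie at radius at most the circle radius, so the bound you need is a lower bound on chord distance from the center, which again points to the circumscription / outer-level-set formulation. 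Finally, a small correctness caveat worth adding after the fix: condition (3) of the definition (cell-boundary consistency) must be checked, since neighboring cells have different ellipses; the paper sidesteps this automatically because it uses the same polyhedral norm in every cell and then \emph{defines} the approximate free space on a shared boundary as the intersection of the adjacent cells' polygons, whereas you should explicitly force $A_C\cap\partial C$ to equal the true free-space interval on each side of the cell.
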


We prove this via the following two underlying folklore lemmas. For completeness sake we prove them here.

\begin{lemma}\label{lem:folklore1}
    Let $\frac{1}{5}\geq \eps>0$. One can compute a polytope $D$ of complexity $\O(\eps^{-2})$ in $\bR^3$ such that the ball $B_1(0)$ of radius $1$ centered at $0$ in $\bR^d$ it holds that $B_1(0)\subset D\subset B_{1+4\eps}(0)$ in time $\O(\eps^{-2}\log \eps^{-1})$.
\end{lemma}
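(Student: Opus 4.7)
My plan is to realise $D$ as the intersection of $\O(\eps^{-2})$ half-spaces, one per point of an $\eps$-net on the unit sphere $S^2 \subset \bR^3$. Concretely, the three ingredients are: (i) build a good net on $S^2$, (ii) take the corresponding polar half-spaces, (iii) invoke a standard 3D convex-hull/half-space-intersection algorithm.

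First I would construct a set $V \subset S^2$ of size $|V| = \O(\eps^{-2})$ such that every unit vector lies within Euclidean distance $\eps$ of some $v \in V$. A convenient explicit construction is to place a uniform grid of spacing $\Theta(\eps)$ on each of the six faces of the inscribed cube $[-1/\sqrt{3}, 1/\sqrt{3}]^3$ and radially project onto $S^2$. Radial projection onto $S^2$ distorts Euclidean distances by only a bounded factor, so the resulting $\O(\eps^{-2})$ points form an $\eps$-net, and they are enumerable in $\O(\eps^{-2})$ time.

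Then I would set
\[D := \bigcap_{v \in V} \bigl\{\, x \in \bR^3 \;\bigm|\; \langle x, v\rangle \leq 1 \,\bigr\}.\]
The inner inclusion $B_1(0) \subset D$ follows by Cauchy--Schwarz: if $\|x\| \leq 1$ then $\langle x, v\rangle \leq \|x\|\,\|v\| \leq 1$ for every $v \in S^2$. For the outer inclusion, take any nonzero $x \in D$ and let $v \in V$ be the point closest to $u := x/\|x\|$. Since $\|u - v\| \leq \eps$, the identity $\langle u, v\rangle = 1 - \tfrac{1}{2}\|u - v\|^2$ gives $\langle u, v\rangle \geq 1 - \eps^2/2$, so $\|x\|\,(1 - \eps^2/2) \leq \langle x, v\rangle \leq 1$ and hence $\|x\| \leq 1/(1 - \eps^2/2)$. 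For $\eps \leq 1/5$ an easy check yields $1/(1-\eps^2/2) \leq 1 + 4\eps$, so $D \subset B_{1+4\eps}(0)$.

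Finally, $D$ is the intersection of $\O(\eps^{-2})$ half-spaces in $\bR^3$, so by Euler's formula its combinatorial complexity is $\O(\eps^{-2})$, and a full face-lattice description can be produced in $\O(\eps^{-2}\log \eps^{-1})$ time by any standard 3D half-space intersection algorithm (equivalently, by dualising and computing the 3D convex hull of the $\O(\eps^{-2})$ points $V$). The only obstacles are bookkeeping: verifying that the cube-face gridding gives a genuine $\eps$-net after radial projection, and the one-line inequality $1/(1-\eps^2/2) \leq 1 + 4\eps$ on $\eps \in (0, 1/5]$; the geometric heart of the argument is the single estimate $\langle u, v\rangle \geq 1 - \eps^2/2$.
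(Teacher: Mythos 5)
Your proof is correct, but it takes a genuinely different route from the paper's. The paper works from the \emph{inside}: it lays down a grid of side length $\Theta(\eps)$ in $\bR^3$, collects the $\O(\eps^{-2})$ boundary grid points inside $B_1(0)$, takes their convex hull $C$, shows $B_{1-2\eps}(0)\subset C\subset B_1(0)$ by a packing argument (a gap of width $2\eps$ would have to contain a grid point), and then scales $C$ by $(1+4\eps)$. You work from the \emph{outside}: you take an $\eps$-net $V$ on $S^2$ and intersect the supporting half-spaces $\{\langle x,v\rangle\le 1\}$, getting $B_1(0)\subset D$ by Cauchy--Schwarz and $D\subset B_{1/(1-\eps^2/2)}(0)$ from the exact identity $\langle u,v\rangle=1-\tfrac12\|u-v\|^2$ on the sphere. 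The two constructions are essentially polar duals of one another. Your version has two small advantages: the outer-inclusion error is quadratic in $\eps$ rather than linear, leaving generous slack against the required $1+4\eps$, and you obtain the half-space (H-) representation directly, which is exactly the form that \Cref{lem:folklore2} consumes; the paper obtains a vertex (V-) representation and relies on the hull computation to recover the facet normals. The paper's version is arguably more self-contained in that it avoids appealing to the Lipschitz bound for radial projection of a cube-face grid, which you flag but do not spell out --- that is a genuine (if routine) gap to fill if you wanted a fully detailed write-up.
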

\begin{proof}
    Consider a grid with side length $\eps^{-1}$ in $\bR^3$. Observe that in $\O(\eps^{-2}\log\eps^{-1})$ one can compute all $\O(\eps^{-2})$ grid points that lie inside $B_1(0)$ such that there is a neighboring grid point that lies outside $B_1(0)$. Let now $C$ be the convex hull of these points. First observe that $C\subset B_1(0)$. Further every gridpoint in $B_1(0)$ is in $C$.
    
    For any point $x\in B_1(0)$ the value $\min_{y\in C}\|x-y\|< 2\eps^{-1}$. Otherwise there is a disk of radius $\eps^{-1}$ contained in $B_1(0)\setminus C$. But as the side length of the grid is $\eps^{-1}$ there must be a grid point $p$ in this disk of radius $\eps^{-1}$ and thus also in $B_1(0)\setminus C$. But then $p$ is also in $C$. Hence $B_{1-2\eps}(0)\subset C\subset B_1(0)$.

    Scaling $C$ by a factor of $(1+4\eps)$ implies the claim, as $1\leq1+2\eps -8\eps^2 =  (1+4\eps)(1-2\eps)$ and $ (1+4\eps)(1-2\eps)\leq (1+4\eps)$.
\end{proof}

\begin{lemma}\label{lem:folklore2}
    Let $e_1$ and $e_2$ be two edges in $\bR^d$ endowed with a polyhedral norm of complexity $C$. Then the $\Delta$-free space of $e_1$ and $e_2$ is a polygon of complexity $\O(C)$ and can be computed in $\O(C\log C)$ time.
\end{lemma}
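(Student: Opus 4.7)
The plan is to exploit the affine structure of the difference $e_1(x)-e_2(y)$ together with the facet description of the polyhedral unit ball, reducing the free space to a half-plane intersection problem.

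First I would write $e_1(x)-e_2(y) = a + bx + cy$, where $a,b,c\in\bR^d$ are the vectors derived from the endpoints of $e_1$ and $e_2$. Thus the map $(x,y)\mapsto e_1(x)-e_2(y)$ is affine. Next, I would fix a facet representation of the polyhedral unit ball $B$ of the given norm. Since the norm has complexity $C$, the ball $B$ can be written as the intersection of $m=\O(C)$ closed halfspaces $\{v\in\bR^d : \phi_i(v)\leq 1\}$ for linear functionals $\phi_1,\dots,\phi_m$, and these functionals can be read off from the norm's description in $\O(C)$ time.

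The condition $\|e_1(x)-e_2(y)\|\leq\Delta$ is equivalent to $e_1(x)-e_2(y)\in\Delta\cdot B$, which in turn is equivalent to the system
\[
\phi_i(a + bx + cy) \;\leq\; \Delta \qquad \text{for all } i=1,\dots,m.
\]
Each such inequality is linear in $(x,y)$, so the $\Delta$-free space of $e_1$ and $e_2$ is the intersection of $[0,1]^2$ with $m=\O(C)$ halfplanes in the parameter plane. This is a convex polygon of complexity $\O(C)$, which establishes the structural part of the claim.

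For the running time, I would feed the $\O(C)$ halfplanes (together with the four halfplanes bounding $[0,1]^2$) into a standard half-plane intersection algorithm, which computes the intersection of $n$ halfplanes in the plane in $\O(n\log n)$ time, yielding the desired $\O(C\log C)$ bound. The only subtlety is confirming that a polyhedral norm of complexity $C$ in $\bR^d$ admits a facet description of size $\O(C)$ that is accessible in time $\O(C)$; this should be taken as part of the definition of the complexity of the norm (and can be justified by standard polarity arguments if the norm is given by its unit ball as a polytope with $C$ vertices or $C$ facets). No further obstacles arise.
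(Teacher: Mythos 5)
Your proof takes essentially the same approach as the paper: write the difference $e_1(x)-e_2(y)$ as an affine function of $(x,y)$, observe that each facet inequality of the scaled polyhedral unit ball becomes a halfplane in the parameter plane, and intersect the resulting $\O(C)$ halfplanes (together with the cell boundary) via a standard $\O(C\log C)$ halfplane-intersection algorithm. The paper's proof is terser but identical in substance.
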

\begin{proof}
    Let $H_1,\ldots,H_C$ be the half-planes described via their normal vector $n_i$ and value $d_i$ (that is $H_i=\{x\mid \langle n_i,x\rangle \leq  d_i\}$) in $\bR^d$ describing the ball of radius $\Delta$ under the polyhedral norm. Then the $\Delta$-free space is described as the intersection of the $C$ halfplanes (intersected with the cell of the free space) in $\bR^2$ such that $\langle n_i,e_1(x)-e_2(y)\rangle\leq d_i$.
\end{proof}

\begin{proof}[Proof of \Cref{thm:approxFreeSpace}]
    This is an immediate consequence of \Cref{lem:folklore1} and \Cref{lem:folklore2}. First in $O(|P||Q|\eps^{-2}\log\eps^{-1})$ compute for every cell of the free space diagram of $P$ and $Q$ a polygon that $(1+\eps)$-approximate the $\Delta$-free space. These define the $(1+\eps)$-approximate $\Delta$-free space in the interior of every cell. Hence by definition, on the shared boundary of a set of cells we define the $(1+\eps)$-approximate $\Delta$-free space as the intersection of all polygons associated to cells in that set.
\end{proof}

\subsection{Simplifications and Candidate Center Curves}
\label{sec:setsystem}

We now describe the following known result: Given a polygonal curve $P$ of complexity $n$, there is a polygonal curve $S$ of complexity $n$ and a set $\ConcreteCandidates{S}(P)$ of at most $\O(n^2\log n)$ subcurves of $S$ such that for any set of curves $C\subset\curvespace{\ell}$ there is a subset $\hat{C}\subset\ConcreteCandidates{S}(P)$ of size $\O(|C|)$ such that \(\Cov_P(C,\Delta)\subset\Cov_P(\hat{C},4\Delta)\).
This result follows the line of research in \cite{Brüning2022Faster,conradi2023findingcomplexpatternstrajectory,vanderhoog2024fasterdeterministicsubtrajectoryclustering} where with each successive work the multiplicative loss in the radius has been decreased from $11$ all the way down to $4$. We will describe the results with an additional level of abstraction, namely with \textit{approximate} free spaces instead of the usual notion of free spaces, that allows generalization to both the SC and SCM problem.

All of these known approaches \cite{Brüning2022Faster,conradi2023findingcomplexpatternstrajectory,vanderhoog2024fasterdeterministicsubtrajectoryclustering} compute a `maximally simplified' version of $P$, see also~\cite{de2013fast}. Our definition of a simplification is inspired by the notion of \textit{pathlet-preserving simplifications} from \cite{vanderhoog2024fasterdeterministicsubtrajectoryclustering}. We choose this variant as it yields the best constants.

\begin{definition}\label{def:goodsimp}
    Let $P$ be a polygonal curve in $\bR^d$. We call a curve $S$ a simplification of $P$ if $\df(S,P)\leq 2\Delta$ for any curve $Q$ of complexity $\ell$ and any subcurve $P[s,t]$ with $\df(Q,P[s,t])\leq\Delta$ there is a subcurve $S[s',t']$ such that $\df(P[s,t],S[s',t'])\leq 2\Delta$ and $|S[s',t']|\leq\ell+2$.
\end{definition}

\begin{theorem}[\cite{vanderhoog2024fasterdeterministicsubtrajectoryclustering}]\label{thm:thijs-simplification}
    For any curve $P$ of complexity $n$ a simplification $S$ of $P$ of complexity $\O(n)$ can be computed in $\O(n\log n)$ time.
\end{theorem}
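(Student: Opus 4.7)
My plan is to construct $S$ greedily as a min-link Fréchet simplification and use a triangle-inequality argument to bound the complexity of the induced subcurve of $S$. Concretely, set $s_1:=p_1$, and having placed $s_i=p_{t_i}$, choose $t_{i+1}$ to be the largest index with $\df(p_{t_i}p_{t_{i+1}},P[t_i,t_{i+1}])\leq 2\Delta$ and set $s_{i+1}:=p_{t_{i+1}}$; halt when $t_{i+1}=n$. Since the $t_i$ are strictly increasing integers in $[1,n]$, we have $|S|\leq n$, meeting the complexity bound. The first bullet of Definition~\ref{def:goodsimp} follows by concatenating the monotone $2\Delta$-free-space paths witnessing each greedy step, giving $\df(S,P)\leq 2\Delta$.

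The second property is the heart of the argument. Given $Q\in\curvespace{\ell}$ and $P[s,t]$ with $\df(Q,P[s,t])\leq\Delta$, I take $S[s',t']$ to be the smallest subcurve of $S$ whose parameter interval contains every $t_j\in(s,t)$, with $s'$ and $t'$ chosen on the two bordering edges of $S$. Applying the triangle inequality together with the first property yields $\df(P[s,t],S[s',t'])\leq 2\Delta$. To control $|S[s',t']|$, fix an optimal Fréchet matching between $Q$ and $P[s,t]$; each of the $\ell-1$ edges of $Q$ is matched to a subcurve $P[a_r,b_r]$, which therefore lies within $\Delta$ of a single segment.

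The main obstacle is bounding the number of indices $t_j$ that can fall in the interior of each piece $P[a_r,b_r]$. The intuition is that if three indices $t_j,t_{j+1},t_{j+2}$ all lay in $(a_r,b_r)$, then a triangle-inequality estimate via the $r$-th edge of $Q$ would show $\df(p_{t_j}p_{t_{j+2}},P[t_j,t_{j+2}])\leq 2\Delta$, contradicting the maximality of $t_{j+1}$ in the greedy step at $t_j$. Combined with a careful charging of the boundary vertices $s',t'$, this type of argument --- whose precise quantitative version is given in~\cite{vanderhoog2024fasterdeterministicsubtrajectoryclustering} and which tightens the analogous reasoning in~\cite{Brüning2022Faster,conradi2023findingcomplexpatternstrajectory} --- delivers $|S[s',t']|\leq\ell+2$.

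For the running time, each greedy step reduces to finding the largest $j>t_i$ with $\df(p_{t_i}p_j,P[t_i,j])\leq 2\Delta$. Since this Fréchet distance is monotone non-decreasing in $j$ and, for fixed endpoints, computable in $O(j-t_i)$ time by a segment-versus-polyline Fréchet computation, an exponential search followed by a binary search over the resulting doubling interval locates $t_{i+1}$ in $O((t_{i+1}-t_i)\log n)$ time. Summing over all greedy steps telescopes to $O(n\log n)$ overall, completing the theorem.
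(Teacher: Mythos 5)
The paper does not prove this theorem; it cites it directly from \cite{vanderhoog2024fasterdeterministicsubtrajectoryclustering}, so there is no internal proof to compare against. Judged on its own merits, your attempt has two genuine gaps.

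The running-time argument relies on the claim that $j\mapsto\df(p_{t_i}p_j,P[t_i,j])$ is monotone non-decreasing in $j$, which is false in general. Take $P=\bigl((0,0),(1,0),(0,1),(10,10)\bigr)$ with $t_i=1$: the distance to the shortcut is $0$ for $j=2$, rises to $1$ for $j=3$, and then drops to about $1/\sqrt{2}$ for $j=4$, because the backtracking detour through $(1,0),(0,1)$ lies close to the long diagonal shortcut. Hence exponential-plus-binary search on the predicate $\df(p_{t_i}p_j,P[t_i,j])\leq 2\Delta$ is not sound, and your telescoping bound for the total time does not follow. The $O(n\log n)$ bound for greedy Fréchet simplification is instead obtained via the incremental free-space reachability technique of \cite{de2013fast}, where one maintains the reachable interval on the bottom boundary of the free-space strip as vertices of $P$ are appended.

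The bound $|S[s',t']|\leq\ell+2$ is also not established. Your counting argument shows that no three consecutive breakpoints $t_j,t_{j+1},t_{j+2}$ can all lie in a single preimage interval $[a_r,b_r]$, which allows up to two breakpoints per edge of $Q$ and hence only $|S[s',t']|\in O(\ell)$, roughly $2\ell$, not $\ell+2$. You acknowledge this and defer to the cited paper, but the issue is more than bookkeeping: the pathlet-preserving simplification of \cite{vanderhoog2024fasterdeterministicsubtrajectoryclustering} is \emph{not} the plain greedy construction you describe. The plain greedy simplification, used in the earlier works \cite{Brüning2022Faster,conradi2023findingcomplexpatternstrajectory}, yields strictly weaker constants in both the distance (contributing to approximation factor $11$ rather than $4$) and the subcurve complexity; the $\ell+2$ and $2\Delta$ guarantees of Definition~\ref{def:goodsimp} were obtained precisely by replacing greedy with a more carefully engineered construction. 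So the central point — that your greedy $S$ is a simplification in the sense of Definition~\ref{def:goodsimp} — remains unproved.
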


\begin{observation}
    Let $P$ be a polygonal curve and let $S$ be a simplification of $P$. By definition of $S$ and the triangle inequality for any curve $\pi\in\curvespace{\ell}$ there is a subcurve $\hat{\pi}$ of $S$ of complexity at most $\ell+2$---that is $\hat{\pi}\in\curvespace{\ell+2}$---such that $\Cov_P(\pi,\Delta)\subset\Cov_P(\hat{\pi},4\Delta)$.
\end{observation}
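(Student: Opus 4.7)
The plan is to pick a single ``anchor'' subcurve of $P$ witnessing coverage, apply the simplification property there to obtain one candidate $\hat\pi$, and then extend to all other witnesses via the triangle inequality. The key observation I would exploit is that \Cref{def:goodsimp} a priori provides a (possibly different) subcurve of $S$ for each subcurve $P[s,t]$ that is $\Delta$-close to $\pi$, but once we fix any one such subcurve and bound $\df(\pi,\hat\pi)$, every remaining witness can be handled uniformly.

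First I would dispose of the trivial case $\Cov_P(\pi,\Delta)=\emptyset$ by choosing any subcurve $\hat\pi$ of $S$ of complexity at most $\ell+2$. Otherwise, I would fix an arbitrary interval $[s,t]\subset\Cov_P(\pi,\Delta)$, so that $\df(P[s,t],\pi)\leq\Delta$, and apply \Cref{def:goodsimp} with the role of $Q$ played by $\pi\in\curvespace{\ell}$ and the witness $P[s,t]$. This yields a subcurve $S[s',t']$ of complexity at most $\ell+2$ with $\df(P[s,t],S[s',t'])\leq 2\Delta$; I would set $\hat\pi:=S[s',t']$. One application of the triangle inequality then gives $\df(\pi,\hat\pi)\leq\df(\pi,P[s,t])+\df(P[s,t],\hat\pi)\leq\Delta+2\Delta=3\Delta$.

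Finally, for every other $[s_0,t_0]\subset\Cov_P(\pi,\Delta)$ I would apply the triangle inequality a second time to obtain $\df(P[s_0,t_0],\hat\pi)\leq\df(P[s_0,t_0],\pi)+\df(\pi,\hat\pi)\leq\Delta+3\Delta=4\Delta$, so $[s_0,t_0]\subset\Cov_P(\hat\pi,4\Delta)$. Taking the union over all such intervals yields $\Cov_P(\pi,\Delta)\subset\Cov_P(\hat\pi,4\Delta)$, as required. There is no real obstacle here—the statement is essentially a compressed two-step triangle inequality—and the only thing worth highlighting is that the arbitrariness of the anchor witness is exactly what lets a single fixed $\hat\pi$ serve simultaneously for every interval in the coverage of $\pi$.
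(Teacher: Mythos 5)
Your argument is correct and is exactly the intended justification the paper leaves implicit behind the words ``by definition of $S$ and the triangle inequality'': fix one witness interval $[s,t]$ with $\df(P[s,t],\pi)\leq\Delta$, use \Cref{def:goodsimp} to extract $\hat\pi\subset S$ with $\df(P[s,t],\hat\pi)\leq 2\Delta$ and $|\hat\pi|\leq\ell+2$, deduce $\df(\pi,\hat\pi)\leq 3\Delta$, and then one more triangle inequality shows every witness of $\Cov_P(\pi,\Delta)$ is a witness of $\Cov_P(\hat\pi,4\Delta)$. The only stylistic nit is that ``$[s,t]\subset\Cov_P(\pi,\Delta)$'' is not quite synonymous with ``$\df(P[s,t],\pi)\leq\Delta$'' (the coverage is a union of such intervals, and a sub-interval of the union need not itself be a witness), but your surrounding text makes clear you mean a witness interval, so the reasoning goes through.
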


\begin{definition}[Subcurve types]
Let $S$ be a polygonal curve. Let $\pi$ be a subcurve of $S$. We say $\pi$ is a \emph{vertex-vertex-subcurve} of $S$ if $\pi$ starts and ends at vertices of $S$. We say $\pi$ is a \emph{subedge} of $S$, if $\pi$ has complexity $2$, i.e. it is a subcurve of a single edge of $S$. We say $\pi$ is an \emph{edge-affix} of $S$ if it is a subedge of $S$ which starts or ends at a vertex of $S$.
\end{definition}

Based on (approximate) free spaces we now turn to defining the set of candidate center curves that will be the basis for all our algorithms. This set of candidate curves will serve as a small proxy set on which it suffices to approximately solve SC/SCM, instead of the entirety of $\curvespace{\ell}$.

\begin{figure}
    \centering
    \includegraphics[width=\linewidth]{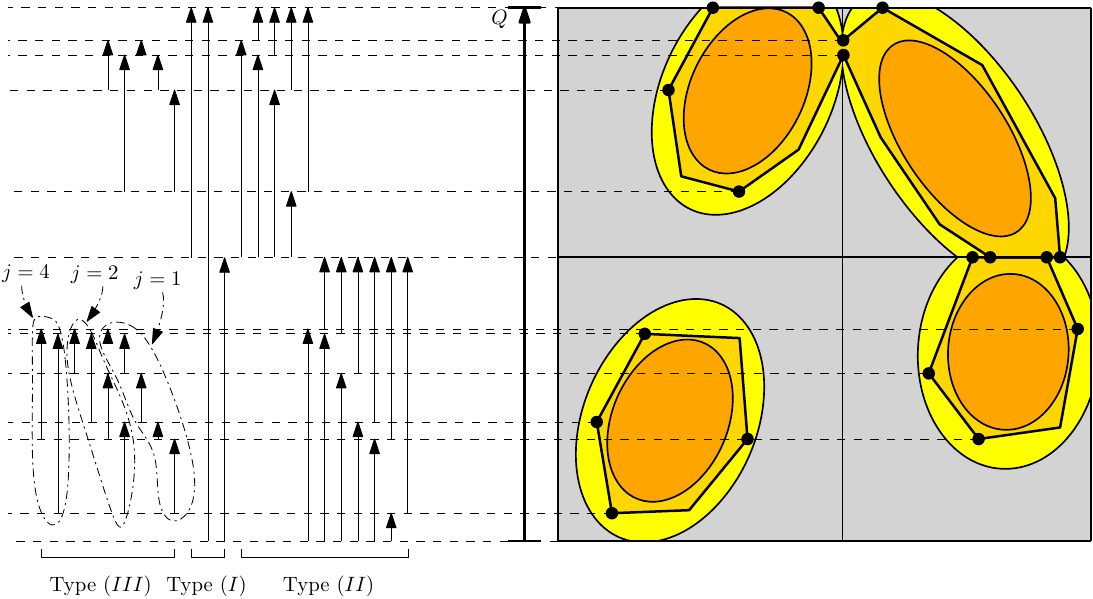}
    \caption{Illustration of all Type $(I)$-, $(II)$- and $(III)$-subcurves of $Q$ (as vertical lines) that are not reversals, induced by the approximate free space of $Q$ and $P$ from \Cref{fig:free-space}. Further marked are the values $j$, which induced the set of Type $(III)$-subcurves on the first edge of $Q$.}
    \label{fig:type2and3}
\end{figure}

\begin{definition}[Type $(I)$-, $(II)$- and $(III)$-subcurves]
    Let $P$ be a polygonal curve and let $S$ be a simplification of $P$. Let $\alpha$, $\Delta$ and $\ell$ be given. Let $A_S$ be an $(\alpha,\Delta)$-free space of $S$ and $P$. For every edge $e$ of $S$ let $A_e$ be the restriction of $A_S$ onto the edge $e$---that is $A_e$ is an $(\alpha,\Delta)$-free space of $e$ and $P$---and let $\extremal{}(A_e)\subset[0,1]=\{\eps_1,\ldots\}$ be a finite sorted superset of the $y$-coordinates of extremal points of $A_e$ (refer to \Cref{fig:free-space}).
Define three types of subcurves of $S$ via the sets $\extremal{}(A_e)$.
\begin{compactenum}[$(I)$:]
    \item A $(I)$-subcurve of $S$ is a vertex-vertex subcurve of $S$ that starts at some $i$th vertex of $S$ and ends at the $(i+j)$th vertex for $j\in\{2^m\mid0\leq m\leq \lfloor\log_2(\ell)\rfloor\}$.
    \item A $(II)$-subcurve of $S$ is either a affix-subcurve $e[0,\eps_i]$ or $e[\eps_i,1]$ of some edge $e$ of $S$ that is defined by a vertex of $S$ and some value $\eps_i\in\extremal{}(A_e)$ or its reversal $\rev{e[0,\eps_i]}$ or $\rev{e[\eps_i,1]}$.
    \item A $(III)$-subcurve of $S$ is either a subedge-subcurve $e[\eps_i,\eps_{i+j}]$ of an edge $e$ of $S$ that is defined by two values $\eps_i,\eps_{i+j}\in\extremal{}(A_e)$, such that $j\in\{2^m\mid0\leq m\leq \lfloor\log_2(|\extremal{}(A_e)|)\rfloor\}$, or its reversal $\rev{e[\eps_i,\eps_{i+j}]}$.
\end{compactenum}
\end{definition}

\begin{lemma}[\cite{vanderhoog2024fasterdeterministicsubtrajectoryclustering}]\label{lem:vanderhoogsplit}
    Let $P$ be a polygonal curve and let $S$ be a simplification of $P$. Let $\hat{\pi}$ be a subcurve of $S$ of complexity at most $\ell+2$. There is a set $S'_\pi$ consisting of either one subedge of $S$ or one vertex-vertex-subcurve of complexity at most $\ell$ and two edge-affixes of $S$ such that 
    \[\Cov_P(\hat{\pi},4\Delta)\subset\bigcup_{s\in S'_\pi}\Cov_P(s,4\Delta).\]
\end{lemma}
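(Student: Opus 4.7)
The plan is to case-split on whether $\hat{\pi}$ lies entirely within a single edge of $S$. If it does, then setting $S'_\pi := \{\hat{\pi}\}$ makes $\hat{\pi}$ itself a subedge and the containment is immediate. Otherwise, at least one vertex of $S$ lies strictly inside $\hat{\pi}$; I would let $v_a$ denote the first such vertex along $\hat{\pi}$ and $v_b$ the last. This yields a canonical decomposition $\hat{\pi} = \pi_1 \cdot \pi_M \cdot \pi_2$, where $\pi_1$ is the prefix of $\hat{\pi}$ ending at $v_a$, $\pi_M$ is the vertex-vertex subcurve of $S$ from $v_a$ to $v_b$, and $\pi_2$ is the suffix of $\hat{\pi}$ starting at $v_b$. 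By construction $\pi_1$ and $\pi_2$ are subedges of $S$ with one endpoint at a vertex of $S$, hence edge-affixes, so taking $S'_\pi := \{\pi_1, \pi_M, \pi_2\}$ gives the required structure.

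For the complexity bound, $\hat{\pi}$ has at most $\ell+2$ vertices, two of which are its endpoints and are absorbed into $\pi_1$ and $\pi_2$. Thus at most $\ell$ vertices of $\hat{\pi}$ remain in $\pi_M$, so $|\pi_M| \leq \ell$. The heart of the argument is the coverage inclusion. I would fix $x \in \Cov_P(\hat{\pi}, 4\Delta)$, so there exist $s \leq x \leq t$ with $\df(P[s,t], \hat{\pi}) \leq 4\Delta$, and take reparametrizations witnessing this Fréchet distance. Using the two parameters of $\hat{\pi}$ that correspond to $v_a$ and $v_b$, one can transfer the decomposition $\hat{\pi} = \pi_1\cdot\pi_M\cdot\pi_2$ into a matching decomposition $P[s,u_1]$, $P[u_1,u_2]$, $P[u_2,t]$ of $P[s,t]$, where restricting the original Fréchet matching to each piece yields $\df(P[s,u_1],\pi_1)\leq 4\Delta$, $\df(P[u_1,u_2],\pi_M)\leq 4\Delta$, and $\df(P[u_2,t],\pi_2)\leq 4\Delta$. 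Since $x$ lies in at least one of these three subintervals, it belongs to the $4\Delta$-coverage of the corresponding element of $S'_\pi$, which gives the required inclusion.

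The main obstacle I anticipate is handling degenerate cases cleanly: if an endpoint of $\hat{\pi}$ already coincides with a vertex of $S$, the corresponding edge-affix degenerates to a single point, which is harmless (its coverage is empty or can be folded into $\pi_M$) but needs to be accounted for when verifying that $|\pi_M|\leq \ell$. A secondary subtlety, which I would cite rather than reprove, is that the reparametrizations witnessing $\df(P[s,t],\hat{\pi})\leq 4\Delta$ may indeed be split at the preimages of $v_a$ and $v_b$ to produce valid reparametrizations of the three matched subcurves; this is a standard property of the continuous Fréchet distance.
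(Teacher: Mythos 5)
Your argument is correct, and you should note that the paper itself does not prove this lemma at all: it is stated as a citation to van der Hoog, van der Horst and Ophelders \cite{vanderhoog2024fasterdeterministicsubtrajectoryclustering}, so there is no in-paper proof to compare against. The decomposition you describe---a trivial case when $\hat{\pi}$ is a subedge, and otherwise splitting $\hat{\pi}$ at the first and last interior vertices of $S$ into an edge-affix, a vertex-vertex subcurve, and another edge-affix, then restricting a witnessing Fr\'echet matching at the two corresponding parameters---is the natural proof suggested by the statement's structure and is sound; the complexity bound $|\pi_M|\leq\ell$ follows because the two endpoints of $\hat{\pi}$ are dropped, and the coverage inclusion follows since any $x\in[s,t]$ falls into one of the three matched subintervals $[s,u_1]$, $[u_1,u_2]$, $[u_2,t]$, each of which witnesses membership in the $4\Delta$-coverage of the corresponding piece.

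One small correction to your closing remarks: the ``degenerate edge-affix'' worry is already resolved by your own convention. Since you define $v_a$ as the first vertex of $S$ \emph{strictly} interior to $\hat{\pi}$, an endpoint of $\hat{\pi}$ that happens to be a vertex of $S$ is not chosen as $v_a$; in that case $\pi_1$ is a full edge (still an edge-affix, by the paper's definition), not a single point, and $\pi_M$ automatically has complexity at most $\ell$. The degeneracy you would need to guard against only appears if one instead sets $v_a$ to the first $S$-vertex on $\hat{\pi}$ including endpoints, which your definition avoids. The genuinely degenerate case is when $\hat{\pi}$ has exactly one interior vertex, so that $v_a=v_b$ and $\pi_M$ collapses to a single vertex; this is harmless since a point trivially has complexity at most $\ell$ and the coverage argument goes through unchanged.
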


The proofs of the following lemma and theorem are adaptations of existing proofs extending them to the $(\alpha,\Delta)$-free spaces $A_S$. For the proofs of \Cref{lem:foursplit,lem:supersetfreespace} we refer to \Cref{apx:setsystemproofs}.

\begin{restatable}[adapted from \cite{Brüning2022Faster,conradi2023findingcomplexpatternstrajectory}]{lemma}{lemmaapxOne}\label{lem:foursplit}
    Let $P$ be a polygonal curve and let $S$ be a simplification of $P$. Let $A_S$ be an $(\alpha,\Delta)$-free space of $S$ and $P$. For every edge $e$ of $S$ let $A_e$ be the restriction of $A_S$ onto the edge $e$ and let $\extremal{}(A_e)\subset[0,1]=\{\eps_1,\ldots\}$ be a finite sorted superset of the $y$-coordinates of extremal points of $A_e$. 
    Let $\hat{\pi}$ be a subedge of the edge $e$. Then there is a set $S''_\pi$ consisting of at most four subedges of $S$ starting and ending at values in $\extremal{}(A_e)$ such that 
    \[\Cov_P(\hat{\pi},4\Delta)\subset\bigcup_{s\in S''_\pi}\Cov_{A_S}(s).\]
    If $\hat{\pi}$ is an edge-affix, then $S''_\pi$ consists of only two edge-affixes starting and ending at values in $\extremal{}(A_e)$.
\end{restatable}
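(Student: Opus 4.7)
The plan is to mirror the argument template established in~\cite{Brüning2022Faster, conradi2023findingcomplexpatternstrajectory, vanderhoog2024fasterdeterministicsubtrajectoryclustering}, replacing the exact $\Delta$-free space with the abstract $(\alpha,\Delta)$-free space $A_S$ throughout. Fix $\hat{\pi} = e[a, b]$ and an arbitrary interval $[x, y] \subset \Cov_P(\hat{\pi}, 4\Delta)$, so that $\df(e[a, b], P[x, y]) \leq 4\Delta$ is witnessed by some monotone path $\gamma$ from $(x,a)$ to $(y,b)$ in $\dfree[4\Delta]{}{}(e, P)$. Let $\eps_i \leq a \leq \eps_{i+1}$ and $\eps_j \leq b \leq \eps_{j+1}$ be the enclosing pairs of values from $\extremal{}(A_e)$. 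The set $S''_\pi$ will consist of those among the four subedges $e[\eps_i, \eps_j]$, $e[\eps_i, \eps_{j+1}]$, $e[\eps_{i+1}, \eps_j]$, $e[\eps_{i+1}, \eps_{j+1}]$ whose endpoints are correctly ordered; I will show that $[x, y] \subset \Cov_{A_S}(s)$ holds for at least one of them.

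The central structural observation is that, because $\extremal{}(A_e)$ is a superset of the $y$-coordinates of every left-, right-, top-, and bottom-most point of $A_e$, the interior of each horizontal strip between consecutive values of $\extremal{}(A_e)$ contains no horizontal local extremum of $A_e$. Combined with per-cell convexity and the general-position assumption stated just after the definition of extremal points, this forces both the left and right boundaries of $A_e$ to be monotone functions of $y$ inside every such strip. A horizontal slice of $A_e$ at height $y'$ therefore only grows or only shrinks monotonically as $y'$ varies across such a strip. This is the sole free-space property used in the predecessor proofs, and since it depends only on convexity within cells plus the definition of extremal points, it transfers verbatim to $(\alpha,\Delta)$-free spaces.

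Using this property, I plan to extend $\gamma$ at each endpoint until it reaches an extremal $y$-value. The simplification guarantee $\df(S, P) \leq 2\Delta$ supplies a monotone anchoring alignment between $S$ and $P$ whose restriction to $e$ certifies, together with the $4\Delta$-slack on $\gamma$ and the triangle inequality, that at each of $\eps_i$, $\eps_{i+1}$, $\eps_j$, $\eps_{j+1}$ there is an anchor point inside $A_e$ itself (in fact inside $\dfree{}{}(e, P) \subset A_e$), with $x$-coordinates $\leq x$ at the bottom and $\geq y$ at the top. Strip-monotonicity then lets us concatenate these anchors to $\gamma$ via monotone segments that never exit $A_S$. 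The resulting monotone path inside $A_S$ from an $(x',\eps_\ast)$ to a $(y',\eps_\ast)$ with $x' \leq x \leq y \leq y'$ witnesses $[x,y] \subset \Cov_{A_S}(s)$ for the corresponding $s \in S''_\pi$. The four candidate subedges account for the four ways of choosing which enclosing extremal value to anchor to at each end; when $\hat{\pi}$ is an edge-affix we have $a = 0$ or $b = 1$, one endpoint is already pinned at a vertex, and only two candidate edge-affixes remain.

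The main obstacle will be verifying rigorously that the extensions stay inside $A_S$ rather than merely inside an inflated free space, and that the $4\Delta$-budget matches what the structural arguments consume: $2\Delta$ is spent on invoking the simplification to locate the anchor, a further $\Delta$ on the triangle inequality that places the anchor inside the $\Delta$-free space, and the remaining slack is absorbed by strip-monotonicity when sliding up or down to the extremal $y$-coordinate. A cell-by-cell case analysis distinguishing whether $\gamma$ enters and exits each strip through its left or right boundary --- exactly the case split of~\cite{Brüning2022Faster} --- is what both produces and is exhausted by the four candidates.
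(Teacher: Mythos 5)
Your proposal captures the right structural insight—strip monotonicity of the left and right boundaries between consecutive values of $\extremal{}(A_e)$—but it has two genuine problems, one of which is a real gap.

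First, the gap: your candidate set consists only of the \emph{forward}-oriented subedges $e[\eps_i,\eps_j]$, $e[\eps_i,\eps_{j+1}]$, $e[\eps_{i+1},\eps_j]$, $e[\eps_{i+1},\eps_{j+1}]$, and you explicitly drop the one whose endpoints are ``incorrectly ordered.'' That dropped combination is exactly the one the paper cannot do without. When $a$ and $b$ fall in the same gap between consecutive extremal values (so $\eps_{i}=\eps_j<a\leq b<\eps_{i+1}=\eps_{j+1}$), your candidate set degenerates to $\{e[\eps_i,\eps_i],\,e[\eps_i,\eps_{i+1}],\,e[\eps_{i+1},\eps_{i+1}]\}$, and $e[\eps_i,\eps_{i+1}]$ does not in general dominate $\Cov_{A_S}(e[a,b])$. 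Concretely, if $\eps_i$ is the $y$-coordinate of a bottom-most extremal point of cell $k$ and $\eps_{i+1}$ of a top-most extremal point of cell $k'$, the free space narrows toward both ends of the strip, so the minimal left abscissa reachable at height $\eps_i$ can sit strictly to the \emph{right} of $l_k(a)$. The only way to anchor to an extremal height while still reaching a start point $\leq x$ is to run a path whose $y$-coordinate \emph{decreases}, i.e.\ to use the \emph{reversed} subedge $\rev{e[\eps_i,\eps_{i+1}]}$ (in the paper's notation $\rev{e[\eps_3,\eps_2]}$). Reversals are admissible here—they are precisely the Type $(II)$/$(III)$ reversals admitted in the candidate-set construction—and the paper's proof uses one; your approach omits them, so the claimed coverage inclusion fails in the degenerate case.

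Second, a misconception that makes the anchoring step unprovable as written: you propose to certify the anchor points via the simplification guarantee $\df(S,P)\leq 2\Delta$ plus a triangle inequality, and to ``spend'' part of the $4\Delta$ budget on this. But Lemma~\ref{lem:foursplit} has no $\Delta$-budget of its own to allocate: the $4\Delta$ in the hypothesis is precisely the radius that defines the free space sitting inside $A_e$, and the paper's argument is a purely free-space-geometric one. A monotone matching of $S$ against all of $P$ restricted to edge $e$ gives a single abscissa per height, with no control on whether it lies to the left of $x$ at the bottom and to the right of $y$ at the top, so it cannot supply the desired anchors. The correct anchoring comes from the observation the paper makes explicitly: since the $y$-coordinates of left-most and right-most points of $A_e$ lie in $\extremal{}(A_e)$, the functions $l_k(\cdot)$ and $r_k(\cdot)$ are monotone on each gap, hence $\min(l_k(\eps_i),l_k(\eps_{i+1}))\leq l_k(a)$ and $r_k(b)\leq\max(r_k(\eps_j),r_k(\eps_{j+1}))$, and one chooses, independently at the two ends, whichever of the two enclosing extremal values realizes the dominating boundary. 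That choice, not the simplification, is what produces (and is exhausted by) the four candidates, and, in the degenerate same-gap case, forces one of them to be a reversal.
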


\begin{restatable}[adapted from \cite{vanderhoog2024fasterdeterministicsubtrajectoryclustering}]{theorem}{lemmaapxTwo}\label{lem:supersetfreespace}
    Let $P$ be a polygonal curve and let $S$ be a simplification of $P$. Let $A_S$ be an $(\alpha,\Delta)$-free space of $S$ and $P$. For every edge $e$ of $S$ let $A_e$ be the restriction of $A_S$ onto the edge $e$ and let $\mathcal{E}(A_e)$ be a superset of the $y$-coordinates of extremal points of $A_e$ which induce a set $C\subset\curvespace{\ell}$ of Type $(I)$-, $(II)$- and $(III)$-subcurves of $S$. Then for any curve $\pi\in\curvespace{\ell}$ there is a set $S_\pi\subset C$ of size at most $8$, such that 
    \[\Cov_P(\pi,\Delta)\subset\bigcup_{s\in S_\pi}\Cov_{A_S}(s)\subset\bigcup_{s\in S_\pi}\Cov_P(s,4\alpha\Delta).\]
\end{restatable}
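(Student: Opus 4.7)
The plan is to peel off three layers of approximation: first, replace $\pi$ by a subcurve of the simplification $S$; second, split that subcurve at vertices of $S$ into structurally simpler pieces; third, align each piece both with the extremal grid of $A_S$ and with the dyadic grid underlying the definitions of Type $(I)$- and Type $(III)$-subcurves. For the first step, the observation following \Cref{def:goodsimp} gives a subcurve $\hat{\pi}$ of $S$ of complexity at most $\ell+2$ with $\Cov_P(\pi,\Delta)\subset\Cov_P(\hat{\pi},4\Delta)$. For the second step, \Cref{lem:vanderhoogsplit} decomposes $\hat{\pi}$ into a set $S'_\pi$ of size at most three, consisting of either \textbf{(Case A)} a single subedge of $S$ or \textbf{(Case B)} one vertex-vertex-subcurve of complexity at most $\ell$ together with two edge-affixes of $S$, in such a way that $\Cov_P(\hat{\pi},4\Delta)$ is contained in the union of the $4\Delta$-coverages of the pieces.

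For the third step, in \textbf{Case B} I would split the vertex-vertex-subcurve $S[v_i,v_{i+k}]$ with $k\leq\ell$ into the two Type $(I)$-subcurves $S[v_i,v_{i+m}]$ and $S[v_{i+k-m},v_{i+k}]$, where $m=2^{\lfloor\log_2 k\rfloor}$, so that $m\leq k<2m$ and the two dyadic blocks overlap. Each of the two edge-affixes is decomposed into at most two Type $(II)$-subcurves via the edge-affix case of \Cref{lem:foursplit}, for a total of at most $2+2\cdot 2 = 6$ curves drawn from $C$. In \textbf{Case A}, the subedge case of \Cref{lem:foursplit} yields at most four subedges of the form $e[\eps_a,\eps_b]$ with endpoints in $\extremal{}(A_e)$; each is dyadically split into two Type $(III)$-subcurves by the same procedure as above, for a total of at most $4\cdot 2 = 8$ curves. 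In either case $|S_\pi|\leq 8$.

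Chaining all inclusions yields
\[\Cov_P(\pi,\Delta)\subset\Cov_P(\hat{\pi},4\Delta)\subset\bigcup_{s\in S_\pi}\Cov_{A_S}(s)\subset\bigcup_{s\in S_\pi}\Cov_P(s,\alpha\Delta)\subset\bigcup_{s\in S_\pi}\Cov_P(s,4\alpha\Delta),\]
where the third inclusion uses $A_S\subset\dfree[\alpha\Delta]{}{}(S,P)$ and the fourth is monotonicity in the radius. The main technical point is the dyadic cutting argument used in Stage 3: one must show that a monotone path in $A_S$ spanning a block of width $k$ in the $S$-parameter (measured in vertex indices in Case B, respectively extremal indices in Case A) can be cut into two monotone sub-paths, each lying in one of the two overlapping dyadic blocks of width $m$ with $2m\geq k$. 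This works because monotonicity forces the path to visit every intermediate $S$-parameter, so cutting the original path at level $i+m$ produces a first sub-path in the prefix block, while the restriction of the same path starting at level $i+k-m\leq i+m$ produces a second monotone sub-path in the suffix block; the two $P$-intervals thereby certified in $\Cov_{A_S}$ overlap and together cover the $P$-interval of the original undivided block. A minor subtlety is orientation: reversals are absorbed by the reversal components of Types $(II)$ and $(III)$ in Case A, while in Case B the vertex-vertex piece inherits its orientation directly from $S$.
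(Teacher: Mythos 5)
Your proof follows essentially the same route as the paper's: reduce from $\pi$ to a subcurve $\hat\pi$ of $S$ via the observation after \Cref{def:goodsimp}, apply \Cref{lem:vanderhoogsplit} and \Cref{lem:foursplit} to obtain either four extremal-aligned subedges or one vertex-vertex subcurve of complexity at most $\ell$ plus four extremal-aligned edge-affixes, and then dyadically split the subedges and the vertex-vertex piece into two overlapping Type~$(III)$ and Type~$(I)$ subcurves respectively, yielding $\max(6,8)=8$. The monotone-path cutting argument you give to justify the dyadic overlap (cut at the level where the two blocks overlap and use monotonicity) is exactly the argument in the paper's proof.
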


Overall, \Cref{thm:thijs-simplification,lem:supersetfreespace} (roughly) yield the following: Given a polygonal curve $P$, one can compute in $\tildeO(n^2)$ time a simplification $S$ and a set of $\tildeO(n^2)$ subcurves $C$ of $S$ such that for any solution to SC (resp. SCM) with $\Delta$ and $\ell$ of cardinality $k$ there is a subset of $C$ of cardinality $8k$ that solves SC (resp. SCM) with $4\Delta$ and $\ell$. Hence, approximately solving SC (resp. SCM) restricted to this small discrete set $C$ yields a bicriterial approximation.

\subsection{Sweep-sequences}\label{sec:sweepseq}

In this section we identify an ordering of the Type $(II)$ and Type $(III)$ subcurves from \Cref{lem:supersetfreespace} which allows maintaining a symbolic representation of (an approximation of) the coverage, as well as construction of a data structure that allows efficient point-queries returning the set of all Type $(II)$ and Type $(III)$ whose coverage includes the query point.

So far we have handled approximate free spaces in an abstract sense. To actually work with the approximate free space, we endow it with some additional information and make some assumptions.

\begin{definition}\label{def:cellfunc}
    Let $e$ be an edge and let $P$ be a polygonal curve. Let $A_e$ be an $(\alpha,\Delta)$-free space of $e$ and $P$. Then define $l_i(\cdot)$ as the function mapping any $y$ to the $x$-coordinate of the leftmost point of the $i$th cell in $A_e$ at height $y$. If this point does not exist, $l_i(y)=\infty$. Similarly define $r_j(y)$ to be the $x$-coordinate of the rightmost point at height $y$, and $\infty$ otherwise.
\end{definition}

Throughout this section we are given a polygonal curve $P$, values $\alpha,\Delta$ and $\ell$, as well as a simplification $S$ of $P$. Let $A_S$ be an $(\alpha,\Delta)$-free space of $S$ and $P$. For every edge $e$ of $S$ let $A_e$ be the restriction of $A_S$ onto the edge $e$. Further, let a superset $\extremal{}(A_e)$ of the $y$-coordinates of the extremal points of $A_e$ be given, which induce a set $\ConcreteCandidates{S}(P)$ of $(I)$-$(III)$-subcurves of $S$. We assume that for every edge $e$ the functions $l_i(\cdot)$ and $r_i(\cdot)$ of the free space $A_e$ can be computed in $\O(1)$ time. We further assume that $|\extremal{}(A_e)|=\O(n)$. Hence there are $\O(n\log\ell)$ Type $(I)$, $\O(n^2)$ Type $(II)$-subcurves and $\O(n^2\log n)$ Type $(III)$-subcurves in $\ConcreteCandidates{S}(P)$. 

\begin{figure}
    \centering
    \includegraphics[width=\linewidth]{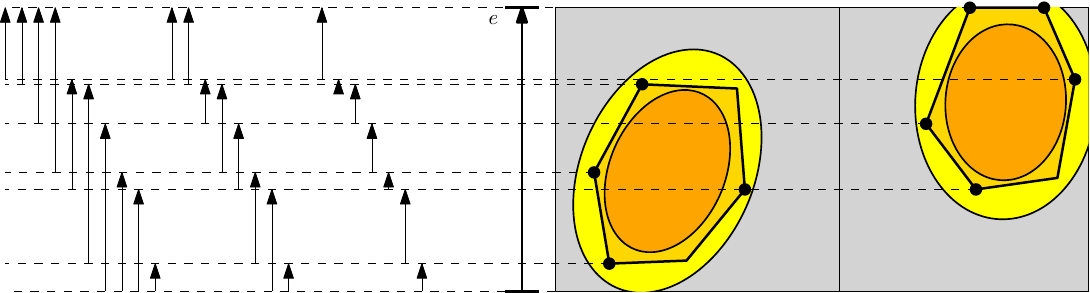}
    \caption{Illustration of three of the eight sweep-sequences in $\edgeseqs{e}$ of the edge $e$ that are constructed for Type $(III)$-subcurves. One for $j=1$, one for $j=2$ and one for $j=4$.}
    \label{fig:sweepsequence}
\end{figure}

\begin{definition}[Sweep-sequence]
Let $E=(e_1,\ldots)$ be a sorted list of values in $\bR$. We say $\sweepseq$ is a sweep-sequence of $E$ if $\sweepseq$ is a list of tuples of $E$ such that either for all consecutive tuples $(e_a,e_b)$ and $(e_c,e_d)$ it holds that $a\leq b$, $c\leq d$, $0\leq a-c\leq1$ and $0\leq b-d\leq 1$ or for all consecutive tuples it holds that $a\geq b$, $c\geq d$, $0\leq c-a\leq1$ and $0\leq d-b\leq 1$.
\end{definition}

\begin{lemma}\label{lem:allsweep}
    Let $e$ be an edge of $S$. Assume $0,1\in\extremal{}(A_e)$. There is a set $\edgeseqs{e}$ of $\O(\log n)$ sweep-sequences of $\extremal{}(A_e)$, each of length $\O(n)$, such that for any Type $(II)$ or Type $(III)$-subcurve $\pi$ on the edge $e$ there is a tuple $(\eps_i,\eps_j)$ in one of the sweep-sequences in $\edgeseqs{e}$ such that $\pi=e[\eps_i,\eps_j]$ if $\eps_i\leq \eps_j$, and $\pi=\rev{e[\eps_j,\eps_i]}$ if $\eps_i>\eps_j$. Further, for the first and last pair $(e_a,e_b)$ in every sweep-sequence it holds that $a=b$.
\end{lemma}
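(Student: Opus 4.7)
The plan is to construct $\edgeseqs{e}$ by assigning one sweep-sequence per power of two $j=2^p \leq m-1$ (which would contain every gap-$j$ Type $(III)$-subcurve together with two Type $(II)$-``wings''), plus one additional sweep-sequence collecting any Type $(II)$-subcurves not already covered; each of these sequences would then be mirrored across the diagonal to handle the reversals.

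Concretely, I would write $\extremal{}(A_e)=\{\eps_1<\eps_2<\cdots<\eps_m\}$ with $\eps_1=0$, $\eps_m=1$ and $m=\O(n)$, and for each $p=0,1,\ldots,\lfloor\log_2(m-1)\rfloor$, setting $j:=2^p$, define the above-diagonal sweep-sequence
\[
\sweepseq_j^{+} \;:=\; (\eps_1,\eps_1),(\eps_1,\eps_2),\ldots,(\eps_1,\eps_{j+1}),(\eps_2,\eps_{j+2}),\ldots,(\eps_{m-j},\eps_m),(\eps_{m-j+1},\eps_m),\ldots,(\eps_m,\eps_m).
\]
Its middle segment enumerates exactly the tuples $(\eps_i,\eps_{i+j})$ for $i=1,\ldots,m-j$, realizing every Type $(III)$-subcurve of gap $j$; the two wings pin one coordinate at $\eps_1$ or $\eps_m$ while the other advances by a single index per step. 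I would additionally include the Type $(II)$-dedicated sweep-sequence
\[
\sweepseq^{+}_{II} \;:=\; (\eps_1,\eps_1),(\eps_1,\eps_2),\ldots,(\eps_1,\eps_m),(\eps_2,\eps_m),\ldots,(\eps_m,\eps_m),
\]
which explicitly contains every Type $(II)$ tuple $(\eps_1,\eps_k)$ and $(\eps_k,\eps_m)$ on edge $e$. Both sequences start at $(\eps_1,\eps_1)$, end at $(\eps_m,\eps_m)$, and have length at most $2m-1=\O(n)$.

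For the reversals, the below-diagonal sweep-sequences $\sweepseq_j^{-}$ and $\sweepseq^{-}_{II}$ would be obtained by swapping the two entries of every tuple $(\eps_a,\eps_b)\mapsto(\eps_b,\eps_a)$; the resulting lists satisfy the ``below-diagonal'' branch of the sweep-sequence definition by an identical index-difference check. Altogether $|\edgeseqs{e}|=2(\lfloor\log_2(m-1)\rfloor+2)=\O(\log n)$ sweep-sequences, each of length $\O(n)$, and every Type $(II)$- and Type $(III)$-subcurve (including reversals) of $e$ appears as some tuple of some sequence.

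The only point requiring care will be verifying the sweep-sequence axiom at the two internal junctions of each $\sweepseq_j^{+}$: a direct case analysis shows that consecutive index differences are $(0,1)$ throughout the initial wing, $(1,1)$ throughout the middle, and $(1,0)$ throughout the final wing, with the junctions $(\eps_1,\eps_{j+1})\to(\eps_2,\eps_{j+2})$ and $(\eps_{m-j},\eps_m)\to(\eps_{m-j+1},\eps_m)$ yielding differences $(1,1)$ and $(1,0)$ respectively; all differences lie in $\{0,1\}^2$, as required. Everything else is bookkeeping.
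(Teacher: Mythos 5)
Your construction is correct and matches the paper's essentially verbatim: one sweep-sequence per power-of-two gap for the Type $(III)$-subcurves with affix wings attached at both ends, one dedicated sweep-sequence for the Type $(II)$-subcurves, and a mirrored copy of each for reversals. The only cosmetic difference is that you fold the wings directly into a single index formula rather than describing them as prepended/appended subsequences, and you use a marginally tighter exponent range $p\le\lfloor\log_2(m-1)\rfloor$ (which omits only the case $j=m$, for which no Type $(III)$-subcurve exists anyway).
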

\begin{proof}
    Let $\extremal{}(A_e)=\{0=\eps_1,\ldots,\eps_m=1\}$.
    We construct one sweep-sequence for the Type $(II)$-subcurves, and $\O(\log n)$ sweep-sequences for the Type $(III)$-subcurves. 

    For the Type $(II)$-subcurves the sought after sweep-sequence is precisely the sequence $\{(\eps_1,\eps_1),(\eps_1,\eps_2),\ldots,(\eps_1,\eps_{m-1}),(\eps_1,\eps_m),(\eps_{2},\eps_m),\ldots,(\eps_m,\eps_m)\}$.

    For the Type $(III)$-subcurves we construct two sweep-sequences for any $j\in\{2^m\mid 0\leq m\leq\lfloor\log_2(|\extremal{}(A_e)|)\rfloor\}$, namely the sequences $\{(\eps_1,\eps_{1+2^j}),(\eps_2,\eps_{2+2^j}),\ldots,(\eps_{m-2^j},\eps_{m})\}$ and $\{(\eps_{1+2^j},\eps_1),(\eps_{2+2^j},\eps_2),\ldots,(\eps_{m},\eps_{m-2^j})\}$. To satisfy the requirement that the first and last pair consists of two copies of the same value, we append and prepend the sequence $\{(\eps_1,\eps_1),(\eps_1,\eps_2),\ldots,(\eps_1,\eps_{2^j})\}$ (resp.\ $\{(\eps_1,\eps_1),(\eps_2,\eps_1),\ldots,(\eps_{2^j},\eps_1)\}$) and append the sequence $\{(\eps_{m-2^j+1},\eps_{m}),\ldots,(\eps_m,\eps_m)\}$ (resp.\ $\{(\eps_{m},\eps_{m-2^j+1}),\ldots,(\eps_m,\eps_m)\}$) (refer to \Cref{fig:sweepsequence}).

    The claim then follows by definition of Type $(II)$- and $(III)$-subcurves as well as the fact that $m=|\extremal{}(A_e)|\in\O(n)$.
\end{proof}

Observe that for any edge $e$ of $S$ these sweep-sequences $\edgeseqs{e}$ can be constructed in $\O(n\log n)$ time.

In the remainder of this section we focus our analysis on sweep-sequences where for every tuple $(\eps_a,\eps_b)$ it holds that $a\leq b$. Any analysis of such sweep-sequences carries over to sweep-sequences where for every tuple $(\eps_a,\eps_b)$ it holds that $a>b$, by setting $e\gets\rev{e}$, and thus to all sweep-sequences constructed in \Cref{lem:allsweep}.

\subsection{Combinatorial representation and the proxy coverage}\label{sec:combRep}

In this section we define an approximation to the notion of $\Delta$-coverage as well as an underlying structure, the \textit{reduced combinatorial representation}, which we later on show how to maintain efficiently. An important consequence is the definition of the \textit{proxy coverage} (\Cref{def:proxycov}) which allows efficient maintenance along a sweep-sequence.

\begin{definition}[Combinatorial representation]
    Let $e$ be an edge of $S$ and let $0\leq s\leq t\leq 1$ be given.
    The combinatorial representation $\combinatorial(e[s,t])$ of the coverage $\Cov_{A_S}(e[s,t])$ of $e[s,t]$ is the set of all inclusionwise-maximal pairs of indices $(i,j)$, such that there are points $s'$ and $t'$ on the $i$th and $j$th edge of $P$ respectively, such that there is a monotone path from $(s',s)$ to $(t',t)$ in $A_S$. A pair of indices $(i,j)$ includes another pair $(i',j')$ if $i\leq i'$ and $j'\leq j$.
\end{definition}

The combinatorial representation separates into two sets, the global group $\globalG(e[s,t])$ consisting of all index-pairs $(i,j)\in\combinatorial(e[s,t])$ such that $i<j$ and the local group $\localG(e[s,t])$ consisting of all index-pairs $(i,i)\in\combinatorial(e[s,t])$.

\begin{observation}
    Let $e[s,t]$ be a subedge of an edge of $S$ and let $P$ be given. Then 
    \begin{align*}
        \Cov_{A_S}(e[s,t])&=\bigcup_{(i,j)\in\combinatorial(e[s,t])}[l_i(s),r_j(t)]\\
        &= \left(\bigcup_{(i,j)\in\globalG(e[s,t])}[l_i(s),r_j(t)]\right)\cup \left(\bigcup_{(i,i)\in\localG(e[s,t])}[l_i(s),r_i(t)]\right).
    \end{align*}
\end{observation}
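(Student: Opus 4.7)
The plan is to prove the two equalities separately. The second equality is immediate from the definitions, since $\combinatorial(e[s,t]) = \globalG(e[s,t]) \sqcup \localG(e[s,t])$ partitions the pairs according to whether $i < j$ or $i = j$, so the single union simply splits into two. I would dispense with it in one sentence and concentrate on the first equality.

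For the inclusion $(\supseteq)$, I would fix $(i,j) \in \combinatorial(e[s,t])$; by definition of the combinatorial representation there exist a point $s'$ on the $i$-th edge of $P$ and a point $t'$ on the $j$-th edge of $P$ together with a monotone path $\gamma$ in $A_S$ from $(s',s)$ to $(t',t)$. I would then prepend to $\gamma$ the horizontal segment from $(l_i(s),s)$ to $(s',s)$ and append the horizontal segment from $(t',t)$ to $(r_j(t),t)$. Each of these segments lives at a single height inside a single cell of the free space diagram, hence is contained in $A_S$ by cell-wise convexity of the $(\alpha,\Delta)$-free space. Their concatenation with $\gamma$ remains monotone in both coordinates, so its existence witnesses that $[l_i(s), r_j(t)] \subseteq \Cov_{A_S}(e[s,t])$.

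For the inclusion $(\subseteq)$, I would take any $a \in \Cov_{A_S}(e[s,t])$ together with a witnessing interval $[a',c'] \ni a$ and a monotone path in $A_S$ from $(a',s)$ to $(c',t)$. Let $i$ (resp.\ $j$) be the index of the edge of $P$ containing $a'$ (resp.\ $c'$), choosing either incident cell consistently when the point happens to be a vertex of $P$; $x$-monotonicity of the path forces $i \leq j$, and so $(i,j)$ satisfies the existence requirement in the definition of $\combinatorial(e[s,t])$. By maximality, some $(i^*, j^*) \in \combinatorial(e[s,t])$ includes $(i,j)$ in the sense that $i^* \leq i$ and $j \leq j^*$. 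Since the columns of cells of the free space diagram are ordered by the edge indices of $P$ along the $x$-axis, this yields $l_{i^*}(s) \leq l_i(s) \leq a' \leq a \leq c' \leq r_j(t) \leq r_{j^*}(t)$, placing $a$ inside the interval $[l_{i^*}(s), r_{j^*}(t)]$.

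The main technical subtlety I anticipate is the treatment of boundary and degeneracy cases: when a witnessing point sits at a vertex of $P$ and thus on the shared boundary of two columns of cells, or when the leftmost or rightmost point at a given height coincides with a cell boundary. These are resolved by the third clause in the definition of an approximate free space (cell-boundary continuity) together with the symbolic perturbation of $y$-coordinates of extremal points noted earlier in the paper, so that either choice of incident cell yields the same conclusion and no part of the monotone path is lost at a vertex transition. Once these conventions are fixed, nothing else in the argument requires real work.
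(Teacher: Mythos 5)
Your proof is correct, and since the paper presents this as an unproved Observation, you have simply filled in the expected direct verification from the definitions: for $(\supseteq)$ extend a witnessing monotone path by horizontal segments inside cells $i$ and $j$ using cell-wise convexity, and for $(\subseteq)$ round a witnessing pair $(i,j)$ up to a maximal one and use the monotonicity $i^* \leq i \Rightarrow l_{i^*}(s) \leq l_i(s)$ and $j \leq j^* \Rightarrow r_j(t) \leq r_{j^*}(t)$. This matches what the authors evidently had in mind, and your handling of cell-boundary degeneracies via the third clause of the approximate free space definition is the right way to close the remaining gap.
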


The overall goal is to find a representation of $\Cov_{A_S}(e[s,t])$ that is computationally easy to maintain. Namely, we would like to represent $\Cov_{A_S}(e[s,t])$ as a disjoint union of sets that are easy to maintain.

Let an edge $e$ of $S$ together with $0\leq s\leq t\leq 1$ be given. We say an index $i$ is bad for $e[s,t]$, if all topmost points in cell $i$ of the free space of $e$ and $P$ are to the left of both $l_i(s)$ and $r_i(t)$, and both $l_i(s)$ and $r_i(t)$ are to the left of all bottom most points in cell $i$. Call this set of bad indices $\bad(e[s,t])$. If $i\not\in\bad(e[s,t])$, $i$ is said to be good for $e[s,t]$.

\begin{observation}\label{obs:problematicnice}
    Let $i$ be good for $e[s,t]$ and $P$. If $l_i(s)\neq\infty\neq r_i(t)$ then $l_i(s)\leq r_i(t)$.
\end{observation}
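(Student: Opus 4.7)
My plan is to prove the observation by contraposition: assuming $l_i(s) > r_i(t)$ with both values finite, I will show that cell $i$ must be bad for $e[s,t]$, contradicting goodness.

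I would begin by recording simple consequences of the assumption. Convexity of the approximate free space $F$ restricted to cell $i$ yields $l_i(s) \leq r_i(s)$ at every height where $F$ is non-empty in cell $i$, so the strict inequality $l_i(s) > r_i(t)$ rules out $s=t$, forcing $s<t$. Finiteness of $l_i(s)$ and $r_i(t)$ further places $s$ and $t$ in the vertical extent of $F$ within cell $i$, giving $y_{\min} \leq s < t \leq y_{\max}$, where $y_{\min}, y_{\max}$ are the $y$-coordinates of the bottom-most and top-most points of $F$ in that cell.

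The core of the argument is a triangle-in-convex-set observation. Let $T=(x_T, y_T)$ be any top-most point of $F$ in cell $i$, so $y_T = y_{\max} \geq t$. The triangle with vertices $A = (l_i(s), s)$, $B=(r_i(t), t)$, and $T$ lies entirely in $F$ by convexity. Intersecting this triangle with the horizontal line $y=t$ yields a segment one of whose endpoints is $B$ itself; the other endpoint lies on edge $AT$, and a direct parameter computation gives its $x$-coordinate as $l_i(s) + \frac{t-s}{y_T-s}(x_T - l_i(s))$. Since this endpoint must also lie in $F$ at height $t$, its $x$-coordinate is at most $r_i(t)$. Rearranging, and using that $\frac{t-s}{y_T-s} \in (0,1]$ together with $r_i(t) - l_i(s) < 0$, forces $x_T \leq r_i(t) < l_i(s)$. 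A completely symmetric argument with any bottom-most point $B' = (x_{B'}, y_{B'})$ and the triangle $\triangle ABB'$, intersected with the line $y=s$, yields $x_{B'} \geq l_i(s) > r_i(t)$. Combined, these four inequalities are exactly the definition of $i$ being bad.

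The one subtle point, which I would watch most carefully, is the boundary case $t = y_T$ (or analogously $s = y_{B'}$): there the ratio $\frac{t-s}{y_T-s}$ equals $1$ and the derived inequality $x_T \leq r_i(t)$ fails to be strict. This causes no real obstacle because it aligns with reading "to the left of" in the definition of bad non-strictly, consistent with the symbolic-perturbation conventions the paper already invokes on extremal points; the remaining manipulations are routine algebra on a convex combination.
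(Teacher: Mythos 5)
The paper states this observation without supplying a proof, so there is no reference argument to match against; your contraposition via the triangle-in-a-convex-set argument is the natural geometric route and is correct. The calculation is sound: intersecting $\triangle ABT$ with the line $y=t$ gives $x_T\le r_i(t)<l_i(s)$, and intersecting $\triangle ABB'$ with $y=s$ gives $x_{B'}\ge l_i(s)>r_i(t)$, which together are exactly the definition of $i$ being bad. The boundary case you flag is a genuine one and worth keeping flagged: when $t$ coincides with the topmost height of cell $i$ (so $\lambda=1$) you only get $x_T\le r_i(t)$ rather than a strict inequality, and symmetrically for $s$ at the bottommost height. Be aware, though, that the symbolic-perturbation convention the paper invokes only enforces distinct $y$-coordinates among extremal points and pushes topmost/bottommost away from the left/rightmost heights; it does not prevent $s$ or $t\in\extremal{}(A_e)$ from coinciding with the topmost or bottommost $y$-coordinate of a cell, since those are themselves members of $\extremal{}(A_e)$. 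So the resolution really does rest on reading ``to the left of'' in the definition of badness non-strictly (equivalently, defining goodness via a strict inequality failing). That is the reading under which the observation, as stated, is true, and it is also the only reading under which the use of this observation in the proof of \Cref{lem:revprob} (which only needs $l_i(t)\le r_i(s)$, even though it is written as strict there) goes through; attributing it purely to the perturbation conventions would be slightly misleading.
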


\begin{definition}[Reduced global group]
    Let $e$ be an edge of $S$ and let $0\leq s\leq t\leq 1$ be given. Based on the global group $\globalG(e[s,t])$ we define the reduced global group $\redGlobG(e[s,t])$ which results from $\globalG(e[s,t])$ after merging all pairs of index pairs $(a,b)$ and $(c,d)$ if either $a<c<b<d$, or $b=c$ and $b=c$ is good for $e$.
\end{definition}

\begin{definition}[Proxy coverage]\label{def:proxycov}
For edge $e$ of $S$ and $0\leq s\leq t\leq 1$ define
\begin{multicols}{2}
\vspace*{-3em}
  \begin{equation*}
    \hat{l}_{i,e[s,t]}(s)= 
\begin{cases}
    l_i(s),& \text{if $i$ is good for $e[s,t]$}\\
    r_i(s),              & \text{if $i$ is bad for $e[s,t]$}
\end{cases}
  \end{equation*}\nolinenumbers\break
\vspace*{-3em}
  \begin{equation*}
    \hat{r}_{j,e[s,t]}(t)= 
\begin{cases}
    l_j(t),& \text{if $j$ is good for $e[s,t]$}\\
    r_j(t),              & \text{if $j$ is bad for $e[s,t]$.}
\end{cases}
  \end{equation*}
\end{multicols}
With these at hand, define the \emph{proxy coverage} of subedges of $S$ as the union
\[\proxyCov_{A_S}(e[s,t])=\left(\bigcup_{i \in\localG(e[s,t])\setminus\bad(e[s,t])}[l_i(s),r_i(t)]\right)\cup\left(\bigcup_{(i,j)\in\redGlobG(e[s,t])}[\hat{l}_{i,e[s,t]}(s),\hat{r}_{j,e[s,t]}(t)]\right),\]
where by a slight abuse of notation let $\localG(e[s,t])\setminus\bad(e[s,t])$ be all index-pairs $(i,i)$ in $\localG(e[s,t])$ such that $i$ is not in $\bad(e[s,t])$.
\end{definition}

\begin{figure}
    \centering
    \includegraphics[width=\linewidth]{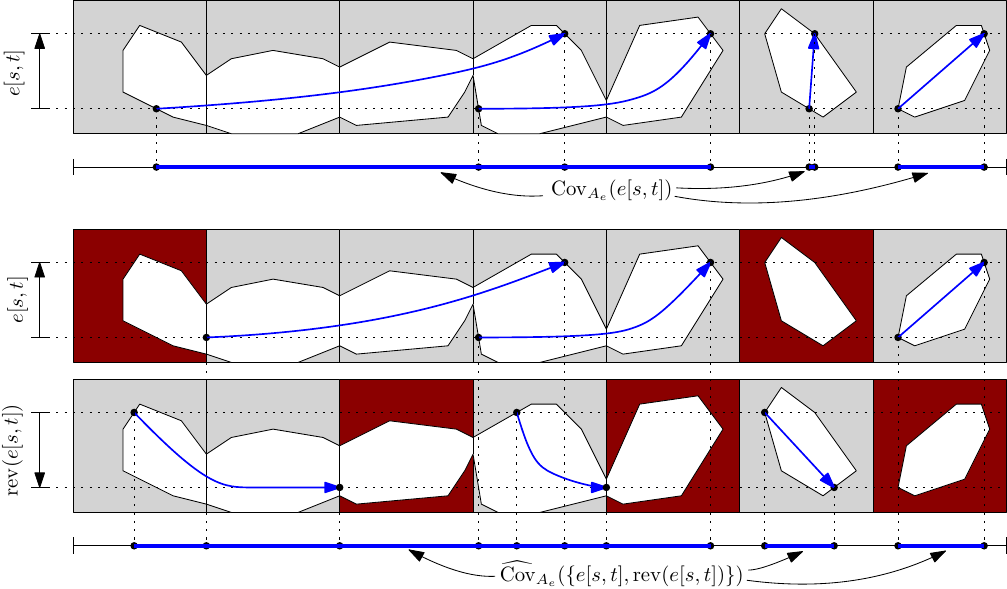}
    \caption{Illustration of the proxy coverage of $e[s,t]$ and $\rev{e[s,t]}$ compared to the coverage of $e[s,t]$. Cells with bad index are marked red. The global group of $e[s,t]$ is $\{(1,4),(4,5)\}$, and the reduced global group of $e[s,t]$ is $\{(1,5)\}$.}
    \label{fig:proxy-coverage}
\end{figure}

We observe that the proxy coverage can in fact be expressed as a disjoint union via the reduced global group.

\begin{lemma} Let $e$ be an edge and let $0\leq s\leq t\leq 1$. Then $\proxyCov_{A_S}(e[s,t])$ coincides with the following disjoint union
\begin{align*}
    \proxyCov_{A_S}(e[s,t])&=\left(\bigsqcup_{i \in\localG(e[s,t])\setminus\bad(e[s,t])}[l_i(s),r_i(t)]\right)\sqcup\left(\bigsqcup_{(i,j)\in\redGlobG(e[s,t])}[\hat{l}_{i,e[s,t]}(s),\hat{r}_{j,e[s,t]}(t)]\right)\\
    &\subset \Cov_{A_S}(e[s,t]).
\end{align*}
\end{lemma}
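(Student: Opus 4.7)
The lemma bundles two claims: first, that the union defining $\proxyCov_{A_S}(e[s,t])$ is in fact disjoint, so that $\cup$ may be replaced by $\sqcup$; and second, that $\proxyCov_{A_S}(e[s,t]) \subset \Cov_{A_S}(e[s,t])$. I would handle these independently, exploiting that $\redGlobG(e[s,t])$ arises from iteratively merging pairs of $\globalG(e[s,t])$ that either strictly overlap ($a < c < b < d$) or touch at a good index ($b = c$ good).

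For the containment, each local interval $[l_i(s), r_i(t)]$ with $(i,i) \in \localG(e[s,t]) \setminus \bad(e[s,t])$ is a direct summand in the expansion of $\Cov_{A_S}(e[s,t])$, and is nonempty by \Cref{obs:problematicnice}. For each $(i,j) \in \redGlobG(e[s,t])$, I would trace back to the chain $(i,b_1),(a_2,b_2),\dots,(a_\ell,j) \in \globalG(e[s,t])$ whose iterative merging produced $(i,j)$. By the two merging conditions, consecutive intervals either strictly overlap (so $l_{a_{k+1}}(s) \leq x_{b_k} \leq r_{b_k}(t)$) or meet at a good index where \Cref{obs:problematicnice} gives $l_m(s) \leq r_m(t)$; in either case the intervals are not separated. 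The union of the chain's intervals therefore telescopes to $[l_i(s), r_j(t)] \subset \Cov_{A_S}(e[s,t])$. Since $\hat{l}_{i,e[s,t]}(s) \geq l_i(s)$ and $\hat{r}_{j,e[s,t]}(t) \leq r_j(t)$ in all four good/bad cases, the proxy interval is contained in $[l_i(s), r_j(t)]$, completing the inclusion.

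For disjointness I would consider three cases. (i) Two distinct local terms $(i,i)$ and $(i',i')$ with $i \neq i'$ have intervals contained in the parameter ranges of different edges of $P$, hence disjoint interiors. (ii) A local $(i,i)$ and a global term $(i',j') \in \redGlobG$: since the chain merging into $(i',j')$ has its index ranges covering $[i',j']$ without gaps and each original pair $(a_k,b_k) \in \combinatorial(e[s,t])$ must be incomparable to $(i,i)$ under inclusion, maximality forces $i \notin [i',j']$, so the intervals sit on disjoint edge ranges. (iii) Two consecutive global pairs $(a,b),(c,d) \in \redGlobG$ with $a < c$: by termination of the merging process, either $b < c$ (strict cell gap, so the intervals $[\hat{l}_{a,e[s,t]}(s),\hat{r}_{b,e[s,t]}(t)] \subset [x_a, x_{b+1}]$ and $[\hat{l}_{c,e[s,t]}(s),\hat{r}_{d,e[s,t]}(t)] \subset [x_c, x_{d+1}]$ are disjoint) or $b = c = m$ with $m$ bad. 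In the touching-bad case, the left interval ends at $\hat{r}_{m,e[s,t]}(t) = r_m(t)$ and the right begins at $\hat{l}_{m,e[s,t]}(s) = r_m(s)$, so disjointness reduces to $r_m(t) \leq r_m(s)$.

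The touching-bad case is the main obstacle, as it is the only step that does not reduce to index combinatorics. I would prove $r_m(t) \leq r_m(s)$ by exploiting the convexity of $A_S$ inside cell $m$. The bad condition places $l_m(s)$ and $r_m(t)$ strictly between all topmost extremal $x$-coordinates and all bottommost extremal $x$-coordinates, while simultaneously forcing $x_{\text{top,right}} < x_{\text{bot,left}}$. Viewing the right-boundary function $r(\cdot)$ of cell $m$ as concave on its $y$-range, we have $r(y_{\text{top}}) \leq x_{\text{top,right}} < r_m(t)$ and $r(y_{\text{bot}}) = x_{\text{bot,right}} > r_m(t)$, which together with $s \leq t$ forces both $s$ and $t$ onto the strictly decreasing arm of $r$ (past the maximizer $y_{\text{rightmost}}$), yielding $r_m(t) \leq r_m(s)$. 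This completes the disjointness argument and hence the lemma.
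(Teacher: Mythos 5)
Your proof takes the same overall route as the paper's: identify that the only potential overlap among the proxy intervals is between two reduced-global-group pairs touching at a bad index, resolve that case via the geometry of the free space inside a bad cell, and establish containment by noting that the hatted endpoints can only shrink the underlying intervals. The difference is one of rigor rather than strategy. You make explicit two things the paper glosses over: (a) for the containment claim, that $[l_i(s),r_j(t)]\subset\Cov_{A_S}(e[s,t])$ for $(i,j)\in\redGlobG(e[s,t])$ needs the chain-telescoping argument (the two merge conditions each guarantee consecutive intervals $[l_{a_k}(s),r_{b_k}(t)]$ are not separated, via overlap or via \Cref{obs:problematicnice} at a shared good index); and (b) the case analysis for disjointness. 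Your touching-bad-index argument establishes $r_m(t)\leq r_m(s)$ by concavity of the right-boundary function; this is precisely the second assertion of \Cref{lem:revprob}, which the paper proves by the same convexity reasoning just a few lines later, so you are reproving that lemma inline rather than citing it (it is stated after this lemma in the paper, so the paper's own proof does not cite it either).

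One thing worth flagging: the paper's written proof asserts $\hat{r}_{b,e[s,t]}(t)=l_b(t)$ for $b$ bad and concludes $l_b(t)<r_b(s)$, whereas \Cref{def:proxycov} as printed gives $\hat{r}_{b,e[s,t]}(t)=r_b(t)$ in the bad case, so the inequality one actually needs is $r_b(t)\leq r_b(s)$ (which follows from \Cref{lem:revprob}, or from your concavity argument). You applied the definition as written and derived the correct inequality, so your version is internally consistent where the paper's prose appears to have a typo. Your "decreasing arm" argument can also be shortened: since $r(\cdot)$ is concave on $[y_{\mathrm{bot}},y_{\mathrm{top}}]$ and $r(t)<r(y_{\mathrm{bot}})$, concavity on $[y_{\mathrm{bot}},t]\ni s$ already gives $r(s)\geq\min\{r(y_{\mathrm{bot}}),r(t)\}=r(t)$ without having to locate the maximizer.
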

\begin{proof}
    The only way for two index-pairs $(a,b)$ and $(c,d)$ among the index-pairs in $\localG(e[s,t])\setminus\bad(e[s,t])$ and $\redGlobG(e[s,t])$ to overlap is if $b=c$ is bad for $e[s,t]$ and $(a,b)$ and $(c,d)$ are in $\redGlobG(e[s,t])$. But then $[\hat{l}_{a,e[s,t]}(s),\hat{r}_{b,e[s,t]}(t)]=[\hat{l}_{a,e[s,t]}(s),l_b(t)]$ and $[\hat{l}_{c,e[s,t]}(s),\hat{r}_{d,e[s,t]}(t)]=[r_b(s),\hat{r}_{d,e[s,t]}(t)]$. As $(a,b)$ is in the reduced global group, there is a path which starts to the left of cell $b$ and ends in cell $b$, which implies that the approximate free space in cell $b$ intersects the left boundary and $l_b(s)$ lies to the left of all top-most points in cell $b$. Similarly $r_b(t)$ lies to the right of all bottom-most points. As $b$ is bad, this implies that $l_b(t)<r_b(s)$ and hence $[\hat{l}_{a,e[s,t]}(s),\hat{r}_{b,e[s,t]}(t)]$ and $[\hat{l}_{c,e[s,t]}(s),\hat{r}_{d,e[s,t]}(t)]$ do not intersect, implying the claim that $\proxyCov_{A_S}(\cdot)$ can be written as a disjoint union.\\
    The fact that $l_i(s)\leq \hat{l}_{i,e[s,t]}(s)$ and $\hat{r}_{i,e[s,t]}(t)\leq r_i(t)$ implies the claim that $\proxyCov_{A_S}(\cdot)\subset \Cov_{A_S}(\cdot)$.
\end{proof}

Next we show that $\proxyCov_{A_S}$ approximates $\Cov_{A_S}$ in the sense that for every element $\pi\in \ConcreteCandidates{S}(P)$ there are two elements $\pi_1,\pi_2\in\ConcreteCandidates{S}(P)$ such that 
\[\Cov_{A_S}(\pi)\subset \proxyCov_{A_S}(\pi_1)\cup \proxyCov_{A_S}(\pi_2). \]

We extend the proxy coverage $\proxyCov$ to also be defined for vertex-vertex-subcurves of $S$, where we simply set $\proxyCov_{A_S}(S[s,t])=\Cov_P(S[s,t],\Delta)$. In the remained of this section we will analyze the proxy coverage for subedges only, as the above statement is trivial for vertex-vertex-subcurves by setting $\pi_1=\pi$.

\begin{lemma}\label{lem:revprob}
    If $i$ is bad for $e[s,t]$, then $i$ is good for $\rev{e[s,t]}$. Further $l_i(s),r_i(t)\in[l_i(t),r_i(s)]$.
\end{lemma}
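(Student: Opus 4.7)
The plan is to prove the first claim (goodness for $\rev{e[s,t]}$) by a direct contradiction using the bad condition together with slice convexity, and the second claim (the inclusion) via convexity of $A_S$ within cell $i$.

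Write the topmost segment of cell $i$ (in the free space of $e$ and $P$) as $\{y = y^{\max},\, x \in [a,b]\}$ with $a \leq b$, and the bottom-most as $\{y = y^{\min},\, x \in [c,d]\}$ with $c \leq d$. The bad condition for $e[s,t]$ then unpacks to the chain $b < l_i(s),\, r_i(t) < c$.

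For the first claim, I would argue by contradiction. The free space of $\rev{e}$ and $P$ is a vertical mirror of the free space of $e$ and $P$, so in cell $i$ the topmost points of the reversed free space have $x$-coordinates in $[c,d]$ and the bottom-most points have $x$-coordinates in $[a,b]$. Under the reversed parametrization, $\rev{e[s,t]}$ spans heights $[1-t,1-s]$, yielding entry/exit left/right boundary values $l_i(t)$ and $r_i(s)$ in the reversed free space. Were $i$ bad for $\rev{e[s,t]}$, we would in particular have $r_i(s) < a$ (entry/exit strictly left of every bottom-most point of the reversed cell). But the bad condition for $e[s,t]$ forces $l_i(s) > b \geq a$, so $r_i(s) < a \leq b < l_i(s) \leq r_i(s)$, contradicting slice convexity.

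For the second claim, it suffices to show $l_i(t) \leq l_i(s)$ and $r_i(t) \leq r_i(s)$, since $l_i(\cdot) \leq r_i(\cdot)$ holds at every height by convexity. The strict inequalities in the bad condition first pin down $y^{\min} < s \leq t < y^{\max}$: for instance, $l_i(y^{\max}) = a \leq b < l_i(s)$ rules out $s = y^{\max}$, and $l_i(y^{\min}) = c > l_i(s)$ rules out $s = y^{\min}$, analogously for $t$ via $r_i(t)$. For $l_i(t) \leq l_i(s)$, I would connect $(l_i(s),s)$ to the leftmost topmost point $(a, y^{\max})$; by convexity of $A_S$ inside cell $i$, the whole segment lies in $A_S$, and its $x$-coordinate at height $t \in [s, y^{\max})$ is a convex combination $\lambda l_i(s) + (1-\lambda) a$ with $\lambda \in (0,1]$. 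Since $a < l_i(s)$, this is at most $l_i(s)$, and by definition it upper-bounds $l_i(t)$. A symmetric argument, connecting $(r_i(t),t)$ to the rightmost bottom-most point $(d, y^{\min})$ and evaluating at height $s \in (y^{\min}, t]$, yields $r_i(t) \leq r_i(s)$.

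The only delicate step is the convex-combination argument in the second claim: one must verify that the constructed segments are non-degenerate, which amounts to ensuring that $s$ and $t$ lie strictly inside the $y$-range of the approximate free space in cell $i$. This is precisely what the strict inequalities in the bad condition buy us, so the obstacle is routine once one carefully unpacks what "bad" geometrically forces on the position of $s$ and $t$.
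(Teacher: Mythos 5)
Your proof is correct and uses essentially the same two ingredients as the paper's: the reflection relationship between $A_e$ and $A_{\rev{e}}$ for the first claim, and convexity of $A_e$ inside cell $i$ (connecting $(l_i(s),s)$ to a topmost point, and $(r_i(t),t)$ to a bottom-most point) for the second. The only cosmetic difference is in the first claim, where the paper argues more compactly — badness for $e[s,t]$ forces the topmost points to lie left of the bottom-most points, which is exactly the reverse of what badness for $\rev{e[s,t]}$ would require, giving an immediate contradiction without referring to $l_i(s)$ or $r_i(s)$ — whereas you unwind the same fact into the explicit chain $r_i(s)<a\leq b<l_i(s)\leq r_i(s)$ using slice convexity; both are sound.
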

\begin{proof}  
    Let $i$ be bad for $e[s,t]$. This implies that all bottom-most points in cell $i$ of $A_e$ are to the right of all top-most points. As $A_{\rev{e}}$ results from $A_e$ by mirroring it along the $y$-axis, all bottom-most points of cell $i$ in $A_{\rev{e}}$ lie to the left of all top-most points. Hence $i$ is can not be bad for $\rev{e[s,t]}$.

    For the second claim first observe that by \Cref{obs:problematicnice} $l_i(t)<r_i(s)$. Further observe that $t$ lies between $s$ and the $y$-coordinate of the top-most point of cell $i$. Thus by convexity of the free space and the fact that $l_i(s)$ lies to the right of the top-most point there is a point in the free space at height $t$ that is left of $l_i(s)$ and thus in particular $l_i(t)<l_i(s)$. Similarly it follows that $r_i(t)<r_i(s)$ and thus the claim follows.
\end{proof}

\begin{figure}
    \centering
    \includegraphics{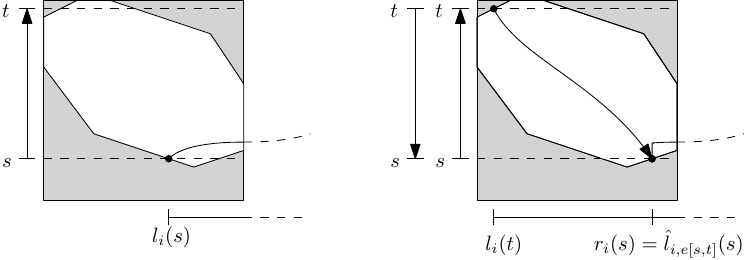}
    \caption{Illustration of how the path starting in cell $i$ from the global group of the coverage of $e[s,t]$ is dominated by the proxy coverage of $e[s,t]$ and $\rev{e[s,t]}$ if $i$ is bad for $e[s,t]$ and hence good for $\rev{e[s,t]}$. Importantly $[l_i(s),r_j(t)]\subset[l_i(t),r_i(s)]\cup[\hat{l}_i(s),r_j(t)]$.
    }
    \label{fig:approx-cov}
\end{figure}

\begin{lemma}\label{lem:proxyapprox}
    The proxy coverage approximates the coverage. That is
    \[\Cov_{A_S}(e[s,t])\subset\proxyCov_{A_S}(e[s,t])\cup \proxyCov_{A_S}(\rev{e[s,t]}).\]
\end{lemma}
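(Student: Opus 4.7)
My plan is to prove the inclusion by handling each pair in $\combinatorial(e[s,t])$ individually: the interval $[l_i(s),r_j(t)]$ it contributes to $\Cov_{A_S}(e[s,t])$ must be contained in $\proxyCov_{A_S}(e[s,t])\cup\proxyCov_{A_S}(\rev{e[s,t]})$. I would split into the local-group and global-group cases.

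For the local case $(i,i)\in\localG(e[s,t])$: if $i$ is good for $e[s,t]$ the interval $[l_i(s),r_i(t)]$ appears directly in the local term of $\proxyCov_{A_S}(e[s,t])$. If $i$ is bad, \Cref{lem:revprob} simultaneously gives $[l_i(s),r_i(t)]\subset[l_i(t),r_i(s)]$ and that $i$ is good for $\rev{e[s,t]}$, so the enlarged interval appears in the local term of $\proxyCov_{A_S}(\rev{e[s,t]})$ as the contribution of $(i,i)\in\localG(\rev{e[s,t]})\setminus\bad(\rev{e[s,t]})$.

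For the global case $(i,j)\in\globalG(e[s,t])$ with $i<j$: I would trace $(i,j)$ through the merges defining $\redGlobG(e[s,t])$ to a surviving pair $(i^\star,j^\star)\in\redGlobG(e[s,t])$ with $i^\star\le i<j\le j^\star$, so that $[l_i(s),r_j(t)]\subset[l_{i^\star}(s),r_{j^\star}(t)]$. I would then split this enlarged interval into three pieces anchored on the proxy interval $[\hat{l}_{i^\star,e[s,t]}(s),\hat{r}_{j^\star,e[s,t]}(t)]\subset\proxyCov_{A_S}(e[s,t])$: a left piece $[l_{i^\star}(s),\hat{l}_{i^\star,e[s,t]}(s)]$, non-trivial only if $i^\star$ is bad for $e[s,t]$, and a right piece $[\hat{r}_{j^\star,e[s,t]}(t),r_{j^\star}(t)]$, non-trivial only if $j^\star$ is good for $e[s,t]$. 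The left piece reduces to the bad-local argument applied at $i^\star$ and is absorbed by $[l_{i^\star}(t),r_{i^\star}(s)]\subset\proxyCov_{A_S}(\rev{e[s,t]})$, matching the decomposition illustrated in \Cref{fig:approx-cov}; the right piece should admit a symmetric treatment via a local contribution of $\rev{e[s,t]}$ at index $j^\star$.

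The main obstacle I anticipate is the right-piece analysis. Unlike the left case, the right piece being non-trivial corresponds to $j^\star$ being good for $e[s,t]$ rather than bad, so \Cref{lem:revprob} does not apply directly; to conclude $[l_{j^\star}(t),r_{j^\star}(t)]\subset\proxyCov_{A_S}(\rev{e[s,t]})$ one has to establish both the local-group membership $(j^\star,j^\star)\in\localG(\rev{e[s,t]})\setminus\bad(\rev{e[s,t]})$ and the inequality $r_{j^\star}(t)\le r_{j^\star}(s)$. Both should follow from convexity of the approximate free space in cell $j^\star$ combined with the fact that the original global path contributing to $(i,j)$ terminates in cell $j^\star$ at height $t$, but carrying out this symmetric geometric argument cleanly, and ensuring it meshes with the specific merge rules defining $\redGlobG$, is the delicate step I expect to require the most care.
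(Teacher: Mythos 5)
Your overall strategy matches the paper's: handle local pairs directly via \Cref{lem:revprob}, and for the global case decompose the interval contributed by a pair in $\redGlobG(e[s,t])$ into a middle portion already inside $\proxyCov_{A_S}(e[s,t])$ flanked by two end-pieces that are pushed into $\proxyCov_{A_S}(\rev{e[s,t]})$. The paper writes the same decomposition as $[l_i(s),r_i(s)]\cup[r_i(s),l_j(t)]\cup[l_j(t),r_j(t)]$ directly for $(i,j)\in\redGlobG(e[s,t])$, implicitly using that passing from $\globalG$ to $\redGlobG$ only enlarges the covered intervals; you make that tracing-through-merges step explicit, which is equivalent.

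The obstacle you flag at the right piece is real, but it traces back to a sign-flip typo in \Cref{def:proxycov} rather than to your argument. As printed, $\hat{r}_{j,e[s,t]}(t)=l_j(t)$ when $j$ is good and $r_j(t)$ when $j$ is bad; this is inconsistent with the paper's own proof of the disjoint-union lemma, which in the case ``$b=c$ is bad'' uses $\hat{r}_{b,e[s,t]}(t)=l_b(t)$ (and $\hat{l}_{c,e[s,t]}(s)=r_b(s)$). The intended definition therefore swaps the cases: $\hat{r}_{j,e[s,t]}(t)=r_j(t)$ when $j$ is good, $l_j(t)$ when $j$ is bad. With this correction your right piece $[\hat{r}_{j^\star,e[s,t]}(t),r_{j^\star}(t)]$ is non-trivial precisely when $j^\star$ is \emph{bad} for $e[s,t]$, so \Cref{lem:revprob} applies to $j^\star$ exactly as it does to $i^\star$: it gives $[l_{j^\star}(t),r_{j^\star}(t)]\subset[l_{j^\star}(t),r_{j^\star}(s)]\subset\proxyCov_{A_S}(\rev{e[s,t]})$, and the left/right analyses become fully symmetric. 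In short, your decomposition is the right one, and the asymmetry that worried you is an artifact of the misstated formula for $\hat{r}$, not a gap in the approach.
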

\begin{proof}
    If $i$ is in $\localG(e[s,t])\setminus\bad(e[s,t])$, then $[l_i(s),r_i(t)]\subset \proxyCov_{A_S}(e[s,t])$. If instead $i$ is in $\localG(e[s,t])\cap\bad(e[s,t])$ then by \Cref{lem:revprob} it follows that $[l_i(s),r_i(t)]\subset [l_i(t),r_i(s)]\subset \proxyCov_{A_S}(\rev{e[s,t]})$. 

    Let now instead $(i,j)\in\redGlobG(e[s,t])$. We show that $[l_i(s),r_i(s)]\cup[r_i(s),l_j(t)]\cup[l_j(t),r_j(t)]\subset \proxyCov_{A_S}(e[s,t])\cup \proxyCov_{A_S}(\rev{e[s,t]})$. First observe that $[r_i(s),l_j(t)]\subset \proxyCov_{A_S}(e[s,t])$. Assume $i$ is bad, as otherwise $[l_i(s),r_i(s)]\subset [l_i(s),l_j(t)]\subset \proxyCov_{A_S}(e[s,t])$. But then by \Cref{lem:revprob} $[l_i(s),r_i(s)]\subset [l_i(t),r_i(s)]\subset \proxyCov_{A_S}(\rev{e[s,t]})$ (refer to \Cref{fig:approx-cov}). Similarly $[l_j(t),r_j(t)]\subset \proxyCov_{A_S}(e[s,t])\cup \proxyCov_{A_S}(\rev{e[s,t]})$ implying the claim.
\end{proof}

\begin{theorem}\label{thm:16setsystem}
    Let $P$ be a polygonal curve, and let $S$ be a simplification of $P$. 
    Then for any curve $\pi\in\curvespace{\ell}$ there is a set $S_\pi\subset \ConcreteCandidates{S}(P)$ of size at most $16$ such that 
    \[\Cov_P(\pi,\Delta)\subset\bigcup_{s\in S_\pi}\proxyCov_{P}(s).\]
\end{theorem}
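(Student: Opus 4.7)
The plan is to combine \Cref{lem:supersetfreespace} with \Cref{lem:proxyapprox} in a two-stage reduction. First I would invoke \Cref{lem:supersetfreespace} to obtain a set $S'_\pi \subset \ConcreteCandidates{S}(P)$ of at most $8$ Type $(I)$-$(III)$ subcurves satisfying
\[
\Cov_P(\pi,\Delta) \subset \bigcup_{s \in S'_\pi} \Cov_{A_S}(s).
\]
This already reduces the search from all of $\curvespace{\ell}$ to a constant-size collection of subcurves of the simplification $S$, but it expresses the coverage via $\Cov_{A_S}$ rather than via the proxy coverage $\proxyCov_{A_S}$, which is the object the theorem is about.

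Second, I would replace each $\Cov_{A_S}(s)$ term by proxy coverages. For a subedge $s$ (Type $(II)$ or Type $(III)$), \Cref{lem:proxyapprox} gives $\Cov_{A_S}(s) \subset \proxyCov_{A_S}(s) \cup \proxyCov_{A_S}(\rev{s})$, so each such $s \in S'_\pi$ contributes at most two elements. The key sanity check here is that $\rev{s}$ is itself a candidate: the definitions of Type $(II)$ and Type $(III)$ subcurves explicitly contain both a subedge and its reversal, so $\rev{s} \in \ConcreteCandidates{S}(P)$. For a vertex-vertex-subcurve (Type $(I)$), the paper's extension of $\proxyCov_{A_S}$ to such subcurves is set up precisely so that $\Cov_{A_S}(s) \subset \proxyCov_{A_S}(s)$ holds directly, as the authors already remarked (the statement before \Cref{lem:revprob} is ``trivial for vertex-vertex-subcurves by setting $\pi_1=\pi$''), so a single element suffices for Type $(I)$ elements.

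Taking the union of these replacements over all $s \in S'_\pi$ produces a set $S_\pi \subset \ConcreteCandidates{S}(P)$ of cardinality at most $2 \cdot 8 = 16$ such that
\[
\Cov_P(\pi,\Delta) \subset \bigcup_{s \in S'_\pi} \Cov_{A_S}(s) \subset \bigcup_{s \in S_\pi} \proxyCov_{A_S}(s),
\]
which is the desired conclusion. The proof is essentially a bookkeeping combination of two previously established results, so I do not expect any real obstacle; the only points that need explicit verification are that the reversals invoked in the second stage indeed belong to $\ConcreteCandidates{S}(P)$ (immediate from the definitions of Type $(II)$ and Type $(III)$) and that the Type $(I)$ case is absorbed without doubling (immediate from the extension of $\proxyCov_{A_S}$ to vertex-vertex-subcurves).
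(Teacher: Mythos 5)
Your proposal matches the paper's own (very terse) proof, which simply cites Lemma~\ref{lem:supersetfreespace}, Lemma~\ref{lem:proxyapprox}, and the definition of the proxy coverage; you have correctly filled in the bookkeeping: the factor~$8$ from Lemma~\ref{lem:supersetfreespace}, the doubling via reversals from Lemma~\ref{lem:proxyapprox} for subedges, the observation that $\rev{s}\in\ConcreteCandidates{S}(P)$ by the definitions of Type~$(II)$ and $(III)$ subcurves, and the trivial absorption of Type~$(I)$ subcurves via the extension of $\proxyCov_{A_S}$ to vertex-vertex subcurves. No gap; same approach.
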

\begin{proof}
    This follows from the definition of the proxy coverage, \Cref{lem:proxyapprox} and \Cref{lem:supersetfreespace}.
\end{proof}

\begin{algorithm}
\caption{Maintenance of the reduced global group}\label{alg:main}
\begin{algorithmic}[1]
\Procedure{Maintain}{$e$, $A_e$, sweep-sequence $\sweepseq=\{(s_1,t_1),\ldots,(s_m,t_m)\}$ of $\extremal{}(A_e)$}
    \State for every $i$ compute $B_i=\{(s,t)\in S\mid \text{$i$ is bad for $e[s,t]$}\}$ \Comment{\Cref{lem:badsets}}
    \State Compute coverage $\Cov_{A_e}(e[s_m,t_m])$ represented as $\O(n)$ disjoint intervals
    \State Compute $\localG(e[s_m,t_m])$ and $\globalG(e[s_m,t_m])$ from $\Cov_{A_e}(e[s_m,t_m])$ 
    \State $\unifyG,\localG,\globalG,(s,t)\gets \localG(e[s_m,t_m])\cup \globalG(e[s_m,t_m]),\localG(e[s_m,t_m]),\globalG(e[s_m,t_m]),(s_m,t_m)$
    \State Compute $\bad(e[s,t])$ from $B_i$ and with it $\redGlobG\gets \redGlobG(e[s_m,t_m])$
    \For{$(s',t')$ in \textsc{Reversed}(S)}\label{line:forloop}\Comment{sweeping-window events}
        \If{ $s'\neq s$}
            \State  Compute $\bad(e[s',t])\setminus\bad(e[s,t])$ and $\bad(e[s,t])\setminus\bad(e[s',t])$ via the intervals $B_i$
            \State Update $\unifyG, \localG, \globalG$ and $\redGlobG$ via $\textsc{AdvanceStart}(s,s',t)$
            \State Update $\bad(e[s,t])$ to $\bad(e[s',t])$ via the intervals $B_i$
            \State $s\gets s'$
        \EndIf
        \If{$t'\neq t$}
            \State  Compute $\bad(e[s,t'])\setminus\bad(e[s,t])$ and $\bad(e[s,t])\setminus\bad(e[s,t'])$ via the intervals $B_i$
            \State Update $\unifyG, \localG, \globalG$ and $ \redGlobG$ via $\textsc{AdvanceEnd}(s,t,t')$
            \State Update $\bad(e[s,t])$ to $\bad(e[s',t])$ via the intervals $B_i$
            \State $t\gets t'$ 
        \EndIf
    \EndFor
\EndProcedure
\end{algorithmic}
\end{algorithm}

\subsection{Maintaining the proxy coverage along a sweep-sequence}\label{sec:symbolicRepresentation}

In this section we show that during a sweep-sequence one can correctly maintain a symbolic representation of $\proxyCov_{A_S}(e[s,t])$ in (roughly) logarithmic time per element in the sweep-sequence. To this end we show that one can correctly maintain $\redGlobG(e[s,t])$, $\localG(e[s,t])\setminus\bad(e[s,t])$ and $\bad(e[s,t])$. For this we define the set $\unifyG(e[s,t])$ of inclusion-wise maximal index-pairs $(i,j)$ such that either $i=j$ and $l_i(s)\neq\infty$ or there is a path from cell $i$ at height $s$ to cell $j$ at some height $\hat{t}\leq t$, which helps us maintain $\redGlobG(e[s,t])$, $\localG(e[s,t])\setminus\bad(e[s,t])$.

\begin{lemma}\label{lem:badsets}
    Let $\sweepseq\in\edgeseqs{e}$ be a sweep-sequence. The set $B_i=\{(s,t)\in \sweepseq|\text{$i$ is bad for $e[s,t]$}\}$ is a contiguous subset of $\sweepseq$. They can be computed in total time $\O(n\log n)$
\end{lemma}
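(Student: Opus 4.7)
The plan is to show that for every cell index $i$, the set of pairs $(s,t)\in[0,1]^2$ for which $i$ is bad is a product $I_s^{(i)}\times I_t^{(i)}$ of intervals. Combined with the monotonicity of both coordinates along a sweep-sequence, this forces $B_i$ to be a contiguous range of positions of $\sweepseq$, and a binary search then yields the running time.

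First I would use convexity of $A_S$ restricted to cell $i$ to observe that $l_i$ is a convex function of $y$ and $r_i$ is a concave function of $y$. Writing $T_i^{\max}$ (resp.\ $B_i^{\min}$) for the maximum (resp.\ minimum) $x$-coordinate among the topmost (resp.\ bottommost) points of cell $i$, the bad condition for $e[s,t]$ decouples into the two one-variable constraints $T_i^{\max}<l_i(s)<B_i^{\min}$ and $T_i^{\max}<r_i(t)<B_i^{\min}$, so $B_i=(I_s^{(i)}\times I_t^{(i)})\cap\sweepseq$. For $B_i$ to be nonempty one in particular needs $T_i^{\max}<B_i^{\min}$.

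I would then show that $I_s^{(i)}=\{s:T_i^{\max}<l_i(s)<B_i^{\min}\}$ is a \emph{single} interval. Let $[y_{\mathrm{bot}},y_{\mathrm{top}}]$ be the $y$-range of cell $i$. By convexity of $l_i$, the sublevel set $\{l_i\leq T_i^{\max}\}$ is a subinterval of $[y_{\mathrm{bot}},y_{\mathrm{top}}]$, and since $l_i(y_{\mathrm{top}})=T_i^{\min}\leq T_i^{\max}$ it contains $y_{\mathrm{top}}$; hence its complement in $[y_{\mathrm{bot}},y_{\mathrm{top}}]$ is a single interval ending strictly below $y_{\mathrm{top}}$. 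A symmetric pinning at the lower endpoint, using $l_i(y_{\mathrm{bot}})=B_i^{\min}$, makes $\{l_i<B_i^{\min}\}$ a single interval ending at $y_{\mathrm{top}}$. Their intersection is an interval. Replacing convex by concave, the same argument applied to $r_i$ shows $I_t^{(i)}$ is a single interval.

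Because, by the definition of sweep-sequence, the $s$- and $t$-coordinates vary weakly monotonically along $\sweepseq$, the positions whose $s$-coordinate lies in $I_s^{(i)}$ form a contiguous range, and likewise for $t$; their intersection $B_i$ is contiguous. For the runtime, the endpoints of $I_s^{(i)}$ and $I_t^{(i)}$ are determined by constantly many evaluations of $l_i,r_i$ and by the extremal-point coordinates of cell $i$, and can therefore be computed in $\O(1)$ per $i$; pre-storing the $s$- and $t$-values along the length-$\O(n)$ sweep-sequence in two monotone arrays lets the two endpoints of $B_i$ in $\sweepseq$ be located by binary search in $\O(\log n)$ per $i$, summing to $\O(n\log n)$ in total.

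The main obstacle is pinning down that $I_s^{(i)}$ is a single interval rather than two: this relies on the sublevel set $\{l_i\leq T_i^{\max}\}$ containing the endpoint $y_{\mathrm{top}}$, which makes its complement in the cell's $y$-range automatically unsplit, and on the analogous pinning of $\{l_i<B_i^{\min}\}$ at $y_{\mathrm{bot}}$. Without these two endpoint-pinnings one would in general get a union of two intervals for the preimage of $(T_i^{\max},B_i^{\min})$ under a convex function.
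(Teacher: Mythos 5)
Your proof is correct and follows essentially the same route as the paper's. The paper's proof is very terse: it asserts that $\{(s,t)\in\sweepseq : l_i(s)\neq\infty,\ l_i(s)\geq x_{\mathrm{top}}\}$ and $\{(s,t)\in\sweepseq : r_i(t)\neq\infty,\ r_i(t)\leq x_{\mathrm{bottom}}\}$ are each contiguous and that $B_i$ is their intersection, leaving the convexity argument entirely implicit and silently relying on the fact that the remaining two constraints in the definition of "bad" (namely $r_i(t)>x_{\mathrm{top}}$ and $l_i(s)<x_{\mathrm{bottom}}$) are geometrically implied. You work with all four constraints directly, explicitly decouple them into a product $I_s^{(i)}\times I_t^{(i)}$, and supply the convexity-plus-endpoint-pinning argument that the paper skips. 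That is a useful fleshing out, not a new method; the underlying idea (convexity of the per-cell free space $\Rightarrow$ one-variable super/sublevel sets are single intervals $\Rightarrow$ contiguity along the monotone sweep $\Rightarrow$ binary search for the boundaries) is identical. One small inaccuracy in your exposition: the fact that $\{l_i<B_i^{\min}\}$ is a single interval is automatic from convexity (strict sublevel sets of convex functions are intervals); what the pinning $l_i(y_{\mathrm{bot}})=B_i^{\min}$ (together with $l_i(y_{\mathrm{top}})=T_i^{\min}<B_i^{\min}$) buys you is where that interval sits, which is what you actually need to conclude that the intersection with $\{l_i>T_i^{\max}\}$ is again one interval. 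This does not affect correctness.
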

\begin{proof}
    This is an immediate consequence of the fact that for the right-most top-most point at $x$-coordinate $x_\mathrm{top}$ the set $\{(s,t)\in \sweepseq|l_i(s)\neq\infty,\ l_i(s)\geq x_\mathrm{top}\}$ is contiguous. So is the set $\{(s,t)\in \sweepseq|r_i(t)\neq\infty,\ r_i(t)\leq x_\mathrm{bottom}\}$, where $x_\mathrm{bottom}$ is the $x$-coordinate of the left-most bottom-most point. $B_i$ is simply the intersection of these two sets. The boundaries of this contiguous set can be computed via binary search over $\sweepseq$.
\end{proof}

Next we show how to maintain $\redGlobG(e[s,t])$. The algorithm is described in \Cref{alg:main}. This algorithm involves maintaining $\globalG(e[s,t])$ which in turn involves maintaining a set $\unifyG(e[s,t])$ consisting of index-pairs $(i,j)$ representing either inclusionwise-maximal paths that end at some height less than $t$, or if $i=j$. To maintain these sets we introduce two data-structures.

\begin{lemma}[\textsc{AdvanceStart} of $\globalG$]\label{lem:advanceStartG}
    Let $s'\leq s\leq t\in \extremal{}(A_e)$ be given. with $s'$ and $s$ consecutive. Let $(i,j)\in\globalG(e[s',t])$. Then one of the following hold:
    \begin{compactenum}
        \item $(i,j)\in\globalG(e[s,t])$,
        \item $(i',j)\in\globalG(e[s,t])$, $s'$ corresponds to the top-most point of the left free space interval of cell $i'$, and $i$ is the smallest index such that $l_{\hat{i}}(s)$ lies in the free space interval of cell $\hat{i}$ for all $i<\hat{i}\leq i'$, or
        \item $(i',j)\in\globalG(e[s,t])$, $s$ corresponds to the lowest point in cell $i'$ and $i$ is the smallest index $i'<i$ such that $l_i(s)\neq\infty$.
    \end{compactenum}
\end{lemma}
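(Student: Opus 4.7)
The plan is to do a case analysis on how a monotone witnessing path for $(i,j)\in\globalG(e[s',t])$ decomposes at height $s$. Fix a monotone path $\pi$ in $A_e$ from a point in cell $i$ at height $s'$ to a point in cell $j$ at height $t$. By monotonicity in $y$, $\pi$ intersects the horizontal line $y=s$ inside some cell $k$ with $i\le k\le j$, so the suffix of $\pi$ above height $s$ witnesses the pair $(k,j)$, which is dominated by some inclusionwise-maximal $(i',j_0)\in\globalG(e[s,t])$ with $i'\le k$ and $j_0\ge j$. The upshot is that either $(i,j)$ itself already belongs to $\globalG(e[s,t])$, or lowering the start height from $s$ to $s'$ either (a) creates a new leftward-extending path not present at $s$, or (b) destroys a dominating pair at $s$ so that a previously non-maximal pair emerges as maximal.

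The key structural fact to exploit is that $s'$ and $s$ are consecutive in $\extremal{}(A_e)$, so no cell's free space has an extremal $y$-coordinate strictly between $s'$ and $s$. By convexity of the approximate free space in each cell (and the symbolic-perturbation convention on extremal $y$-coordinates stated after the definition of extremal points), this forces that within any single cell, the left and right extents $l_{\hat{i}}(\cdot)$, $r_{\hat{i}}(\cdot)$ vary monotonically between heights $s'$ and $s$, and an empty-to-nonempty transition can only be witnessed by a top-most or bottom-most extremal point coinciding with $s'$ or $s$. Using this, case (1) handles when $i'=i$ and $(i,j)\in\globalG(e[s,t])$ lifts directly to $(i,j)\in\globalG(e[s',t])$.

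For case (2), suppose the witnessing path for $(i,j)\in\globalG(e[s',t])$ reaches a strictly smaller left index than the pair $(i',j)\in\globalG(e[s,t])$ obtained from the suffix, i.e.\ $i<i'$. The only mechanism by which leftward reachability can strictly increase when dropping from $s$ to $s'$ is that $\pi$ descends into a blob of some cell $i'$ that becomes accessible precisely at $s'$, which forces $s'$ to be the top-most $y$-coordinate of the leftmost free-space interval of that cell $i'$; from the bottom of this blob the path extends leftward through cells $i+1,\dots,i'-1$ at height $s$, imposing the connectivity condition on $l_{\hat{i}}(s)$ stated in the lemma, with $i$ being the smallest index where this leftward extension terminates. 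Conversely, for case (3), if $(i,j)\in\globalG(e[s',t])$ but $(i,j)$ was strictly dominated at $s$ by some $(i',j)\in\globalG(e[s,t])$ with $i'<i$, then the dominating pair $(i',j)$ must have been lost in the passage $s\to s'$; by the consecutive-extremal reasoning this can only happen if $s$ is the bottom-most point of cell $i'$ (so that cell $i'$ becomes empty at $s'$), and the smallest-$i$ condition then follows from maximality of $(i,j)$ in $\globalG(e[s',t])$ together with the fact that all cells $\hat{i}$ with $i'<\hat{i}<i$ satisfy $l_{\hat{i}}(s)=\infty$.

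The main obstacle I anticipate is establishing that the three cases are exhaustive and correctly capture \emph{maximality} in $\globalG(e[s',t])$, not merely existence of some path. This forces one to check that at most one qualifying ``event'' (top-most opening at $s'$, or bottom-most closing at $s$) occurs between $s'$ and $s$ — here the perturbation assumption that no two extremal points share a $y$-coordinate, combined with consecutiveness in $\extremal{}(A_e)$ and the cell-boundary property (3) of approximate free spaces, is precisely what rules out interleaved events and makes the case split clean.
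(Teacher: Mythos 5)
Your proposal follows essentially the same approach as the paper's proof: both partition by whether $(i,j)$ survives at height $s$, whether advancing $s\to s'$ opens a new leftward path (top-most point of a left free-space interval at $s'$), or whether a dominating pair is destroyed (bottom-most point of a cell at $s$), and both invoke consecutiveness in $\extremal{}(A_e)$, per-cell convexity, and the perturbation convention to rule out interleaved events. The one step the paper makes explicit that your sketch glosses over is that the dominating pair at $s$ must have the same right index $j$ (otherwise crossing the two monotone paths would contradict maximality of $(i,j)$ in $\globalG(e[s',t])$); everything else matches up to minor terminological looseness ("extends leftward", "descends").
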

\begin{proof}
    Let $(i,j)\in\globalG(e[s',t])$ but $(i,j)\not\in\globalG(e[s,t])$. 
    
    Assume first that there is a path from cell $i$ at height $s$ to cell $j$ at height $t$. This implies that there is an index-pair $(i',j')$ in $\globalG(e[s,t])$ with $i'\leq i< j\leq j'$. Observe that $j'=j$, otherwise there is a path from cell $i$ at height $s'$ to cell $j'$ at height $t$ contradicting the fact that $(i,j)\in\globalG(e[s',t])$. Further observe, that $s'$ must correspond to the lowest point of the cell $i'$ as otherwise $(i',j)\in\globalG(e[s',t])$, contradicting the fact that $(i,j)\in\globalG(e[s',t])$. But then $i$ coincides with the smallest index $i'<\hat{i}$ such that $l_i(s)\neq\infty$, as otherwise $(i,j)$ is dominated by $(\hat{i},j)$.

    Now assume that there is no path from cell $i$ at height $s$ to cell $j$ at height $t$. This in turn implies that there is also no $(i',j')\in\globalG(e[s,t])$ with $i'\leq i < j\leq j'$. But then $s$ must correspond to some top-most point in a free space interval of some cell $i<i'<j$, as otherwise $(i,j)\in\globalG(e[s,t])$. Further clearly $(i',j)\in \globalG(e[s,t])$. This concludes the proof.
\end{proof}

\begin{algorithm}[htp]
\caption{Advancing the start point of the sweep-window}\label{alg:advstart}
\begin{algorithmic}[1]
\Procedure{AdvanceStart}{$s$,$s'$,$t$}
    \State Get $A_e$, $\unifyG$, $\localG$, $\globalG$, $\redGlobG$, $\bad(e[s,t])$, $\bad(e[s',t])\setminus\bad(e[s,t])$ and $\bad(e[s,t])\setminus\bad(e[s',t])$ from \textsc{Maintain}
    \If{$s$ is induced by a lowest point in cell $i$ of $A_e$}
        \State Identify smallest $j>i$ such that $l_j(s)\neq \infty$ \Comment{\Cref{lem:jumpright}}
        \State Update index-pairs in $\unifyG$ and in $\globalG$ starting at $i$ to start at $j$
    \EndIf
    \If{$s'$ is induced by a upper boundary of free space interval of cell $i$ of $A_e$}
        \State Identify largest $j<i$ such that $l_j(s)$ is in the interior of cell $j$ \Comment{\Cref{lem:shootleft}}
        \State Update index-pairs in $\unifyG$ and in $\globalG$ starting at $i$ to start at $j$
    \EndIf
    \If{$s'$ is induced by a highest point in cell $i$ of $A_e$}
        \State Add $(i,i)$ to $\unifyG$
    \EndIf
    \For{changed index-pair $(i,j)$ in $\unifyG$ and $\globalG$}
        \State If $i\neq j$ and $r_j(t)\neq\infty$ move index-pair to $\globalG$, else move it to $\unifyG$
        \State Remove any dominated index-pairs involving $(i,j)$ from $\globalG$ and $\unifyG$
    \EndFor
    \State Update $\redGlobG$ from $\globalG$ after merging overlapping intervals via $\bad(e[s,t])$ and $\bad(e[s',t])$
    \State Update $\localG$ via the changes to $\unifyG$, $\bad(e[s,t])$ and $\bad(e[s,t'])$ \Comment{\Cref{obs:UtoL}}
\EndProcedure
\end{algorithmic}
\end{algorithm}

\begin{lemma}[\textsc{AdvanceStart} of $\unifyG$]\label{lem:advanceStartU}
    Let $s'\leq s\leq t\in \extremal{}(A_e)$ be given. with $s'$ and $s$ consecutive. Let $(i,j)\in\unifyG(e[s',t])$. Then one of the following hold:
    \begin{compactenum}
        \item $(i,j)\in\unifyG(e[s,t])$,
        \item $(i',j)\in\unifyG(e[s,t])$, $s'$ corresponds to the top-most point of the left free space interval of cell $i'$, and $i$ is the smallest index such that $l_{\hat{i}}(s)$ lies in the free space interval of cell $\hat{i}$ for all $i<\hat{i}\leq i'$, or
        \item $(i',j)\in\unifyG(e[s,t])$, $s$ corresponds to the lowest point in cell $i'$ and $i$ is the smallest index $i'<i$ such that $l_i(s)\neq\infty$.
        \item $s'$ corresponds to the highest point in cell $i$ and $i=j$.
    \end{compactenum}
\end{lemma}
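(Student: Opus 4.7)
The strategy is to mirror the proof of \Cref{lem:advanceStartG} for the non-diagonal portion of $\unifyG$ and to add handling of the diagonal records $(i,i)$ that distinguish $\unifyG$ from $\globalG$. The proof splits along whether $i<j$ or $i=j$.

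If $i<j$, then $(i,j)\in\unifyG(e[s',t])$ encodes a monotone path from cell $i$ at height $s'$ to cell $j$ at some $\hat{t}\leq t$. The only change compared to $\globalG$ is that $\hat{t}$ need not equal $t$, but none of the structural reasoning about this path in the proof of \Cref{lem:advanceStartG} uses $\hat{t}=t$. Specifically, the path crosses the horizontal line $y=s$ inside some cell $i^\star$ with $i\leq i^\star\leq j$; splitting on whether the path remains in cell $i$ between heights $s'$ and $s$ (case 1), exits cell $i$ upward through the topmost point of its free-space interval at $y=s'$ (case 2), or is blocked from continuing downward in some cell $i'$ because $s$ is the lowest point of $i'$ (case 3), exactly reproduces the three alternatives of \Cref{lem:advanceStartG}.

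If $i=j$, the pair encodes $l_i(s')\neq\infty$. We distinguish whether $l_i(s)\neq\infty$ or $l_i(s)=\infty$. In the former subcase, $(i,i)$ is a valid diagonal record in $\unifyG(e[s,t])$; we then argue it is inclusion-wise maximal there, so case 1 applies. In the latter subcase, the free space of cell $i$ is non-empty at height $s'$ but empty at height $s$, and since the free space in each cell is convex, the topmost extremal point of cell $i$ has $y$-coordinate in $[s',s)$. Because this coordinate is an extremal point and $s',s$ are consecutive in $\extremal{}(A_e)$, it must equal $s'$, which is exactly case 4.

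The main obstacle is the maximality claim in the $l_i(s)\neq\infty$ subcase: one must argue that any hypothetical dominator $(i'',j'')\in\unifyG(e[s,t])$ with $i''<j''$ and $i''\leq i\leq j''$ persists as a dominator in $\unifyG(e[s',t])$, contradicting maximality of $(i,i)$ in the new set. This persistence relies on the convexity of each cell's free space together with the consecutiveness of $s',s$ in $\extremal{}(A_e)$: they guarantee that no lowest- or highest-point extremal event occurs strictly between $s'$ and $s$ in cell $i''$ or in any intermediate cell visited by the path, so the monotone path representing $(i'',j'')$ can be extended downward by a vertical segment inside cell $i''$ from height $s$ to height $s'$ at an $x$-coordinate lying in $[l_{i''}(s'),r_{i''}(s')]$. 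If such an extension is obstructed, it can only be by a lowest-point event at $s$ in an intermediate cell, in which case the dominator is replaced (in the new set) by the updated pair provided by the analogues of cases 2 or 3, which still dominates $(i,i)$.
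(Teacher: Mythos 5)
Your decomposition into $i<j$ and $i=j$ is a reasonable reorganization of the paper's case analysis, and the handling of $i<j$ (mirroring \Cref{lem:advanceStartG}) and of $i=j$ with $l_i(s)=\infty$ (leading to case~4) is sound. The gap is in the subcase $i=j$ with $l_i(s)\neq\infty$, where you conclude that case~1 must hold. This is false: case~3 can occur.

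Concretely, your persistence argument fails. Suppose cell $i-1$ has its bottom-most extremal point exactly at height $s$, so $l_{i-1}(s')=\infty$ but $l_{i-1}(s)\neq\infty$, and suppose the free space of cell $i-1$ at height $s$ connects monotonically into cell $i$, while cell $i$'s free space exists at $s'$ and $s$ but does not connect to cell $i+1$, and no cell below $i-1$ has free space at heights $\leq s$. Then $(i,i)\in\unifyG(e[s',t])$ and $(i,i)$ is maximal there, yet at height $s$ the pair $(i-1,i)\in\unifyG(e[s,t])$ dominates $(i,i)$, so $(i,i)\notin\unifyG(e[s,t])$. Here case~3 applies with $i'=i-1$, $j=i$, $s$ the lowest point of cell $i'$, and $i$ the smallest index above $i'$ with $l_i(s)\neq\infty$. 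Your attempted rescue --- that when the extension is obstructed by a lowest-point event at $s$, the dominator's replacement ``still dominates $(i,i)$'' --- is exactly the step that breaks: in this scenario the replacement at height $s'$ degenerates to $(i,i)$ itself, which dominates nothing strictly, so no contradiction with maximality of $(i,i)$ in $\unifyG(e[s',t])$ is obtained. The correct conclusion in this subcase is that either case~1 or case~3 holds, and the latter requires the same argument the paper runs in its first paragraph (forcing $j'=j$, forcing the lowest-point event in cell $i'$, and pinning down $i$ as the first index above $i'$ with nonempty free space at height $s$), applied with $i=j$.

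Beyond this gap, note also that case~2 and case~4 are provably impossible when $i=j$ and $l_i(s)\neq\infty$ (case~2 would force $i'\leq j=i<i'$; case~4 would force $l_i(s)=\infty$), so once the case-3 possibility is admitted the analysis of this subcase is complete.
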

\begin{proof}
    Let $(i,j)\in\unifyG(e[s',t])$ but $(i,j)\not\in\unifyG(e[s,t])$. 
    
    Assume first that there is a path from cell $i$ at height $s$ to cell $j$ at height at most $t$. This implies that there is an index-pair $(i',j')$ in $\unifyG(e[s,t])$ with $i'\leq i< j\leq j'$. Observe that $j'=j$, otherwise there is a path from cell $i$ at height $s'$ to cell $j'$ at height at most $t$ contradicting the fact that $(i,j)\in\unifyG(e[s',t])$. Further observe, that $s'$ must correspond to the lowest point of the cell $i'$ as otherwise $(i',j)\in\unifyG(e[s',t])$, contradicting the fact that $(i,j)\in\unifyG(e[s',t])$. But then $i$ coincides with the smallest index $i'<\hat{i}$ such that $l_i(s)\neq\infty$, as otherwise $(i,j)$ is dominated by $(\hat{i},j)$.

    Now assume that there is no path from cell $i$ at height $s$ to cell $j$ at height at most $t$. If $l_i(s)=\infty$ then as $l_i(s')\neq \infty$ we observe that $i=j$ as by symbolic perturbation every highest-most point is above all right-most points and as such there is no monotone path exiting cell $i$.
    If $l_i(s)\neq\infty$ the fact that there is no path from cell $i$ at height $s$ to cell $j$ at height at most $t$ there is no $(i',j')\in\unifyG(e[s,t])$ with $i'\leq i < j\leq j'$. But then $s$ must correspond to some top-most point in a free space interval of some cell $i<i'<j$, as otherwise $(i,j)\in\unifyG(e[s,t])$. Further clearly $(i',j)\in \unifyG(e[s,t])$. This concludes the proof.
\end{proof}

\begin{algorithm}
\caption{Advancing the end point of the sweep-window}\label{alg:advend}
\begin{algorithmic}[1]
\Procedure{AdvanceEnd}{$s$,$t$,$t'$}
    \State Get $A_e$, $\unifyG$, $\localG$, $\globalG$, $\redGlobG$, $\bad(e[s,t])$, $\bad(e[s,t'])\setminus\bad(e[s,t])$ and $\bad(e[s,t])\setminus\bad(e[s,t'])$ from \textsc{Maintain}
    \If{$t$ is induced by a lower boundary of free space interval of cell $i$ of $A_e$}
        \State Identify smallest $j>i$ such that $l_j(s)\neq \infty$ \Comment{\Cref{lem:jumpright}}
        \State Replace index-pairs $(a,b)$ in $\unifyG$ and in $\globalG$ with $a<i\leq b$ with $(a,i)$ and $(j,b)$
    \EndIf
    \If{$t'$ is induced by highest point in cell $j$ of $A_e$}
        \State Add $(i,j)$ to $\globalG$ where $i$ is the minimal index such that $(i,j)\in\unifyG$\Comment{\Cref{lem:UtoG}}
    \EndIf
    \For{changed index-pair $(i,j)$ in $\unifyG$ and $\globalG$}
        \State If $i\neq j$ and $r_j(t)\neq\infty$ move index-pair to $\globalG$, else move it to $\unifyG$
        \State Remove any dominated index-pairs involving $(i,j)$ from $\globalG$ and $\unifyG$
    \EndFor
    \State Update $\redGlobG$ from $\globalG$ after merging overlapping intervals via $\bad(e[s,t])$ and $\bad(e[s,t'])$
    \State Update $\localG$ via the changes to $\unifyG$, $\bad(e[s,t])$ and $\bad(e[s,t'])$ \Comment{\Cref{obs:UtoL}}
\EndProcedure
\end{algorithmic}
\end{algorithm}

\begin{lemma}[\textsc{AdvanceEnd} of $\globalG$]\label{lem:advanceEndG}
    Let $s\leq t'\leq t\in\extremal{}(A_e)$ be given, with $t'$ and $t$ consecutive. Let $(i,j)\in\globalG(e[s,t'])$. Then one of the following holds:
    \begin{compactenum}
        \item $(i,j)\in\globalG(e[s,t])$,
        \item $t'$ corresponds to the highest point of cell $j$, and $i<j$ is the smallest index such that there is a monotone path from cell $i$ starting at height $s$ ending in cell $j$.
        \item $(i,j')\in\globalG(e[s,t])$ and $j'>j$ and $t$ is the lowest-most point in the left free space interval of cell $j+1$
        \item $(i',j)\in\globalG(e[s,t])$ and $i'<i$ and $t$ is the lowest-most point in cell $\hat{i}$ with $i'<\hat{i}<i$ and $i$ is the smallest index $\hat{i}<i$ such that $l_i(s)\neq\infty$.
    \end{compactenum}
\end{lemma}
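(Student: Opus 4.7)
The proof plan closely parallels that of \Cref{lem:advanceStartG}: I take any pair $(i,j)\in\globalG(e[s,t'])$, assume $(i,j)\notin\globalG(e[s,t])$ (case 1 being immediate otherwise), and case-split on whether the monotone path witnessing $(i,j)$ in the new configuration $(s,t')$ can be extended upward inside $A_e$ to height $t$. Throughout, I exploit that $t'$ and $t$ are consecutive values of $\extremal{}(A_e)$, so at most one extremal-point event can lie strictly between them.

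In the first sub-case, some extension reaches height $t$, ending in a cell $j''\geq j$ along $e$. Then $\globalG(e[s,t])$ contains some pair $(i',j')$ with $i'\leq i$ and $j'\geq j''$, and since $(i,j)\notin\globalG(e[s,t])$ we must have $(i',j')\neq(i,j)$. If $j'>j$, the relevant event between $t'$ and $t$ must be that cell $j+1$ gains access from the left; because $t'$ and $t$ are consecutive extremal values, this forces $t$ to be exactly the lowest-most point of the left free space interval of cell $j+1$, and maximality of $(i,j)$ in $\globalG(e[s,t'])$ pins down $i'=i$, yielding case 3. If instead $j'=j$, then $i'<i$ but $(i',j)\notin\globalG(e[s,t'])$, so the path from $(i',s)$ to $(j,t)$ must pass through a cell whose free space is cut off at height $t'$; since $t'$ and $t$ are consecutive, this must be some cell $\hat{i}$ with $i'<\hat{i}$ whose lowest point sits at height $t$. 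Maximality of $(i,j)$ in $\globalG(e[s,t'])$ then forces $i$ to be the smallest index greater than $\hat{i}$ with $l_i(s)\neq\infty$, giving case 4.

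In the second sub-case, no extension of the path from cell $i$ at height $s$ reaches height $t$ at all, whether in cell $j$ or any later cell. In particular there is no point of $A_e$ in cell $j$ at height strictly above $t'$, so combined with the fact that $t'$ is extremal in $\extremal{}(A_e)$, this forces $t'$ to be the highest point of cell $j$. Maximality of $(i,j)$ in the new group then characterizes $i$ as the smallest index admitting a monotone path from cell $i$ at height $s$ into cell $j$, producing case 2.

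The main technical hurdle, as in \Cref{lem:advanceStartG}, is pinning down which extremal-point event lies in $(t',t]$ within each sub-case and arguing that no other type of event is compatible. I would handle this by contradiction within each sub-case: an event of any other type either admits a strict extension of $(i,j)$ inside $\globalG(e[s,t'])$, violating maximality, or preserves $(i',j')$ into $\globalG(e[s,t'])$, contradicting its distinctness from $(i,j)$. The characterization of $i$ in cases 2 and 4 then follows from the convexity of $A_e$ inside a single cell together with the monotonicity properties of the functions $l_i(\cdot)$ on their domain of definition.
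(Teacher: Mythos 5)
Your proof follows the same structure as the paper's: split on whether the path from cell $i$ at height $s$ can be extended to height $t$, and within each sub-case identify the unique extremal event in $(t',t]$ using maximality of $(i,j)$ and the fact that $t'$ and $t$ are consecutive in $\extremal{}(A_e)$. In the second sub-case you correctly conclude that $t'$ (rather than $t$, as the paper's proof text mistakenly writes) is the highest point of cell $j$, in agreement with the lemma statement.
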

\begin{proof}
    Let $(i,j)\in\globalG(e[s,t'])$ but $(i,j)\not\in\globalG(e[s,t])$.

    Assume first that there is a path from cell $i$ at height $s$ to cell $j$ at height $t$. This implies that there is an index-pair $(i',j')$ in $\globalG(e[s,t])$ with $i'\leq i< j\leq j'$. But then $t$ must correspond to the lowest point of the left free space interval of some cell $i'<\hat{i}\leq j'$. Further $\hat{i}\leq i$ and $j+1<\hat{i}$, as otherwise $(i,j)\not\in \globalG(e[s,t'])$. But then $i'=i$ or $j=j'$ as otherwise $(i,j)$ is dominated by $(i',\hat{i}-1)$ or $(\hat{i},j')$. If $i'=i$, then $j=\hat{i}$. If $j'=j$, then $i$ is the smallest index $\hat{i}<i$ such that $l_i(s)\neq\infty$.

    Now assume that there is no path from cell $i$ at height $s$ to cell $j$ at height $t$. But then $t$ corresponds to the highest point of cell $j$, and $i$ corresponds to the smallest index $i$ such that there is a monotone path from cell $i$ at height $s$ into cell $j$ concluding the proof.
\end{proof}

\begin{lemma}[\textsc{AdvanceEnd} of $\unifyG$]\label{lem:advanceEndU}
    Let $s\leq t'\leq t\in \extremal{}(A_e)$ be given. with $t'$ and $t$ consecutive. Let $(i,j)\in\unifyG(e[s,t'])$. Then one of the following hold:
    \begin{compactenum}
        \item $(i,j)\in\unifyG(e[s,t])$,
        \item $(i,j')\in\unifyG(e[s,t])$ and $j'>j$ and $t$ is the lowest-most point in the left free space interval of cell $j+1$
        \item $(i',j)\in\unifyG(e[s,t])$ and $i'<i$ and $t$ is the lowest-most point in left free space interval of cell $\hat{i}$ with $i'<\hat{i}<i$ and $i$ is the smallest index $\hat{i}<i$ such that $l_i(s)\neq\infty$.
    \end{compactenum}
\end{lemma}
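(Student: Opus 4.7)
The plan is to mirror the structure of the proof of Lemma~\ref{lem:advanceEndG}, exploiting the fact that $\unifyG(e[s,t])$ is monotone in $t$ by definition: any monotone path witnessing $(i,j)\in\unifyG(e[s,t'])$ (a path from $(i,s)$ terminating at some height $\hat t\le t'$) is automatically a witness for $(i,j)\in\unifyG(e[s,t])$, since the terminal height is only required to be at most $t\ge t'$. Consequently, the first step is to observe that if $(i,j)\in\unifyG(e[s,t'])$ but $(i,j)\notin\unifyG(e[s,t])$, the old witnesses have not vanished, so $(i,j)$ must have lost inclusion-wise maximality. There must exist some pair $(i'',j'')\neq(i,j)$ with $i''\le i$ and $j''\ge j$ that is in $\unifyG(e[s,t])$ but not in $\unifyG(e[s,t'])$, witnessed by a new monotone path from $(i'',s)$ into cell $j''$ ending at some height $\hat t''\in(t',t]$.

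Next I would localize the transition responsible for this new path. Since $t'$ and $t$ are consecutive in $\extremal{}(A_e)$, any extremal event that opens a previously infeasible transition must occur at height exactly $t$. Walking along the new path and inspecting each vertical-boundary crossing, I would identify the first crossing that was infeasible when terminal heights were restricted to $\le t'$: this crossing enters some cell $\hat k$, and $t$ must be the lowest-most point of its left free space interval. The degenerate subcase where the original pair has $i=j$ is disposed of first by noting that the defining condition $l_i(s)\neq\infty$ is independent of $t$, so $(i,i)$ can only leave $\unifyG$ through domination by some newly appearing pair, reducing it to the generic situation above.

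The crux is then to rule out $i''<i$ and $j''>j$ holding simultaneously, leaving exactly the two enumerated cases. Suppose both strict inclusions held; then the new witness path passes through both cell $i$ and cell $j$. I would splice this path with the pre-existing witness for $(i,j)\in\unifyG(e[s,t'])$ at cell $i$ (respectively cell $j$), using that $l_i(s)\neq\infty$ (implied by $(i,j)\in\unifyG(e[s,t'])$) together with convexity of the free space within cells to reroute the new path to start at $(i,s)$ (respectively to end in cell $j$ at height $\le t'$). This produces a witness for $(i'',j)\in\unifyG(e[s,t'])$ or $(i,j'')\in\unifyG(e[s,t'])$, contradicting the maximality of $(i,j)$ at $t'$. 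With the double case excluded, $i''=i$ and $j''>j$ forces $\hat k=j+1$, yielding case~(2), while $i''<i$ and $j''=j$ forces $i''<\hat k<i$ and---by applying the maximality of $(i,j)$ at $t'$ to indices between $\hat k$ and $i$---identifies $i$ as the smallest index exceeding $\hat k$ with $l_i(s)\neq\infty$, yielding case~(3).

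The main obstacle I anticipate is the splicing argument for ruling out the double case: deriving a contradiction at $t'$ from a path that only exists at $t$ requires truncating the new path at the correct cell and justifying that the starting height inside cell $i$ can be lowered from $\tilde h\ge s$ down to $s$ via the convexity of the free space and the condition $l_i(s)\neq\infty$. Once this splicing is established, identifying $\hat k$ as the cell whose left-interval lowest-most point is $t$, and locating $i$ through the smallest-index condition, follows the same template as in the proof of Lemma~\ref{lem:advanceEndG} with only syntactic changes.
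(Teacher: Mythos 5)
Your proposal follows the same route as the paper: observe that the upward monotonicity of $\unifyG$ in $t$ means $(i,j)$ can only leave by domination, localize the new pair's witness to a single cell boundary whose left free-space interval opens at exactly height $t$ (using consecutiveness of $t',t$ in $\extremal{}(A_e)$), rule out the doubly-strict case by truncating the new path and comparing with maximality at $t'$, and then read off cases (2) and (3) from which endpoint matched. This is precisely the paper's argument (the paper phrases the double-case exclusion as ``$(i,j)$ is dominated by $(i',\hat{i}-1)$ or $(\hat{i},j')$''), so the proposal is correct and essentially identical; your explicit note about needing convexity and $l_i(s)\neq\infty$ to justify the truncation step is a sound anticipation of a subtlety the paper leaves implicit.
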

\begin{proof}
    Let $(i,j)\in\unifyG(e[s,t'])$ but $(i,j)\not\in\unifyG(e[s,t])$.

    Observe that as $t'<t$ the fact that $(i,j)\in\unifyG(e[s,t'])$ implies that there is a path from cell $i$ at height $s$ to cell $j$ at height at most $t$. As $(i,j)\not\in\unifyG(e[s,t])$ there is an index-pair $(i',j')$ in $\unifyG(e[s,t])$ with $i'\leq i< j\leq j'$. But then $t$ must correspond to the lowest point of the left free space interval of some cell $i'<\hat{i}\leq j'$. Further $\hat{i}\leq i$ and $j+1<\hat{i}$, as otherwise $(i,j)\not\in \unifyG(e[s,t'])$. But then $i'=i$ or $j=j'$ as otherwise $(i,j)$ is dominated by $(i',\hat{i}-1)$ or $(\hat{i},j')$. If $i'=i$, then $j=\hat{i}$. If $j'=j$, then $i$ is the smallest index $\hat{i}<i$ such that $l_i(s)\neq\infty$.
\end{proof}

\begin{lemma}\label{lem:symdiff}
    Let $s'\leq s\leq t'\leq t\in \extremal{}(A_e)$ be given. with $s'$ and $s$ consecutive and $t'$ and $t$ consecutive. Then for the symmetric difference $\triangle$ it holds that
    \begin{compactenum}
        \item $|\globalG(e[s,t])\ \triangle\ \globalG(e[s,t'])|=\O(1)$,
        \item $|\globalG(e[s,t])\ \triangle\ \globalG(e[s',t])|=\O(1)$,
        \item $|\unifyG(e[s,t])\ \triangle\ \unifyG(e[s,t'])|=\O(1)$ and
        \item $|\unifyG(e[s,t])\ \triangle\ \unifyG(e[s',t])|=\O(1)$.
    \end{compactenum}
\end{lemma}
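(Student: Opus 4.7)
The plan is to apply the case analyses already established in \Cref{lem:advanceStartG,lem:advanceStartU,lem:advanceEndG,lem:advanceEndU}. Since the symbolic perturbation assumption ensures that no two extremal points share the same $y$-coordinate, any two consecutive values of $\extremal{}(A_e)$ differ by exactly one extremal-point event, and that event is attached to a single specific cell index $i'$ of $A_e$. The goal then is to show that each such single event can change only $\O(1)$ pairs in each of $\globalG$ and $\unifyG$, which immediately bounds all four symmetric differences.

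First I would bound $|\globalG(e[s,t])\ \triangle\ \globalG(e[s',t])|$. For the direction $\globalG(e[s',t])\setminus\globalG(e[s,t])$, every offending pair $(i,j)$ must fall into case~2 or case~3 of \Cref{lem:advanceStartG}. In either case the triggering feature (topmost point of the left free-space interval of some cell $i'$, or lowest point of cell $i'$) determines $i'$ uniquely, and the lemma then derives $i$ and $j$ from $i'$ and the geometry. For the reverse direction $\globalG(e[s,t])\setminus\globalG(e[s',t])$, any new pair $(i',j)$ must be the ``replacement pair'' promised by those same cases; inclusion-wise maximality of $\globalG$ ensures that at most one such replacement exists per event, so again $\O(1)$ pairs change. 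An analogous argument for $\unifyG$ uses \Cref{lem:advanceStartU}, with the additional case~4 contributing at most one new pair $(i,i)$ when $s'$ is the highest point of a cell $i$, establishing items (2) and (4).

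Items (1) and (3) follow by the symmetric argument via \Cref{lem:advanceEndG,lem:advanceEndU}: moving the upper endpoint from $t'$ to $t$ crosses a single extremal point, and each listed case pins the affected cell index. In particular, the highest-point event of \Cref{lem:advanceEndG} potentially promotes a single pair from $\unifyG$ into $\globalG$, which is exactly the content of the relevant step of \Cref{alg:advend}, and the new and removed pairs all share a fixed cell index determined by the event.

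The main obstacle I anticipate is ensuring that the ``replacement pair'' perspective fully accounts for both directions of the symmetric difference without double-counting and without cascading effects: that is, when one pair is extended by lowering its starting index, no further neighboring pairs are absorbed or split. I expect this to resolve by invoking inclusion-wise maximality: at most one pair in $\globalG$ (respectively $\unifyG$) can have a given cell index as its starting index, and at most one can have it as its ending index, so a single event touches only a constant number of maximal pairs.
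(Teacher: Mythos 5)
Your high-level approach --- deriving the $\O(1)$ bound from the case analyses of \Cref{lem:advanceStartG,lem:advanceStartU,lem:advanceEndG,lem:advanceEndU} together with inclusion-wise maximality --- matches the paper's. However, the resolution you sketch for your own anticipated obstacle has a genuine gap. Inclusion-wise maximality implies that at most one pair in $\globalG$ (resp.\ $\unifyG$) has a given cell index as its \emph{starting} index and at most one as its \emph{ending} index, but it does \emph{not} bound the number of maximal pairs that \emph{span} a fixed cell index $\hat{i}$: pairs such as $(1,5)$ and $(3,7)$ both span index $4$ while neither contains the other, and one can indeed realize $\Theta(n)$ mutually incomparable pairs all spanning a common cell. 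This is precisely what matters in the AdvanceEnd splitting step --- when $t$ is the bottom of the left free space interval of some cell $\hat{i}$, every pair $(a,b)$ with $a < \hat{i} \leq b$ is replaced --- and your argument gives no control on how many such pairs there are, hence no control on the symmetric difference.

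The paper supplies exactly the missing observation: when $t$ is the bottom-most point of the left free space interval of cell $\hat{i}$, any monotone path that crosses from cell $\hat{i}-1$ into cell $\hat{i}$ while staying at height at most $t$ must pass through that unique bottom-most point, since it is the only admissible crossing height $\leq t$. Consequently every pair $(i,j)\in\unifyG(e[s,t])$ with $i<\hat{i}\leq j$ has the same maximal reach $j$ and the same minimal start $i$, so there is in fact a single such pair. The same argument carries over to $\globalG$. With this spanning-pair uniqueness established, the rest of your per-event, per-direction bookkeeping via the advance lemmas and start/end-index maximality goes through as you describe; you should add this step explicitly rather than relying on maximality alone.
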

\begin{proof}
    It suffices to show that if $t$ is the lowest-most point in cell $\hat{i}$, then there is only one index-pair $(i,j)\in\unifyG(e[s,t])$ such that $i<\hat{i}\leq j$. This however is an immediate consequence of the fact that if $i<\hat{i}\leq j$ and $(i',j')\in\unifyG(e[s,t])$ with $i'<\hat{i}\leq j'$, then both paths go through the bottom-most point of the free space interval of cell $\hat{i}$, and thus $j=j'$. Further $i$ is the minimal index such that there is a monotone path to the bottom-most point of the free space interval of cell $\hat{i}$, and so is $i'$, and thus $i=i'$. This argument also carries over for $\globalG$.
    
    Lastly, by inclusion-wise maximality of index-pairs in $\globalG(e[s,t])$ and $\unifyG(e[s,t])$ any index $i$ occurs at most $\O(1)$ times among all index-pairs in $\globalG(e[s,t])$ and $\unifyG(e[s,t])$. Then \Cref{lem:advanceStartG}, \Cref{lem:advanceStartU}, \Cref{lem:advanceEndG} and \Cref{lem:advanceEndU} conclude the proof.
\end{proof}

\begin{lemma}\label{lem:UtoG}
    Given $\unifyG(e[s,t])$ and index $j$, one can compute the minimal index $i<j$, such that there is a path from cell $i$ at height $s$ to cell $j$ at height $t$ or decide that this index does not exist in time $\O(\log n)$.
\end{lemma}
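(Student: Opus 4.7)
The plan is to store $\unifyG(e[s,t])$ in a balanced binary search tree keyed by the first coordinate of each pair. By the inclusion-wise maximality built into the definition of $\unifyG(e[s,t])$, no two distinct stored pairs can be nested coordinate-wise, so the pairs form a staircase: sorted by the first coordinate they are simultaneously sorted by the second. This invariant is preserved by the updates performed in Algorithms~\ref{alg:advstart} and~\ref{alg:advend}, so the BST supports insertions, deletions, and standard successor queries in $\O(\log n)$ time.

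Given a query index $j$, I would first check in $\O(1)$ whether the free space at height $t$ in cell $j$ is non-empty (i.e., $l_j(t) \neq \infty$); if not, no path can terminate in cell $j$ at height $t$ and I report failure. Otherwise, I would perform one successor query in the BST to find the pair $(i^*, j^*) \in \unifyG(e[s,t])$ with the smallest second coordinate satisfying $j^* \geq j$. If no such pair exists, or if $i^* \geq j$, I would report that no valid starting cell exists; otherwise I would output $i^*$. The whole procedure runs in $\O(\log n)$ time.

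For correctness, whenever $(i^*, j^*) \in \unifyG(e[s,t])$ with $i^* < j \leq j^*$, there is a monotone path from cell $i^*$ at height $s$ that enters cell $j$ on its left boundary at some point $(x_{\mathrm{entry}}, y_{\mathrm{entry}})$ with $y_{\mathrm{entry}} \leq t$; combined with $l_j(t) \neq \infty$ and the per-cell convexity of the approximate free space, the straight segment from $(x_{\mathrm{entry}}, y_{\mathrm{entry}})$ to $(l_j(t), t)$ lies in $A_S$ and is monotone, certifying a valid path from $i^*$ to $j$ at height exactly $t$. The main obstacle is arguing the minimality of $i^*$: this follows from the staircase structure, since any pair $(i', j') \in \unifyG(e[s,t])$ with $i' < i^*$ must satisfy $j' < j$ by the minimality of $j^*$, so the associated path terminates strictly before cell $j$ and cannot reach height $t$ there.
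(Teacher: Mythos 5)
Your approach matches the paper's: you store $\unifyG(e[s,t])$ sorted, exploit the staircase property induced by inclusion-wise maximality (sorted by first coordinate iff sorted by second), and locate the answer via a single binary search. The paper's proof is terser — it simply observes that the answer is the smallest $i$ with $(i,j')\in\unifyG(e[s,t])$ and $i<j\leq j'$ — but relies implicitly on the same staircase structure to make the binary search work.

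Your version is actually a bit more careful than the paper's: you explicitly check $l_j(t)\neq\infty$ before searching. The paper's one-line proof omits this, which could in principle return a spurious index when cell $j$ has no free space at height $t$; in practice the paper only invokes \Cref{lem:UtoG} inside \textsc{AdvanceEnd} when $t$ is the $y$-coordinate of the highest point of cell $j$, so $l_j(t)\neq\infty$ holds there by construction. Also note that your correctness argument for why a pair $(i^*,j^*)$ with $i^*<j\leq j^*$ certifies a path reaching cell $j$ \emph{exactly} at height $t$ — extending the entry point $(x_{\mathrm{entry}},y_{\mathrm{entry}})$ to $(l_j(t),t)$ by a monotone straight segment via per-cell convexity — is a detail the paper skips entirely; it is the right reasoning and worth having spelled out.
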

\begin{proof}
    The index $i$ is precisely the smallest index $i$ such that $(i,j')\in\unifyG(e[s,t])$ such that $i<j\leq j'$. This index $i$ can be computed via binary search over the index-pairs stored in $\unifyG(e[s,t])$.
\end{proof}

 \begin{observation}\label{obs:UtoL}
     $\localG(e[s,t])=\{(i,j)\in\unifyG(e[s,t])\mid i=j \text{, $r_j(t)\neq\infty$ and } i\not\in\bad(e[s,t])\}$.
 \end{observation}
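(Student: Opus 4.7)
The plan is to verify both set-inclusions, reducing each to a convexity-plus-maximality argument that transfers domination between $\combinatorial(e[s,t])$ and $\unifyG(e[s,t])$.

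For the inclusion $(\supseteq)$, I would start from $(i,j)\in\unifyG(e[s,t])$ with $i=j$, $r_j(t)\neq\infty$, and $i\notin\bad(e[s,t])$. Membership in $\unifyG(e[s,t])$ forces $l_i(s)\neq\infty$, so combining the finiteness of both $l_i(s)$ and $r_i(t)$ with goodness of $i$, \Cref{obs:problematicnice} yields $l_i(s)\leq r_i(t)$. Convexity of cell $i$ in $A_e$ then produces a monotone path from $(l_i(s),s)$ to $(r_i(t),t)$ that never leaves cell $i$, placing $(i,i)$ in the defining set of $\combinatorial(e[s,t])$. For inclusion-wise maximality in $\combinatorial$, I would observe that every witness of a dominating pair $(i',j')\in\combinatorial(e[s,t])$ with $i'<j'$ and $i'\le i\le j'$ is a monotone path ending at height exactly $t$ in cell $j'$, which in particular ends at some height $\hat{t}=t\leq t$ and thus satisfies the $\unifyG$-condition. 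Such a pair would dominate $(i,i)$ in $\unifyG(e[s,t])$ as well, contradicting $(i,i)\in\unifyG(e[s,t])$.

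For the inclusion $(\subseteq)$, I would start from $(i,i)\in\localG(e[s,t])$. The path witnessing $(i,i)\in\combinatorial(e[s,t])$ gives $l_i(s),r_i(t)\neq\infty$ and $l_i(s)\leq r_i(t)$, so the contrapositive of \Cref{obs:problematicnice} forces $i\notin\bad(e[s,t])$. The pair $(i,i)$ lies in the defining set of $\unifyG(e[s,t])$ because $l_i(s)\neq\infty$; the remaining burden is inclusion-wise maximality. Assuming toward contradiction that some $(i',j')$ with $i'<j'$, $i'\leq i\leq j'$, lies in the defining set of $\unifyG(e[s,t])$ via a monotone path from cell $i'$ at height $s$ to cell $j'$ at some $\hat{t}\leq t$, I would concatenate this path with the intra-cell path witnessing $(i,i)\in\combinatorial(e[s,t])$. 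Splitting on the cases $i'<i$ (where the $\unifyG$-path enters cell $i$ at some $(x_1,h_1)$ with $h_1\geq s$ and convexity of cell $i$ permits a monotone extension to $(r_i(t),t)$) and $i'=i$ (where the intra-cell path is extended rightward using convexity and the $\unifyG$-path exits cell $i$), the resulting concatenated path should furnish a $\combinatorial$-dominator of $(i,i)$, contradicting $(i,i)\in\localG(e[s,t])$.

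The main obstacle lies in the combining step of the second inclusion: the $\unifyG$-path may terminate at height $\hat{t}<t$, so reconstructing a $\combinatorial$-path reaching height exactly $t$ in some cell $j''\geq i$ requires a careful case analysis using convexity of the individual cells of $A_e$ together with the monotonicity constraints; the case $i'=i$ with $j'>i$ is the most delicate and will drive the structural work.
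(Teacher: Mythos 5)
Your $\supseteq$ direction is correct: from $(i,i)\in\unifyG(e[s,t])$ you get $l_i(s)\neq\infty$, goodness plus finiteness give $l_i(s)\leq r_i(t)$ via \Cref{obs:problematicnice}, convexity of the free space in cell $i$ produces the intra-cell monotone path, and your transfer of any $\combinatorial$-dominator $(i',j')$ (whose witness ends at height exactly $t\leq t$) to a $\unifyG$-dominator of $(i,i)$ correctly contradicts maximality in $\unifyG$.

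The $\subseteq$ direction has two genuine gaps. First, you claim that ``the contrapositive of \Cref{obs:problematicnice} forces $i\notin\bad(e[s,t])$'', but the contrapositive of $\text{good}\Rightarrow l_i(s)\leq r_i(t)$ is $l_i(s)>r_i(t)\Rightarrow\text{bad}$; what you actually need is the \emph{converse}, $l_i(s)\leq r_i(t)\Rightarrow\text{good}$, which does not follow and is false in general. A convex cell whose topmost point has $x$-coordinate $x_t$ and bottommost point has $x$-coordinate $x_b$ can simultaneously satisfy $x_t<l_i(s)\leq r_i(t)<x_b$, so $i$ is bad while a monotone path from $(l_i(s),s)$ to $(r_i(t),t)$ inside cell $i$ still exists. (This is why the paper's proxy coverage and \Cref{thm:maintainSymbolic} only ever work with $\localG(e[s,t])\setminus\bad(e[s,t])$; the condition $i\notin\bad(e[s,t])$ should be treated as a hypothesis present on both sides, not something derivable from $(i,i)\in\localG(e[s,t])$.) Second, the maximality step you flag as ``the main obstacle'' is a real obstruction, and the concatenation/extension argument you sketch cannot close it. A pair $(i,j')$ with $j'>i$ can satisfy the $\unifyG$-condition via a witness path that reaches cell $j'$ at some height $\hat t<t$ but cannot be continued to height $t$ in any cell $\geq i$ (e.g.\ cell $i+1$ is reachable from cell $i$ near height $s$ while its free space terminates strictly below $t$). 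In that configuration $(i,i)$ remains inclusion-wise maximal in $\combinatorial(e[s,t])$, yet $(i,j')$ dominates it in the $\unifyG$-defining set, so $(i,i)\notin\unifyG(e[s,t])$ and no convexity argument inside cell $i$ can manufacture a $\combinatorial$-dominator that does not exist. Establishing this direction requires leaning on the invariants that the sweep actually maintains (cf.\ the advance lemmas), not just the static definitions of $\unifyG$ and $\combinatorial$.
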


\begin{lemma}[Shoot-left data-structure]\label{lem:shootleft}
    Let $I=\{i_1,\ldots,i_n\}$ be a list of intervals in $\bR$. One can build in $O(n\log n)$ time a data-structure that allows queries with input $i\leq n$ and $x\in\bR$ correctly outputting the smallest index $j\leq i$ such that $x\in\bigcap_{j\leq s\leq i}i_s$ or determining that there is no such $j$.
\end{lemma}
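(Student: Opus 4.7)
The plan is to build a segment tree on the indices $1,\ldots,n$ annotated with range-maxima of the left endpoints $a_s$ and range-minima of the right endpoints $b_s$ of the intervals $i_s = [a_s, b_s]$. The key observation is that $x \in \bigcap_{j \leq s \leq i} i_s$ holds iff $\max_{s \in [j,i]} a_s \leq x$ and $\min_{s \in [j,i]} b_s \geq x$. For fixed $i$ and $x$, the first quantity is weakly increasing and the second weakly decreasing as $j$ moves left, so the set of feasible indices forms a contiguous suffix $[j^*, i]$ (possibly empty). Therefore it suffices to locate the largest ``bad'' index $b^* \leq i$, namely the largest $s \leq i$ with $a_s > x$ or $b_s < x$, and then return $j = b^*+1$; if $b^* = i$ I report that no such $j$ exists, and if no bad index exists at all I return $j = 1$.

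First I would construct the segment tree in $O(n)$ time, storing at each internal node the maximum of $a_s$ and the minimum of $b_s$ over the indices covered by its subtree. This clearly fits within the $O(n \log n)$ build budget. To answer a query $(i, x)$, I would decompose $[1, i]$ into the usual $O(\log n)$ canonical segment-tree nodes and scan them from right to left. At the first node whose stored range-max of $a$ exceeds $x$ or whose stored range-min of $b$ is below $x$, I would descend preferring the right child whenever that child still contains a bad index, yielding $b^*$ in $O(\log n)$ additional time. The total query cost is therefore $O(\log n)$.

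The only subtlety is making sure the descent never strays past index $i$, which is automatically guaranteed by carving out the canonical decomposition of $[1,i]$ and then descending into a single one of its at most $O(\log n)$ nodes. Everything else reduces to standard segment-tree bookkeeping, so no substantial obstacle is anticipated.
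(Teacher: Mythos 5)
Your segment tree storing range-maxima of left endpoints and range-minima of right endpoints is exactly the same structure as the paper's balanced binary tree storing range intersections (the intersection of $[a_s,b_s]$ over $s \in [j,i]$ is precisely $[\max_s a_s, \min_s b_s]$), and your descent to find the rightmost bad index via canonical decomposition is the same search, just phrased in cleaner segment-tree language than the paper's in-place tree truncation. The proposal is correct and matches the paper's proof in both idea and complexity.
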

\begin{proof}
    Store the intervals in a balanced binary tree, where every parent stores the intersection of the intervals stored in its children nodes. This construction takes $\O(n\log n)$ time. Thus if the query point $x$ lies in the interval stored at some node $r$ of the tree, then it also lies in the intersection of all intervals stored in the leaves of the tree rooted at $r$.

    Let now $i$ and $x$ be given. If $x\not \in i_i$, return that there is not such $j$. Otherwise we temporarily modify the tree as follows: Traverse the tree upwards starting at the node of $i_i$ removing all children of nodes that represent trees of intervals whose index is strictly bigger than $i$, updating the intervals along the way. This modification takes $\O(\log n)$ time and results in a tree of height $\log n$ whose leaves are all intervals $\{i_1,\ldots,i_i\}$ such that every every node stores the intersection of all intervals of the leaves on its subtree. Next we traverse the tree starting at the root, checking whether $x$ lies in the stored interval at the root. If it we can output $1$. Otherwise assume there is an interval $i_j$ which is the interval with the largest index that does not contain $x$. We now recursively check both children. If the interval of the right child (towards $i$) does not contain $x$, we next traverse in this child. If instead the interval of the right child does not contain $x$, then the left child must not contain $x$, hence we traverse right. In the end we correctly identified the leaf $j$.
\end{proof}

\begin{lemma}[Jump-right data-structure]\label{lem:jumpright}
    Let $A=\{a_1,\ldots,a_n\}$ be a list of values. One can build in $\O(n\log n)$ time a data-structure that allows queries with input $i$ correctly outputting the smallest index $j>i$ such that $a_i>a_j$ or determining that there is no such $j$ in $\O(1)$ time.
\end{lemma}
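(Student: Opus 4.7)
The plan is to prove this via the classical monotonic-stack algorithm for the ``next smaller element to the right'' problem, which in fact achieves $O(n)$ preprocessing time (well within the claimed $O(n\log n)$ bound) while supporting $O(1)$ queries.

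First, I would allocate an answer array $N[1..n]$ initialized to ``none''; at the end, a query with input $i$ simply returns $N[i]$, which clearly takes $O(1)$ time. For the preprocessing, I would scan $i$ from $n$ down to $1$ while maintaining a stack $\mathcal{S}$ of indices with the invariant that the corresponding $a$-values are strictly increasing from bottom to top (equivalently, the top of $\mathcal{S}$ holds the index with the smallest value currently stored). When processing index $i$: (1) repeatedly pop while $\mathcal{S}$ is non-empty and $a_{\mathcal{S}.\mathrm{top}} \geq a_i$; (2) if $\mathcal{S}$ is non-empty, set $N[i] := \mathcal{S}.\mathrm{top}$; (3) push $i$ onto $\mathcal{S}$. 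The monotone invariant is preserved because after popping, the new top (if any) has value strictly less than $a_i$, so pushing $i$ keeps the stack strictly increasing from bottom to top.

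For correctness, the key structural claim is that after processing index $i$, the stack consists of exactly those $k \in \{i, i+1, \ldots, n\}$ for which no $k'$ with $i \leq k' < k$ satisfies $a_{k'} < a_k$. Granting this, let $j$ be the true answer for input $i$, i.e., the smallest index $j > i$ with $a_j < a_i$. By minimality of $j$, every $k'$ with $i < k' < j$ satisfies $a_{k'} \geq a_i > a_j$, so $j$ satisfies the stack-membership condition and is present on $\mathcal{S}$ immediately before the pop phase of step $i$. The pop phase removes precisely those stack entries with value $\geq a_i$, so $j$ survives; any index on the stack below $j$ (i.e., having larger index value by the stack order) cannot be strictly less than $j$, so $j$ becomes the top. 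Hence $N[i] = j$ is set correctly; when no such $j$ exists the stack becomes empty and $N[i]$ remains ``none''.

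For the running time, each index is pushed onto $\mathcal{S}$ exactly once and popped at most once, so the total work across all iterations of the while loop is $O(n)$, and the outer loop contributes another $O(n)$. Thus the preprocessing runs in $O(n) \subseteq O(n \log n)$ time, and subsequent queries are answered in $O(1)$ time via table lookup, which concludes the proof. The main (very mild) subtlety is verifying the stack invariant; no heavier machinery such as range-minimum queries or sparse tables is required.
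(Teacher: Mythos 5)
Your monotone-stack algorithm is correct, and it takes a genuinely different route from the paper: the paper sorts the indices by their $a$-values in $O(n\log n)$, inserts them one by one (in order of increasing value) into a list kept sorted by index, and records for each newly inserted $i$ the index occupying the next position --- which is the answer, since all indices already present have smaller value and the successor in index order is the smallest such index exceeding $i$. Your scheme replaces sort-and-insert with a single right-to-left scan and a monotone stack, the textbook ``next smaller element to the right'' algorithm; it brings the preprocessing down to $O(n)$, comfortably within the claimed $O(n\log n)$, though the improvement is immaterial for the paper's overall bounds. Two small slips in your writeup are worth fixing: the parenthetical ``the top holds the index with the smallest value'' contradicts ``strictly increasing from bottom to top'' (the top in fact holds the \emph{largest} value on the stack), and the structural claim should say that the stack after step $i$ consists of exactly those $k$ with $a_k < a_{k'}$ for all $i \le k' < k$ (strict prefix minima of the suffix $a_i,\dots,a_n$); as you stated it, with ``no $k'$ satisfies $a_{k'} < a_k$,'' the claim fails on ties, e.g.\ when $a_{n-1}=a_n$ the algorithm pops $n$ at step $n-1$ although your condition would retain it. Neither slip affects the correctness of the computed answers $N[i]$, since the argument that $j$ survives the popping phase and becomes the top only uses the strict inequality $a_j < a_i$.
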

\begin{proof}
    First sort the indices $\{1,\ldots,n\}$ with the values $\{a_1,\ldots,a_n\}$ as keys in time $\O(n\log n)$. The resulting list of indices is then iteratively inserted into a sorted list (this time by index). Each insertion takes $\O(\log n)$ time. Further, when some index $i$ is inserted at position $k$, the smallest index $j>i$ such that $a_i>a_j$ (if it exists) is at position $k+1$. Hence, when inserting this index $i$, look up the index at position $k+1$ and store it as the answer to the query with index $i$.
\end{proof}

\begin{figure}
    \centering
    \includegraphics[width=\linewidth]{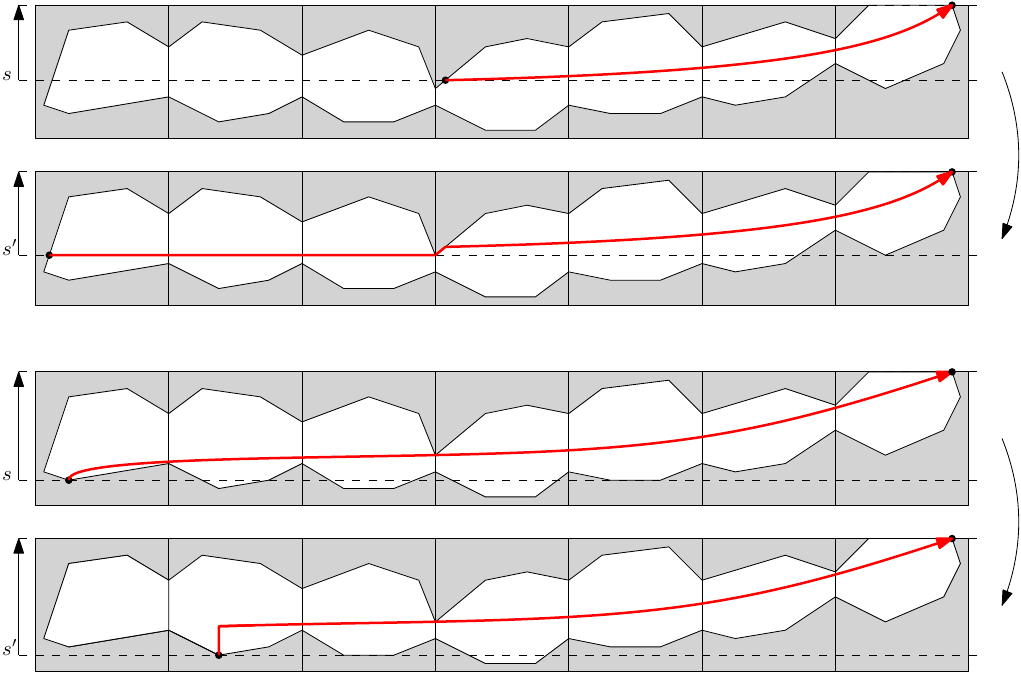}
    \caption{Shoot-left, and jump-right update to $\globalG(e[s,t])$.}
    \label{fig:shoot-left}
\end{figure}

\begin{theorem}\label{thm:maintainSymbolic}
    At the beginning and end of the loop in Line $7$ of \textsc{Maintain} of \Cref{alg:main}, the set of index-pairs $\redGlobG$ coincides with $\redGlobG(e[s,t])$ and $\localG$ coincides with $\localG(e[s,t])\setminus\bad(e[s,t])$. Further, executing \textsc{Maintain} takes $\O(n\log n)$ time.
\end{theorem}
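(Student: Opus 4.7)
The plan is a standard loop-invariant argument paired with a charging analysis of the running time. First I would establish the initialization: the algorithm explicitly computes $\Cov_{A_e}(e[s_m,t_m])$ as $O(n)$ disjoint intervals and reads off $\localG(e[s_m,t_m])$ and $\globalG(e[s_m,t_m])$; together with $\bad(e[s_m,t_m])$, which is obtained from the intervals $B_i$ of \Cref{lem:badsets} by membership queries, this yields $\redGlobG(e[s_m,t_m])$ and $\localG(e[s_m,t_m])\setminus\bad(e[s_m,t_m])$ directly from the definitions. So the invariant holds when we enter the loop.

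For the maintenance step I would treat the two sub-calls \textsc{AdvanceStart} and \textsc{AdvanceEnd} separately, and in each case argue two things: (a) the change to $\unifyG$ and $\globalG$ produced by the algorithm matches the case analysis of \Cref{lem:advanceStartG}--\Cref{lem:advanceEndU}; (b) the resulting $\redGlobG$ and $\localG$ are correctly recomputed from the new $\globalG$, $\unifyG$ and $\bad$-set. Part (a) is essentially a dictionary between the case labels in the lemmas and the branches of the algorithm: the "lowest point" event is handled by a jump-right query (\Cref{lem:jumpright}), the "top of a left free space interval" event by a shoot-left query (\Cref{lem:shootleft}), the "highest point" event by inserting $(i,i)$ into $\unifyG$ or, on the end side, by promoting a $\unifyG$-pair to a $\globalG$-pair via \Cref{lem:UtoG}. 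Part (b) follows because $\bad(e[s,t])\triangle\bad(e[s',t])$ and $\bad(e[s,t])\triangle\bad(e[s,t'])$ are computed from the intervals $B_i$, the definition of $\redGlobG$ only merges consecutive $\globalG$-pairs at good indices, and \Cref{obs:UtoL} characterizes $\localG$ from $\unifyG$ and $\bad$.

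For the running time I would argue as follows. The precomputation of $B_1,\dots,B_n$ costs $O(n\log n)$ by \Cref{lem:badsets}; computing the initial coverage via standard free-space propagation along a single edge is $O(n)$; and building the shoot-left and jump-right structures on the relevant $l_i,r_i$ profiles is $O(n\log n)$ by \Cref{lem:shootleft} and \Cref{lem:jumpright}. The sweep-sequence $\sweepseq$ has length $O(n)$ (\Cref{lem:allsweep}), and by \Cref{lem:symdiff} each advance step produces only $O(1)$ symmetric-difference changes to $\unifyG$ and $\globalG$; each such change is processed by $O(1)$ shoot-left/jump-right queries, $O(1)$ $\unifyG$-to-$\globalG$ binary searches (\Cref{lem:UtoG}), and $O(1)$ lookups into the $B_i$ list for the corresponding $\bad$-updates, for a total of $O(\log n)$ per iteration. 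Summing over the $O(n)$ iterations gives $O(n\log n)$.

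The main obstacle I expect is bookkeeping inside the maintenance step: one has to be careful that an index whose pair is rewritten does not simultaneously appear in two conflicting branches (for instance, the same cell being both the cause of a lowest-point event and the target of a shoot-left update), and that updates to $\redGlobG$ correctly merge or split intervals exactly at indices whose bad/good status is flipped by moving $s\to s'$ or $t\to t'$. The cleanest way to handle this is to first finalize the $\unifyG$ and $\globalG$ updates case by case using \Cref{lem:advanceStartG}--\Cref{lem:advanceEndU}, then recompute $\redGlobG$ and $\localG$ from scratch over the $O(1)$ index-pairs touched, using the symmetric differences of the $\bad$-sets as the only source of further merges/splits; correctness of this post-processing then reduces to the disjointness claim behind the proxy coverage's disjoint-union representation.
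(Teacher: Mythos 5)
Your proposal is correct and follows essentially the same route as the paper's proof: initialization read off from the explicitly computed coverage, maintenance by matching the branches of \textsc{AdvanceStart}/\textsc{AdvanceEnd} to the cases of Lemmas~\ref{lem:advanceStartG}--\ref{lem:advanceEndU} plus \Cref{lem:UtoG} and \Cref{obs:UtoL}, and a time bound driven by \Cref{lem:symdiff}. The one small imprecision is the claim of a worst-case $\O(\log n)$ cost per iteration: a single advance step may flip the good/bad status of more than $\O(1)$ indices, so the cost should be stated as amortized over the sweep (each $B_i$ is contiguous, so there are $\O(n)$ bad-status flips in total), which still yields the $\O(n\log n)$ bound.
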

\begin{proof}

    Observe that initially $\bad(e[s,t])=\emptyset$ and $\unifyG(e[s,t])=\unifyG(e[s,s])=\globalG(e[s,s])\cup\localG(e[s,t])$ which is correctly computed.

    Observe that by \Cref{lem:advanceStartG}, \Cref{lem:advanceStartU}, \Cref{lem:advanceEndG}, \Cref{lem:advanceEndU}, \Cref{lem:UtoG} and \Cref{obs:UtoL} the sets $\unifyG=\unifyG(e[s,t])$, $\localG=\localG(e[s,t])\setminus\bad(e[s,t])$ and $\globalG = \globalG(e[s,t])$ are correctly updated by \textsc{AdvanceStart} and \textsc{AdvanceEnd}.
    
    By \Cref{lem:symdiff}, throughout the entirety of the algorithm there are $\O(n)$ updates to $\globalG$ that can be described as a total of $\O(n)$ insertions and removals of index-pairs, where at any point no index-pair is contained in another index-pair. Under these assumptions $\redGlobG$ can be maintained as $\bad(e[s,t])$ is given, and at every (sorted) insertion and removal only its two neighboring index-pairs need to be considered for additional updates to $\redGlobG$. Maintaining $\redGlobG$ thus takes total time $\O(n\log n)$.
\end{proof}

\begin{corollary}\label{cor:intervalset}
    Let $\sweepseq\in\edgeseqs{e}$ be a sweep-sequence. There are $m=\O(n)$ index-pairs $p_1=(i_1,j_1),\ldots,p_m=(i_m,j_m)$ together with $m$ contiguous subsets $I_i=\{(s_{a_i},t_{a_i}),\ldots,(s_{b_i},t_{b_i})\}\subset S$, such that for every $(s,t)\in \sweepseq$ it holds that
    \[\redGlobG(e[s,t])\cup(\localG(e[s,t])\setminus\bad(e[s,t]))=\bigcup_{1\leq i\leq m, (s,t)\in I_i}\{p_i\}.\]
    Further for $p_k=(i_k,j_k)$ the index $i_k$ is either always good or always bad for $e[s,t]$ for $(s,t)\in I_k$, and the index $j_k$ is either always good or always bad for $e[s,t]$ for $(s,t)\in I_k$.
    The index-pair $p_i$ as well as the values $a_i$ and $b_i$ can be computed for all $i$ in total time $\O(n\log n)$.
\end{corollary}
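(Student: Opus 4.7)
The plan is to derive the corollary directly from the construction underlying \Cref{thm:maintainSymbolic}. First, I would execute the algorithm \textsc{Maintain} on the edge $e$, its approximate free space $A_e$, and the sweep-sequence $\sweepseq$. By \Cref{thm:maintainSymbolic}, at each iteration the algorithm stores exactly $\redGlobG(e[s,t]) \cup (\localG(e[s,t]) \setminus \bad(e[s,t]))$. Any index-pair currently stored was inserted at some earlier iteration and persists until it is removed; I would record each such insertion–removal window together with the corresponding index-pair as a candidate $(p_i, I_i)$.

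By \Cref{lem:symdiff}, advancing either $s$ or $t$ by one step in $\sweepseq$ changes $\globalG$ and $\unifyG$, and hence $\redGlobG$ and $\localG \setminus \bad$, by only $O(1)$ index-pairs. Since $|\sweepseq| = O(n)$, the total number of insertion and removal events, and therefore the total number of candidate intervals produced so far, is $O(n)$.

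To enforce the last requirement that each index of $p_k = (i_k, j_k)$ is either always good or always bad for $e[s,t]$ on $I_k$, I would invoke \Cref{lem:badsets}: for each index $i$, the set $B_i \subset \sweepseq$ is contiguous, so the good/bad status of $i$ flips at most twice along $\sweepseq$, yielding at most $O(n)$ status-transition events overall. At each such event, only $O(1)$ currently stored index-pairs contain the affected index (by the same inclusion-wise maximality used in \Cref{lem:symdiff}), so I would split their intervals $I_k$ at that moment, producing two instances of the same index-pair with consecutive contiguous sub-intervals. After all splits the total number of $(p_i, I_i)$ remains in $O(n)$.

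The main obstacle is purely bookkeeping: interleaving the precomputed $B_i$ boundaries with the insertion/removal events emitted by \textsc{Maintain} along $\sweepseq$, and efficiently identifying which currently active index-pairs are affected by each status flip. Using the sorted representation of $\redGlobG$ and $\localG$ already maintained in \textsc{Maintain}, each such lookup and split costs $O(\log n)$. Combined with the $O(n \log n)$ running time of \textsc{Maintain} and of the $B_i$ construction in \Cref{lem:badsets}, the whole extraction runs in $O(n \log n)$, as required.
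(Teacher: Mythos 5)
Your proposal is correct and takes essentially the same approach as the paper: both run \textsc{Maintain} along the sweep-sequence, record each index-pair's lifetime window from its insertion to its removal (using \Cref{thm:maintainSymbolic} for correctness and \Cref{lem:symdiff} to bound the number of events by $\O(n)$), and then split those windows at the $\O(n)$ good/bad status transitions supplied by \Cref{lem:badsets}. The paper's proof is more terse but relies on exactly the same bookkeeping you spell out.
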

\begin{proof}
    This is an immediate consequence of \Cref{thm:maintainSymbolic}, as throughout \Cref{alg:main} $\redGlobG(e[s,t])\cup(\localG(e[s,t])\setminus\bad(e[s,t]))$ is correctly maintained along $\sweepseq$ in $\O(n)$ updates to $\O(n)$ initial index-pairs. We store any index-pair in $\redGlobG(e[s,t])\cup(\localG(e[s,t])\setminus\bad(e[s,t]))$ whenever it gets added, removed, modified or either of its entries changes from good to bad proving the claim.
\end{proof}

Thus, overall, we are able to maintain a symbolic representation of $\proxyCov(\cdot)$ during a sweep-sequence in total time $\O(\log n)$.

Next, we show how to use this maintained set $\redGlobG(\cdot)$ and $\localG(\cdot)\setminus\bad(\cdot)$ for batch point-queries.

\begin{theorem}\label{thm:pointQuery}
    Let $\sweepseq\in\edgeseqs{e}$, and let $Q\subset[0,1]$ together with $w_Q:Q\rightarrow\bN$ be a weighted point-set. Let for any $Q'\subset Q$ its weight $w_Q(Q')$ be defined as $\sum_{q\in Q'}w_q(q)$. There is an algorithm which computes $w_Q(Q\cap \proxyCov_{A_S}(e[s,t]))$
    for every $(s,t)\in \sweepseq$ in $\O(|Q|\log|Q| + n\log n)$ time.
\end{theorem}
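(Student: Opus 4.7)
The plan is to do three phases: preprocess $Q$, extract the symbolic structure of the sweep, and then maintain the target weight incrementally. First, I would sort $Q$ by position in $\O(|Q|\log|Q|)$ time and build a prefix-sum table $\Sigma$ over the sorted positions, so that $w_Q(Q\cap[a,b])=\Sigma(b)-\Sigma(a^-)$ is computable in $\O(\log|Q|)$ time by two binary searches. Next, I would invoke \Cref{cor:intervalset} in $\O(n\log n)$ time to obtain the $\O(n)$ index-pair/activity-interval tuples $(p_k=(i_k,j_k),I_k)$ along $\sweepseq$, together with the fixed good/bad status of $i_k,j_k$ on each $I_k$. This pins down $\hat l_{i_k}\in\{l_{i_k},r_{i_k}\}$ and $\hat r_{j_k}\in\{l_{j_k},r_{j_k}\}$ as fixed functions for the entirety of $I_k$, so that, using the disjointness of the intervals composing the proxy coverage, the target decomposes as
\[
W(s,t)\;=\;\sum_{k\,:\,(s,t)\in I_k}\bigl(\Sigma(\hat r_{j_k}(t))-\Sigma(\hat l_{i_k}(s)^-)\bigr).
\]

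I would then walk $(s,t)$ through $\sweepseq$ maintaining $W$. Two kinds of updates occur between consecutive sweep positions: \textbf{(i)} $\O(1)$ insertions or removals of index-pairs at the endpoints of some $I_k$, and \textbf{(ii)} a unit shift of $s$ or $t$ in $\extremal{}(A_e)$, which may move the endpoint values $\hat l_{i_k}(s)$ or $\hat r_{j_k}(t)$ of every still-active pair. Type-(i) updates are $\O(n)$ in total and each is handled by two $\Sigma$-queries in $\O(\log|Q|)$ time, contributing aggregate $\O(n\log|Q|)$.

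The hard part is handling type-(ii) updates in aggregate $\O(n\log n)$ rather than the naive $\Theta(n^2)$. The plan is to exploit three structural features: \textbf{(a)} $s$ and $t$ traverse $\extremal{}(A_e)$ monotonically throughout the sweep, \textbf{(b)} $l_c,r_c$ are convex on each cell $c$ of $A_e$, so each trajectory $p\mapsto\hat l_{i_k}(s_p)$ or $p\mapsto\hat r_{j_k}(t_p)$ is piecewise monotone with $\O(1)$ turning points on $I_k$, and \textbf{(c)} the endpoints of distinct simultaneously active pairs lie in pairwise disjoint cell strips of $A_e$, so each $q\in Q$ can only be crossed by boundaries originating in the one cell containing $q$. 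Rather than touching every active pair at each shift, I would pre-compute offline, for each activity interval $I_k$ and for each of its two boundary trajectories, the sorted sequence of boundary-versus-$Q$ crossings via a sweep-line over the sorted $Q$, and then interleave these crossing events with the sweep-sequence, applying one $\O(\log|Q|)$-cost $\Sigma$-update to $W$ per crossing. A charging argument based on (a)--(c), attributing each crossing event either to the activity interval that produced it or to the $Q$-point that was crossed, bounds the total crossing count and hence the aggregate type-(ii) cost, yielding, together with the other phases, the claimed total $\O(|Q|\log|Q|+n\log n)$.
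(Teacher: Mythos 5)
Your proposal is correct and takes essentially the same route as the paper, repackaged around a single accumulator $W$ instead of the paper's per-cell accumulators $L_i,R_j,M_i$ plus the range-sum structure for $D(i,j)$. Both proofs rest on the same two pillars: \Cref{cor:intervalset} to reduce pair-level churn to $\O(n)$ events, and the convexity of $A_e$ in each cell together with the monotonicity of $s,t$ along $\sweepseq$ to ensure that each $q\in Q$ is crossed $\O(1)$ times by the boundary functions $l_i,r_i$ of the one cell containing it; the paper encodes the latter as $\O(1)$ contiguous membership windows $I_q\subset\sweepseq$, you encode it as $\O(|Q|)$ crossing events. Your single-accumulator version is slightly leaner in that the $D(i+1,j-1)$ term dissolves into the prefix-sum differences automatically, while the paper's per-cell split localizes each type of update more transparently; neither buys a genuine asymptotic improvement, and both land on $\O(|Q|\log|Q|+n\log n)$ for the same reasons. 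One bookkeeping slip worth fixing: your identification $\hat r_{j_k}\in\{l_{j_k},r_{j_k}\}$ matches the paper's $\hat r_{j,e[s,t]}$ only for global pairs; for a local pair $(i,i)$ the right endpoint of the proxy-coverage interval is $r_i(t)$, whereas the paper's $\hat r_{i,e[s,t]}(t)=l_i(t)$ when $i$ is good, so your decomposition needs a small case distinction between local and global pairs to plug in the correct right-boundary function (as the paper does by treating $M_i$ separately from $L_i,R_j$). Also, your closing charging argument is stated in outline; it is sound for the reason above (each $q$ in cell $i$ is touched by at most the $\O(1)$ cell-$i$ boundary functions, each unimodal along $\sweepseq$, and at any sweep position at most one active pair claims a given cell-$i$ boundary function), but in a final write-up that count should be made explicit rather than deferred to ``a charging argument based on (a)--(c).''
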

\begin{proof}
    Define following values for all $i$, $j$ along the sweep-sequence $\mathfrak{s}$ where for $(s,t)\in\sweepseq$:
    \begin{multicols}{2}
\vspace*{-3em}
  \begin{equation*}
    L_i((s,t))=\hspace{-2em}\sum_{\substack{q\in Q,\ q\text{ in cell $i$}\\\hat{l}_{i,e[s,t]}(s)\neq\infty,\ \hat{l}_{i,e[s,t]}(s)\leq q}}\hspace{-2em}w_Q(q),
  \end{equation*}\nolinenumbers\break
\vspace*{-3em}
  \begin{equation*}
    R_j((s,t))=\hspace{-2em}\sum_{\substack{q\in Q,\ q\text{ in cell $j$}\\\hat{r}_{j,e[s,t]}(t)\neq\infty,\ q\leq \hat{r}_{j,e[s,t]}(t)}}\hspace{-2em}w_Q(q),
  \end{equation*}
\end{multicols}

\begin{multicols}{2}
\vspace*{-3em}
  \begin{equation*}
    M_i((s,t))=\hspace{-3em}\sum_{\substack{q\in Q,\ i\not\in\bad(e[s,t]),\ q\text{ in cell $i$}\\l_i(s)\neq\infty,\ r_i(t)\neq\infty,\ l_i(s)\leq q\leq r_i(t)}}\hspace{-4em}w_Q(q),
  \end{equation*}\nolinenumbers\break
\vspace*{-3em}
  \begin{equation*}
    D(i,j)=\hspace{-1.5em}\sum_{\substack{q\in Q,\ q\text{ in cell $m$}\\i\leq m\leq j}}\hspace{-1.5em}w_Q(q).
  \end{equation*}
\end{multicols}
The value $L_i(s,t)$ corresponds to the weight of points in cell $i$ that lie right of $\hat{l}_{i,e[s,t]}(s)$. Similarly $R_i(s,t)$ corresponds to the weight of points in cell $i$ that lie left of $\hat{r}_{i,e[s,t]}(t)$. The value $M_i(s,t)$ corresponds to the weight of points in cell $i$ that lie in between $\hat{l}_{i,e[s,t]}(s)$ and $\hat{r}_{i,e[s,t]}(t)$ if $i$ is good for $e[s,t]$. Lastly $D(i,j)$ corresponds to the total weight of points that lie on edge $i,\ldots,j$. Then for any $(s,t)\in\sweepseq$ and any $(i,i)\in\localG(e[s,t])\setminus\bad(e[s,t])$ it holds that \(M_i((s,t)) = w_Q(Q\cap[l_i(s),r_i(t)])\). Similarly, for any $(i,j)\in\redGlobG(e[s,t])$ it holds that \(L_i((s,t)) + D(i+1,j-1) + R_j((s,t))\) equals \(w_Q(Q\cap[\hat{l}_{i,e[s,t]}(s),r_{i,e[s,t]}(t)])\).
Hence
\begin{align*}
    &w_Q(Q\cap\proxyCov_{A_S}(e[s,t]))=\\
    &=\left(\sum_{i \in\localG(e[s,t])\setminus\bad(e[s,t])}\hspace{-1.5em}w_Q(Q\cap[l_i(s),r_i(t)])\right) +\left(\sum_{(i,j)\in\redGlobG(e[s,t])}\hspace{-1em}w_Q(Q\cap[\hat{l}_{i,e[s,t]}(s),\hat{r}_{j,e[s,t]}(t)])\right)\\
    &=\left(\sum_{i \in\localG(e[s,t])\setminus\bad(e[s,t])}\hspace{-1.5em}M_i((s,t))\right)+\left(\sum_{(i,j)\in\redGlobG(e[s,t])}\hspace{-1em}L_i((s,t)) + D(i+1,j-1) + R_j((s,t))\right).
\end{align*}

Observe that $D(i,j)$ can be provided via a data-structure that first computes $d_i=\sum_{q\in Q,q\text{ in cell $i$}}w_Q(q)$ for every $i$ in total time $\O(|Q|\log n)$ and stores them in a balanced binary tree as leaves, where every inner node stores the sum of the values of its children. For every $i\leq j$ the value $D(i,j)=\sum_{i\leq m\leq j}d_m$ can then be returned in $\O(\log n)$ time by identifying in $\O(\log n)$ time all $\O(\log n)$ maximal subtrees whose children lie in the interval $[i,j]$ and then returning the sum of the stored values in the root of each maximal subtree.

\begin{figure}
    \centering
    \includegraphics{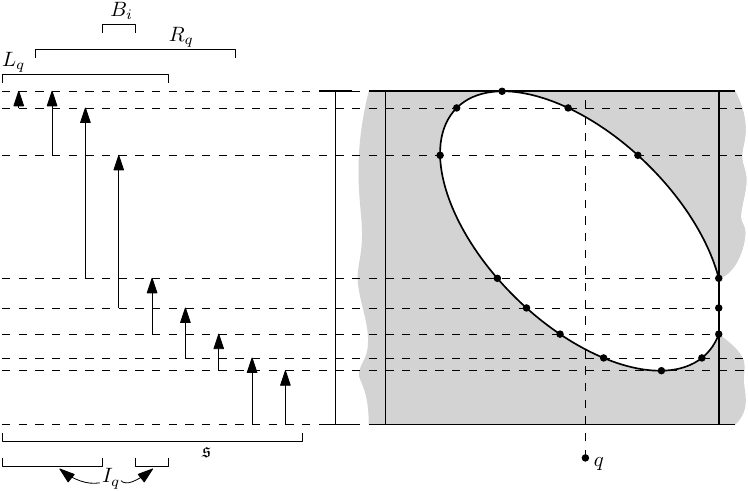}
    \caption{Construction of the set $I_q$ encoding when $\hat{l}_{i,e[s,t]}(s)\leq q$ from Proof of \Cref{thm:pointQuery} via the sets $L_q$ and $R_q$ encoding when $l_i(s)\leq q$ and $r_i(s)\geq q$ and the set $B_i$ for all $(s,t)\in\sweepseq$.}
    \label{fig:Iq}
\end{figure}

Next we show that $L_i(\cdot)$ (resp. $R_j(\cdot)$ and $M_i(\cdot)$) can correctly be maintained when performing the sweep of $\sweepseq$. To this end let \[I_{q}=\{(s,t)\in \sweepseq\mid\text{$q$ is in cell $i$ and } \hat{l}_{i,e[s,t]}(s)\neq\infty,\hat{l}_{i,e[s,t]}\leq q\}.\] 
Refer to \Cref{fig:Iq}. As the free space in every cell is convex, throughout $\sweepseq$ the first indices are monotone and the $y$-coordinates of the left-most and right-most points are stored in $\extremal{}(A_e)$, for every $q\in Q$ the contiguous subsets \[L_{q}=\{(s,t)\in\sweepseq\mid\text{$q$ is in cell $i$ and } l_{i}(s)\neq\infty,l_i(s)\leq q\}\text{ and }\]\[R_{q}=\{(s,t)\in\sweepseq\mid \text{$q$ is in cell $i$ and } l_{i}(s)\neq\infty,r_i(s)\geq q\}\] can be computed in $\O(\log n)$ time. Similarly the set $B_i=\{(s,t)\in\sweepseq\mid \text{$i$ is bad for $e[s,t]$}\}$ is also a contiguous subset of $\sweepseq$ and all $B_i$ can be computed in $\O(\log n)$ time. Let $Q_i$ denote all $q\in Q$ that lie in the $i$th edge of $P$. Then for any $q\in Q_i$
\[I_q = (L_q\cap(\sweepseq\setminus B_i)) \cup ((\sweepseq\setminus R_q) \cap B_i),\]
and thus $I_q$ consists of $\O(1)$ contiguous disjoint subsets of $\sweepseq$. Thus all sets $I_q$ (represented as $\O(1)$ contiguous subsets of $\sweepseq$) can be computed in time $\O(|Q|\log n + n\log n)$. Further,
\[L_i((s,t)) = \sum_{q\in Q_i}\mathbbm{1}_{q\in I_q}w_Q(q),\]
and hence $L_i(\cdot)$ can be maintained in total time $\O(|Q| + n\log n)$ after one initial computation of all $I_{q}$, by adding $w_Q(q)$ for $q\in Q_i$ whenever $(s,t)$ enters the $\O(1)$ contiguous disjoint subsets of $I_q$, and subtracting $w_Q(q)$ for $q\in Q_i$ whenever $(s,t)$ exits the $\O(1)$ contiguous disjoint subsets of $I_q$. Sorting the boundaries of all $I_q$ preparing them for the maintenance of $L_i(\cdot)$ takes $\O(|Q|\log|Q|)$ time. Similarly $M_i(\cdot)$ and $R_j(\cdot)$ can be maintained.

Overall, the values $L_i(\cdot)$, $R_j(\cdot)$, $M_i(\cdot)$ and $D(i,j)$ are correctly maintained in total time $\O(|Q|\log|Q|)$ time such that they can be evaluated in $\O(\log n)$. By \Cref{cor:intervalset} there are only $\O(n)$ total updates to $\redGlobG(\cdot)$ and $\localG(\cdot)\setminus\bad(\cdot)$. Hence, $w(\cdot)$ can be correctly maintained along $\sweepseq$ by updating it whenever $L_i(\cdot)$, $R_j(\cdot)$, $M_i(\cdot)$, $\redGlobG(\cdot)$, $\localG(\cdot)\setminus\bad(\cdot)$ or $\bad(\cdot)$ change. Thus computing $w(e[s,t])$ for all $(s,t)\in S$ takes $\O(|Q|\log|Q| +n\log n)$ time.
\end{proof}

\section{Cubic Subtrajectory Covering}\label{sec:cubicClusterin}

In this section we focus on the SC problem. The goal of this section is the following theorem.

\begin{restatable}{theorem}{thmMainClustering}
\label{thm:mainClustering}
    Let $P$ be a polygonal curve of complexity $n$ and let $\Delta>0$ and $\ell\leq n$ be given. Let $\kopt$ be the size of the smallest set $\mathcal{C}^*\subset\curvespace{\ell}$ such that $\bigcup_{c\in\mathcal{C}^*}\Cov_P(c,\Delta)=[0,1]$. There is an algorithm that given $P$, $\Delta$ and $\ell$ computes a set $\mathcal{C}\subset\curvespace{\ell}$ of size $\O(\kopt\log n)$ such that $\bigcup_{c\in\mathcal{C}}\Cov_P(c,4\Delta)=[0,1]$. Further, it does so in $\O(n^3\log ^2 n + kn^2\log^3 n)$ time using $\O(n^3\log n)$ space.
\end{restatable}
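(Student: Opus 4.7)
The plan is to reduce SC to a weighted set-cover instance over the candidate family $\ConcreteCandidates{S}(P)$ from \Cref{sec:sweepseq} and then solve it via the greedy algorithm with per-round work powered by \Cref{thm:pointQuery}. First, I would compute the simplification $S$ via \Cref{thm:thijs-simplification}, the exact $(1,\Delta)$-free space $A_S$ of $S$ and $P$, and for every edge $e$ of $S$ the sweep-sequences $\edgeseqs{e}$ of \Cref{lem:allsweep}. This enumerates the $\O(n^2 \log n)$ Type $(II)$ and Type $(III)$ candidate subcurves, while the $\O(n \log n)$ Type $(I)$ vertex-vertex subcurves can be listed directly. Using \Cref{cor:intervalset}, each candidate's coverage can be extracted as a union of $\O(n)$ intervals on $[0,1]$. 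I would then form the arrangement of all these intervals, which has $\O(n^3 \log n)$ cells, and define a weighted point set $Q_0$ by picking one representative per cell with weight equal to cell length. This yields the set-cover instance $(Q_0, \{Q_0 \cap \proxyCov_{A_S}(\pi) : \pi \in \ConcreteCandidates{S}(P)\})$.

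Next, I would run the greedy set-cover loop on this instance, maintaining a residual point set $Q$ (initially $Q_0$). At each round, for every sweep-sequence $\mathfrak{s} \in \bigcup_e \edgeseqs{e}$ I call \Cref{thm:pointQuery} to retrieve the current gain $w_Q(Q \cap \proxyCov_{A_S}(\pi))$ of every candidate $\pi$ enumerated by $\mathfrak{s}$; the Type $(I)$ candidates are handled directly. The candidate $\pi^*$ maximizing this gain is added to the output $\mathcal{C}$, and the points of $Q$ lying inside $\proxyCov_{A_S}(\pi^*)$ are removed. Correctness follows from two observations: first, \Cref{thm:16setsystem} implies that every $\kopt$-sized $\Delta$-cover of $P$ induces a proxy-coverage cover of $Q_0$ of cardinality at most $16\kopt$, so the standard greedy guarantee yields $|\mathcal{C}| = \O(\log |Q_0|) \cdot 16\kopt = \O(\kopt \log n)$; second, because $A_S$ is exact, $\proxyCov_{A_S}(\pi) \subseteq \Cov_{A_S}(\pi) = \Cov_P(\pi, \Delta) \subseteq \Cov_P(\pi, 4\Delta)$ for every candidate $\pi$, so $\mathcal{C}$ is a valid $4\Delta$-cover of $[0,1]$.

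The main technical hurdle is the running-time analysis. The preprocessing (simplification, free space, sweep-sequences, candidate coverages, and sorting of the coverage-arrangement) should fit into $\O(n^3 \log^2 n)$ time and $\O(n^3 \log n)$ space. However, naively invoking \Cref{thm:pointQuery} with the full residual $Q$ every round costs $\O(|Q| \log |Q|)$ per sweep-sequence, which can blow up to $\tildeO(n^4)$ per round. To beat this, I would maintain an incremental data structure over the arrangement so that each batch call to \Cref{thm:pointQuery} is served in amortized $\O(\log n)$ per candidate, exploiting the fact that the total weight deleted from $Q$ across all rounds is bounded by $|Q_0|$. Combined with $\O(n^2 \log n)$ candidates per round and $\O(\kopt \log n)$ rounds, this yields the claimed $\O(n^3 \log^2 n + \kopt n^2 \log^3 n)$ running time. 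Designing this amortized structure, coupled to the sweep-sequence traversal and the combinatorial bookkeeping from \Cref{sec:combRep,sec:symbolicRepresentation}, is the principal obstacle; by contrast, the approximation-quality arguments are direct consequences of \Cref{thm:16setsystem} and the definition of proxy coverage.
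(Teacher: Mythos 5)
Your high-level plan matches the paper's — build $\ConcreteCandidates{S}(P)$, discretize $[0,1]$, run greedy set cover with batched gain queries via \Cref{thm:pointQuery}, and amortize per-round cost against the total weight ever deleted from the ground set — and you correctly identify that last amortization principle. Two things, however, prevent this from being a proof.

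The first is the free-space radius. You take $A_S$ to be the exact $(1,\Delta)$-free space of $S$ and $P$, i.e.\ the $\Delta$-free space. That is incompatible with \Cref{thm:16setsystem} and its ingredients \Cref{lem:foursplit,lem:supersetfreespace}: since $\df(S,P)\le 2\Delta$, those statements assert $\Cov_P(\hat\pi,4\Delta)\subset\bigcup\Cov_{A_S}(s)$, which forces $A_S$ to contain the $4\Delta$-free space (the paper indeed sets $A_S$ to the exact $4\Delta$-free space in \Cref{sec:cubicClusterin}). With the $\Delta$-free space, $\Cov_{A_S}(s)=\Cov_P(s,\Delta)$ is too small; your forward inclusion $\proxyCov_{A_S}(\pi)\subset\Cov_P(\pi,4\Delta)$ (output validity) still holds, but the reverse direction — the guaranteed existence of a $16\kopt$-sized proxy cover of $Q_0$, which underlies the $\O(\kopt\log n)$ bound on the number of rounds — fails.

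The second and more substantive issue is that you explicitly defer the running-time argument, writing that ``designing this amortized structure\ldots is the principal obstacle.'' But that structure is precisely the technical content of the theorem and does not follow from \Cref{thm:pointQuery} plus the observation that total deleted weight is $\le|Q_0|$. Even the first greedy round is problematic: calling \Cref{thm:pointQuery} with the full $Q_0$ already costs $\tildeO(n^3)$ per sweep-sequence, i.e.\ $\tildeO(n^4)$ in total, before any amortization kicks in. The paper gets around this with three concrete ingredients that you would need to supply: (i) a per-edge coarsening of the atomic intervals into \emph{molecular intervals} $\molecular{}(e,P)$ of size $\O(n^2)$, so that candidates on edge $e$ are queried against $\O(n^2)$ aggregated weights rather than $\O(n^3)$ points; (ii) maintenance, per edge, of the nonzero-weight molecular intervals $W_e$ in a balanced tree, with \Cref{lem:theDestroyer} splitting each update into $\O(n)$ partially-affected intervals (recomputed) plus fully-covered ones that are zeroed and removed exactly once, amortizing against $\sum_e|W_e|=\O(n^3)$; and (iii) the subtraction trick — rather than recomputing gains, invoke \Cref{thm:pointQuery} only with the freshly covered set $A_i\cap\proxyCov_{A_S}(r)$ and update $\candW{A_{i+1}}(\cdot)=\candW{A_i}(\cdot)-\candW{A_i\cap\proxyCov_{A_S}(r)}(\cdot)$, which amortizes against $|A|$. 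Without these, the $\O(n^3\log^2 n+\kopt n^2\log^3 n)$ bound is unjustified. A final, smaller point: for SC the ground set should carry unit weights. Weighting cells by length makes the greedy round bound $\O(\kopt\log(W/w_{\min}))$ rather than $\O(\kopt\log|Q_0|)$, and $w_{\min}$ can be arbitrarily small; also, the arrangement of all candidate-coverage intervals has $\O(n^3\log n)$ cells, slightly larger than the $\O(n^3)$ atomic intervals the paper extracts directly from the free-space boundary.
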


Throughout this section let $S$ be a simplification of $P$ and set $\alpha=1$, that is, the approximate free spaces $A_e$ and $A_S$ are exactly the $4\Delta$-free space of $e$ and $P$, and $S$ and $P$, respectively. We further set $\extremal{}(A_S)$ to be precisely all extremal points of the $4\Delta$-free space of $S$ and $P$.

\subsection{Atomic intervals}
\begin{definition}[Atomic intervals]
    Let $P$ be a polygonal curve, and let $\Delta>0$ and $\ell$ be given. Let $S$ be a simplification of $P$. Let $G$ be the set of all intersection-points of horizontal lines at height $y$ for $y\in\extremal{}(A_S)$ with the boundary of the free space $\mathcal{D}_{4\Delta}(S,P)$. From this, define the set of atomic intervals $\atomic{}(S,P)$ as the set of intervals describing the arrangement of $[0,1]$ defined by the set $\{x\in[0,1]\mid \exists y\in[0,1]: (x,y)\in G\}$ (\Cref{fig:extremal-molecular}).
\end{definition}

\begin{observation}\label{obs:atomicnumber}
    As $|\extremal{}(A_S)|\leq 8n^2$, and each horizontal line intersects at most $n$ cells, and the free space in every cell is convex it follows, the set of all midpoints of atomic intervals $\atomic{}(S,P)$ is a point set $A\subset[0,1]$ of size $16n^3$ such that for any $C\subset\ConcreteCandidates{S}(P)$ it holds that 
    \[\proxyCov_{P}(C) = [0,1]\iff A \subset \proxyCov_{P}(C).\]
\end{observation}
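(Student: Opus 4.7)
The plan is to prove the two assertions of the observation separately: first the cardinality bound $|A|\leq 16n^3$, then the equivalence for any $C\subset\ConcreteCandidates{S}(P)$.

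For the cardinality bound, I would argue that any horizontal line in the free space diagram $\mathcal{D}_{4\Delta}(S,P)$ passes through at most $n$ cells (namely one per edge of $P$), and since the free space restricted to each cell is convex, the boundary intersection with that horizontal line consists of at most two points per cell. Thus each horizontal line contributes at most $2n$ points to $G$. With $|\extremal{}(A_S)|\leq 8n^2$ horizontal lines, we obtain $|G|\leq 16n^3$. The atomic intervals $\atomic{}(S,P)$ are defined by the arrangement on $[0,1]$ induced by the $x$-projection of $G$, so there are at most $|G|\leq 16n^3$ atomic intervals, and hence $|A|\leq 16n^3$ midpoints.

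For the equivalence, the implication $\proxyCov_P(C)=[0,1]\Rightarrow A\subset\proxyCov_P(C)$ is immediate from $A\subset[0,1]$. The converse rests on the following key claim:
\begin{equation*}
\text{for every }c\in\ConcreteCandidates{S}(P),\text{ the set }\proxyCov_P(c)\text{ is a finite union of closed atomic intervals.}
\end{equation*}
Granted this, suppose $A\subset\proxyCov_P(C)=\bigcup_{c\in C}\proxyCov_P(c)$. For any atomic interval $I\in\atomic{}(S,P)$, its midpoint lies in some $\proxyCov_P(c)$ for $c\in C$; but by the claim $\proxyCov_P(c)$ cannot contain the midpoint of $I$ without containing all of $I$ (its endpoints align with the $x$-projection of $G$, i.e., atomic interval boundaries). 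Hence $I\subset\proxyCov_P(C)$ for every atomic interval $I$, so $[0,1]=\bigcup_I I\subset\proxyCov_P(C)$.

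To establish the key claim, I would use the disjoint-union expression for $\proxyCov_{A_S}$ from \Cref{def:proxycov} combined with the defining parameters of each subcurve type. For a Type $(II)$ or $(III)$ subcurve $c=e[s,t]$, the heights $s,t$ lie in $\extremal{}(A_e)$, which is part of $\extremal{}(A_S)$; each interval endpoint appearing in $\proxyCov_P(c)$ is of the form $l_i(s)$, $r_i(t)$, $\hat l_{i,e[s,t]}(s)$ or $\hat r_{j,e[s,t]}(t)$, which by construction is the $x$-coordinate of an intersection between a horizontal line at height $s$ or $t$ and the boundary of $\mathcal{D}_{4\Delta}(S,P)$, i.e., an $x$-coordinate of a point in $G$. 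For Type $(I)$ vertex-vertex subcurves $S[s,t]$, the coverage $\Cov_P(S[s,t],4\Delta)$ is the projection onto the $x$-axis of start/end points $(a,s),(c,t)$ of monotone paths in $A_S$; since vertex heights of $S$ lie in $\extremal{}(A_S)$, these endpoints again arise from horizontal lines at heights in $\extremal{}(A_S)$ crossing the free-space boundary, hence lie in the $x$-projection of $G$. The main obstacle is being careful with this case analysis — in particular verifying that every endpoint produced by the maximal-path definition for Type $(I)$ truly comes from a horizontal line at an $\extremal{}(A_S)$-height rather than from some interior critical event — but this follows from the observation that the extremal $x$-values along any height sweep of a convex cell only change when the height itself hits an extremal $y$-coordinate of that cell, which by definition belongs to $\extremal{}(A_S)$.
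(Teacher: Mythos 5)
Your proof is correct and follows the same reasoning the paper intends. The paper states this as an \emph{Observation} without a separate proof, so the reasoning is only implicit: the cardinality bound is exactly the counting argument you give ($8n^2$ heights $\times$ $n$ cells $\times$ $2$ boundary crossings per cell by convexity), and the equivalence rests on precisely your key claim — that every interval endpoint of $\proxyCov_P(c)$ for $c\in\ConcreteCandidates{S}(P)$ is an $x$-coordinate of a point in $G$ (for Type $(II)$/$(III)$ because $s,t\in\extremal{}(A_e)\subset\extremal{}(A_S)$ and the endpoints are $l_i(s),r_i(t),\hat{l}_{i,e[s,t]}(s),\hat{r}_{j,e[s,t]}(t)$; for Type $(I)$ because the vertex heights of $S$ are, whenever the free space meets that grid line, automatically $y$-coordinates of top-most/bottom-most extremal points of the adjacent cells, hence in $\extremal{}(A_S)$). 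You could state the Type $(I)$ argument a bit more directly in those terms rather than via the "height sweep" remark, but the substance is right.
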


\begin{theorem}\label{thm:reduction}
    Let $P$ be curve of complexity $n$, let $\Delta$ and $\ell$ be given. Let $S$ be a simplification of $P$. Let $A\subset[0,1]$ be a finite set. Any algorithm that iteratively adds the curve $c$ among $\ConcreteCandidates{S}(P)$ to $R$ maximizing
    \[\left|\left\{a\in A\middle|a\in\left(\proxyCov_P(c,4\Delta)\setminus\left(\bigcup_{r\in R}\proxyCov_P(r,4\Delta)\right)\right)\right\}\right|,\]
    terminates after $16(\ln |A| + 1)\kopt$ iterations where $\kopt$ is the size of the smallest set $\mathcal{C}^*\subset\curvespace{\ell}$ such that $\bigcup_{c\in\mathcal{C}^*}\Cov_P(c,\Delta)=[0,1]$. Furthermore for the resulting set $R\subset\bX_\ell^d$ it holds that
    \(\Cov_P(R,4\Delta)=[0,1]\).
\end{theorem}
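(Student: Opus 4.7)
The plan is to combine Theorem~\ref{thm:16setsystem} with the classical greedy set-cover analysis on a discrete ground set $A$, then push the resulting coverage guarantee back to $[0,1]$ via Observation~\ref{obs:atomicnumber}.

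First, I would convert the optimal continuous solution into a small discrete witness cover. Fix $C^*\subset\curvespace{\ell}$ of size $\kopt$ with $\bigcup_{c\in C^*}\Cov_P(c,\Delta)=[0,1]$. Applying Theorem~\ref{thm:16setsystem} to each $c\in C^*$ yields a set $S_c\subset\ConcreteCandidates{S}(P)$ of cardinality at most $16$ with $\Cov_P(c,\Delta)\subset\bigcup_{s\in S_c}\proxyCov_{P}(s)$. Taking $C'=\bigcup_{c\in C^*}S_c$ gives a family of at most $16\kopt$ elements of $\ConcreteCandidates{S}(P)$ whose proxy coverages together contain $[0,1]\supset A$. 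So on the set system with ground set $A$ and sets $\{A\cap\proxyCov_P(s)\mid s\in\ConcreteCandidates{S}(P)\}$, an optimum cover has size at most $16\kopt$.

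Next, I would apply the standard greedy potential argument. Let $u_t$ denote the number of elements of $A$ uncovered after $t$ iterations of the greedy, so $u_0=|A|$. Because a cover of $A$ of size at most $16\kopt$ exists inside this set system, by averaging there is always some $s\in\ConcreteCandidates{S}(P)$ whose proxy coverage contains at least $u_t/(16\kopt)$ of the currently uncovered elements, and the greedy choice in the theorem matches or exceeds this amount. Hence $u_{t+1}\le u_t\bigl(1-1/(16\kopt)\bigr)$, and telescoping yields $u_t\le|A|\cdot e^{-t/(16\kopt)}$. Setting $t=16\kopt(\ln|A|+1)$ makes $u_t<1$, and since $u_t$ is a nonnegative integer we get $u_t=0$; thus the algorithm terminates within the claimed number of rounds.

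Finally, I would convert discrete coverage of $A$ into the claimed continuous coverage. When $A$ is taken to be the set of midpoints of atomic intervals from Observation~\ref{obs:atomicnumber}, the implication $A\subset\proxyCov_P(R)\Rightarrow\proxyCov_P(R)=[0,1]$ applies. Because in this section the approximate free space $A_S$ is exactly the $4\Delta$-free space of $S$ and $P$, the inclusion $\proxyCov_{A_S}(s)\subset\Cov_{A_S}(s)=\Cov_P(s,4\Delta)$ established for proxy coverage in Section~\ref{sec:combRep} then lifts this to $\Cov_P(R,4\Delta)=[0,1]$. The main obstacle I expect is the first step: keeping the approximation factor at a small constant while passing from the continuous optimum in $\curvespace{\ell}$ to witnesses inside $\ConcreteCandidates{S}(P)$; the remainder is a clean application of the classical greedy $H_{|A|}$-type bound together with the observation that atomic-interval midpoints witness full proxy coverage.
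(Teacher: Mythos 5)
Your proof is correct and follows essentially the same route as the paper's: invoke Theorem~\ref{thm:16setsystem} to build a witness cover of size $16\kopt$ inside $\ConcreteCandidates{S}(P)$, run the standard greedy set-cover potential argument on the discrete ground set $A$, and then push discrete coverage back to $\Cov_P(R,4\Delta)=[0,1]$. The paper compresses the greedy bookkeeping into a citation of ``standard greedy \textsc{SetCover} arguments,'' whereas you spell out the $u_{t+1}\le u_t(1-1/(16\kopt))$ decay; you also make explicit, via Observation~\ref{obs:atomicnumber}, the step from $A\subset\proxyCov_{A_S}(R)$ to full coverage of $[0,1]$, which the paper's proof glosses over (it simply asserts $\proxyCov_{A_S}(C)=[0,1]$). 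This last point is worth being careful about: as literally stated, the theorem permits an arbitrary finite $A\subset[0,1]$, for which the conclusion $\Cov_P(R,4\Delta)=[0,1]$ does not follow; your restriction to $A$ being (a superset of) the atomic-interval midpoints is exactly the hypothesis needed, and matches how the theorem is applied later in the paper.
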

\begin{proof}
    By \Cref{thm:16setsystem} there is a set $C_1$ of size $16\kopt$ in $\ConcreteCandidates{S}(P)$ such that $\proxyCov_{A_S}(C_1)=[0,1]$. Hence, by standard greedy \textsc{SetCover} arguments, the algorithm thus terminates after $(\ln|A|+1)16\kopt$ iterations, returning a set $C\subset\ConcreteCandidates{S}(P)$ of size $(\ln |A|+1)16\kopt$ such that $\proxyCov_{A_S}(C)=[0,1]$. As $\proxyCov_{A_S}(\cdot)\subset\Cov_P(\cdot,4\Delta)$ it follows that $\Cov_P(C,4\Delta)=[0,1]$.
\end{proof}

We define the weight of any subcurve $S[s,t]$ in $\ConcreteCandidates{S}(P)$ with respect to the point-set $A\subset[0,1]$ as
\(\candW{A}(S[s,t]):=\left|A\cap\proxyCov_{A_S}(S[s,t])\right|\).
By \Cref{thm:reduction} it suffices to show that given a set of curves $R\subset \ConcreteCandidates{S}(P)$ in we can evaluate $\candW{A\setminus \proxyCov_{A_S}(R)}(\pi)$ for every $\pi\in\ConcreteCandidates{S}(P)$. 

\subsection{\boldmath Maintaining a partial solution and Type $(I)$-curves}

The greedy-algorithm operates in rounds: given a partial solution $R\subset\ConcreteCandidates{S}(P)$, which initially is empty, it identifies a subcurve $\pi\in\ConcreteCandidates{S}(P)$ which maximizes $\candW{A}(\pi,R)$. Let $I_R$ be the $\O(|R|n)\leq \O(\kopt n\log n )$ disjoint intervals describing $\bigcup_{r\in R}\Cov_P(r,4\Delta)$. During the greedy set cover algorithm the exact information on $R$ is less important, more so $I_R$, and accessing and updating this union of intervals quickly throughout the algorithm, as by definition
\[\candW{A\setminus\proxyCov_{A_S}(R)}(\pi)=\left|\left\{a\in A\middle|a\subset\left(\proxyCov_{A_S}(\pi)\setminus\left(\bigcup_{i\in I_R}i\right)\right)\right\}\right|.\]

\begin{lemma}\label{lem:solution_datastructure}
    Let $I$ be a set of $n$ pairwise disjoint intervals defined by interval boundaries of $\atomic{}(S,P)$. There is a data-structure that allows computing

\[\left|\left\{a\in A\middle|a\subset\left([l,r]\setminus\left(\bigcup_{i\in I_R}i\right)\right)\right\}\right|\]
    
    in $\O(\log n)$ time for $[l,r]$ defined by boundaries of $\atomic{}(S,P)$. Further we can either remove an interval from $I$, or if $[l,r]\cap \left(\bigcup_{i\in I}i\right)=\emptyset$, then $[l,r]$ can be added to $I$ correctly updating the data-structure $\O(\log n)$ time.
\end{lemma}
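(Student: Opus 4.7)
}
The key observation that drives the design is that every point $a \in A$ is the midpoint of some atomic interval of $\atomic{}(S,P)$, and both the intervals of $I$ and the query interval $[l,r]$ have endpoints among the atomic boundaries. Hence, if we enumerate the atomic boundaries as $\pi_1 < \pi_2 < \cdots < \pi_N$ (with $N = \O(n^3)$ by \Cref{obs:atomicnumber}), then for any sub-interval $[\pi_i, \pi_j]$ with atomic endpoints we have $|A \cap [\pi_i, \pi_j]| = j - i$; counting points of $A$ inside such an interval reduces to an index-difference computation. My plan is to maintain a balanced binary search tree of the intervals of $I$, keyed by their left endpoint, and augment each node with the sum of atomic weights over its subtree, where the atomic weight of $[\pi_a, \pi_b] \in I$ is defined as $b - a$.

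During preprocessing, I would store the sorted sequence of atomic boundaries in an array so that the index of any $\pi_k$ can be retrieved in $\O(\log n)$ time by binary search. The BST is then built on $I$ in the standard way, computing the augmented subtree sums bottom-up.

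To answer a query $[l,r] = [\pi_i, \pi_j]$, I would compute the count in three steps: first, compute the raw value $j - i$, giving the number of atomic midpoints in $[l,r]$; second, locate in the BST the contiguous range of intervals of $I$ whose endpoints lie entirely within $[l,r]$ and subtract the sum of their atomic weights using the augmented subtree sums, which takes $\O(\log n)$ by a standard range-sum traversal; finally, observe that since the intervals of $I$ are pairwise disjoint, at most one interval of $I$ can straddle $\pi_i$ and at most one can straddle $\pi_j$. Each such straddling interval can be identified via a predecessor query in the BST in $\O(\log n)$ time, and its contribution to the covered portion inside $[l,r]$ is again just an index difference between $\pi_i$ (or $\pi_j$) and the appropriate atomic boundary of the interval. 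Summing these corrections gives the answer in $\O(\log n)$ total time.

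For the update operations, removing an interval is a standard BST deletion with subtree-sum maintenance along the $\O(\log n)$ affected path. Insertion is only permitted when $[l,r]$ is disjoint from every interval of $I$, which by the disjointness hypothesis means we only need a single BST insertion with the corresponding update to the augmented sums, again in $\O(\log n)$. I do not anticipate any serious obstacle; the only care needed is in the query routine's handling of the at most two intervals of $I$ that straddle the endpoints of $[l,r]$, but the disjointness of $I$ keeps this case bounded by a constant and makes the book-keeping straightforward.
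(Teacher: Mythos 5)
Your proposal is correct and follows essentially the same route as the paper's (very terse) proof: both store the disjoint intervals of $I$ in a balanced search tree, augment each stored interval with the count of atomic intervals it contains (your "atomic weight" $b-a$), and answer a query $[l,r]$ by converting atomic-boundary endpoints to indices and combining a range-sum over fully-contained intervals of $I$ with constant-size corrections for the at most two straddlers. The paper compresses all of this into one sentence ("store the disjoint intervals in an interval-tree and for every interval store the number of intervals from $\atomic{}(S,P)$ that lie inside it"), so your write-up is in effect the missing detail for the same idea, and your observation that $|A\cap[\pi_i,\pi_j]|=j-i$ is exactly the arithmetic that makes the augmented counts usable.
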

\begin{proof}
    Simply store the disjoint intervals $I_R$ in an interval-tree and for every interval store the number of intervals from $\atomic{}(S,P)$ that lie inside it. This number can be computed in $\O(\log n)$ time.
\end{proof}

Throughout the $\O(\kopt \log n)$ iterations of the algorithm $I$ will have complexity at most $\O(n\kopt \log n)$, and maintaining it takes at most $\O(n\kopt \log^2 n)$ time, as any interval is added and removed exactly once to maintain disjointedness. 
\begin{observation}
    Trivially $\kopt \leq\lceil n/\ell\rceil=\O(n)$.
\end{observation}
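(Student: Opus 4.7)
The plan is to exhibit an explicit feasible solution to the SC instance of size at most $\lceil n/\ell \rceil$, which immediately upper-bounds $\kopt$. The candidate solution is the partition of $P$ into consecutive vertex-to-vertex subcurves of complexity at most $\ell$.

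Concretely, write the vertices of $P$ as $p_1, \ldots, p_n$, and set $m = \lceil n/\ell \rceil$. I would partition the index set into $m$ consecutive blocks, where the $i$th block contains at most $\ell$ vertices (choosing overlap of one vertex between consecutive blocks so that the concatenation of the resulting subcurves parametrizes all of $P$). Let $c_i$ denote the polygonal curve through the vertices of the $i$th block; by construction each $c_i$ lies in $\curvespace{\ell}$ and is itself a subcurve of $P$ of the form $P[s_i, t_i]$ with $s_1 = 0$, $t_m = 1$, and $t_i = s_{i+1}$. Hence $\df(P[s_i,t_i], c_i) = 0 \leq \Delta$, so $[s_i, t_i] \subseteq \Cov_P(c_i, \Delta)$.

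Setting $\mathcal{C} = \{c_1, \ldots, c_m\}$, the intervals $[s_i, t_i]$ cover $[0,1]$, so $\Cov_P(\mathcal{C}, \Delta) = [0,1]$. Thus $\mathcal{C}$ is a feasible solution of size $m = \lceil n/\ell \rceil$, giving $\kopt \leq \lceil n/\ell \rceil$. Since $\ell \geq 2$ and $\ell \leq n$, we have $\lceil n/\ell \rceil \leq n$, so $\kopt = \O(n)$. There is no real obstacle here; the only minor care-point is handling the boundary between consecutive blocks so that the coverage intervals meet (which is ensured by having each block share its last vertex with the next block's first vertex, at a negligible cost in the count).
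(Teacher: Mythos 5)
Your construction is the natural one, and the paper itself offers no proof for this observation, so there is nothing to compare against except the arithmetic --- and there the arithmetic is off in both the statement and your proof. With $m$ consecutive blocks of (at most) $\ell$ vertices each, overlapping in a single shared vertex, the number of distinct vertices you reach is at most $1 + m(\ell-1)$. Requiring this to be at least $n$ gives $m \geq \lceil (n-1)/(\ell-1)\rceil$, not $m = \lceil n/\ell\rceil$. Since $n \geq \ell \geq 2$, one has $\lceil (n-1)/(\ell-1)\rceil \geq \lceil n/\ell\rceil$, and the gap is real: for $\ell = 2$ your $m = \lceil n/2\rceil$ overlapping blocks of two vertices only reach about $n/2+1$ of the $n$ vertices, leaving roughly half the edges uncovered. (And no other construction rescues $\lceil n/\ell\rceil$ in general: for a spiky staircase with edge length much larger than $\Delta$ and $\ell=2$, each segment center can cover essentially one edge, forcing $\kopt = \Omega(n)$, not $n/2$.) The fix is to use $m = \lceil (n-1)/(\ell-1)\rceil \leq n-1$ blocks; everything else in your argument goes through unchanged, and $\kopt \leq n-1 = \O(n)$ is what the paper actually relies on. In short: the idea and the feasibility argument are fine, but state the block count as $\lceil (n-1)/(\ell-1)\rceil$ rather than $\lceil n/\ell\rceil$; the paper's own $\lceil n/\ell\rceil$ appears to be the same imprecision.
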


\begin{lemma}[\cite{vanderhoog2024fasterdeterministicsubtrajectoryclustering}]\label{lem:evalType1}
    In $\O(n^2\ell\log\ell\log n)$ one can compute $\Cov_{P}(S[s,t],4\Delta)=\proxyCov_{A_S}(S[s,t])$ for every Type $(I)$-subcurve $S[s,t]$ of $S$.
\end{lemma}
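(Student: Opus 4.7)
The plan is to compute $\Cov_P(S[s,t],4\Delta)$ directly for each Type $(I)$-subcurve via a standard Alt--Godau-style reachability analysis in the free-space diagram, exploiting the doubling structure of the allowed lengths $j=2^m$. Observe first that for any vertex-vertex subcurve $S[s,t]$ the proxy coverage $\proxyCov_{A_S}(S[s,t])$ has been defined to equal $\Cov_P(S[s,t],\Delta)$ (up to the $\alpha=1$ scaling used throughout \Cref{sec:cubicClusterin}, which gives exactly $\Cov_P(S[s,t],4\Delta)$), so no additional conversion step is needed; it is enough to compute the true Fr\'echet coverage.

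For each fixed $m \in \{0,\ldots,\lfloor \log_2 \ell\rfloor\}$ and each starting vertex $i$, I would build the $4\Delta$-free-space diagram of $S[i,i+2^m]$ with $P$, which has $2^m(n-1)$ cells, and perform a single bottom-to-top monotone reachability sweep. For each entry $x$-coordinate on the bottom boundary this sweep returns the interval of exit $x$-coordinates reachable on the top boundary; the union of all induced intervals $[x,x']$ is exactly $\Cov_P(S[i,i+2^m],4\Delta)$ and can be represented as $\O(n)$ disjoint intervals. Using balanced search trees to maintain the left/right monotone reachable envelopes as one walks up the column, a single subcurve can be processed in $\O(2^m n\log n)$ time.

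Summing this cost over the $n - 2^m$ starting vertices and over all $\lfloor \log_2 \ell\rfloor+1$ doubling levels gives
\[
\sum_{m=0}^{\lfloor \log_2 \ell\rfloor} n \cdot \O\!\left(2^m n\log n\right) \;=\; \O(n^2\log n)\cdot\sum_{m=0}^{\lfloor \log_2 \ell\rfloor} 2^m \;=\; \O(n^2\ell\log n),
\]
and an extra $\log \ell$ factor absorbs the cost of snapping the reported envelope intervals to the boundaries of the atomic arrangement $\atomic{}(S,P)$, so that the output is directly compatible with \Cref{lem:solution_datastructure}. This matches the bound in \cite{vanderhoog2024fasterdeterministicsubtrajectoryclustering}.

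The main obstacle is delivering the coverage in a form aligned with $\atomic{}(S,P)$: the raw reachability sweep yields intervals whose endpoints are determined by the free-space ellipses, but these endpoints must be projected onto the atomic boundaries so that subsequent rounds of the greedy algorithm can query the partial solution in $\O(\log n)$ time. This projection is handled by intersecting, column by column, the reachability envelope with the (precomputed) sorted list of atomic boundaries falling inside that column, implemented via a persistent balanced interval tree; this is precisely where the polylogarithmic overhead beyond the geometric series $\sum 2^m = \O(\ell)$ arises.
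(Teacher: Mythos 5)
Note first that the paper does not actually prove this lemma --- it is cited verbatim from \cite{vanderhoog2024fasterdeterministicsubtrajectoryclustering}, so there is no in-paper proof to compare against; I can only assess your argument on its own terms.

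Your main construction --- one reachability sweep through the $4\Delta$-free space of $S[i,i+2^m]$ against $P$ per Type $(I)$-subcurve, charging $\O(2^m n\log n)$ to a subcurve of complexity $2^m$, and summing the geometric series $\sum_{m\le\log_2\ell} 2^m = \O(\ell)$ over the $n$ starting vertices --- is the natural approach and does yield the claimed bound; indeed it gives the tighter $\O(n^2\ell\log n)$. Two remarks. First, the asserted per-subcurve cost ``$\O(2^m n\log n)$ via balanced search trees for the reachable envelopes'' is the crux of the lemma and is stated rather than derived; a careful version would explain how the reachability sweep outputs, for each entry cell on the bottom row, the maximal reachable exit on the top row, and why maintaining these frontiers costs only $\O(\log n)$ amortized per cell. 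Second, your final paragraph is a red herring: no ``snapping'' of interval endpoints onto $\atomic{}(S,P)$ is required. The boundaries of atomic intervals are by definition the $x$-coordinates of free-space boundary points at extremal heights, and the heights $s,t$ of a vertex-vertex subcurve are cell boundaries of $S$, hence extremal; so the endpoints of $\Cov_P(S[s,t],4\Delta)$ already lie on atomic boundaries automatically. Consequently there is no extra $\log\ell$ to account for --- the $\log\ell$ in the cited bound most likely just reflects a cruder accounting (each of the $\O(n\log\ell)$ candidates processed in $\O(n\ell\log n)$ time regardless of its true complexity), and your sharper sum simply beats it. You should drop that paragraph; as written it introduces a fictitious obstacle and a fictitious fix.
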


Given $\O(n)$ disjoint intervals describing some $\proxyCov_{A_S}(\pi)$ and the data structure from \Cref{lem:solution_datastructure} storing a partial solution consisting of at most $\O(n^2\log n)$ disjoint intervals. Then we can compute $\candW{A}(\pi,R)$ in time $\O(n\log n)$. Thus the following theorem follows.

\begin{theorem}\label{thm:type1}
    Let $S$ and $P$ be polygonal curves of complexity $n$. After $\O(n^2\ell\log^2 n)$ preprocessing time one can answer queries with a set $I_R$ of intersection-free intervals together with the data-structure from \Cref{lem:solution_datastructure} representing the coverage of a set $R\subset\ConcreteCandidates{S}(P)$ of size $\O(k\log n)$ identifying the Type $(I)$ subcurve $c$ of $S$ maximizing $\candW{A}(c,I)$ in $\O(n^2\log n\log \ell)$ time.
\end{theorem}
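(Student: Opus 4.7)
The plan is to precompute the coverage of every Type $(I)$ subcurve of $S$ once, and then answer each query by issuing a batch of interval queries against the data structure from \Cref{lem:solution_datastructure} that represents the partial solution. Recall that Type $(I)$ subcurves have the form $S[v_i, v_{i+2^m}]$ for $0 \le m \le \lfloor\log_2 \ell\rfloor$, so there are only $\O(n\log\ell)$ of them in total, each of complexity at most $\ell$.

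For the preprocessing step, I would invoke \Cref{lem:evalType1} to compute $\proxyCov_{A_S}(S[s,t]) = \Cov_P(S[s,t], 4\Delta)$ explicitly, represented as a union of $\O(n)$ pairwise disjoint intervals, for every Type $(I)$ subcurve. This takes time $\O(n^2\ell\log\ell\log n) \subseteq \O(n^2\ell\log^2 n)$ and fits within the claimed preprocessing budget. By construction the endpoints of these intervals are reachability events in the $4\Delta$-free space whose $y$-coordinates lie in $\extremal{}(A_S)$, so their $x$-coordinates are boundaries of atomic intervals in $\atomic{}(S,P)$ as required by the interface of \Cref{lem:solution_datastructure}.

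For the query step, I would iterate over all $\O(n\log\ell)$ Type $(I)$ subcurves. For each such subcurve $c$ with coverage $\bigsqcup_{j}[l_j, r_j]$ consisting of $\O(n)$ intervals, I would query the data structure of \Cref{lem:solution_datastructure} on each $[l_j, r_j]$ to obtain, in $\O(\log n)$ time, the count of atomic points in $A \cap [l_j, r_j]$ that are not already covered by $\bigcup_{i \in I_R} i$ (note that $|I_R| = \O(n \kopt \log n)$ throughout the algorithm, but $\log(n\kopt\log n) = \O(\log n)$, so the query cost remains $\O(\log n)$). Summing these counts yields $\candW{A \setminus \proxyCov_{A_S}(R)}(c)$ in $\O(n \log n)$ time per subcurve, and maintaining a running maximum across all subcurves identifies the optimizer. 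The overall query time is $\O(n\log\ell \cdot n\log n) = \O(n^2 \log n \log \ell)$, matching the claim.

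Since the bound follows from straightforward chaining of \Cref{lem:evalType1} and \Cref{lem:solution_datastructure}, the only point requiring care is verifying that the interval endpoints produced by the former are compatible inputs to the latter; this is immediate from the definition of $\atomic{}(S,P)$ in terms of $\extremal{}(A_S)$, so no substantive obstacle arises.
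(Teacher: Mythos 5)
Your proposal is correct and follows the paper's approach exactly: invoke \Cref{lem:evalType1} once to precompute and store the $\O(n)$-interval coverage of each of the $\O(n\log\ell)$ Type $(I)$ subcurves, then answer each query by summing $\O(\log n)$-time interval counts from the data structure of \Cref{lem:solution_datastructure}. The paper's own proof is just a one-line citation of these two lemmas, so your write-up is if anything more explicit about the bookkeeping (compatibility of interval endpoints, $\log|I_R|=\O(\log n)$).
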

\begin{proof}
    This is an immediate consequence of \Cref{lem:evalType1} and \Cref{lem:solution_datastructure}. We compute and store the $\O(n)$ intervals representing the coverage for every Type $(I)$ subcurve in total time $\O(n^2\ell\log^2n)$.
\end{proof}

\subsection{\boldmath Type $(II)$- and $(III)$-subcurves}

This leaves us with computing the weight of subcurves in the sweep-sequences, both of which are subedges.

\begin{observation}
    For any edge $e$ of $S$ it holds that $\extremal{}(A_e)\subset\extremal{}(A_S)$ and as such the atomic intervals of $e$ and $P$ partition the atomic intervals of $S$ and $P$, where each atomic interval of $e$ and $P$ can be described as a contiguous subset of atomic intervals of $S$ and $P$.
\end{observation}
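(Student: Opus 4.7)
The plan is in two short steps, matching the two halves of the observation.

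First I would verify $\extremal{}(A_e)\subset\extremal{}(A_S)$. By construction, $A_e$ is the restriction of $A_S$ to the horizontal strip of the free space diagram corresponding to the edge $e$, and the cells of the free space diagram of $e$ and $P$ are exactly the cells of the free space diagram of $S$ and $P$ that lie inside that strip. Since extremal points are defined cell-wise as the left-, bottom-, right-, and top-most points of the (approximate) free space within each cell, the extremal points of $A_e$ inside any such cell coincide with those of $A_S$ in the same cell, after the canonical affine reparametrization of the $y$-coordinate that identifies the strip of $e$ with $[0,1]$. Taking the union over all cells of the strip yields the claimed inclusion.

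Second I would lift this to the atomic intervals. Let $G_e$ and $G_S$ denote the intersection-point sets used to define $\atomic{}(e,P)$ and $\atomic{}(S,P)$. The vertical grid lines of both free space diagrams are induced solely by vertices of $P$, and the boundary of $\mathcal{D}_{4\Delta}(e,P)$ is exactly the boundary of $\mathcal{D}_{4\Delta}(S,P)$ restricted to the strip of $e$, ignoring the horizontal edges of that strip, which do not contribute to the $x$-projection. Combined with the first step, this shows that every $x$-coordinate appearing in $G_e$ also appears as the $x$-coordinate of a point in $G_S$. Consequently the arrangement of $[0,1]$ defined by $G_S$ refines the arrangement defined by $G_e$, so each atomic interval of $e$ and $P$ is a disjoint union of a contiguous block of atomic intervals of $S$ and $P$, which is the stated partitioning.

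There is no substantive obstacle here; the only care needed lies in making the identification between the $y$-axis of $A_e$ and the strip of $e$ inside $A_S$ explicit, and in checking that passing from $\mathcal{D}_{4\Delta}(e,P)$ to $\mathcal{D}_{4\Delta}(S,P)$ does not introduce spurious $x$-coordinates inside the strip. Both points are immediate since the intra-cell free-space geometry is intrinsic to the cell.
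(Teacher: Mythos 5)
The paper leaves this as an unproved observation, so there is no authorial proof to compare against; your write-up is a faithful elaboration of the (implicit) argument the paper relies on, and it is correct. The two load-bearing facts you identify are exactly the right ones: (i) the cells of the free space diagram of $e$ and $P$ are precisely the cells of the free space diagram of $S$ and $P$ lying in the horizontal strip of $e$, so the cell-wise extremal points of $A_e$ are (after the affine identification of that strip with $[0,1]$) a subset of those of $A_S$, and (ii) since $\mathcal{D}_{4\Delta}(e,P)$ is that same restriction of $\mathcal{D}_{4\Delta}(S,P)$, every $x$-coordinate produced by intersecting a horizontal line at a height in $\extremal{}(A_e)$ with $\partial\mathcal{D}_{4\Delta}(e,P)$ is also produced from $\partial\mathcal{D}_{4\Delta}(S,P)$ at the corresponding height; hence the arrangement defining $\atomic{}(S,P)$ refines the one defining $\molecular{}(e,P)$.

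One small point of phrasing worth tightening: you say the horizontal edges of the strip ``do not contribute to the $x$-projection.'' Strictly, they do contribute $x$-values (at $y=s_i$ or $y=s_{i+1}$ the free space slice is nonempty), but the relevant intersection points there are exactly the interval endpoints where $\|P(x)-S(s_i)\|=4\Delta$, which coincide with the corresponding intersection points for $\partial\mathcal{D}_{4\Delta}(S,P)$ at that same height since $e(0)=S(s_i)$ and $e(1)=S(s_{i+1})$. So the correct statement is not that those edges contribute nothing but that they contribute no $x$-coordinate that is absent from the $S$-and-$P$ construction. With that adjustment the argument is airtight.
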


\begin{figure}
    \centering
    \includegraphics[width=\linewidth]{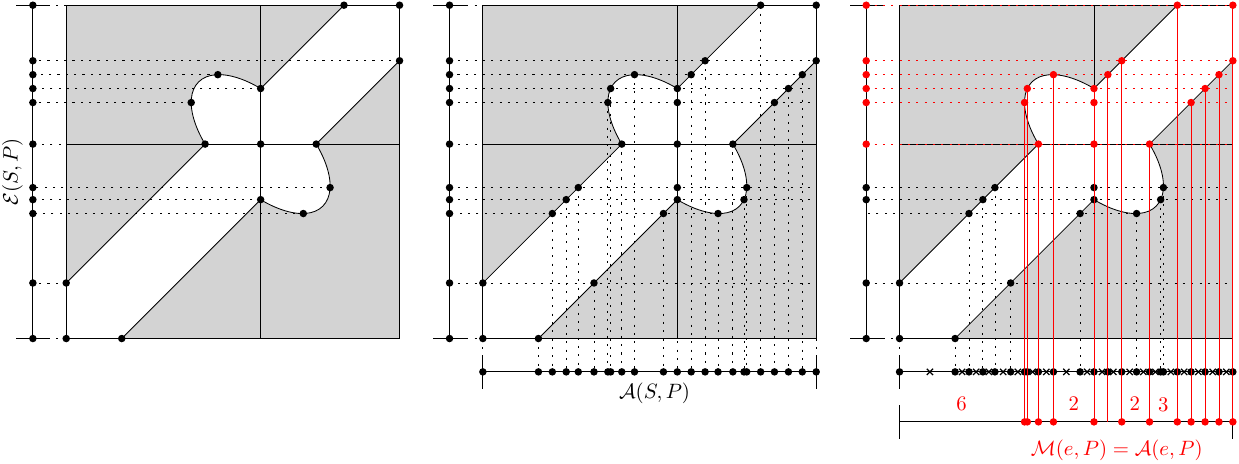}
    \caption{Illustration of extremal points $\extremal{}(S,P)$ in the $4\Delta$-freespace of $S$ and $P$, as well as the atomic intervals $\atomic{}(S,P)$. On the right in red molecular intervals $\atomic{}(e,P)$ for the second edge $e$ of $S$ are illustrated. Their weights w.r.t. the midpoints of atomic intervals $\atomic{}(S,P)$ are shown if they are not equal to $1$.}
    \label{fig:extremal-molecular}
\end{figure}

To clarify the difference of atomic intervals $\atomic{}(S,P)$ and atomic intervals $\atomic{}(e,P)$ we will call the latter \emph{molecular intervals} (\Cref{fig:extremal-molecular}) and denote them by $\molecular{}(e,P)$.

A molecular interval $m\in\molecular{}(e,P)$ is additionally endowed with a weight \(\molW{A}(m)=\left|A\cap m\right|\).

\begin{lemma}\label{lem:molecularPQ}
    Let $A\subset[0,1]$ be a set of $\O(n^3)$ points. Let $e$ be an edge of $S$. Let $W_e=\{(s,t)\in\molecular{}(e,P)\mid w_A(e[s,t])\neq 0\}$ together with $w_A(m)$ and the midpoint of $m$ for every $m\in W_e$ be given. Let $\sweepseq\in\edgeseqs{e}$ be a sweep-sequence. Then $\candW{A}(e[s,t])$ can be computed for every $(s,t)\in\sweepseq$ in total time $\O(|W_e|\log n + n\log n)$.
\end{lemma}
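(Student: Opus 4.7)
The plan is to reduce this to a direct application of \Cref{thm:pointQuery}. The key observation is that for any $(s,t) \in \sweepseq$ (with $s,t \in \extremal{}(A_e)$), the set $\proxyCov_{A_S}(e[s,t])$ is a union of closed intervals whose endpoints are of the form $l_i(s)$, $r_j(t)$, $\hat{l}_{i,e[s,t]}(s)$ and $\hat{r}_{j,e[s,t]}(t)$. These are precisely the same types of $x$-coordinates that arise from intersecting horizontal lines at heights in $\extremal{}(A_e)$ with the boundary of the $4\Delta$-free space of $e$ and $P$, i.e., the $x$-coordinates used to define $\molecular{}(e,P)$. Consequently, every molecular interval $m \in \molecular{}(e,P)$ is either entirely contained in $\proxyCov_{A_S}(e[s,t])$ or entirely disjoint from it.

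Given this alignment, I would construct a weighted point set $Q \subset [0,1]$ consisting of the midpoints of the molecular intervals in $W_e$, assigning to the midpoint of $m \in W_e$ the weight $\molW{A}(m) = |A \cap m|$. Molecular intervals not in $W_e$ have weight zero and can be omitted, so $|Q| = |W_e|$. Since every molecular interval lies entirely inside or outside $\proxyCov_{A_S}(e[s,t])$, the midpoint of $m$ lies inside the proxy coverage iff $m$ itself does, giving
\[
\candW{A}(e[s,t]) \;=\; |A \cap \proxyCov_{A_S}(e[s,t])| \;=\; \sum_{\substack{m \in \molecular{}(e,P)\\ m \subset \proxyCov_{A_S}(e[s,t])}} \molW{A}(m) \;=\; w_Q(Q \cap \proxyCov_{A_S}(e[s,t])).
\]
The set $Q$ together with its weights is constructed directly from the input in $O(|W_e|)$ time.

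Applying \Cref{thm:pointQuery} to $Q$ along $\sweepseq$ then computes $w_Q(Q \cap \proxyCov_{A_S}(e[s,t]))$ for every $(s,t) \in \sweepseq$ in time $O(|Q|\log|Q| + n \log n) = O(|W_e|\log|W_e| + n \log n)$. Since $|W_e| \leq |A| = O(n^3)$, we have $\log|W_e| = O(\log n)$, which yields the desired bound $O(|W_e|\log n + n \log n)$. The only non-routine step is verifying the alignment claim in the first paragraph, which is essentially a bookkeeping exercise: one checks that every $l_i(s), r_j(t), \hat{l}_{i,e[s,t]}(s), \hat{r}_{j,e[s,t]}(t)$ for $s,t\in\extremal{}(A_e)$ is a boundary point of some molecular interval, so the partition into molecular intervals refines the partition into maximal intervals of $\proxyCov_{A_S}(e[s,t])$ and its complement.
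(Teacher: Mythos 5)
Your proof is correct and takes essentially the same approach as the paper: both reduce the claim to an application of \Cref{thm:pointQuery} on the weighted midpoints of the molecular intervals in $W_e$, using the identity $\candW{A}(e[s,t]) = \sum_{m\in\molecular{}(e,P)}\mathbbm{1}_{m\subset\proxyCov_{A_S}(e[s,t])}\molW{A}(m)$. You spell out the alignment argument (that molecular-interval boundaries contain all the $l_i$, $r_j$, $\hat l$, $\hat r$ values at heights in $\extremal{}(A_e)$, so molecular intervals refine the proxy coverage) which the paper leaves implicit, but the substance is identical.
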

\begin{proof}
    This is an immediate consequence of \Cref{thm:pointQuery} together with the fact that 
    \begin{align*}
        \candW{A}(e[s,t])&=|A\cap\proxyCov_{A_S}(e[s,t])|=\sum_{[a,b]\in\molecular{}(e,P)}\mathbbm{1}_{[a,b]\subset\proxyCov_{A_S}(e[s,t])}|A\cap[a,b]|\\
        &= \sum_{[a,b]\in\molecular{}(e,P)}\mathbbm{1}_{[a,b]\subset\proxyCov_{A_S}(e[s,t])}\molW{A}([a,b]).\qedhere
    \end{align*}
\end{proof}

\begin{lemma}\label{lem:theDestroyer}
    Let $A$ be a set of size $\O(n^3)$ stored in a tree. Let $W_e=\{m\in\molecular{}(e,P)\mid \molW{A}(m)\neq 0\}$ be stored together with their weights $\molW{A}(\cdot)$ in a self-balancing tree where $m$ is left of $m'$ in the tree, if $m$ lies to the left of $m'$ in $[0,1]$. Let $r$ together with its proxy coverage $\proxyCov_{A_S}(r)$ be given. Let $\widetilde{W}_e=\{m\in\molecular{}(e,P)\mid \molW{A\setminus \proxyCov_{A_S}(r)}(m)\neq 0\}\subset W_e$.
    \begin{compactenum}
        \item In time $\O(|W_e\setminus \widetilde{W}_e|+n\log n)$ one can compute and output $W'=\{m\in\molecular{}(e,P)\mid \molW{A\cap \proxyCov_{A_S}(r)}(m)\neq 0\}\subset W_e$ together with its weights $\molW{A\cap \proxyCov_{A_S}(r)}(\cdot)$.
        \item In total time $\O(|W_e\setminus \widetilde{W}_e|+n\log n)$ one can update the tree storing $W_e$ and its weights $\molW{A}(\cdot)$ to instead store $\widetilde{W}_e$ and its weights $\molW{A\setminus \proxyCov_{A_S}(r)}(\cdot)$.
        \item In time $\O(|A\cap\proxyCov_{A_S}(r)|\log n + n\log n)$ one can compute a tree storing $A\cap\proxyCov_{A_S}(r)$.
        \item In time $\O(|A\cap\proxyCov_{A_S}(r)| + n\log n)$ one can update the tree storing $A$ to store $A\setminus\proxyCov_{A_S}(r)$ instead.
    \end{compactenum}
\end{lemma}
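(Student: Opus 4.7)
The plan rests on two observations. First, by the analysis in \Cref{sec:combRep} the set $\proxyCov_{A_S}(r)$ can be written as a disjoint union of $\O(n)$ intervals in $[0,1]$. Second, any molecular interval $m\in\molecular{}(e,P)$ is either fully contained in $\proxyCov_{A_S}(r)$, fully disjoint from it, or \emph{straddles} one of its $\O(n)$ endpoints; hence at most $\O(n)$ molecular intervals can straddle. I assume both trees are self-balancing BSTs supporting split and concatenate in $\O(\log n)$ (e.g.\ treaps or $(2,3)$-trees), and that the tree storing $A$ is augmented with sub-tree sizes so that range-count queries take $\O(\log n)$ time.

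For claim (1), I would iterate through the $\O(n)$ intervals $[l_k,r_k']$ of $\proxyCov_{A_S}(r)$ in sorted order. For each, I locate in the tree of $W_e$ the first molecular interval meeting $[l_k,r_k']$ in $\O(\log n)$ and then walk rightward in $\O(1)$ per step. A fully contained $m$ is reported into $W'$ together with its stored weight $\molW{A}(m)$; a straddler has its new weight $|A\cap m\cap[l_k,r_k']|$ computed by a single $\O(\log n)$ range-count query on the augmented tree of $A$ and is reported if nonzero. Since every fully contained molecular interval of $W_e$ lies in $W_e\setminus\widetilde{W}_e$ (all its $A$-points are covered), and straddlers number $\O(n)$, the total cost is $\O(|W_e\setminus\widetilde{W}_e|+n\log n)$. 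For claim (2), the fully contained molecular intervals identified above are removed in batches via split-concatenate: for each interval of $\proxyCov_{A_S}(r)$ I split out the contiguous block of fully contained entries and discard it, costing $\O(n\log n)$ for the $\O(n)$ split-concatenate operations and $\O(|W_e\setminus\widetilde{W}_e|)$ for deallocation. For each of the $\O(n)$ straddlers I update the stored weight in place (and delete the interval if the new weight is zero) in $\O(\log n)$.

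For claims (3) and (4) I would likewise walk $\proxyCov_{A_S}(r)$ interval by interval against the tree of $A$. For (3), I locate the first $A$-point inside each interval in $\O(\log n)$, traverse right in $\O(1)$ per point, and insert each into a freshly grown tree at $\O(\log n)$ per insertion, giving $\O(n\log n+|A\cap\proxyCov_{A_S}(r)|\log n)$. For (4), I use split-concatenate on the tree of $A$ itself: process intervals in sorted order, splitting off the block of $A$-points within each interval and discarding it, concatenating the survivors; this costs $\O(n\log n)$ for the $\O(n)$ split-concatenates and $\O(|A\cap\proxyCov_{A_S}(r)|)$ for deallocation.

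The main obstacle is claim (1). In general $|W'|$ can exceed $|W_e\setminus\widetilde{W}_e|$, because a straddler $m$ can simultaneously belong to $\widetilde{W}_e$ (not all of $A\cap m$ is covered) and to $W'$ (some of $A\cap m$ is covered), so naive per-element processing is too expensive. The charging argument above---fully contained molecular intervals to $|W_e\setminus\widetilde{W}_e|$, straddlers to the $\O(n)$ endpoints of $\proxyCov_{A_S}(r)$---together with the use of split-concatenate rather than per-element deletion in claims (2) and (4), is exactly what keeps each step inside the stated budget.
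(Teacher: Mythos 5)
Your proposal is correct and follows essentially the same line as the paper: classify molecular intervals as fully contained, fully disjoint, or straddling one of the $\O(n)$ interval endpoints of $\proxyCov_{A_S}(r)$, charge straddlers to the endpoints and fully contained intervals to $|W_e\setminus\widetilde{W}_e|$, and handle claims (3) and (4) by range-reporting against the tree storing $A$. The only place you go beyond the paper's text is in spelling out split-concatenate as the mechanism that makes batch deletion cost $\O(n\log n)$ plus deallocation rather than $\O(\cdot\log n)$ per element, which the paper's proof asserts but leaves implicit --- a useful clarification, not a different route.
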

\begin{proof}
    Let $[a_1,b_1],\ldots$ be the $\O(n)$ disjoint intervals of atomic intervals representing $\proxyCov_{A_S}(r)$. Let $m\in W_e$. If $m\subset [a_i,b_i]$ then $\molW{A\cap \proxyCov_{A_S}(r)}(m) = \molW{A}(m)$ and hence $m\in W'$ and $m\not\in\widetilde{W}_e$. If $m\subset [0,1]\setminus\bigcup_i[a_i,b_i]$ then $\molW{A\cap \proxyCov_{A_S}(r)}(m) = 0$ and as such $m\not \in W'$ and $m\in\widetilde{W}_e$. Lastly there are $\O(n)$ molecular intervals that contain some interval boundary $a_i$ or $b_i$.
    
    \textbf{1:} To compute and output $W'$ output all $m$ and their weight $\molW{A}(\cdot)$ that lie in some interval $[a_i,b_i]$ in $\O(|W_e\setminus \widetilde{W}_e|+n\log n)$ time via the tree storing $W_e$. Further determine the $\O(n)$ intervals that contain some interval boundary $a_i$ or $b_i$ and for them compute $\molW{A\cap\proxyCov_{A_S}(r)}(\cdot)$ via the tree storing $A$ in total time $\O(n\log n)$.

    \textbf{2:} To update the tree to store $\widetilde{W}_e$ remove all molecular intervals from the tree that are contained in some interval $[a_i,b_i]$ in total time $\O(|W_e\setminus \widetilde{W}_e|+n\log n)$. Lastly for the $\O(n)$ intervals that contain some interval boundary $a_i$ or $b_i$ compute and store their weight $\molW{A\setminus \proxyCov_{A_S}(r)}(\cdot)$. If this weight is zero, also remove them from the tree. Handling these $\O(n)$ intervals takes $\O(n\log n)$ time. 

    \textbf{4:} Simply compute all points from the tree storing $A$ that are contained in some $[a_i,b_i]$ and output a tree storing them.

    \textbf{4:} Simply remove all points from the tree storing $A$ that are contained in some $[a_i,b_i]$.
\end{proof}

\begin{lemma}\label{lem:numberOfAtomic}
    Let $e$ be an edge of $S$. Let $A$ be a set of $\O(n^3)$ points in $[0,1]$ that is stored in a tree. There are $\O(n^2)$ molecular intervals $\molecular{}$ of $e$ and $P$, and they as well as their initial weights $\molW{A}(m)=\molW{A}(m)$ can be computed in $\O(n^2\log n)$ time.
\end{lemma}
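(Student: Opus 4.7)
The plan is to first bound $|\molecular{}(e,P)|$, then enumerate the molecular intervals, and finally compute their weights by rank queries against the tree storing $A$.

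\textbf{Counting.} Since by definition $\molecular{}(e,P)=\atomic{}(e,P)$, I mimic the argument used in \Cref{obs:atomicnumber} for $\atomic{}(S,P)$, but applied to the restriction of the free space to the single edge $e$. The restricted free space $A_e$ has only $O(n)$ cells (one per edge of $P$), so $|\extremal{}(A_e)|=O(n)$. Each horizontal line at height $y\in\extremal{}(A_e)$ crosses at most $n$ cells, and inside each cell the boundary of the free space is convex of constant complexity, so the line meets it in $O(1)$ points. Summing, one obtains $O(n)$ x-coordinates per line and $O(n^2)$ in total, hence $O(n^2)$ molecular intervals.

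\textbf{Enumeration.} For every $y\in\extremal{}(A_e)$ and every cell $i$ of $A_e$ I compute the (at most two) intersection points of the horizontal line at height $y$ with the boundary of $A_e$ inside cell $i$ in $O(1)$ time using the primitives $l_i(\cdot)$ and $r_i(\cdot)$ from \Cref{def:cellfunc}, which by assumption are evaluable in constant time. This produces the $O(n^2)$ boundary x-coordinates in $O(n^2)$ total time. Sorting them in $O(n^2\log n)$ time yields the ordered list of molecular intervals $\molecular{}(e,P)$.

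\textbf{Weights.} Since $A$ is stored in a search tree of size $O(n^3)$ sorted by position, each endpoint of a molecular interval admits a rank (predecessor) query against $A$ in $O(\log|A|)=O(\log n)$ time. For a molecular interval $m=[a,b]$ the weight $\molW{A}(m)=|A\cap m|$ is simply the difference of the ranks of $b$ and $a$. Performing these queries at each of the $O(n^2)$ endpoints gives all weights in $O(n^2\log n)$ time.

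\textbf{Expected obstacle.} No step is genuinely difficult; the only minor subtlety is handling degeneracies such as a horizontal line tangent to the free-space boundary of a cell, or several intersections coinciding in $x$-coordinate. These are dealt with exactly as in the symbolic-perturbation convention already adopted for extremal points, so that every intersection contributes a well-defined boundary point and the molecular intervals partition $[0,1]$ cleanly.
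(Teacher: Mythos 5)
Your proof is correct and follows essentially the same route as the paper's (which simply asserts the result is ``straightforward'' because each weight is a rank-difference query). You have merely spelled out the $O(n^2)$ counting and the $O(\log n)$-per-query argument that the paper leaves implicit.
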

\begin{proof}
    Straight forward, as for any interval $[a,b]$ the value $|A\cap[a,b]|$ can be computed in $\O(\log n)$ time.
\end{proof}

\begin{algorithm}[htp]
\caption{Covering a finite subset of $[0,1]$}\label{alg:covera}
\begin{algorithmic}[1]
\Procedure{CoverA}{$A_S$, $A\subset[0,1]$, $\{\edgeseqs{e}\mid e\text{ of }S\}$,$\{W_{e}\mid e\text{ of }S, \molW{A}(m)\neq 0\}$}
    \State Compute all Type $(I)$-curves and their proxy coverage $\proxyCov_{A_S}(\cdot)$
    \State For every $e$ compute $\molW{A}(m)$ and the midpoint of $m$ for every $m\in W_e$
    \State $R\gets\emptyset$
    \For{$e$ of $S$}
        \For{$\sweepseq\in\edgeseqs{e}$}
            \State Compute $\candW{A}(e[s,t])=|A\cap \proxyCov_{A_S}(e[s,t])|$ via $\molW{A}((s,t))$ for $(s,t)\in W_e$
        \EndFor
    \EndFor
    \While{$A\neq\emptyset$}
        \State For every Type $(I)$-curve $\pi$ compute $\candW{A}(\pi)=|A\cap \proxyCov_{A_S}(\pi)|$
        \State Add $r\in\ConcreteCandidates{S}(P)$ maximizing $\candW{A}(r)=|A\cap \proxyCov_{A_S}(r)|$ to $R$
        \State $A'\gets A \cap \proxyCov_{A_S}(r)$
        \For{$e$ of $S$}
            \State For every $m\in W_e$ compute $\molW{A'}(m)$ and $W'_e=\{m\in W_e\mid \molW{A'}(m)\neq 0\}$
            \State Remove $m$ from $W_e$ if $\molW{A\setminus A'}(m)=\molW{A}(m)-\molW{A'}(m)=0$
            \For{$\sweepseq\in\edgeseqs{e}$}
                \State Compute $\candW{A'}(e[s,t])$ via $\molW{A'}((s,t))$ for every $(s,t)\in\sweepseq$
                \State Update $\candW{A}(e[s,t],\emptyset)\gets \candW{A}(e[s,t],\emptyset) - \candW{A'}(e[s,t],\emptyset)$ for every $(s,t)\in\sweepseq$
            \EndFor
        \EndFor
        \State $A\gets A\setminus A'$
    \EndWhile
    \State \Return $R$
\EndProcedure
\end{algorithmic}
\end{algorithm}

\subsection{Putting everything together}\label{sec:cubic}

\begin{lemma}\label{lem:finalCosts}
    Let $P$ be a polygonal curve of complexity $n$ and let $\Delta>0$ and $\ell\leq n$ be given. Let $S$ be a simplification of $P$. Let $A_S$ be the $4\Delta$-free space of $S$ and $P$. Let $A\subset[0,1]$ be a set of $\O(n^3)$ points. For every edge $e$ of $S$ let $\edgeseqs{e}$ be $\O(\log n)$ sweep-sequences, each of length $\O(n)$, together containing all $\O(n\log n)$ Type $(II)$- and $(III)$-subedges of $e$, and let $W_e$ be the set of molecular intervals $m$ such that $\molW{A}(m)\neq 0$. Let $\kopt$ be the size of the smallest set $\mathcal{C}^*\subset\curvespace{\ell}$ such that $\bigcup_{c\in\mathcal{C}^*}\Cov_P(c,\Delta)=[0,1]$. The algorithm \textsc{CoverA} from \Cref{alg:covera} computes a set $\mathcal{C}\subset\curvespace{\ell}$ of size $(48\ln(n)+64)\kopt$ such that $A\subset\proxyCov_{A_S}(C)$. Further, it does so in time
    \[\O(n^2\ell\log n^2 + |A|\log n + \kopt n^2\log^3n + \log n\sum_e|W_e|).\]

\end{lemma}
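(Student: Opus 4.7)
The plan is to argue correctness via the generic greedy set cover framework of \Cref{thm:reduction,thm:16setsystem}, and then to bound the running time by a careful amortization over iterations.

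For correctness, I would first observe that \textsc{CoverA} is precisely the greedy set cover algorithm applied to the universe $A$ with the sets $\{A \cap \proxyCov_{A_S}(c) \mid c \in \ConcreteCandidates{S}(P)\}$. By \Cref{thm:16setsystem}, the optimal $\Delta$-cover of size $\kopt$ induces a collection of at most $16\kopt$ elements of $\ConcreteCandidates{S}(P)$ whose proxy coverage contains $[0,1]$, and in particular contains $A$. The standard greedy analysis then returns a cover of size at most $16(\ln|A|+1)\kopt$, and since $|A|\leq 16n^3$ by \Cref{obs:atomicnumber}, this is bounded by $(48\ln n+64)\kopt$. The invariant that $\candW{A}(\pi)$ is correctly evaluated at the start of each round is maintained by \Cref{lem:theDestroyer} (which correctly updates $W_e$, the weights $\molW{A}(\cdot)$, and the tree over $A$) together with \Cref{lem:molecularPQ} (which correctly recomputes $\candW{A'}(\cdot)$ so that the subtraction $\candW{A}\gets\candW{A}-\candW{A'}$ is valid).

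For the running time I would split the cost between the preprocessing before the while-loop and the per-iteration cost inside it. Preprocessing consists of: (i) computing the Type $(I)$ coverages via \Cref{lem:evalType1} in $\O(n^2\ell\log^2 n)$ time; (ii) computing $\molW{A}(m)$ for each $m \in W_e$ by $\O(\log n)$ range queries on the balanced tree over $A$, totalling $\O(\log n \sum_e |W_e|)$; and (iii) the initial sweep for every sweep-sequence of every edge via \Cref{lem:molecularPQ}, contributing $\O((|W_e|+n)\log n)$ per sweep-sequence. Each of the $\O(\kopt \log n)$ iterations of the while loop then pays (a) $\O(n^2\log n\log\ell)\subseteq \O(n^2\log^2 n)$ to evaluate all Type $(I)$ curves by \Cref{thm:type1}; (b) $\O(|W_e\setminus \widetilde{W}_e| + n\log n)$ per edge to identify $W'_e$ with its weights $\molW{A'}(\cdot)$ and to update the tree storing $W_e$, via \Cref{lem:theDestroyer}; (c) $\O((|W'_e|+n)\log n)$ per sweep-sequence to recompute $\candW{A'}(e[s,t])$ from the smaller point set of changed weights and to subtract it from $\candW{A}(e[s,t])$, via \Cref{lem:molecularPQ}; and (d) $\O(|A\cap \proxyCov_{A_S}(r)|\log n + n\log n)$ to update the tree storing $A$.

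The hard part is the amortization. Since the weights $\molW{A}(\cdot)$ are monotonically non-increasing and a molecular interval is dropped from $W_e$ exactly once upon reaching weight zero, $\sum_{\text{iter}}\sum_e |W_e\setminus\widetilde{W}_e|\leq \sum_e|W_e|$. Because $|W'_e|\leq |W_e\setminus\widetilde{W}_e|+\O(n)$ by the proof of \Cref{lem:theDestroyer}, the sweep cost (c) is absorbed into $\O(\log n\sum_e|W_e|)+\O(\kopt n^2\log^2 n)$ across all iterations. Similarly, $\sum_{\text{iter}}|A\cap \proxyCov_{A_S}(r)|\leq |A|$, so (d) contributes $\O(|A|\log n + \kopt n^2\log^2 n)$ in total. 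Terms (a) summed over $\O(\kopt\log n)$ iterations give $\O(\kopt n^2\log^3 n)$, and the $n\log n$ residual in (b) summed over edges and iterations gives $\O(\kopt n^2\log^2 n)$. Collecting all contributions yields $\O(n^2\ell\log^2 n+|A|\log n+\kopt n^2\log^3 n+\log n\sum_e|W_e|)$, matching the claim.
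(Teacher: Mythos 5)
Your proposal is correct and follows essentially the same route as the paper: correctness via the greedy \textsc{SetCover} framework of Theorems~\ref{thm:reduction} and~\ref{thm:16setsystem}, invariant maintenance via Lemmas~\ref{lem:theDestroyer} and~\ref{lem:molecularPQ}, and the same charging scheme in which each molecular interval is dropped from some $W_e$ at most once (yielding the $\log n\sum_e|W_e|$ term) and each point of $A$ is removed once (yielding $|A|\log n$). Your bookkeeping is if anything slightly more explicit than the paper's, in particular the observation $|W'_e|\leq|W_e\setminus\widetilde{W}_e|+\O(n)$, which the paper leaves implicit by charging sweep costs directly to $|A_i\setminus A_{i+1}|$.
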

\begin{proof}

    Observe that at the beginning of the \textbf{while}-Loop $A$ coincides with the set $A\setminus\proxyCov_{A_S}(R)$. Hence by \Cref{thm:reduction} it suffices to show that during the \textbf{while}-Loop the element $r\in\ConcreteCandidates{S}(P)$ maximizing $\candW{A}(r)$ is correctly identified.

    By \Cref{thm:type1}, computing all Type $(I)$-curves and their proxy coverage takes a total of $\O(n^2\ell\log n\log \ell)$ time. As $A$ is stored in a tree, computing $\molW{A}(m)$ and the midpoint of $m$ for every $m\in W_e$ takes $\O(|W_e|\log n)$ time. By \Cref{thm:pointQuery} computing and storing $\candW{A}(e[s,t])$ for every $(s,t)\in\sweepseq$ for every $\sweepseq\in\edgeseqs{e}$ for an edge $e$ of $S$ takes a total of $\O(|W_e|\log n+n^2\log^2 n)$ time. Thus the precomputation steps before the \textbf{while}-Loop take a total of $\O(n^2\ell\log n\log \ell + \sum_e(|W_e|\log n+n\log n))$ time.

    Now let for the $i$th iteration of the \textbf{while}-Loop $R_i$ denote the set stored in $R$ and let $A_i$ denote the set stored in $A$ coinciding with $A\setminus\proxyCov_{A_S}(R_i)$. Let further $W_{e,i}$ denote the set of molecular intervals stored in $W_e$, and similarly $W'_{e,i}$ be the set of molecular intervals stored in $W'_e$.
    Assume for now that $\candW{A_i}(e[s,t])$ is stored correctly for every Type $(II)$- and $(III)$-subcurve. Then $r$ is identified correctly. Observe that for $i=0$ this is indeed the case.

    Observe that $W_{e_i}=\{m\in\molecular{}(e,P)\mid \molW{A_i}(m)\neq 0\}$ and $W'_{e_i}=\{m\in\molecular{}(e,P)\mid \molW{A_i\cap\proxyCov_{A_S}(r)}(m)\neq 0\}$. By \Cref{lem:theDestroyer} $W'_{e,i}$ and its weights can be computed in time $\O(|W_{e_i}\setminus W_{e,{i+1}}| + n\log n)$. Similarly $W_{e,i}$ is afterwards updated to store $W_{e,i+1}=\{m\in\molecular{}(e,P)\mid \molW{A_i\setminus \proxyCov_{A_S}(r)}(m)=\molW{A_{i+1}}(m)\neq 0\}$. By \Cref{lem:theDestroyer} this update takes $\O(|W_{e,i}\setminus W_{e,{i+1}}| + n\log n)$. Lastly by \Cref{thm:pointQuery} computing $\candW{A_i\setminus A_{i+1}}(e[s,t])$ for every $(s,t)\in\sweepseq$ for an $\sweepseq\in \edgeseqs{e}$ for an edge $e$ takes time $\O(|A_{i}\setminus A_{i+1}|\log n + n\log n|)$. Afterwards $\candW{A_{i+1}}(e[s,t])$ is stored for all Type $(II)$- and $(III)$-subcurves as by definition $\candW{A_{i+1}}(\cdot)=\candW{{A_i}}(\cdot)-\candW{{A_i \setminus A_{i+1}}}(\cdot)$ and ${A_i \setminus A_{i+1}}=A_i\cap\proxyCov_{A_S}(r)$. Updating $A_i$ takes total time $\O(|A_i\setminus A_{i+1}| + n\log n)$. 
    
    Throughout the entirety of $k$ rounds the \textbf{while}-Loop without the computation of $\candW{A}$ of Type $(I)$-subcurves takes 
    \begin{align*}
        &\O\left(\sum_{i=0}^{k}\left(|A_i\setminus A_{i+1}|\log n +n\log n +\left( \sum_e\sum_{\sweepseq\in\edgeseqs{e}}\left( |W_{e,i}\setminus W_{e,i+1}| + n\log n \right) \right)\right)\right)\\
        =\;\;&\O\left(|A|\log n +kn\log n +\sum_{i=0}^{k}\left(\log n \sum_e\left( |W_{e,i}\setminus W_{e,i+1}| + n\log n \right) \right)\right)\\
        =\;\;&\O\left(|A|\log n +kn\log n + kn^2\log^2 n+\log n \sum_e\sum_{i=0}^{k}\left( |W_{e,i}\setminus W_{e,i+1}|\right) \right)\\
        =\;\;&\O\left(|A|\log n + kn^2\log^2 n+\log n \sum_e|W_e| \right).
    \end{align*}
    Thus by \Cref{thm:reduction} the algorithm terminates after $\O(n^2\ell\log n^2 + |A|\log n + \kopt n^2\log^3n + \log n\sum_e|W_e|)$ time, concluding the proof.
\end{proof}

\begin{figure}
    \centering
    \includegraphics[width=0.5\linewidth]{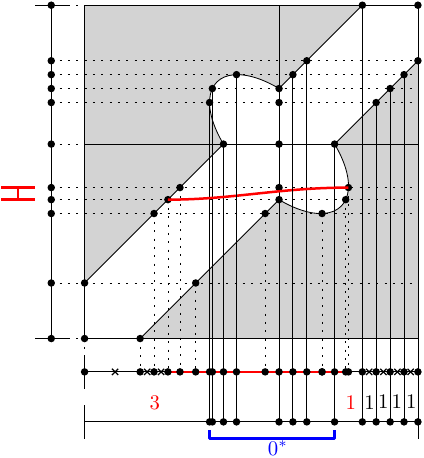}
    \caption{Illustration of the weights after a curve has been added to the partial solution $R$. Only constantly many molecular intervals per cell are updated but not set to $0$ (in red). All intervals that are set to $0$ are updated for the final time and can be removed from any search tree. These are further marked with a $^*$.}
    \label{fig:molecular_update}
\end{figure}

\thmMainClustering*
\begin{proof}

    The algorithm first computes a simplification $S$ of $P$. Next it computes the $4\Delta$-free space of $S$ and $P$ and with it $\extremal{}(A_S)$ as well as $\atomic{}(S,P)$ in $\O(n^3\log n)$ time. Next in total time $\O(n^2\log^2 n)$ time per edge $e$ it computes all molecular intervals $\molecular{}(e,P)$ and initializes their weights via \Cref{lem:numberOfAtomic}. Lastly it computes the data-structure from \Cref{thm:type1}.

    Lastly we compute $A$ of size $\O(n^3)$ via \Cref{obs:atomicnumber} and store it in a tree. We further compute the sets $W_e=\{m\in\molecular{}(S,P)\mid w_A(m)\neq 0\}$ for every edge $e$ of $S$, with $\sum_e|W_e|=\O(n^3)$.
    Then \Cref{lem:finalCosts} concludes the proof.
\end{proof}

\section{Improved Subtrajectory Covering}\label{sec:subcubic}

In this section we want to reduce the dependency on $n$. Observe that there are two steps in the algorithm above that require $\tildeO(n^3)$ time that are somewhat inevitable. The first is the computation of the discretization, that is $\atomic{}(S,P)$ may be of cubic size, and similarly the $n$ coarsenings via molecular intervals $\molecular{}(e,P)$ in total take $\O(n^3)$ time to compute. The later updates depend on the number of molecular intervals with non-zero weight. In this section we introduce a subproblem which is closely related to rank-selection \cite{frederickson1984matrix} in matrices in order to pick $\O(n^{\coarsity})$ interval boundaries of $\atomic{}(e,P)$ which are `well-separated' in time $\tildeO(n^{1+\coarsity})$, that is there are no two such boundaries defining a coarse interval that contain more than $\O(n^{3-\coarsity})$ intervals of $\atomic{}(e,P)$. The algorithm will then proceed and solve the set-cover instance defined by these subquadratically many points. The coverage of such a solution can be described by the union of $\O(\kopt n\log n)$ intervals, and thus at most $\O(\kopt n\log n)$ of these coarse intervals are not covered. As no interval contains too many intervals of $\atomic{}(e,P)$ It turns out that at most $\O(\kopt n^{4-\coarsity})$ molecular intervals have weight not $0$. These we determine in an output-sensitive manner, and then proceed with the algorithm described in \Cref{sec:cubic}. This results in an algorithm with running time roughly $\tildeO(\kopt n^{5/2})$ by setting $\coarsity=3/2$. By multiplicatively testing for $\kopt $ and being slightly more careful how $\alpha$ is chosen the running time improves to roughly $\tildeO(\kopt ^{\frac{1}{2}}n^{\frac{5}{2}})$ which is truly subcubic if $\kopt\in\O(n^{1-\eps})$.

\subsection{Subquadratic coarsening of atomic intervals}

\begin{theorem}\label{thm:coarsesubset}
    Let $n$ lists $L_i$ be given, each containing $m_i$ sorted values such that all values are distinct and in every list $L_i$ and every $j$ identifying the item at position $j$ takes $\O(\log m_i)$ time and also for given $v$ determining the maximal index $j$ such that the $j$th item is less than $v$ takes $\O(\log m_i)$ time. Then for every $K\leq\sum_im_i$ in $\O(Kn\log \max_im_i)$ time one can determine $\O(K)$ values $v_1,\ldots$ such that
    \begin{compactenum}
        \item $v_i<v_{i+1}$ for all $i$,
        \item for every $x\in\bigcup_iL_i$ there is an $i$ such that $x\in[v_i,v_{i+1}]$ and 
        \item $|[v_i,v_{i+1}]\cap\bigcup_{i}L_i|=O\left(\frac{\sum_im_i}{K}\right)$ for all $i$.
    \end{compactenum}
\end{theorem}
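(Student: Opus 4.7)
My plan is to produce the splitters as order statistics of the disjoint union $\bigcup_i L_i$. Set $N := \sum_i m_i$ and, for $j = 0, 1, \ldots, K$, let $r_j := \lceil j N/K \rceil$ (so $r_0 = 0$ and $r_K = N$). I would define $v_j$ to be the $r_j$-th smallest element of $\bigcup_i L_i$ for $1 \leq j \leq K$, together with a sentinel $v_0$ smaller than every list entry. This yields $K + 1 = \O(K)$ candidate splitters.

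The three required properties then follow immediately. Since the values are pairwise distinct and $r_0 < r_1 < \cdots < r_K$, property (1) holds. For (2), each $x \in \bigcup_i L_i$ has a unique rank $r(x) \in \{1, \ldots, N\}$, which lies in exactly one interval $(r_{j-1}, r_j]$, placing $x \in [v_{j-1}, v_j]$. For (3), the interval $[v_{j-1}, v_j]$ contains exactly $r_j - r_{j-1} \leq \lceil N/K \rceil = \O(N/K)$ elements of the union.

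The real work is computing each $v_j$ within $\O(n \log m_{\max})$ time, where $m_{\max} := \max_i m_i$. I would use a standard selection algorithm on the $n$ sorted lists based on weighted medians of medians: in each halving round, take the median element of every list's currently active portion, compute their weighted median $w$ in $\O(n)$ time by linear median-selection, query the rank of $w$ in the union by performing one binary search in each list (at cost $\O(\log m_i)$ per list), and discard a constant fraction of the total active mass according to whether $\mathrm{rank}(w)$ lies below, equals, or exceeds $r_j$. An amortized analysis that exploits the geometric shrinkage of each list's active portion --- so that the per-list binary-search costs telescope --- gives a per-selection bound of $\O(n \log m_{\max})$, matching the classical selection-in-sorted-lists result. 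Executing $K$ such selections in sequence then produces the claimed $\O(K n \log m_{\max})$ total time.

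The main obstacle is making each single selection fit in $\O(n \log m_{\max})$ rather than the naive $\O(n \log^2 m_{\max})$ one gets from multiplying $\O(\log N)$ halving rounds by the $\O(n \log m_{\max})$ per-round rank query; resolving this requires the careful amortization of binary-search costs across shrinking active portions sketched above, and is the delicate technical step in meeting the claimed running time. A direct corollary is that one can alternatively organize the $K$ selections via divide-and-conquer (first find the median splitter $v_{K/2}$, then recurse on both halves), which yields the same asymptotic bound but may be preferable for a cleaner presentation.
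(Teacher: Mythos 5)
Your high-level plan — produce the $K+1$ splitters as order statistics of rank $\lceil jN/K\rceil$ — is a genuinely different decomposition from the paper's. The paper never performs exact rank selection: it does a recursive weight-balanced subdivision where each recursion node computes a single weighted-quintile pivot that is guaranteed to split the remaining mass into a $[1/4,3/4]$-balanced pair. Since every leaf then has mass $\Omega(N/K)$, the recursion tree has $O(K)$ nodes, each costing $O(n\log m_{\max})$ for one pivot step, and the bound follows with no amortization at all. Your route needs the per-selection bound $O(n\log m_{\max})$, and this is where your proof has a genuine gap. The bound itself is true (it is the Frederickson--Johnson result for selection in a family of sorted arrays), but your sketch of why it holds does not work: even if every individual list's active window shrank geometrically each round (which is not generally the case --- only the \emph{total} active mass is guaranteed to shrink by a constant factor; a single list's window may not shrink at all in some rounds), the per-list binary-search costs would be $\log m_i + \log(m_i/2) + \log(m_i/4) + \cdots$, an arithmetic series summing to $\Theta(\log^2 m_i)$. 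That yields $O(n\log^2 m_{\max})$ per selection and $O(Kn\log^2 m_{\max})$ overall, a logarithmic factor worse than claimed. Getting $O(n\log m_{\max})$ per selection requires an algorithm that avoids re-running a fresh binary search every halving round, which is precisely the nontrivial content of the Frederickson--Johnson machinery. So either invoke that result as a black box (and drop the ``telescoping'' justification), or adopt the paper's approximate-balanced-split recursion, which is self-contained and sidesteps exact rank selection entirely. Your closing remark that one could organize the $K$ selections by divide-and-conquer on the median splitter is, in fact, essentially the paper's approach --- but note that the paper's version is cheaper precisely because it replaces each exact selection at the recursion node by a single approximate pivot round.
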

\begin{proof}
    Initially identify the minimal $v$ and maximal $\hat{v}$ value in $\bigcup_{i}L_i$ in $\O(n\log \max_im_i)$ time. We define a subproblem-instance via $v$ and $\hat{v}$ by $n$ intervals of indices describing the interval of $L_i$ that lies between $v$ and $\hat{v}$. Initially all intervals have the form $[1,m_i]$. The goal now is to split this subproblem into smaller subproblems until the desired distance is reached, that is the sum of lengths of intervals in each subproblem that is not split further is at most $O\left(\frac{\sum_im_i}{K}\right)$.
    
    Let now a subproblem-instance be given, defined by values $l$, $r$, and $n$ intervals with length $l_i$.
    If $\sum_il_i=O\left(\frac{\sum_im_i}{K}\right)$, then there is no need to split it.
    
    If instead $\sum_il_i\leq 20n$, then we simply sort the $n$ intervals in total time $\O(n\log n)$ time and identify the value $m$ splitting the sorted list into half. From this construct two subproblems in $\O(n\log n)$ time where the sum of intervals lengths is halved.
    
    If $\sum_il_i> 20n$, then for every list $L_i$ identify $5$ values splitting $L_i$ into $5$ pieces all containing either $\lfloor  m_i/5\rfloor$ or $\lceil  m_i/5\rceil$ values, this results in $5$ intervals, which are endowed with a weight $w_I$ corresponding to the number of values from $L_i$ that lie in it. This results in a total set $\mathcal{I}$ of $\O(n)$ which can be computed in $\O(n\log \max_im_i)$ time, and so can its arrangement. For every interval in the arrangement pick its mid point as the potential splitting value, and collect them in a sorted list $C$. For every $c\in C$ define
    \[w(c)=\sum_{I\in\mathcal{I}, I<c}w_I - \sum_{I\in\mathcal{I}, I>c}w_I.\]
    All these values can be computed in $\O(n\log \max_i m_i)$ time as $w(c)$ and $w(c')$ differ by exactly two $w_I$ if $c$ and $c'$ lie in neighboring cells of the arrangement, thus we can sweep through the arrangement. Now identify the last $c\in C$, such that $w(c)\geq 0$, that is in particular $w(c)\leq \lceil \max_il_i/5\rceil\leq \lceil\sum_il_i/5\rceil\leq\sum_il_i/4$. And hence
    \[\sum_{I\in\mathcal{I}, I>c}w_I\leq\sum_{I\in\mathcal{I}, I<c}w_I\]
    and
    \[\sum_{I\in\mathcal{I}, I<c}w_I\leq\sum_{I\in\mathcal{I}, I>c}w_I + \sum_il_i/4\]
    Further observe that
    \begin{align*}
        \sum_{I\in\mathcal{I}, I<c}w_I + \sum_{I\in\mathcal{I}, I>c}w_I&\geq\sum_{I\in\mathcal{I}}w_I - \sum_{I\in\mathcal{I}, c\in I}w_I\geq\sum_il_i - \sum_i\lceil l_i/5\rceil\\
        &\geq\sum_il_i - \sum_i l_i/5 - n\geq \sum_il_i - \sum_i l_i/4 = 3/4\sum_il_i
    \end{align*}
    Thus we conclude that
    \[\sum_{I\in\mathcal{I}, I>c}w_I\geq1/4 \sum_il_i\]
    and
    \[\sum_{I\in\mathcal{I}, I<c}w_I\geq1/4  \sum_il_i.\]
    Hence splitting at $c$ will result in two subproblems, each of which containing at most $3/4\sum_il_i$ points. These subproblems can similarly be constructed in $\O(n\log \max m_i)$ time. As each split decreases the sum of interval lengths by a constant fraction, after $\O(\log K)$ levels of splits each subproblem will have at most $O\left(\frac{\sum_im_i}{K}\right)$ points concluding the proof.
\end{proof}

\begin{lemma}\label{lem:coarseIntervals}
    For every $\coarsity\in[0,3]$ in $\O(n^{1+\coarsity}\log n)$ time one can determine $\O(n^{\coarsity})$ intervals partitioning $\atomic{}(S,P)$, each containing at most $\O(n^{3-\coarsity})$ intervals of $\atomic{}(S,P)$.
\end{lemma}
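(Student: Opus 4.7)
My plan is to reduce the problem to the rank-selection instance solved by \Cref{thm:coarsesubset}, applied to a carefully chosen family of $n$ sorted lists whose union covers all boundaries of $\atomic{}(S,P)$. Since $|\atomic{}(S,P)| = \O(n^3)$ by \Cref{obs:atomicnumber}, invoking the theorem with $K = \lceil n^{\coarsity}\rceil$ over $n$ such lists of maximum size $\O(n^2)$ would produce $\O(n^{\coarsity})$ separator values in time $\O(K n \log n) = \O(n^{1+\coarsity}\log n)$ and guarantee that between consecutive separators lie at most $\O(n^{3}/n^{\coarsity}) = \O(n^{3-\coarsity})$ atomic boundaries, as required.

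The key construction is the choice of the $n$ lists. I take one list $L_i$ per edge $e_i$ of $P$, containing the $x$-coordinates of all atomic interval boundaries lying within the parameter range of $e_i$. By definition of $\atomic{}(S,P)$, each such boundary arises from intersecting a horizontal line at some $y\in\extremal{}(A_S)$ with the unique cell $(i,j)$ of column $i$ whose vertical range contains $y$; the boundary's $x$-coordinate is then either $l_j(y)$ or $r_j(y)$. A straightforward count using $|\extremal{}(A_S)| = \O(n^2)$ gives $|L_i| = \O(n^2)$ and $\sum_i |L_i| = \O(n^3)$. Crucially, within each cell $(i,j)$ the functions $y\mapsto l_j(y)$ and $y\mapsto r_j(y)$ are respectively convex and concave in $y$ (inheriting the convexity of the free space in the cell), and therefore unimodal; splitting the sorted $y$-values in the cell at the extremum produces $\O(1)$ monotone sub-sequences per cell whose elements are accessible by direct evaluation of $l_j$ or $r_j$ in $\O(1)$ time.

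The main obstacle I anticipate is servicing the $\O(\log m_i) = \O(\log n)$ access requirement of \Cref{thm:coarsesubset} on each $L_i$: column $i$ contributes $\O(n)$ monotone sub-sequences from its $\O(n)$ cells, so naive Frederickson-style selection over these sub-sequences would cost $\O(n\log n)$ per index-to-value or value-to-rank query rather than $\O(\log n)$, and an explicit sort per column would cost $\O(n^2 \log n)$, totaling $\O(n^3\log n)$ which exceeds the budget whenever $\coarsity<2$. To handle this I would either (i) perform the required sort per column explicitly, absorbing the $\O(n^3\log n)$ preprocessing into the bound for $\coarsity\geq 2$ and arguing separately via the unimodal sub-sequence structure for small $\coarsity$, or (ii) integrate the sub-sequence access directly into the divide-and-conquer scheme of \Cref{thm:coarsesubset}, sharing the binary searches over the $\O(n)$ sub-sequences in a column across the $\O(K)$ related queries issued by the algorithm so that the amortized cost per query becomes $\O(\log n)$. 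The latter integration is the delicate point.

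Once the lists and their access interfaces are in place, correctness is immediate: every atomic interval boundary appears in some $L_i$, so the $\O(n^{\coarsity})$ separator values output by \Cref{thm:coarsesubset} partition $[0,1]$ into intervals each containing at most $\O(n^{3-\coarsity})$ boundaries of $\atomic{}(S,P)$, which is exactly the statement of the lemma.
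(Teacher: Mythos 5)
Your choice of lists $L_i$ indexed by edges of $P$ (i.e.\ columns of the free space diagram) is what creates the obstacle you identify, and that obstacle is fatal to the proposal as written. Within a single column the $\O(n)$ cells all occupy the same $x$-range, so the $\O(n)$ monotone boundary pieces you extract interleave arbitrarily in $x$; their union is not implicitly sorted and there is no $\O(\log n)$ rank-to-value or value-to-rank query on it. Your workaround (i) gives $\Theta(n^3\log n)$ total just for sorting, which you correctly note overshoots the budget for $\coarsity<2$, and your workaround (ii) is the part that actually needs a proof: it is not at all clear that the binary searches can be amortised, since \Cref{thm:coarsesubset} issues $\Theta(K\log K)$ rounds with $\Theta(n)$ independent queries each, and the subproblems drift apart as the recursion descends.

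The paper resolves this by swapping the roles of $S$ and $P$: it builds one pair of lists $L_u,L_l$ per edge $e$ of $S$, i.e.\ per \emph{row} of the free space diagram. Within a row, the $\O(n)$ cells occupy pairwise interior-disjoint $x$-ranges, and inside each cell the free space boundary splits at the extremal points into at most four $x$- and $y$-monotone arcs, two lying above the free space and two below. The property your decomposition lacks is exactly this: the $\O(n)$ upper (resp.\ lower) arcs of a single row have pairwise interior-disjoint $x$-projections. Restricting each arc to the $y$-coordinates in $\extremal{}(A_e)$ that fall in its $y$-range (an $\O(\log n)$ binary search per arc) and concatenating these monotone sub-lists in $x$-order produces a single implicitly sorted list $L_u$ (resp.\ $L_l$) of size $\O(n^2)$ supporting both required query types in $\O(\log n)$: a binary search over prefix counts of the arcs to locate the relevant arc, then a constant-time evaluation of that arc's boundary function at the requested height. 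Taking the union over edges $e$ of $S$ covers $\molecular{}(e,P)$ for every $e$, hence all of $\atomic{}(S,P)$, and one total order is imposed by symbolic perturbation via the list index. From here your remaining reasoning (choosing $K=\lceil n^\coarsity\rceil$, $\sum_i m_i=\O(n^3)$, and the resulting $\O(n^{1+\coarsity}\log n)$ time and $\O(n^{3-\coarsity})$ bound per interval) applies unchanged.
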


We will call a set of intervals that partitioning $\atomic{}(S,P)$ such that any interval contains $\O(n^{3-\coarsity})$ intervals of $\atomic{}(S,P)$ a set of \emph{$\coarsity$-coarse} intervals.

\begin{proof}
    For a fixed edge $e$ we will implicitly provide two lists $L_u$ and $L_l$ such that 
    \begin{compactenum}
        \item $L_u\cup L_l = \molecular{}(e,S)$,
        \item $|L_u|$ and $|L_l|$ is known
        \item for either list the value at position $j$ can be computed in $\O(\log n)$ time\label{item:3}
        \item for any value $v$ the maximal index $j$ such that the value at position $j$ is less than $v$ can be computed in $\O(\log n)$ time.\label{item:4}
    \end{compactenum}
    To construct $L_u$ we compute in every cell the at most $4$ closures of the at most $4$ pieces of the free space boundary (i.e. an ellipse) that lie in the interior of the cell and are $x$- and $y$-monotone. From this we pick the at most $2$ pieces that lie above (in $y$-direction) the free space. Observe, that for these $\O(n)$ pieces from every cell it holds that their projections onto the $x$-axis are interior-disjoint. For every piece we compute the interval of $\extremal{}(e,P)$ in $\O(\log n)$ time that lie in the projection onto the $y$-axis of each piece. These form the list $L_u$. Observe that we can compute $|L_u|$ in $\O(n)$ time and store it. Further we can endow these $\O(n)$ pieces such that \Cref{item:3} and \Cref{item:4} also hold. We similarly compute $L_l$ via the lower, instead of the upper pieces of the boundary. Observe, that the union over both lists form $\molecular{}(e,S)$. Further, the only values that occur with multiplicity in each list correspond to intersections with the free space boundary, and can be made unique in $\O(n)$ time.

    Computing this for every edge in $\O(n^2\log n)$ time leaves us with $\O(n)$ lists each containing at most $\O(n^2)$. By symbolic perturbation via the list-index we define a total-order on all $\O(n^3)$ values which allows applying \Cref{thm:coarsesubset} for $n^\coarsity$ resulting in $\O(n^\coarsity)$ interval boundaries of $\atomic{}(S,P)$ in $\O(n^{1+\coarsity}\log n)$ time such that between two such values at most $\O(n^{3-\coarsity})$ intervals of $\atomic{}(S,P)$ lie in between them.
\end{proof}

Setting $\coarsity=3/2$ balances the number of $\coarsity$-coarse intervals with the number of atomic intervals in each $\coarsity$-coarse interval. In fact in the end we will set $\coarsity\approx 3/2$, however, we can improve the running time by increasing $\coarsity$ ever so slightly depending on $\kopt $. As we do not know $\kopt $, we test for it by doubling a guess $K$ of $\kopt $ every time the algorithm does not produce a solution of size $K\log n$. Thus the algorithm in the end will compute different coarsenings for values $\coarsity\in[3/2,2]$, hence we analyze everything depending on the variable $\coarsity$.

\subsection{Subtrajectory Covering without explicit atomic intervals}\label{sec:combinatorialdescription}

Thus far, we have used implicit atomic/molecular intervals to construct an explicit small subset of atomic intervals. We now apply this to get an initial coarse solution.

\begin{lemma}\label{lem:uncoverableCompute}
    For every $\coarsity\in[0,3]$ and every $K$ let $C\subset\ConcreteCandidates{S}(P)$ be a solution covering all midpoints of $\coarsity$-coarse intervals of size $\O(K \log n)$. Then there are at most $\O(K n^{4-\coarsity}\log n)$ atomic intervals and  $\O(K n^{4-\coarsity}\log n + K n^2\log n)$ molecular intervals that are not contained in $\proxyCov_{A_S}(C)$. Further they can be computed in $\O(K n^{4-\coarsity}\log^2 n +K n^2\log^2n)$ time.
\end{lemma}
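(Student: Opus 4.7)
The plan is to exploit the structure of $\proxyCov_{A_S}(C)$: since $|C| = O(K\log n)$ and each $\pi \in \ConcreteCandidates{S}(P)$ has a proxy coverage consisting of $O(n)$ disjoint intervals with endpoints at atomic interval boundaries, $\proxyCov_{A_S}(C)$ decomposes as a disjoint union of $O(Kn\log n)$ intervals with $O(Kn\log n)$ endpoints.

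For the atomic bound, I would call a $\coarsity$-coarse interval $I$ \emph{not fully covered} if it contains a point outside $\proxyCov_{A_S}(C)$. Since the midpoint of $I$ is covered by hypothesis, any such $I$ must contain an endpoint of $\proxyCov_{A_S}(C)$ in its interior, so there are at most $O(Kn\log n)$ not-fully-covered coarse intervals. Each contains $O(n^{3-\coarsity})$ atomic intervals by \Cref{lem:coarseIntervals}, yielding at most $O(K n^{4-\coarsity}\log n)$ uncovered atomic intervals.

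For the molecular bound I would split uncovered $m \in \molecular{}(e,P)$ into \emph{boundary type} (containing an endpoint of $\proxyCov_{A_S}(C)$ in its interior) and \emph{gap type} (entirely contained in a maximal gap $G$ of $[0,1] \setminus \proxyCov_{A_S}(C)$). Boundary type contributes $O(Kn\log n)$ per edge, hence $O(Kn^2\log n)$ in total. For gap type, the key observation is that since $\extremal{}(A_e) \subset \extremal{}(A_S)$, every molecular boundary of $e$ is also an atomic boundary, so summing across edges the total number of molecular boundaries inside any fixed gap $G$ is bounded by the number of atomic boundaries in $G$, itself at most $N_G$, the length of $G$ in atomic intervals. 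Thus per gap $G$ the total number of gap-type molecular intervals across all edges is at most $N_G + n$, and summing over the $O(Kn\log n)$ gaps gives $\sum_G N_G + n \cdot O(Kn\log n) = O(K n^{4-\coarsity}\log n + K n^2\log n)$.

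Algorithmically, I would iterate over the $O(Kn\log n)$ coverage endpoints via \Cref{lem:solution_datastructure}, binary-searching each into the sorted coarse-interval boundaries to identify the not-fully-covered coarse intervals in $O(Kn\log^2 n)$ time. For each such $I$, I would enumerate its atomic and molecular intervals by binary-searching the $n$ implicit sorted lists underlying \Cref{lem:coarseIntervals} and sweeping through them, paying $O(\log n)$ per interval produced and an $O(n\log n)$ startup per coarse interval. Membership in $\proxyCov_{A_S}(C)$ is resolved via $O(\log n)$-time queries to \Cref{lem:solution_datastructure}. Combining the startup cost $O(Kn\log n \cdot n\log n) = O(Kn^2\log^2 n)$ with the output cost $O(\log n)$ times the output count yields the claimed running time. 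The main subtle point is the charging in the gap-type count: a naive per-edge argument gives a spurious factor of $n$, and it is precisely the inclusion $\extremal{}(A_e) \subset \extremal{}(A_S)$ that keeps the aggregate over edges under control.
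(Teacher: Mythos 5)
Your proof is correct and takes essentially the same route as the paper's: both bound the atomic count via the midpoint condition (each not-fully-covered coarse interval must contain a coverage endpoint), both split the uncovered molecular intervals into those straddling a coverage boundary (giving $O(Kn^2\log n)$) and those whose own boundaries lie in gaps, and both rely on the inclusion $\extremal{}(A_e)\subset\extremal{}(A_S)$ so that the aggregate molecular-boundary count inside the gaps is charged to atomic boundaries rather than incurring a spurious factor of $n$. The algorithmic part is also the same in substance (output-sensitive enumeration via binary searches over the per-edge implicit sorted lists, with an $O(n\log n)$ per-interval startup), so there is nothing to flag.
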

\begin{proof}
    First observe, that there are at most $\O(K n^{4-\coarsity}\log n)$ atomic intervals that are not in $\proxyCov_{A_S}(C)$, as $\proxyCov_{A_S}(C)$ does not contain at most $\O(K n\log n)$ $\alpha$-coarse intervals and hence at most $\O(K n^{4-\alpha}\log n)$ atomic intervals. Similarly, the set of molecular intervals that are not contained in $\proxyCov_{A_S}(C)$ consists of all molecular intervals that have a boundary that is not in $\proxyCov_{A_S}(C)$, or that contain some interval boundary of $\proxyCov_{A_S}(C)$. Thus in total there are $\O(K n^{4-\alpha}\log n + K n^2\log n)$ such molecular intervals.

    To compute them observe that $\proxyCov_{A_S}(C)$ consists of $\O(K n\log n)$ disjoint intervals, and thus so does $[0,1]\setminus \proxyCov_{A_S}(C)$. Let $\mathcal{I}$ be the set of $\O(K n \log n)$ intervals that result from the disjoint intervals in $[0,1]\setminus \proxyCov_{A_S}(C)$ after intersecting them with the preimage of every edge. The intervals in $\mathcal{I}$ are not necessarily disjoint, but the intersection of any two intervals is in at most one point. Let $[a,b]\in\mathcal{I}$ lie on edge $i$ of $P$. We compute the sets represented as $\O(1)$ contiguous intervals in $\extremal{}(A_e)$
    \begin{compactenum}
        \item $S_{e,1}=\{l_i(s)\mid s\in\extremal{}(A_e) : l_i(s)\in[a,b]\}$,
        \item $S_{e,2}=\{r_i(s)\mid s\in\extremal{}(A_e) : r_i(s)\in[a,b]\}$,
        \item $S_{e,3}=\{l_i(s)\mid s\in\extremal{}(A_e) : l_i(s) \leq a\}$ and $S_{e,4}=\{r_i(s)\mid s\in\extremal{}(A_e) : r_i(s) \leq a\}$, and
        \item $S_{e,5}=\{l_i(s)\mid s\in\extremal{}(A_e) : l_i(s) \geq b\}$ and $S_{e,6}=\{r_i(s)\mid s\in\extremal{}(A_e) : r_i(s) \geq b\}$.
    \end{compactenum}
    in total time $\O(\log n)$ via binary searches over $\extremal{}(A_e)$. Then the union $\{a,b\}\cup\bigcup_{e\in S}S_{e,1}\cup S_{e_2}$ is precisely the set of all interval boundaries of atomic intervals that lie in $[a,b]$. These can thus be computed in $\O((|[a,b]\cap\atomic{}(S,P)|+1)\log n)$ time. Thus overall computing all atomic intervals that intersect $[0,1]\setminus \proxyCov_{A_S}(C)$ can be done in $\O(K n^{4-\alpha}\log^2 n +Kn\log^2n)$ time.
    
    Similarly $\{\max(S_{e,3}\cup S_{e,4}),\max(S_{e,5}\cup S_{e,6})\}\cup S_{e,1}\cup S_{e,2}$ contain all interval boundaries of molecular intervals of $e$ and $P$ that intersect $[a,b]$ and can be computed in time $\O((|\{m\in\molecular{}(e,P)\mid m\subset [a,b]\}| + 1)\log n)$ time. Thus computing all molecular intervals that intersect $[0,1]\setminus \proxyCov_{A_S}(C)$ can be done in $\O(K n^{4-\alpha}\log^2 n + K n^2\log^2 n)$ time.
\end{proof}

\begin{algorithm}
\caption{Covering $[0,1]$ in subcubic time}\label{alg:covera2}
\begin{algorithmic}[1]
\Procedure{CoverAFast}{$P$,$\Delta$,$\ell$}
    \State Compute a simplification $S$ of $P$
    \State Compute the $4\Delta$-free space $A_S$ of $S$ and $P$
    \State Compute the extremal $y$-coordinates $\extremal{}(A_e)$ for every edge $e$
    \State Compute the sweep-sequences $\mathcal{S}_e$ for every edge $e$
    \State Compute the proxy coverage for every Type $(I)$-curve and provide them to \textsc{CoverA}
    \State $K\gets 1$, $\mathrm{covers}\gets\textsc{False}$, $\lambda\gets(48\ln|P| + 64)$
    \While{$\mathrm{covers}$ is \textsc{False}}
        \State $\alpha\gets \frac{3}{2} + \frac{\log K}{2\log n} + \frac{\log\log n}{\log n}$, $\mathrm{covers}\gets\textsc{True}$ and $R\gets\emptyset$
        \State Compute a set of $\alpha$-coarse intervals and from them their midpoints $A$
        \State Compute set of molecular intervals $W_e$ with $w_A(\cdot)\neq 0$ for every edge $e$ of $S$
        \If{\textsc{CoverA}($A_S$,$A$,$\{\edgeseqs{e}\}$,$\{W_e\}$) does not terminate after $\lambda K$ rounds}
            \State $K\gets 2K$, $\mathrm{covers}\gets\textsc{False}$
        \Else
            \State $R\gets\text{\textsc{CoverA}($A_S$,$A$,$\{\edgeseqs{e}\}$,$\{W_e\}$)}$
            \State Compute all atomic intervals in $\atomic{}(S,P)\setminus\proxyCov_{A_S}(R)$ and their midpoints $A$
            \State Compute set of molecular intervals $W_e$ with $w_A(\cdot)\neq 0$ for every edge $e$ of $S$
            \If{\textsc{CoverA}($A_S$,$A$,$\{\edgeseqs{e}\}$,$\{W_e\}$) does not terminate after $\lambda K$ rounds}
                \State $K\gets 2K$, $\mathrm{covers}\gets\textsc{False}$
            \Else
                \State $R\gets R \cup \text{\textsc{CoverA}($A_S$,$A$,$\{\edgeseqs{e}\}$,$\{W_e\}$)}$
            \EndIf
        \EndIf
        
    \EndWhile
    \State \Return $R$
\EndProcedure
\end{algorithmic}
\end{algorithm}

\coverafast*

\begin{proof}
    Let $\kopt$ be the size of the smallest set $\mathcal{C}^*\subset\curvespace{\ell}$ such that $\bigcup_{c\in\mathcal{C}^*}\Cov_P(c,\Delta)=[0,1]$.
    
    Line $1$ to $6$ take time $\O(n^2\ell\log n\log\ell)$ by \Cref{thm:thijs-simplification}, the fact that the extremal points in all cells of the free space can be computed in $\O(n^2)$ time, \Cref{lem:allsweep} and \Cref{thm:type1}.

    Now let $K$ be as in the first Line of the \textbf{while}-Loop in Line $8$. We show that within one iteration of the \textbf{while}-Loop the algorithm compute a solution of size $K\log n$ or correctly determines that $K<k_\delta$ and sets $K\gets 2K$. Observe that initially $K=1\leq k_\Delta$. 
    Let $\coarsity = \frac{3}{2} + \frac{\log K}{2\log n} + \frac{\log\log n}{\log n}$ and let $\lambda=(48\ln n + 64)\geq 16(\ln(16n^3)+1)$.
    By \Cref{lem:coarseIntervals}, the algorithm first computes a set of $\alpha$-coarse intervals correctly and then attempts to cover them in $\lambda K$ rounds of \textsc{CoverA} from \Cref{alg:main}. If it does not terminate after $\lambda K$ rounds, then $\lambda K < \lambda \kopt$ and in particular $K<\kopt$, in which case we correctly set $K\gets 2k$ and restart the \textbf{while}-Loop. Otherwise \textsc{CoverA} returns a solution of size $\lambda \kopt$ covering all midpoints of the $\alpha$-coarse intervals. 

    Next, via \Cref{lem:uncoverableCompute}, the algorithm determines all uncovered atomic intervals and the molecular intervals which contain these points. Next the algorithm invokes \textsc{CoverA} again, this time with the set of midpoints of uncovered atomic intervals, and the set of molecular intervals that contain at least one of these midpoints. If it does not terminate after $\lambda K$ rounds, then $\lambda K < \lambda\kopt$ and in particular $K<\kopt$, in which case we correctly set $K\gets 2k$ and restart the \textbf{while}-Loop. Otherwise \textsc{CoverA} returns a solution of size $\lambda \kopt$ covering all midpoints of this subset of atomic intervals.

    Thus within the \textbf{while}-Loop the algorithm correctly determines that either $K<\kopt$ and restarts with $K\gets2K$ or outputs a solution of size $(96\ln(n)+128)\kopt$, thus the algorithm correctly computes a solution of claimed size.

    As $K\in\O(\kopt)$ and thus $K\in\O(n)$, the running time of the \textbf{while}-Loop is 
    \begin{align*}
        &\O(n^{1+\coarsity}\log n + Kn^{4-\coarsity}\log^3 n+Kn^2\log^2 n)\\
        &=\O(n^{1+\coarsity}\log n + n^{4-\coarsity+\frac{\log K}{\log n}}\log^3n+Kn^2\log^2 n)\\
        &=\O(K^{\frac{1}{2}}n^{\frac{5}{2}}\log^2 n+K^{\frac{1}{2}}n^{\frac{5}{2}}\log^2n+Kn^2\log^2n)=\O(K^{\frac{1}{2}}n^{\frac{5}{2}}\log^2n).
    \end{align*}
    And thus overall the running time of the algorithm is 
    \begin{align*}
        &\O\left(n^2\ell\log^2n+\sum_{K=1}^{\log(\kopt )}\left(\left(2^K\right)^\frac{1}{2}n^{\frac{5}{2}}\log^2n\right)\right)\\
        &=\O\left(n^2\ell\log^2n+\kopt ^{\frac{1}{2}}n^{\frac{5}{2}}\log^2n\right)\\
        &=\O\left(n^2\ell\log^2n+\kopt ^{\frac{1}{2}}n^{\frac{5}{2}}\log^2n\right).\qedhere
    \end{align*}
\end{proof}

\begin{proposition}
    If $\kopt =\O(n^{1-\eps})$ and $\ell=\O(n^{1-\eps})$ then the running time is subcubic namely $\tildeO(n^{3-\eps/2})$.
\end{proposition}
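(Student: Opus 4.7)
The plan is to simply substitute the assumed bounds $\kopt = \O(n^{1-\eps})$ and $\ell = \O(n^{1-\eps})$ into the running time expression from \Cref{thm:coverafast}, namely $\O((n^2\ell + \sqrt{\kopt}\, n^{5/2})\log^2 n)$, and bound each of the two summands separately.

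For the first summand, observe that $n^2 \ell = \O(n^2 \cdot n^{1-\eps}) = \O(n^{3-\eps})$. For the second, $\sqrt{\kopt}\, n^{5/2} = \O(n^{(1-\eps)/2} \cdot n^{5/2}) = \O(n^{3-\eps/2})$. Since $3 - \eps/2 > 3 - \eps$ for any $\eps > 0$, the second term dominates, so the total running time is $\O(n^{3-\eps/2} \log^2 n) = \tildeO(n^{3-\eps/2})$. Because $\eps/2 > 0$, this is strictly subcubic.

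There is no real obstacle here: the argument is just plugging in the hypothesis and comparing exponents. The only subtlety worth noting is that the bound on $\ell$ is needed precisely because $\sqrt{\kopt}\,n^{5/2}$ alone would not dominate a linear-in-$\ell$ term such as $n^3$, so both assumptions are essential to ensure the first summand does not force the running time back up to $\tildeO(n^3)$.
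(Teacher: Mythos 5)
Your proof is correct and is the same straightforward substitution the paper intends: plug the hypotheses $\kopt=\O(n^{1-\eps})$ and $\ell=\O(n^{1-\eps})$ into the bound $\O((n^2\ell+\sqrt{\kopt}\,n^{5/2})\log^2 n)$ from \Cref{thm:coverafast} and compare exponents. Your remark that the assumption on $\ell$ is needed to keep the $n^2\ell$ term from dominating is a fair observation and explains why both hypotheses appear in the statement.
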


\section{Subtrajectory Coverage Maximization}\label{sec:maximization}

The goal of this section is the following theorem.

\thmMainCoverage*

Throughout this section we assume that we are given $P$ and we have computed the simplification $S$ via \Cref{thm:thijs-simplification} and an $(1+\eps,\Delta)$-free space $A_S$ via \Cref{thm:approxFreeSpace} consisting of convex polygons of complexity $O(\eps^{-2})$ in each cell. 

The quality of the algorithm hinges upon the following theorem.

\begin{theorem}\label{thm:coverageReduction}
    Let $P$ be curve of complexity $n$, let $\Delta$ and $\ell$ be given. Let $S$ be a simplification of $P$. For ever edge $e$ of $P$ let $A_e$ be an $(1+\eps)$-approximate piece-wise linear $\Delta$-free space for $e$ and $P$. Let $\ConcreteCandidates{S}(P)$ be the set of all $(I)$-, $(II)$- and $(III)$-subcurves of $S$ induced by the set of all $y$-coordinates of extremal points and polygon-vertices of all $A_e$. Any algorithm that iteratively adds the curve $c$ among $\ConcreteCandidates{S}(P)$ to $R$ maximizing
    \[\left\|\proxyCov_{A_S}(c)\setminus\left(\bigcup_{r\in R}\proxyCov_{A_S}(r)\right)\right\|,\]
    computes after $k$ rounds a set $R\subset\bX_\ell^d$ such that for any other set $C^*$ of cardinality $k$ it holds that \[\|\Cov_P(R,(4+\eps)\Delta)\|\geq\frac{e-1}{16e}\|\Cov_P(C^*,\Delta)\|\]
\end{theorem}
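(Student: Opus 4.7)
The plan is to recast the described algorithm as the classical greedy for monotone submodular maximization under cardinality constraint $k$, and to absorb two losses: a factor $16$ coming from projecting optimal centers in $\curvespace{\ell}$ onto $\ConcreteCandidates{S}(P)$ via Theorem~\ref{thm:16setsystem}, and a Fréchet-radius blow-up from $\Delta$ to $(4+\eps)\Delta$ obtained by instantiating the $(\alpha,\Delta)$-approximate free space with $\alpha = 1+\tfrac{\eps}{4}$. Concretely, I would set
\[
f(R) \;:=\; \left\|\bigcup_{r \in R} \proxyCov_{A_S}(r)\right\|, \qquad R \subseteq \ConcreteCandidates{S}(P).
\]
Since $f$ is the Lebesgue measure of a union of sets indexed by $R$, it is monotone submodular with $f(\emptyset)=0$. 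The update rule in the theorem statement is exactly the standard greedy rule for $f$ under cardinality constraint $k$, so the Nemhauser--Wolsey--Fisher analysis yields $f(R) \geq (1-1/e)\, f(\mathrm{OPT}_k)$, where $\mathrm{OPT}_k$ maximizes $f$ over subsets of $\ConcreteCandidates{S}(P)$ of size at most $k$.

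Next I would transfer the comparison from $\mathrm{OPT}_k$ (proxy-coverage optimum inside $\ConcreteCandidates{S}(P)$) to $C^*$ (true $\Delta$-coverage optimum inside $\curvespace{\ell}$). For each $\pi \in C^*$, the $(1+\tfrac{\eps}{4},\Delta)$-free-space version of Theorem~\ref{thm:16setsystem} (which is the setting in which the theorem is stated, since its proof merely combines Lemmas~\ref{lem:foursplit},~\ref{lem:supersetfreespace} and~\ref{lem:proxyapprox} with generic $\alpha$) produces a set $S_\pi \subseteq \ConcreteCandidates{S}(P)$ of size at most $16$ with $\Cov_P(\pi,\Delta) \subseteq \bigcup_{s \in S_\pi} \proxyCov_{A_S}(s)$. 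Taking $C' := \bigcup_{\pi \in C^*} S_\pi$ gives $|C'| \leq 16k$ and $\|\Cov_P(C^*,\Delta)\| \leq f(C')$. Submodularity with $f(\emptyset) = 0$ implies subadditivity, so splitting $C'$ arbitrarily into $16$ blocks of cardinality $\leq k$ yields some block $X$ with $f(X) \geq f(C')/16$. Hence $f(\mathrm{OPT}_k) \geq f(X) \geq f(C')/16 \geq \|\Cov_P(C^*,\Delta)\|/16$, and chaining with the greedy bound gives $f(R) \geq \tfrac{e-1}{16e} \|\Cov_P(C^*,\Delta)\|$.

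To close the argument I would show $\proxyCov_{A_S}(r) \subseteq \Cov_P(r,(4+\eps)\Delta)$ for every $r \in \ConcreteCandidates{S}(P)$. For Type $(I)$ subcurves this is immediate from the definition $\proxyCov_{A_S}(S[s,t]) = \Cov_P(S[s,t],\Delta)$. For Type $(II)$ and Type $(III)$ subcurves, I would follow the chain used in Lemmas~\ref{lem:foursplit} and~\ref{lem:supersetfreespace}: every interval contributed to $\proxyCov_{A_S}(e[s,t])$, whether from $\localG(e[s,t]) \setminus \bad(e[s,t])$ or from $\redGlobG(e[s,t])$, is witnessed by the concatenation of at most four monotone sub-paths in $A_S$, which realizes a Fréchet matching of cost at most $4\alpha\Delta = (4+\eps)\Delta$. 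Monotonicity of Lebesgue measure then yields $f(R) \leq \|\Cov_P(R,(4+\eps)\Delta)\|$, completing the proof.

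The main obstacle is this last step: verifying that the $l_i(\cdot) \leftrightarrow r_i(\cdot)$ swaps hidden inside $\hat{l}_{i,e[s,t]}$ and $\hat{r}_{j,e[s,t]}$ at bad indices still yield valid $4\alpha\Delta$-Fréchet certificates for every interval in the reduced global group. This is the analogue of the good/bad case analysis already carried out in Lemma~\ref{lem:proxyapprox}, and it is precisely what forces the constant $4$ (rather than $1$) in front of $\alpha\Delta$; once checked, everything else is a routine combination of subadditivity and the Nemhauser--Wolsey--Fisher greedy analysis.
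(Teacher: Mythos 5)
Your proof is correct and reaches the same $\tfrac{e-1}{16e}$ bound, but you package the $16$-factor loss differently than the paper. The paper splits it as $8\times 2$: it first introduces an intermediate optimum $\textsc{Opt}^*_{\Cov,k}$ under $\Cov_{A_S}$ and uses Lemma~\ref{lem:supersetfreespace} plus subadditivity to lose a factor $8$, then applies Lemma~\ref{lem:proxyapprox} to compare against the set $\textsc{Opt}^*_{\Cov,k}\cup\overleftarrow{\textsc{Opt}^*_{\Cov,k}}$ of reversals (doubling the cardinality to $2k$) and loses a second factor $2$ when restricting to a $\proxyCov$-optimum of size $k$. You instead invoke the already-packaged Theorem~\ref{thm:16setsystem} to get a single cover $C'$ of size $\leq 16k$ whose $\proxyCov_{A_S}$-union dominates $\Cov_P(C^*,\Delta)$, and absorb the whole factor in one subadditivity split of $C'$ into $16$ blocks. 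Both routes chain into the same Nemhauser--Wolsey--Fisher greedy bound, so the approaches are equivalent; yours is slightly more streamlined but leaves the good/bad case analysis and the reversal trick implicit inside Theorem~\ref{thm:16setsystem}, which is exactly where the paper chooses to make them visible.

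One remark on your flagged ``main obstacle'': you do not need to re-derive $4\alpha\Delta$-Fréchet certificates for the intervals contributed by $\hat{l}_{i,e[s,t]}$ and $\hat{r}_{j,e[s,t]}$ at bad indices. The lemma immediately following Definition~\ref{def:proxycov} already establishes $\proxyCov_{A_S}(e[s,t])\subset\Cov_{A_S}(e[s,t])$ directly from the monotonicity $l_i(s)\leq\hat{l}_{i,e[s,t]}(s)$ and $\hat{r}_{j,e[s,t]}(t)\leq r_j(t)$; combined with the generic containment $\Cov_{A_S}(\cdot)\subset\Cov_P(\cdot,4\alpha\Delta)$ from Lemma~\ref{lem:supersetfreespace}, the final inequality $\|\proxyCov_{A_S}(R)\|\leq\|\Cov_P(R,(4+\eps)\Delta)\|$ follows without revisiting the good/bad swaps. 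Your explicit choice $\alpha=1+\eps/4$ is also the right reading of the parameters: the paper's proof similarly asserts $\|\Cov_P(R,(4+\eps)\Delta)\|\geq\|\Cov_{A_S}(R)\|$, which requires the approximate free space to be taken at radius $4\Delta$ with approximation quality $1+\eps/4$, even though the theorem statement phrases this as a ``$(1+\eps)$-approximate $\Delta$-free space''; this is a notational looseness shared with the paper, not a flaw in your argument.
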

\begin{proof}
    Let $\textsc{Opt}^*_{\Cov,k}$ be the subset of $\mathcal{S}$ of size $k$ maximizing $\|\Cov_{A_S}(\cdot)\|$. Let $\textsc{Opt}_{\Cov,k}$ be some subset of $\bX^d_\ell$ of size $k$ maximizing $\|\Cov_P(\cdot,\Delta)\|$. Then by \Cref{lem:supersetfreespace} we know that
    \[\|\Cov_{A_S}(\textsc{Opt}^*_{\Cov,k})\|\geq 1/8 \|\Cov(\textsc{Opt}_{\Cov,k},\Delta)\|.\]
    Let $\overleftarrow{\textsc{Opt}^*_{\Cov,k}}$ be the set of reversed sub-edges in $\textsc{Opt}^*_{\Cov,k}$. Then by \Cref{lem:proxyapprox}
    \[\|\proxyCov_{A_S}(\textsc{Opt}^*_{\Cov,k}\cup\overleftarrow{\textsc{Opt}^*_{\Cov,k}})\|\geq \|\Cov_{A_S}(\textsc{Opt}^*_{\Cov,k})\|.\]

    Let further $\textsc{Opt}^*_{\proxyCov,k}$ and $\textsc{Opt}^*_{\proxyCov,2k}$ be the sets of size $k$ and $2k$ in $\ConcreteCandidates{S}(P)$ respectively, maximizing $\|\proxyCov_{A_S}(\cdot)\|$. Then by sub-additivity 
    \[2\|\proxyCov_{A_S}(\textsc{Opt}^*_{\proxyCov,k})\|\geq\|\proxyCov_{A_S}(\textsc{Opt}^*_{\proxyCov,2k})\|\geq \|\proxyCov_{A_S}(\textsc{Opt}^*_{\Cov,k}\cup\overleftarrow{\textsc{Opt}^*_{\Cov,k}})\|. \]

    Lastly as $\|\proxyCov_{A_S}(\cdot)\|$ is submodular, by standard greedy submodular function maximization arguments \cite{Nemhauser1978,krause11} it holds that
    \[\|\proxyCov_{A_S}(R)\|\geq \frac{e-1}{e}\|\proxyCov_{A_S}(\textsc{Opt}^*_{\proxyCov,k})\|.\]
    And finally 
    \[\|\Cov_P(R,(4+\eps)\Delta)\| \geq\|\Cov_{A_S}(R)\| \geq \|\proxyCov_{A_S}(R)\|.\]
    Thus overall, we observe that after $k$ iterations it holds that 
    \[\|\Cov_P(R,(4+\eps)\Delta)\|\geq\frac{e-1}{16e}\|\Cov_P(\textsc{Opt}_{\Cov,k},\Delta)\|.\qedhere\]
\end{proof}

\subsection{\boldmath Maintaining a partial solution and Type $(I)$-subcurves}

In the following sections we describe, how to maintain a partial solution $R$, and how to compute the element in $\ConcreteCandidates{S}(P)$ which maximizes \[\left\|\proxyCov_{A_S}(c)\setminus\left(\bigcup_{r\in R}\proxyCov_{A_S}(r)\right)\right\|.\]

A partial solution $R$ consists of at most $k$ elements from $\ConcreteCandidates{S}(P)$. At any point in the algorithm we maintain a set $\mathcal{I}$ of $\O(kn)$ disjoint intervals, the union of which describe $\proxyCov_{A_S}(R)$.

\begin{lemma}\label{lem:solution_datastructure2}
    Let $I$ be a set of $n$ pairwise disjoint intervals. There is a data-structure that allows computing

\[\left\|[l,r]\setminus\left(\bigcup_{i\in I_R}i\right)\right\|\]
    
    in $\O(\log n)$ time for arbitrary $[l,r]$. Further we can either remove an interval from $I$, or if $[l,r]\cap \left(\bigcup_{i\in I}i\right)=\emptyset$, then $[l,r]$ can be added to $I$ correctly updating the data-structure $\O(\log n)$ time.
\end{lemma}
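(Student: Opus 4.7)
The plan is to implement the data structure as an augmented balanced binary search tree (for instance a red-black tree) whose nodes store the disjoint intervals of $I$, keyed by their left endpoint. Since the intervals are pairwise disjoint, ordering by left endpoint coincides with ordering by right endpoint and by any interior point. I would augment every node $v$ with the quantity $\mathrm{len}(v)$ defined as the sum of the lengths of all intervals stored in the subtree rooted at $v$. Whenever an interval is inserted, removed, or the tree rebalanced via rotations, the $\mathrm{len}$-values along the $\O(\log n)$ affected ancestors can be recomputed from the children in $\O(1)$ time per node, so every insertion and deletion runs in $\O(\log n)$ total time.

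To answer a query on $[l,r]$, I first use the preconditions of the lemma to reduce to computing the total length of $[l,r]\cap\bigcup_{i\in I}i$, and then return $(r-l)$ minus this quantity. The key observation is that among the intervals of $I$ that meet $[l,r]$, at most one starts strictly before $l$ and at most one ends strictly after $r$; all remaining intersecting intervals are entirely contained in $[l,r]$. The two boundary intervals can each be located in $\O(\log n)$ time by standard predecessor/successor searches on $l$ and $r$, and their contributions to the intersection measure are then computed directly in $\O(1)$ time each.

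It remains to sum the lengths of the interior intervals, i.e.\ those intervals of $I$ lying completely inside $[l,r]$. This is a standard aggregate range query on a balanced BST: descend once to the left-spine from the root following $l$, once to the right-spine following $r$, and along the way collect the $\O(\log n)$ maximal subtrees that are entirely inside the query range; the answer is the sum of the $\mathrm{len}$-values at the roots of these subtrees. Since the tree has height $\O(\log n)$, this takes $\O(\log n)$ time.

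I do not foresee any real obstacle here; the lemma is a textbook application of an augmented balanced search tree, and the only mild care required is to correctly subtract the portion of the two boundary intervals that lies outside $[l,r]$ when computing the intersection measure. Combining this with the outer $(r-l)$ term yields the desired Lebesgue measure $\|[l,r]\setminus\bigcup_{i\in I}i\|$ in $\O(\log n)$ time, which, together with the $\O(\log n)$ update bound, establishes the lemma.
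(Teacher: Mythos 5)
Your proof is correct and follows essentially the same approach as the paper, which stores the disjoint intervals in an interval tree augmented with per-node length information; you simply spell out the standard augmented-BST range-aggregate query and the handling of the two boundary intervals in more detail than the paper's one-line sketch.
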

\begin{proof}
    Simply store the disjoint intervals $I_R$ in an interval-tree and for every interval store the length of this interval. This value can be computed in $\O(1)$ time.
\end{proof}

\begin{lemma}
    Given $\proxyCov_{A_S}(S[s,t])$ for every Type $(I)$-subcurve $S[s,t]$ of $S$, and an interval set $\mathcal{I}$ of $\O(kn)$ intervals representing $\proxyCov_{A_S}(R)$ stored in the data-structure from   \Cref{lem:solution_datastructure2} one can identify the Type $(I)$-subcurve of $S$ maximizing 
    \[\left\|\proxyCov_{A_S}(c)\setminus\left(\bigcup_{r\in R}\proxyCov_{A_S}(r)\right)\right\|\]
    in time $\O(n^2\log \ell\log n)$.
\end{lemma}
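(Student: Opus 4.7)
The plan is to exploit the fact that for each Type $(I)$-subcurve $c$ of $S$ the proxy coverage $\proxyCov_{A_S}(c) = \Cov_P(c,4\Delta)$ is already given as a union of pairwise disjoint intervals, so the quantity in question decomposes into a sum of independent queries against the data-structure of \Cref{lem:solution_datastructure2}.

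First I would recall from \Cref{sec:setsystem} (and the discussion preceding \Cref{lem:allsweep}) that the number of Type $(I)$-subcurves of $S$ is $\O(n\log \ell)$, and that the proxy coverage of any such vertex-vertex subcurve coincides by definition with $\Cov_P(c,4\Delta)$, which can be written as a union of at most $\O(n)$ pairwise disjoint intervals $[a_1^c, b_1^c], \ldots, [a_{m_c}^c, b_{m_c}^c]$ with $m_c \in \O(n)$. These interval decompositions are by assumption part of the input passed to the algorithm.

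Next, for a fixed Type $(I)$-subcurve $c$, observe that since the intervals $[a_i^c,b_i^c]$ are pairwise disjoint,
\[
\left\|\proxyCov_{A_S}(c)\setminus\Bigl(\bigcup_{r\in R}\proxyCov_{A_S}(r)\Bigr)\right\|
= \sum_{i=1}^{m_c}\left\|[a_i^c,b_i^c] \setminus \Bigl(\bigcup_{I\in\mathcal{I}} I\Bigr)\right\|.
\]
By \Cref{lem:solution_datastructure2}, each of the $m_c \in \O(n)$ terms on the right is computable in $\O(\log n)$ time, since $|\mathcal{I}| \in \O(kn) \subseteq \O(n^2)$ and hence the tree has height $\O(\log n)$. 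Summing these values yields the desired measure for one Type $(I)$-subcurve in $\O(n\log n)$ time.

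Finally, I repeat this for every Type $(I)$-subcurve and return the maximizer. The total running time is $\O(n\log \ell)\cdot \O(n\log n)=\O(n^2\log \ell \log n)$, as claimed. No step should pose a real obstacle; the only thing to double-check is that the data-structure query from \Cref{lem:solution_datastructure2} indeed works for arbitrary query intervals $[l,r]$ (not just those with endpoints from a fixed discretization), which is the reason \Cref{lem:solution_datastructure2} is stated in that strengthened form compared to the analogous \Cref{lem:solution_datastructure} used in the SC algorithm.
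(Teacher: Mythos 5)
Your proof is correct and follows essentially the same route as the paper's: decompose $\proxyCov_{A_S}(c)$ into its $\O(n)$ disjoint intervals, query the interval-tree from \Cref{lem:solution_datastructure2} once per interval in $\O(\log n)$ time, and iterate over all $\O(n\log\ell)$ Type $(I)$-subcurves. Your closing remark about \Cref{lem:solution_datastructure2} supporting arbitrary query intervals (in contrast to \Cref{lem:solution_datastructure}) is a correct and worthwhile observation, though the paper leaves it implicit.
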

\begin{proof}
    Let $S[s,t]$ be a Type $(I)$-subcurve of $S$.
    For every interval $I$ in $\proxyCov_{A_S}(S[s,t])$ compute $\|I\setminus\bigcup_{j\in\mathcal{I}}j\|$ via the data-structure from \Cref{lem:solution_datastructure2} representing $\mathcal{I}$ in time $\O(\log kn) = \O(\log n)$. From this we compute $\|\proxyCov_{A_S}(S[s,t])\setminus\bigcup_{j\in\mathcal{I}}j\|$ in total time $\O(n\log n)$ time. As there are a total of $\O(n\log\ell)$ Type $(I)$-subcurves of $S$ the claim follows.
\end{proof}

\subsection{\boldmath Type $(II)$- and $(III)$-subcurves}

We say $\nonprob(e[s,t])=\redGlobG(e[s,t])\cup(\localG(e[s,t])\setminus\bad(e[s,t]))$. In \Cref{sec:symbolicRepresentation} we have seen that $\nonprob(e[s,t])$ can be maintained throughout the scan of a sweep-sequence. Further it holds that \[\proxyCov(e[s,t])=\bigsqcup_{(i,j)\in\nonprob(e[s,t])}[\hat{l}_{i,e[s,t]}(s),\hat{r}_{j,e[s,t]}(t)].\]

Assume that one is given $\nonprob(e[s,t])$. Let further $I_R$ be a set of intersection free intervals representing $\proxyCov$ of a partial solution $R$. Define \(\hat{l}_{i,e[s,t]}(s,I_R)=\|[\hat{l}_{i,e[s,t]}(s),i_1]\setminus\bigcup_{[a,b]\in I_R}[a,b]\|,\) where $i_1$ defines the $x$-coordinate of the $(i+1)$th vertex of $P$. Clearly $\hat{l}_{i,e[s,t]}(s,I_R)$ can be computed in $\O(\log n)$ time. Similarly define $\hat{r}_{j,e[s,t]}(s,I_R)=\|[j_0,\hat{r}_{j,e[s,t]}(s)]\setminus\bigcup_{[a,b]\in I_R}[a,b]\|$, where $j_0$ defines the $x$-coordinate of the $j$th vertex of $P$.
Then for any $(i,j)\in \nonprob(e[s,t])$ we see that \[\left\|[\hat{l}_{i,e[s,t]}(s),\hat{r}_{j,e[s,t]}(t)]\setminus\bigcup_{[a,b]\in I_R}[a,b]\right\| = \hat{l}_{i,e[s,t]}(s,I_R) + \hat{r}_{j,e[s,t]}(s,I_R) + L(i,j,I_R),\]
where $L(i,j)=\|[i_1,j_0]\setminus\bigcup_{[a,b]\in I_R}[a,b]\|$ if $i_1\leq j_0$, otherwise if $i=j$ then $L(i,j)=-\|[i_0,i_1]\setminus\bigcup_{[a,b]\in I_R}[a,b]\|$.

This in turn implies that 
\[\left\| \proxyCov(e[s,t])\setminus\bigcup_{[a,b]\in I_R}[a,b] \right\| = \sum_{(i,j)\in \nonprob(e[s,t])}\left(\hat{l}_{i,e[s,t]}(s,I_R) + \hat{r}_{j,e[s,t]}(t,I_R) + L(i+1,j,I_R)\right).\]

Observe, that $\sum_{(i,j)\in \nonprob(e[s,t])}L(i+1,j,I_R)$ is completely independent of the exact values $s$ and $t$, and only depends on $\nonprob(e[s,t])$ and $I_R$, and can be maintained during the sweep, as each value $L(i,j,I_R)$ can be computed in $\O(\log n)$ time, and the coverage description is maintained correctly with a total of $\O(n)$ updates to $\nonprob(e[s,t])$ via \Cref{thm:maintainSymbolic}.

We now show how to maintain linear functions representing the Lebesgue-measure of the proxy coverage along a sweep-sequence, given that the approximate free space is described by piece-wise linear functions. For this we first show how to maintain the functions \(l_i(s,I_R)=\|[l_i(s),i_1]\setminus\bigcup_{[a,b]\in I_R}[a,b]\|\) and $r_j(s,I_R)=\|[j_0,r_j(s)]\setminus\bigcup_{[a,b]\in I_R}[a,b]\|$.

\begin{lemma}\label{lem:primitiveiterativesum}
    Let $e$ be a given edge. 
    Let $I=\{(s_a,t_a),\ldots,(s_b,t_b)\}$ be a continguous interval in the sweep-sequence $\sweepseq=\{(s_1,t_1),\ldots,(s_m,t_m)\}\in\edgeseqs{e}$ such that for any $(s_k,t_k)\in\sweepseq$ it holds that $l_i(s_k)\neq\infty$. There are values $m_{(s_i,t_i),I}$ and $b_{(s_i,t_i),I}$ for every $(s_i,t_i)\in \sweepseq$ such that for $(s_r,t_r),(s_{r+1},t_{r+1})\in \sweepseq$ consecutive it holds that
    \[\mathbbm{1}_{(s_r,t_r)\in I}\cdot l_i(s_r) = \mathbbm{1}_{(s_{r+1},t_{r+1})\in I}\cdot l_i(s_{r+1}) + \left(\sum_{k\geq r}m_{(s_k,t_k),I}\right)(s_{r+1}-s_r)+\sum_{k\geq r}b_{(s_k,t_k),I}.\]
    Further there are only $\O(\eps^{-2})$ non-zero such values, and they can be computed in $\O(\eps^{-2}\log n\log\eps^{-1})$.
\end{lemma}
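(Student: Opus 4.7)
The plan is to interpret the target identity as a backward recursion on $r$. Define $L_r := \mathbbm{1}_{(s_r,t_r)\in I}\cdot l_i(s_r)$, $M_r := \sum_{k\geq r}m_{(s_k,t_k),I}$, and $B_r := \sum_{k\geq r}b_{(s_k,t_k),I}$; then the claim reads $L_r - L_{r+1} = M_r(s_{r+1}-s_r) + B_r$ at every $r$, with the natural boundary $M_{m+1} = B_{m+1} = 0$. Reading this right-to-left, each step imposes exactly one linear constraint on the pair $(M_r, B_r)$. Since there are two degrees of freedom per step, I have one unit of slack that I will use to align $M_r$ with the local slope of $l_i$ so that $B_r$ stays at $0$ away from the boundaries of $I$.

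Now I invoke the piecewise-linearity of $l_i$: since $A_e$ restricted to cell $i$ is a convex polygon of complexity $O(\eps^{-2})$ whose vertex $y$-coordinates lie in $\extremal{}(A_e)$, the function $l_i$ is piecewise linear with $O(\eps^{-2})$ breakpoints, all contained in $\extremal{}(A_e)$. Assume without loss of generality that $s_r$ is non-decreasing along $\sweepseq$ (otherwise reverse $e$). Because consecutive $s$-values along $\sweepseq$ are consecutive in $\extremal{}(A_e)$, no breakpoint of $l_i$ lies strictly between $s_r$ and $s_{r+1}$, so $l_i$ is linear on $[s_r,s_{r+1}]$ with some slope $\sigma_r$. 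For consecutive $r, r+1$ both in $I$, one has $L_r - L_{r+1} = -\sigma_r(s_{r+1}-s_r)$, and I choose the target $M_r = -\sigma_r$, $B_r = 0$, which gives $m_r = M_r - M_{r+1} = \sigma_{r+1} - \sigma_r$ and $b_r = -B_{r+1} = 0$; both vanish unless $s_r$ is a breakpoint of $l_i$. When both $r, r+1$ lie outside a neighborhood of $I$, I set $m_r = b_r = 0$, preserving $M_r = B_r = 0$. At the two boundary transitions $r = a-1$ and $r = b$ (when they exist), I solve the one-dimensional constraint directly, absorbing the jump of $L$ across $\partial I$ into $b_r$.

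It remains to count and compute. Nonzero $m_r$ arises only where $\sigma$ changes, i.e., at the $O(\eps^{-2})$ breakpoints of $l_i$, or at the $O(1)$ boundary positions; nonzero $b_r$ arises only at the $O(1)$ boundary positions. This proves the $O(\eps^{-2})$ bound on the number of nonzero values. For the running time, I extract the $O(\eps^{-2})$ breakpoints of $l_i$ in order along the left chain of the polygon describing $A_e$ in cell $i$ in $O(\eps^{-2})$ time and sort them in $O(\eps^{-2}\log\eps^{-1})$ time. For each relevant breakpoint and for the two boundary positions, I binary search in the $s$-coordinates of $\sweepseq$ in $O(\log n)$ time to locate the corresponding index $r$, and I look up the adjacent slope in $O(\log\eps^{-1})$ time via binary search on the sorted breakpoints. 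The total is $O(\eps^{-2}\log n\log\eps^{-1})$. The main obstacle I anticipate is the bookkeeping at the two boundary positions $r\in\{a-1,b\}$, where the slope-based recipe does not apply and the abrupt change of $L$ across $\partial I$ must be absorbed into $b_r$; care is needed to keep the subsequent values of $M$ and $B$ consistent so that the recursion is satisfied everywhere.
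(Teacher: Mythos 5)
Your proposal is correct and takes essentially the same approach as the paper: both exploit that the piecewise‑linear left chain of the cell polygon has its $O(\eps^{-2})$ breakpoints at $y$‑coordinates that appear in $\extremal{}(A_e)$, so $l_i$ is linear between consecutive $s$‑values of the sweep, and then encode the slope as the tail‑sum $M_r=\sum_{k\geq r}m_{(s_k,t_k),I}$ with the boundary jumps of $\mathbbm 1_{\cdot\in I}\cdot l_i$ absorbed into the $b$ values. Your write‑up is actually a bit more careful than the paper's, which is terse about the boundary bookkeeping and does not explicitly argue that only $O(\eps^{-2})$ coefficients are nonzero; your backward‑recursion framing, explicit identification $M_r=-\sigma_r$ (the paper imprecisely calls $M_r$ ``the local slope of the polygon,'' which is off by a sign/reciprocal under any standard convention but is clearly intended as yours), and the observation that nonzero $m_r$'s occur only at polygon breakpoints and nonzero $b_r$'s only at the $O(1)$ boundary indices make the counting claim fully rigorous. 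One small simplification: the vertices of the left chain of a convex polygon are already $y$‑monotone, so the extra $O(\eps^{-2}\log\eps^{-1})$ sorting step you include is unnecessary, though harmless within the stated bound.
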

\begin{proof}
    In this case we set $b_{(s_b,t_b),I}=l_i(\eps_j)$ and $m_{(s_b,t_b),I}$ as the slope of the polygon at the point $l_i(s_b)$. We then traverse the polygon in a counter-clockwise manner until we hit the next vertex of the polygon at which the slope changes from $m=m_{(s_b,t_b),I}$ to $m'$. By definition of $\extremal{}(A_e)$ this vertex lies at some height $s_k\in \extremal{}(A_e)$ We then set $m_{(s_k,t_k),I}\gets m'-m$. We continue until we pass the height $(s_a,t_a)$. We set $m_{(s_a,t_a),I}$ such that $\sum_{g\geq i}m_{(s_k,t_k),I,I_R}=0$ and set $b_{\eps_i,I}=-l_i(\eps_i)$. Observe that $\sum_{k\geq i}m_{(s_k,t_k),I,I_R}$ is the local slope of the polygon and thus correctness follows. Further observe that these values can be identified in total time $\O(\log n)$ time as the only steps that do not take constant time are the look-ups of the indices of $\eps_k$.
\end{proof}
\begin{lemma}\label{lem:primitive2iterativesum}
    Let $e$ be an edge of $S$. Let $I=\{(s_a,t_a),\ldots,(s_b,t_b)\}$ be a continguous interval in the sweep-sequence $\sweepseq=\{(s_1,t_1),\ldots,(s_m,t_m)\}\in\edgeseqs{e}$ such that for any $(s_k,t_k)\in\sweepseq$ it holds that $l_i(s_k)\neq\infty$. Let $J\subset[0,1]$. There are values $m_{(s_i,t_i),I,J}$ and $b_{(s_i,t_i),I,J}$ for every $(s_i,t_i)\in \sweepseq$ such that for $(s_r,t_r),(s_{r+1},t_{r+1})\in \sweepseq$ consecutive it holds that
    \begin{align*}
        \mathbbm{1}_{(s_r,t_r)\in I}\cdot \|[l_i(s_r),i_1]\cap J\|&=\mathbbm{1}_{(s_{r+1},t_{r+1})\in I}\cdot \|[l_i(s_{r+1}),i_1]\cap J\|\\
        &+ \left(\sum_{k\geq r}m_{(s_k,t_k),I,J}\right)(s_{r+1}-s_r)+\sum_{k\geq r}b_{(s_k,t_k),I,J}.
    \end{align*}
    Further there are only $\O(\eps^{-2})$ non-zero such values, and they can be computed in $\O(\eps^{-2}\log n\log\eps^{-1})$.
\end{lemma}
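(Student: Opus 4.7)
My plan is to mirror the proof of \Cref{lem:primitiveiterativesum}, replacing the scalar function $l_i(s)$ by $F(s) = \|[l_i(s), i_1] \cap J\|$ and replacing every slope look-up by a range-length query to the interval data structure of \Cref{lem:solution_datastructure2}. By \Cref{thm:approxFreeSpace}, the left boundary of the convex polygon describing $A_e$ restricted to cell $i$ has $O(\eps^{-2})$ linear edges whose vertices lie at heights in $\extremal{}(A_e)$ and therefore appear among the sweep points. On a polygon edge between heights $s_E < s_{E'}$ along which $l_i$ is affine with slope $\alpha_E$, the net change $F(s_E) - F(s_{E'})$ equals $\pm\|[\min(l_i(s_E),l_i(s_{E'})), \max(l_i(s_E),l_i(s_{E'}))] \cap J\|$, with sign matching $\alpha_E$, and hence can be obtained in $O(\log n)$ time by a single query.

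Traversing the polygon counter-clockwise from the top of the interval $I$, exactly as in the previous lemma, I would assign at every polygon vertex $(s_k, t_k)$ an $m$-value equal to the difference of the secant slopes of $F$ on the two adjacent polygon edges. At the endpoints $(s_a,t_a)$ and $(s_b,t_b)$ of $I$, where $l_i$ transitions to $\infty$, $b$-values encode the jumps to or from $0$ of $F$, and a closing $m_{(s_a,t_a),I,J}$ is chosen so that the cumulative slope-sum vanishes below $s_a$. Each assignment costs $O(\log n)$ for data-structure queries and $O(\log \eps^{-1})$ for the binary search that locates the polygon vertex within the sorted list of the $O(\eps^{-2})$ polygon vertices. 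In total this produces $O(\eps^{-2})$ non-zero $m, b$-values computable in $O(\eps^{-2}\log n \log \eps^{-1})$ time.

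The main obstacle is verifying the telescoping identity at intermediate sweep points strictly inside a polygon edge. On such a sub-interval, both running sums $\sum_{k\geq r} m_{(s_k,t_k),I,J}$ and $\sum_{k\geq r} b_{(s_k,t_k),I,J}$ remain constant, so the identity reduces to $F$ being exactly linear of the corresponding slope between $s_r$ and $s_{r+1}$. This is immediate on portions where $l_i(s)$ stays on one side of $\partial J$, but can fail whenever $l_i(s)$ crosses a boundary of $J$. I would handle this by augmenting $\extremal{}(A_e)$ with every height at which $l_i$ crosses a boundary of one of the intervals describing $J$ (each identified in $O(\log n)$ via \Cref{lem:solution_datastructure2}) and inserting an additional $m$-value at each such crossing, so that no consecutive pair of sweep points straddles a boundary of $J$; the extra non-zero entries are then charged against the global update cost of $I_R$ in \Cref{sec:maximization}, rather than against the $O(\eps^{-2})$ preprocessing bound claimed in this lemma.
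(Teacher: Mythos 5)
Your high-level plan — extend Lemma~\ref{lem:primitiveiterativesum} to the function $F(s)=\|[l_i(s),i_1]\cap J\|$, noting that $F$ inherits the polygon breakpoints of $l_i$ plus new breakpoints where $l_i$ crosses $\partial J$ — is the same basic idea that the paper uses. The paper packages it differently: it intersects the convex polygon $A_e$ in cell~$i$ with the vertical strip over $J=[j_0,j_1]$, observes that the result is again a convex polygon, calls the new leftmost function $l$, and invokes Lemma~\ref{lem:primitiveiterativesum} verbatim. Both roads lead through the same set of breakpoints.

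The gap in your version is the handling of those new breakpoints. You treat $J$ as a general subset described by ``intervals,'' augment $\extremal{}(A_e)$ with an a priori unbounded number of crossing heights, and then explicitly concede that ``the extra non-zero entries are \ldots charged against the global update cost of $I_R$ \ldots rather than against the $\O(\eps^{-2})$ preprocessing bound claimed in this lemma.'' That is exactly the bound the lemma is claiming, so as written you have not proved the statement; you have proved a weaker version and moved the missing accounting elsewhere. In fact no such deferral is needed: in this lemma $J$ is a single interval (that is what the proof assumes, and that is the only way Lemma~\ref{lem:iterativesum} invokes it, summing over the disjoint intervals of $[i_0,i_1]\setminus I_R$). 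Since $l_i$ is the leftmost-$x$ function of a convex polygon, it is a convex (piecewise-linear) function of $s$, so the equations $l_i(s)=j_0$ and $l_i(s)=j_1$ each have at most two solutions. Hence there are at most $\O(1)$ additional breakpoints on top of the $\O(\eps^{-2})$ polygon vertices, and the total number of non-zero $m,b$ values stays $\O(\eps^{-2})$ without any external charging. Under the paper's reformulation this is automatic, since intersecting a convex polygon of complexity $\O(\eps^{-2})$ with a vertical strip yields another convex polygon of complexity $\O(\eps^{-2})$.

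Two smaller issues. First, the crossings of $l_i$ with $j_0,j_1$ generally occur at heights \emph{not} in $\extremal{}(A_e)$; you cannot literally ``augment $\extremal{}(A_e)$'' without changing the sweep-sequence over which the lemma indexes its values. The telescoping form can still absorb a kink strictly interior to a step, but only by placing nonzero $m$- and $b$-corrections at the \emph{adjacent} sweep points $(s_r,t_r)$ and $(s_{r+1},t_{r+1})$ so that the accumulated slope and offset match $F(s_r)-F(s_{r+1})$ exactly; this is $\O(1)$ extra entries per crossing and should be said explicitly. Second, the crossings are not found with the data structure of Lemma~\ref{lem:solution_datastructure2} (which answers length queries); they are found by searching the $\O(\eps^{-2})$ polygon edges, which accounts for the $\log\eps^{-1}$ factor in the running-time bound.
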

\begin{proof}
    Observe that for $J=[j_0,j_1]$ the subcurve $P[j_0,j_1]$ has complexity $2$ and thus acts like an edge of $P$, and thus the free space $A_e$ restricted to $e$ and $P[j_0,j_1]$ may act like a cell of the free space. Inside this cell we define the function $l$ similar to $l_i$. Then the claim is an immediate consequence of \Cref{lem:primitiveiterativesum}.
\end{proof}

\begin{corollary}\label{lem:iterativesum}
    Let $e$ be a given edge. 
    Let $I=\{(s_a,t_a),\ldots,(s_b,t_b)\}$ be a continguous interval in the sweep-sequence $\sweepseq=\{(s_1,t_1),\ldots,(s_m,t_m)\}\in\edgeseqs{e}$ such that for any $(s_k,t_k)\in\sweepseq$ it holds that $l_i(s_k)\neq\infty$. Let $\mathcal{I}_R$ be a set of disjoint intervals representing a partial solution. There are values $m_{(s_i,t_i),I,I_R}$ and $b_{(s_i,t_i),I,I_R}$ for every $(s_i,t_i)\in \sweepseq$ such that for $(s_r,t_r),(s_{r+1},t_{r+1})\in \sweepseq$ consecutive it holds that
    \[\mathbbm{1}_{(s_r,t_r)\in I}\cdot l_i(s_r,I_R) = \mathbbm{1}_{(s_{r+1},t_{r+1})\in I}\cdot l_i(s_{r+1},I_R) + \left(\sum_{k\geq r}m_{(s_k,t_k),I,I_R}\right)(s_{r+1}-s_r)+\sum_{k\geq r}b_{(s_k,t_k),I,I_R}.\]
    If $I_R$ is the empty set then there are only $O(\eps^{-2})$ non-zero such values, and they can be computed in $\O(\log n\log \eps)$.
    Furthermore, if an interval $[a,b]\subset[i_0,i_1]$ is either added or removed to $I_R$ (maintaining disjointedness), then at most $\O(\eps^{-2})$ of these values change. Which values and how they change can be determined in $\O(\eps^{-2}\log n\log\eps^{-1})$ time.
\end{corollary}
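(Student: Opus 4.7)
The plan is to express $l_i(\cdot, I_R)$ as a linear combination of the primitive functions whose iterative-sum representations are already provided by \Cref{lem:primitiveiterativesum} and \Cref{lem:primitive2iterativesum}, and then to invoke those lemmas as black boxes. The starting point is the identity, valid because the intervals of $I_R$ are pairwise disjoint,
\[
l_i(s, I_R) \;=\; \bigl(i_1 - l_i(s)\bigr) \;-\; \sum_{[a,b]\in I_R,\;[a,b]\subset [i_0,i_1]} \bigl\|[l_i(s), i_1] \cap [a,b]\bigr\|,
\]
which holds on the contiguous subset $I$ on which $l_i(s)\neq\infty$ (and we can ignore intervals in $I_R$ disjoint from $[i_0,i_1]$, since they contribute $0$ to the intersection and are handled by the analogous quantities in other cells).

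For the case $I_R=\emptyset$, only the first summand is present. I would apply \Cref{lem:primitiveiterativesum} to $l_i$ on the contiguous interval $I$, which yields $O(\eps^{-2})$ nonzero values $(m_{(s_k,t_k),I}, b_{(s_k,t_k),I})$ in $O(\eps^{-2}\log n\log\eps^{-1})$ time. Because the map $l_i(\cdot)\mapsto i_1 - l_i(\cdot)$ is an affine transformation in $s$, I would negate each of those $m$ values and simply adjust the two boundary $b$ values at the endpoints of $I$ by the constant $i_1$ (preserving the telescoping recurrence). The resulting values then correctly encode $l_i(\cdot,\emptyset)$ in the form demanded by the corollary, and the count of nonzero entries remains $O(\eps^{-2})$.

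For the update claim, observe that adding a new interval $[a,b]\subset[i_0,i_1]$ to $I_R$ changes $l_i(s,I_R)$ by exactly $-\bigl\|[l_i(s), i_1]\cap[a,b]\bigr\|$, and removing $[a,b]$ changes it by $+\bigl\|[l_i(s), i_1]\cap[a,b]\bigr\|$. I would invoke \Cref{lem:primitive2iterativesum} with $J=[a,b]$, which produces $O(\eps^{-2})$ nonzero $(m_{(s_k,t_k),I,J},b_{(s_k,t_k),I,J})$ values in $O(\eps^{-2}\log n\log\eps^{-1})$ time. Adding (respectively subtracting) these into the maintained arrays updates $l_i(\cdot,I_R)$ correctly to $l_i(\cdot,I_R\cup\{[a,b]\})$ (respectively $l_i(\cdot,I_R\setminus\{[a,b]\})$), and by construction only $O(\eps^{-2})$ entries of the stored arrays are touched.

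The mildly delicate step is justifying the affine-transformation move from $l_i(\cdot)$ to $i_1-l_i(\cdot)$: I need to verify that the boundary bookkeeping in \Cref{lem:primitiveiterativesum} (which sets $b_{(s_b,t_b),I}=l_i(\eps_j)$ at the start of $I$ and forces $\sum_{k\ge a}m_{(s_k,t_k),I}=0$ at the end) translates cleanly once the sign is flipped and the constant $i_1$ is introduced. Concretely the start-of-$I$ boundary $b$ becomes $i_1 - l_i(\eps_j)$, and the end-of-$I$ boundary $b$ becomes $-(i_1-l_i(\eps_i))$; the telescoping identity is preserved by linearity. Everything else is bookkeeping, and since both initialization and each update are reductions to a single call of one of the two primitive lemmas, the stated running time bounds follow immediately.
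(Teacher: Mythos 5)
Your proof is correct, and it reaches the same conclusion as the paper via a genuinely different (though closely related) decomposition. The paper writes
\[
l_i(s,I_R)=\sum_{J\in [i_0,i_1]\setminus I_R}\bigl\|[l_i(s),i_1]\cap J\bigr\|
\]
and simply sets $m_{(s_k,t_k),I,I_R}=\sum_{J}m_{(s_k,t_k),I,J}$ and $b_{(s_k,t_k),I,I_R}=\sum_{J}b_{(s_k,t_k),I,J}$, invoking \Cref{lem:primitive2iterativesum} once per connected component of the \emph{uncovered} region; you instead use the complement identity $l_i(s,I_R)=(i_1-l_i(s))-\sum_{[a,b]\in I_R}\|[l_i(s),i_1]\cap[a,b]\|$ and invoke \Cref{lem:primitive2iterativesum} once per interval of the \emph{covered} region, with \Cref{lem:primitiveiterativesum} (affinely transformed) supplying the base term. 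Each decomposition has a small advantage: the paper's avoids the affine bookkeeping entirely (indeed you could replace your base term with a single call to \Cref{lem:primitive2iterativesum} on $J=[i_0,i_1]$, since $\|[l_i(s),i_1]\cap[i_0,i_1]\|=i_1-l_i(s)$), while your update step is cleaner, needing exactly one invocation for the added/removed interval $[a,b]$ rather than potentially removing the contribution of the containing uncovered component $J$ and adding back the contributions of the two fragments $J\setminus[a,b]$. One small imprecision to flag: the affine shift $l_i\mapsto i_1-l_i$ does not merely change two boundary $b$-entries; working out $b'_r=-b_r+C(r)-C(r+1)$ with $C(r)=(\mathbbm{1}_{(s_r,t_r)\in I}-\mathbbm{1}_{(s_{r+1},t_{r+1})\in I})i_1$ shows up to four entries shift by $\pm i_1$. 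This is still $O(1)$, so the asymptotics are unaffected, but the bookkeeping should say so. Both asymptotic counts ($O(\eps^{-2})$ nonzero entries initially, $O(\eps^{-2})$ touched per update, $O(\eps^{-2}\log n\log\eps^{-1})$ time per update) come out the same under either route.
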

\begin{proof}
    This is a immediate consequence of \Cref{lem:primitive2iterativesum} and  fact that $I_R$ consists of disjoint intervals and hence so does $[i_0,i_1]\setminus I_R$. Thus setting $m_{(s_k,t_k),I,I_R}=\sum_{J\in [i_0,i_1]\setminus I_R}m_{(s_k,t_k),I,J}$ and $b_{(s_k,t_k),I,I_R}=\sum_{J\in [i_0,i_1]\setminus I_R}b_{(s_k,t_k),I,J}$ implies the claim.
\end{proof}

\begin{lemma}\label{cor:finito}
    Let $I_R$ be a set of disjoint intervals each contained on a single edge of $P$. Let $e$ be a given edge and let $\sweepseq=\{(s_1,t_1),\ldots,(s_m,t_m)\}\in\edgeseqs{e}$. be a sweep-sequence. There are values $m_{-,(s,t),I_R}$,$m_{+,(s,t),I_R},b_{-,(s,t),I_R}$ and $b_{+,(s,t),I_R}$ for every $(s,t)\in \sweepseq$ such that for consecutive $(s_r,t_r),(s_{r+1},t_{r+1})$ in $\sweepseq$ it holds that
    \begin{align*}
        \sum_{(i,j)\in\nonprob(e[s_r,t_r])}&\left(\hat{r}_j(t_{r},I_R)-\hat{l}_i(s_{r},I_R)\right) = \sum_{(i,j)\in\nonprob(e[s_{r+1},t_{r+1}])}\left(\hat{r}_j(t_{r+1},I_R)-\hat{l}_i(s_{r+1},I_R)\right)\\
        &+ \left(\sum_{k\leq r}m_{+,(s_k,t_k),I_R}\right)(t_{r+1}-t_{r})+\sum_{k\leq s}b_{+,(s_k,t_k),I_R}\\
        &- \left(\sum_{k\leq r}m_{-,(s_k,t_k),I_R}\right)(s_{r+1}-s_{r})-\sum_{k\leq s}b_{-,(s_k,t_k),I_R}.
    \end{align*}
    \[\]
        If $I_R$ is the empty set then there are only $O(n\eps^{-2})$ non-zero such values, and they can be computed in $\O(n\eps^{-2} \log n\log\eps)$.
    Furthermore, if an interval $[a,b]\subset[i_0,i_1]$ is either added or removed to $I_R$ (maintaining disjointedness), then at most $\O(\eps^{-2})$ of these values change. Which values and how they change can be determined in $\O(\eps^{-2}\log n\log\eps^{-1})$ time.
\end{lemma}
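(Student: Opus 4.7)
The plan is to decompose the sum according to the piecewise interval structure produced by \Cref{cor:intervalset} and then apply \Cref{lem:iterativesum} separately to the $\hat{l}$-contributions (which move with $s$) and the $\hat{r}$-contributions (which move with $t$).

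First, by \Cref{cor:intervalset} applied to $\sweepseq$, one obtains $m=\O(n)$ index-pairs $(i_k,j_k)$ together with contiguous subsets $J_k\subset \sweepseq$ such that $\nonprob(e[s,t])=\bigcup_{k:(s,t)\in J_k}\{(i_k,j_k)\}$, and moreover the goodness/badness of each of $i_k$ and $j_k$ is constant throughout $J_k$. Consequently, on $J_k$ the function $\hat{l}_{i_k,e[s,t]}(s,I_R)$ coincides uniformly with either $l_{i_k}(s,I_R)$ or $r_{i_k}(s,I_R)$, and similarly $\hat{r}_{j_k,e[s,t]}(t,I_R)$ is uniformly one of $l_{j_k}(t,I_R)$ or $r_{j_k}(t,I_R)$. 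After possibly reversing the roles of left/right via the $\rev{\cdot}$ operation from \Cref{sec:sweepseq}, each such component falls into the setting of \Cref{lem:iterativesum}.

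Next, for each interval $J_k$ apply \Cref{lem:iterativesum} twice on the subinterval of $J_k$ on which $l_{i_k}(s)\neq\infty$ (resp.\ $r_{j_k}(t)\neq\infty$): once to obtain primitive values for the $\hat{l}_{i_k}$-contribution (which is tied to $s$-advancement and therefore contributes to $m_{-,(s,t),I_R}$ and $b_{-,(s,t),I_R}$) and once for the $\hat{r}_{j_k}$-contribution (which is tied to $t$-advancement and contributes to $m_{+,(s,t),I_R}$ and $b_{+,(s,t),I_R}$). Summing these contributions over all $k$ yields the claimed aggregate primitives; the boundary terms at the endpoints of $J_k$, which account for a pair entering or leaving $\nonprob$, are absorbed into the $b_\pm$ values (this is the same mechanism already used in the proof of \Cref{lem:iterativesum}, where the first and last sweep positions of the active subinterval get a corrective $b$-term so that the piecewise sum telescopes correctly). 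Correctness is then immediate from the additive structure of the sum over $\nonprob(e[s_r,t_r])$ together with the piecewise-linear identities provided by \Cref{lem:iterativesum}.

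For the counting claim, each $J_k$ contributes $\O(\eps^{-2})$ nonzero primitives per side by \Cref{lem:iterativesum}, so summing over the $\O(n)$ intervals $J_k$ gives $\O(n\eps^{-2})$ nonzero values in total. Computing the contributions for a single $J_k$ takes $\O(\eps^{-2}\log n\log\eps^{-1})$ time by \Cref{lem:iterativesum}; together with the $\O(n\log n)$ cost of producing the intervals $J_k$ (\Cref{cor:intervalset}) this gives the claimed $\O(n\eps^{-2}\log n\log\eps^{-1})$ bound.

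For the update, observe that adding or removing an interval $[a,b]\subset[i_0,i_1]$ can affect only those primitives whose underlying function is $l_i(\cdot,I_R)$ or $r_i(\cdot,I_R)$. By inclusion-wise maximality of the pairs in $\nonprob$ (\Cref{lem:symdiff}), at any sweep position the index $i$ appears in $\O(1)$ pairs $(i_k,j_k)$, and since $l_i,r_i$ are finite precisely on a single contiguous subrange of the sweep (by convexity of the free space in cell $i$), the pairs involving $i$ partition this subrange into $\O(1)$ consecutive $J_k$'s. Applying \Cref{lem:iterativesum}'s update bound on each such $J_k$ incurs $\O(\eps^{-2})$ primitive changes, identifiable in $\O(\eps^{-2}\log n\log\eps^{-1})$ time using the shoot-left and jump-right data structures of \Cref{lem:shootleft,lem:jumpright}; summing over the $\O(1)$ affected $J_k$'s gives the stated $\O(\eps^{-2})$ total update.

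The main obstacle is the final step: one must argue cleanly that although $i$ may participate in many pairs of $\nonprob$ across the whole sweep, structural consequences of maximality plus convexity of the $(1+\eps,\Delta)$-free space in cell $i$ restrict the pairs involving $i$ to a single contiguous block of the sweep, so that an insertion/deletion in $I_R$ localized to edge $i$ of $P$ touches only $\O(1)$ of the intervals $J_k$ and therefore only $\O(\eps^{-2})$ primitives.
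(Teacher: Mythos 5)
Your approach mirrors the paper's: apply \Cref{cor:intervalset} to obtain the $\O(n)$ intervals of the sweep, then invoke \Cref{lem:iterativesum} per (cell, interval) pair and aggregate the primitives. (The paper groups the intervals directly into $I_{i,-}, I_{i,+}$ indexed by the cell $i$, whereas you group by the index-pair $k$; these are equivalent decompositions.) The existence, counting, and preprocessing claims follow in the same way in both versions.

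The gap is exactly where you flag it, and your proposed resolution does not close it. From (a) at any single sweep position the index $i$ appears in $\O(1)$ pairs (maximality), and (b) the sweep positions where $l_i, r_i$ are finite form a single contiguous range (convexity of the cell), you conclude that the pairs involving $i$ span $\O(1)$ of the intervals $J_k$. This is a non-sequitur: (a) is a pointwise bound and (b) constrains only the domain, and neither prevents the second coordinate of a pair $(i,\cdot)$ from changing $\omega(1)$ times across the sweep, each change spawning a new $J_k$ whose first coordinate is $i$ (consider a tall narrow free space in cell $i$ facing $\Theta(n)$ small disconnected free-space components in later cells). Hence $|I_{i,-}| + |I_{i,+}|$ need not be $\O(1)$ for a fixed $i$, an insertion or deletion of $[a,b]\subset[i_0,i_1]$ can touch $\omega(\eps^{-2})$ primitives, and the stated worst-case per-update bound is not established by your reasoning. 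What does hold, and what \Cref{lem:finito} actually relies on, is the aggregate bound $\sum_i (|I_{i,-}|+|I_{i,+}|) = \O(n)$ from \Cref{cor:intervalset}, combined with the observation that $\proxyCov_{A_S}(r)$ contributes only $\O(1)$ new pieces to $I_R$ per edge of $P$, giving the claimed total maintenance time by amortization over all cells rather than a worst-case per-update guarantee. You should either prove the stronger structural claim that each cell participates in $\O(1)$ of the $J_k$ (which your current argument does not do), or reformulate the update statement and its proof as an amortized bound.
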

\begin{proof}
By \Cref{cor:intervalset} it holds that there are $\O(n)$ intervals in $\sweepseq$ partitioned into sets $I_{1,-},I_{1,+},I_{2,-},\ldots,I_{n,+}$ that can be determined beforehand in $\O(n\log n)$ time such that
\begin{align*}
    &\sum_{(i,j)\in\nonprob(e[s_r,t_r])}\left(\hat{r}_j(t_{r+1},I_R)-\hat{l}_i(s_{r+1},I_R)\right) \\
    & = \sum_{i=1}^n\left(\sum_{I\in I_{i,+}}\mathbbm{1}_{(s_r,t_r)\in I}r_i(t_r,I_R)-\sum_{I\in I_{i,-}}\mathbbm{1}_{(s_r,t_r)\in I}l_i(s_r,I_R)\right).
\end{align*}
This together with \Cref{lem:iterativesum} imply the claim.

\end{proof}

\begin{lemma}\label{lem:finito}
    One can evaluate $\|\proxyCov_{A_S}(e[s,t])\|$ for every Type $(II)$ and $(III)$ subcurve $e[s,t]$ of $S$ in total time $\O(n^2\eps^{-2}\log^2 n\log\eps^{-1})$. Further throughout the algorithm one can maintain a data structure in total time $\O(|R|n^2\eps^{-2}\log^2 n\log\eps^{-1})$ (for all $|R|$ rounds together) that at any point correctly computes
    \[\left\|\proxyCov_{A_S}(e[s,t])\setminus\bigcup_{r\in R}\proxyCov_{A_S}(r)\right\|,\]
    for the partial solutions in time
    $\O(n^2\log^2 n)$ for every $e[s,t]$ at once.
\end{lemma}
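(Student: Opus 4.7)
The plan is to exploit the identity established just before \Cref{cor:finito},
\[\|\proxyCov_{A_S}(e[s,t])\setminus I_R\|=\sum_{(i,j)\in\nonprob(e[s,t])}\bigl(\hat l_{i,e[s,t]}(s,I_R)+\hat r_{j,e[s,t]}(t,I_R)+L(i+1,j,I_R)\bigr),\]
split it into an $(s,t)$-dependent part (the first two summands, whose behaviour along a sweep-sequence is governed by \Cref{cor:finito}) and an $(s,t)$-independent part (the $L$-summand, which changes only when $\nonprob(e[s,t])$ changes), and maintain each separately. Since \Cref{lem:symdiff} bounds the symmetric difference of consecutive $\nonprob$-sets by $\O(1)$, the $L$-part changes at only $\O(n)$ events per sweep-sequence.

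For the initial evaluation (with $I_R=\emptyset$), for every edge $e$ of $S$ and every sweep-sequence $\sweepseq\in\edgeseqs{e}$ I would first invoke \Cref{cor:finito} to compute the $\O(n\eps^{-2})$ nonzero $m,b$-coefficients in $\O(n\eps^{-2}\log n\log\eps^{-1})$ time. Then I walk along $\sweepseq$ maintaining two running totals: $\Sigma_{\hat l\hat r}$, updated in $\O(1)$ per step from the telescoping formula of \Cref{cor:finito} using prefix sums of the $m,b$-coefficients, and $\Sigma_L:=\sum_{(i,j)\in\nonprob(e[s,t])}L(i+1,j,I_R)$, updated in $\O(\log n)$ per step by recomputing $L$ only for the constantly many pairs entering or leaving $\nonprob$ via one interval-tree query each (\Cref{lem:solution_datastructure2}). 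At every tuple the desired value is $\Sigma_{\hat l\hat r}+\Sigma_L$. Each sweep-sequence thus costs $\O(n\eps^{-2}\log n\log\eps^{-1}+n\log n)$, and summed over the $\O(n\log n)$ sweep-sequences across all edges the total is $\O(n^2\eps^{-2}\log^2 n\log\eps^{-1})$.

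For the maintenance across rounds, each curve $r$ added to $R$ contributes $\O(n)$ disjoint intervals to $I_R$, which I split at vertices of $P$ so that every resulting piece lies within a single cell $[i_0,i_1]$. By \Cref{cor:finito}, each such insertion or deletion changes only $\O(\eps^{-2})$ of the $m,b$-coefficients of any given sweep-sequence, and these can be located and applied in $\O(\eps^{-2}\log n\log\eps^{-1})$ per sweep-sequence; as there are $\O(n\log n)$ sweep-sequences in the instance, one piece costs $\O(n\eps^{-2}\log^2 n\log\eps^{-1})$ and one added curve $\O(n^2\eps^{-2}\log^2 n\log\eps^{-1})$. Summed over $|R|$ rounds this yields the claimed $\O(|R|n^2\eps^{-2}\log^2 n\log\eps^{-1})$; the interval-tree representation of $I_R$ (\Cref{lem:solution_datastructure2}) is simultaneously maintained in $\O(n\log n)$ per new curve, which is dominated. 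After each update, re-evaluating $\|\proxyCov_{A_S}(e[s,t])\setminus I_R\|$ for every Type-$(II)$ and $(III)$ subcurve at once proceeds by walking every sweep-sequence exactly as in the initial phase but using the freshly updated $m,b$: each walk costs $\O(n\log n)$ and the aggregate is $\O(n^2\log^2 n)$, matching the stated bound.

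The main technical burden is correctly bootstrapping $\Sigma_{\hat l\hat r}$ and $\Sigma_L$ at the start of each sweep-sequence. By \Cref{lem:allsweep} every sweep-sequence begins at a tuple of the form $(\eps_a,\eps_a)$, so each cell contributes a degenerate interval at that moment and the initial values of both running sums can be read off directly from $\nonprob(e[\eps_a,\eps_a])$ and the bad-set information of \Cref{lem:badsets} in $\O(n\log n)$ per sweep-sequence. With this starting state in place, the incremental updates described above apply verbatim and deliver the claimed bounds.
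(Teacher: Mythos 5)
Your proposal follows the same route as the paper, which disposes of the lemma in one line by citing \Cref{cor:finito} and the $\O(n\log n)$ bound on the total number of sweep-sequences; you correctly unpack what ``immediate consequence'' hides, namely the split of the per-pair contribution into the telescoping $\hat{l}/\hat{r}$-part governed by \Cref{cor:finito} and the $(s,t)$-independent $L$-part maintained through the $\O(n)$ updates to $\nonprob(\cdot)$ guaranteed by \Cref{lem:symdiff} and \Cref{thm:maintainSymbolic}, plus the bootstrapping at the degenerate start tuple from \Cref{lem:allsweep}. Your cost accounting (per-sweep invocation of \Cref{cor:finito}, per-step interval-tree queries via \Cref{lem:solution_datastructure2}, and the propagation of the $\O(\eps^{-2})$ coefficient changes across all $\O(n\log n)$ sweep-sequences for each of the $\O(n)$ interval insertions per round) matches both the stated bounds and the intended argument.
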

\begin{proof}
    This is an immediate consequence of \Cref{cor:finito} together with the fact that there are a total of $\O(n\log n)$ many different sweep-sequences.
\end{proof}

\subsection{Putting everything together}

\thmMainCoverage*

\begin{proof}
    This is an immediate consequence of \Cref{lem:evalType1} and \Cref{lem:finito} together with \Cref{thm:coverageReduction} and \Cref{thm:approxFreeSpace}.
\end{proof}

\section{Conclusions}

In this paper we extend the current research on  clustering of trajectories under the Fréchet distance by developing new algorithmic techniques for the two problems Subtrajectory Clustering and Subtrajectory Coverage Maximization. Notably, we did not further improve the size of the candidate size---which may  not be possible---but rather analysed other aspects of the underlying greedy set cover algorithm. We identified two new structures, namely the sweep-sequence and the proxy coverage which together enable efficient sweep algorithms that allow internal maintenance of the proxy coverage. This leads to the first deterministic cubic $(\O(\log n),\O(1))$-approximation algorithm for the SC problem. Depending on the size of the optimal solution, the running time may even be subcubic.

Conradi and Driemel \cite{conradi2023findingcomplexpatternstrajectory} observed that previous algorithms appear practical with real-world data. Compared to their work, ours incurs an additional factor of $4$ in the approximation guarantee for the solution size. It would be interesting to see whether our algorithm behaves similarly on real-world data and whether our techniques yield results of similar quality.

It is an open problem if any algorithm for the SC problem requires at least quadratic dependence in $n$. It is tempting to assume so due to the seemingly necessary computation of the $\Delta$-free space---a central tool for almost all Fréchet distance related problems. Recall that for the clustering formulation introduced by Buchin et al. in \cite{BuchinBGLL11} Gudmundsson et al. presented a tight conditional lower bound stating that determining the `largest' cluster takes at least \textit{cubic} (in $n$) time \cite{GudmundssonW22}. This cubic lower bound suggests that any greedy set cover algorithm also requires cubic time. Interestingly, our subcubic dependency in $n$ does not contradict a lower bound of the form `finding the largest cluster requires cubic time'. This is due to the fact that at no point we compute the `largest' cluster with respect to $\Cov(\cdot)$, but rather an `approximately largest' cluster with respect to $\proxyCov(\cdot)$. It would further be interesting to see whether non-trivial lower bounds for the SC and SCM problems exist or whether an approximation algorithm for SC with quadratic dependence in $n$ exists.

With regards to the SCM problem we observe that our algorithm is as fast as computing the free space in every round of the greedy algorithm. While it would be interesting to improve the dependence in $k$ and $l$, it is likely that the quadratic dependency in $n$ is tight in the realm of current techniques that rely on the explicit computation of the free space.

\bibliography{bibliography}{}

\begin{thebibliography}{10}

\bibitem{agarwal2018}
Pankaj~K. Agarwal, Kyle Fox, Kamesh Munagala, Abhinandan Nath, Jiangwei Pan, and Erin Taylor.
\newblock Subtrajectory clustering: Models and algorithms.
\newblock In {\em Proceedings of the 37th ACM SIGMOD-SIGACT-SIGAI Symposium on Principles of Database Systems}, PODS '18, page 75–87, 2018.
\newblock \href {https://doi.org/10.1145/3196959.3196972} {\path{doi:10.1145/3196959.3196972}}.

\bibitem{AltG95}
Helmut Alt and Michael Godau.
\newblock Computing the {F}r{\'{e}}chet distance between two polygonal curves.
\newblock {\em Int. J. Comput. Geom. Appl.}, 5:75--91, 1995.
\newblock \href {https://doi.org/10.1142/S0218195995000064} {\path{doi:10.1142/S0218195995000064}}.

\bibitem{Brüning2022Faster}
Frederik Br\"{u}ning, Jacobus Conradi, and Anne Driemel.
\newblock {Faster Approximate Covering of Subcurves Under the Fr\'{e}chet Distance}.
\newblock In {\em 30th Annual European Symposium on Algorithms (ESA 2022)}, volume 244 of {\em Leibniz International Proceedings in Informatics (LIPIcs)}, pages 28:1--28:16, Dagstuhl, Germany, 2022. Schloss Dagstuhl -- Leibniz-Zentrum f{\"u}r Informatik.
\newblock \href {https://doi.org/10.4230/LIPIcs.ESA.2022.28} {\path{doi:10.4230/LIPIcs.ESA.2022.28}}.

\bibitem{Akitaya2021Covering}
Frederik Brüning, Hugo Akitaya, Erin Chambers, and Anne Driemel.
\newblock Subtrajectory clustering: Finding set covers for set systems of subcurves.
\newblock {\em Computing in Geometry and Topology}, 2(1):1:1–1:48, Feb. 2023.
\newblock URL: \url{https://www.cgt-journal.org/index.php/cgt/article/view/7}, \href {https://doi.org/10.57717/cgt.v2i1.7} {\path{doi:10.57717/cgt.v2i1.7}}.

\bibitem{buchinGroup2017}
Kevin Buchin, Maike Buchin, David Duran, Brittany~Terese Fasy, Roel Jacobs, Vera Sacristan, Rodrigo~I. Silveira, Frank Staals, and Carola Wenk.
\newblock Clustering trajectories for map construction.
\newblock In {\em Proceedings of the 25th ACM SIGSPATIAL International Conference on Advances in Geographic Information Systems}, SIGSPATIAL '17, 2017.
\newblock \href {https://doi.org/10.1145/3139958.3139964} {\path{doi:10.1145/3139958.3139964}}.

\bibitem{BuchinBGHSSSSSW20}
Kevin Buchin, Maike Buchin, Joachim Gudmundsson, Jorren Hendriks, Erfan~Hosseini Sereshgi, Vera Sacrist{\'{a}}n, Rodrigo~I. Silveira, Jorrick Sleijster, Frank Staals, and Carola Wenk.
\newblock Improved map construction using subtrajectory clustering.
\newblock In {\em LocalRec'20: Proceedings of the 4th {ACM} {SIGSPATIAL} Workshop on Location-Based Recommendations, Geosocial Networks, and Geoadvertising, LocalRec@SIGSPATIAL 2020, November 3, 2020, Seattle, WA, {USA}}, pages 5:1--5:4, 2020.
\newblock \href {https://doi.org/10.1145/3423334.3431451} {\path{doi:10.1145/3423334.3431451}}.

\bibitem{BuchinBGLL11}
Kevin Buchin, Maike Buchin, Joachim Gudmundsson, Maarten L{\"{o}}ffler, and Jun Luo.
\newblock Detecting commuting patterns by clustering subtrajectories.
\newblock {\em International Journal of Computational Geometry and Applications}, 21(3):253--282, 2011.
\newblock \href {https://doi.org/10.1142/S0218195911003652} {\path{doi:10.1142/S0218195911003652}}.

\bibitem{buchinGroup20}
Maike Buchin, Bernhard Kilgus, and Andrea Kölzsch.
\newblock Group diagrams for representing trajectories.
\newblock {\em International Journal of Geographical Information Science}, 34(12):2401--2433, 2020.
\newblock \href {https://doi.org/10.1080/13658816.2019.1684498} {\path{doi:10.1080/13658816.2019.1684498}}.

\bibitem{conradi2023findingcomplexpatternstrajectory}
Jacobus Conradi and Anne Driemel.
\newblock Finding complex patterns in trajectory data via geometric set cover, 2023.
\newblock URL: \url{https://arxiv.org/abs/2308.14865}, \href {https://arxiv.org/abs/2308.14865} {\path{arXiv:2308.14865}}.

\bibitem{de2013fast}
Mark {de Berg}, Atlas~F. Cook, and Joachim Gudmundsson.
\newblock {F}ast {F}r{é}chet queries.
\newblock {\em Computational Geometry}, 46(6):747--755, 2013.
\newblock \href {https://doi.org/10.1016/j.comgeo.2012.11.006} {\path{doi:10.1016/j.comgeo.2012.11.006}}.

\bibitem{frederickson1984matrix}
Greg~N. Frederickson and Donald~B. Johnson.
\newblock Generalized selection and ranking: Sorted matrices.
\newblock {\em SIAM Journal on Computing}, 13(1):14--30, 1984.
\newblock \href {https://arxiv.org/abs/https://doi.org/10.1137/0213002} {\path{arXiv:https://doi.org/10.1137/0213002}}, \href {https://doi.org/10.1137/0213002} {\path{doi:10.1137/0213002}}.

\bibitem{GudmundssonV15}
Joachim Gudmundsson and Nacho Valladares.
\newblock A {GPU} approach to subtrajectory clustering using the {F}r{\'{e}}chet distance.
\newblock {\em {IEEE} Trans. Parallel Distributed Syst.}, 26(4):924--937, 2015.
\newblock \href {https://doi.org/10.1109/TPDS.2014.2317713} {\path{doi:10.1109/TPDS.2014.2317713}}.

\bibitem{abs-2110-15554}
Joachim Gudmundsson and Sampson Wong.
\newblock Cubic upper and lower bounds for subtrajectory clustering under the continuous {F}réchet distance, 2021.
\newblock \href {https://doi.org/10.48550/ARXIV.2110.15554} {\path{doi:10.48550/ARXIV.2110.15554}}.

\bibitem{GudmundssonW22}
Joachim Gudmundsson and Sampson Wong.
\newblock Cubic upper and lower bounds for subtrajectory clustering under the continuous fr{\'{e}}chet distance.
\newblock In Joseph~(Seffi) Naor and Niv Buchbinder, editors, {\em Proceedings of the 2022 {ACM-SIAM} Symposium on Discrete Algorithms, {SODA} 2022, Virtual Conference / Alexandria, VA, USA, January 9 - 12, 2022}, pages 173--189. {SIAM}, 2022.
\newblock \href {https://doi.org/10.1137/1.9781611977073.9} {\path{doi:10.1137/1.9781611977073.9}}.

\bibitem{ionescu2013human3}
Catalin Ionescu, Dragos Papava, Vlad Olaru, and Cristian Sminchisescu.
\newblock Human3.6m: Large scale datasets and predictive methods for 3d human sensing in natural environments.
\newblock {\em IEEE transactions on pattern analysis and machine intelligence}, 36(7):1325--1339, 2013.
\newblock \href {https://doi.org/10.1109/TPAMI.2013.248} {\path{doi:10.1109/TPAMI.2013.248}}.

\bibitem{krause11}
Andreas Krause and Daniel Golovin.
\newblock Submodular function maximization.
\newblock In Lucas Bordeaux, Youssef Hamadi, and Pushmeet Kohli, editors, {\em Tractability: Practical Approaches to Hard Problems}, pages 71--104. Cambridge University Press, 2014.
\newblock \href {https://doi.org/10.1017/CBO9781139177801.004} {\path{doi:10.1017/CBO9781139177801.004}}.

\bibitem{lee2002gait}
Lily Lee and W~Eric~L Grimson.
\newblock Gait analysis for recognition and classification.
\newblock In {\em Proceedings of Fifth IEEE International Conference on Automatic Face Gesture Recognition}, pages 155--162. IEEE, 2002.

\bibitem{LiangYWLCXL24}
Anqi Liang, Bin Yao, Bo~Wang, Yinpei Liu, Zhida Chen, Jiong Xie, and Feifei Li.
\newblock Sub-trajectory clustering with deep reinforcement learning.
\newblock {\em {VLDB} J.}, 33(3):685--702, 2024.
\newblock URL: \url{https://doi.org/10.1007/s00778-023-00833-w}, \href {https://doi.org/10.1007/S00778-023-00833-W} {\path{doi:10.1007/S00778-023-00833-W}}.

\bibitem{Nemhauser1978}
G.~L. Nemhauser, L.~A. Wolsey, and M.~L. Fisher.
\newblock An analysis of approximations for maximizing submodular set functions—{I}.
\newblock {\em Mathematical Programming}, 14(1):265--294, 1978.
\newblock \href {https://doi.org/10.1007/BF01588971} {\path{doi:10.1007/BF01588971}}.

\bibitem{Qiao2017RealtimeHG}
Sen Qiao, Y.~Wang, and J.~Li.
\newblock Real-time human gesture grading based on {O}pen{P}ose.
\newblock {\em 2017 10th International Congress on Image and Signal Processing, BioMedical Engineering and Informatics (CISP-BMEI)}, pages 1--6, 2017.
\newblock \href {https://doi.org/10.1109/CISP-BMEI.2017.8301910} {\path{doi:10.1109/CISP-BMEI.2017.8301910}}.

\bibitem{acmsurvey20}
Roniel S.~De Sousa, Azzedine Boukerche, and Antonio A.~F. Loureiro.
\newblock Vehicle trajectory similarity: Models, methods, and applications.
\newblock {\em ACM Comput. Surv.}, 53(5), September 2020.
\newblock \href {https://doi.org/10.1145/3406096} {\path{doi:10.1145/3406096}}.

\bibitem{vanderhoog2024fasterdeterministicsubtrajectoryclustering}
Ivor van~der Hoog, Thijs van~der Horst, and Tim Ophelders.
\newblock Faster and deterministic subtrajectory clustering, 2024.
\newblock URL: \url{https://arxiv.org/abs/2402.13117}, \href {https://arxiv.org/abs/2402.13117} {\path{arXiv:2402.13117}}.

\bibitem{VANSEBILLE201849}
Erik {van Sebille}, Stephen~M. Griffies, Ryan Abernathey, Thomas~P. Adams, Pavel Berloff, Arne Biastoch, Bruno Blanke, Eric~P. Chassignet, Yu~Cheng, Colin~J. Cotter, Eric Deleersnijder, Kristofer Döös, Henri~F. Drake, Sybren Drijfhout, Stefan~F. Gary, Arnold~W. Heemink, Joakim Kjellsson, Inga~Monika Koszalka, Michael Lange, Camille Lique, Graeme~A. MacGilchrist, Robert Marsh, C.~Gabriela {Mayorga Adame}, Ronan McAdam, Francesco Nencioli, Claire~B. Paris, Matthew~D. Piggott, Jeff~A. Polton, Siren Rühs, Syed~H.A.M. Shah, Matthew~D. Thomas, Jinbo Wang, Phillip~J. Wolfram, Laure Zanna, and Jan~D. Zika.
\newblock Lagrangian ocean analysis: Fundamentals and practices.
\newblock {\em Ocean Modelling}, 121:49--75, 2018.
\newblock URL: \url{https://www.sciencedirect.com/science/article/pii/S1463500317301853}, \href {https://doi.org/10.1016/j.ocemod.2017.11.008} {\path{doi:10.1016/j.ocemod.2017.11.008}}.

\bibitem{HotspotSurvey}
Yiqun Xie, Shashi Shekhar, and Yan Li.
\newblock Statistically-robust clustering techniques for mapping spatial hotspots: A survey.
\newblock {\em ACM Comput. Surv.}, 55(2), January 2022.
\newblock \href {https://doi.org/10.1145/3487893} {\path{doi:10.1145/3487893}}.

\end{thebibliography}
\bibliographystyle{plainurl}

\addtocounter{linenumber}{-1000}
\appendix

\section{Proofs of \Cref{sec:setsystem}}\label{apx:setsystemproofs}

\lemmaapxOne*
\begin{proof}
    Let $0\leq s\leq t\leq 1$ such that $\hat{\pi}=e[s,t]$. Let $\eps_1$ be the largest value among $\extremal{}(A_e)$ such that $\eps_1\leq s$. Let $\eps_2$ be the smallest value among $\extremal{}(A_e)$ such that $\eps_2\geq s$. Let similarly $\eps_3$ be the largest value among $\extremal{}(A_e)$ such that $\eps_3\leq t$ and $\eps_4$ be the smallest value among $\extremal{}(A_e)$ such that $\eps_4\geq t$. 
    
    Assume first that $\eps_2\leq \eps_3$ Observe that since there is no extremal point of $A_e$ between $s$ and $\eps_1$, and between $s$ and $\eps_2$, and there is no extremal point of $A_e$ between $t$ and $\eps_3$, and between $t$ and $\eps_4$, If there is a monotone path from some cell $i$ at height $s$ to cell $j$ at height $t$, then there are also monotone paths from cell $i$ at height $\eps_1$ (resp. $\eps_2$) to cell $j$ at height $\eps_3$ (resp. $\eps_4$). By the convexity of the free-space in every cell and the fact that the $y$-coordinates of the left-most points are also in $\extremal{}(A_e)$ it holds further, that $\min(l_i(\eps_1),l_i(\eps_2))\leq l_i(s)$ and $r_i(t)\leq \max(r_j(\eps_3),r_j(\eps_4))$, where $l_i$ and $r_j$ are defined as in \Cref{def:cellfunc}. And thus 
    \[\Cov_P(\hat{\pi},4\Delta)\subset \Cov_{A_S}(\{e[\eps_1,\eps_3],e[\eps_1,\eps_4],e[\eps_2,\eps_3],e[\eps_2,\eps_4]\}).\]
    If $s$ itself is an extremal coordinate, then $\eps_1=\eps_2$. Similarly if $t$ itself is an extremal coordinate, then $\eps_3=\eps_4$. This is the case if $\hat{\pi}$ is an edge-affix. In this case the coverage $\Cov_P(\hat{\pi},4\Delta)$ is a subset of either $\Cov_{A_S}(\{e[\eps_1,\eps_3],e[\eps_1,\eps_4]\})$ or $\Cov_{A_S}(\{e[\eps_1,\eps_3],e[\eps_2,\eps_3]\})$.
    
    If instead $\eps_2 > \eps_3$ then in particular $\eps_1 = \eps_3 < \eps_2=\eps_4$. But then since there is no extremal point of $A_e$ between $s$ and $\eps_1$, and between $s$ and $\eps_2$, and there is no extremal point of $A_e$ between $t$ and $\eps_3$, and between $t$ and $\eps_4$, If there is a monotone path from some cell $i$ at height $s$ to cell $j$ at height $t$, then there are also monotone paths from cell $i$ at height $\eps_1$ to cell $j$ at height $\eps_3$. There are further monotone paths from cell $i$ at height $\eps_1$ to cell $j$ at height $\eps_4$ and from cell $i$ at height $\eps_2$ to cell $j$ at height $\eps_4$. Lastly there is also a path from cell $i$ at height $\eps_3$ to cell $j$ at height $\eps_1$ that is monotonously increasing in the $x$-coordinate and monotonously \textit{decreasing} in the $y$-coordinate. But this implies that 
    \[\Cov_P(\hat{\pi},4\Delta)\subset \Cov_{A_S}(\{e[\eps_1,\eps_3],e[\eps_1,\eps_4],\rev{e[\eps_3,\eps_2]},e[\eps_2,\eps_4]\}).\]
    If $s$ itself is an extremal coordinate, then $\eps_1=\eps_2$. Similarly if $t$ itself is an extremal coordinate, then $\eps_3=\eps_4$. This is the case if $\hat{\pi}$ is an edge-affix. In this case the coverage $\Cov_P(\hat{\pi},4\Delta)$ is a subset of either $\Cov_{A_S}(\{e[\eps_1,\eps_3],e[\eps_1,\eps_4]\})$ or $\Cov_{A_S}(\{e[\eps_1,\eps_3],\rev{e[\eps_3,\eps_2]}\})$, which concludes the proof.
    
\end{proof}

\lemmaapxTwo*
\begin{proof}
    Let $\pi$ be given. By \Cref{lem:vanderhoogsplit} and \Cref{lem:foursplit} there is a set $S_\pi'$ consisting of either four edge-affixes starting and ending at values in $\extremal{}(e,P)$ of some edge $e$---that is to say, they are Type $(II)$-subcurves of $S$---and one vertex-vertex subcurve of $S$ of complexity $\ell$, or four subedges of $S$ starting and ending at values in $\extremal{}(e,P)$. It suffices to show that any vertex-vertex subcurve of $S$ of complexity at most $\ell$ is dominated by two $(I)$-subcurves, and any subedge of $S$ starting and ending at values in $\extremal{}(e,P)$ is dominated by two $(III)$-subcurves.

    A vertex-vertex subcurve $S[s,t]$ starting at vertex $i$ and ending at vertex $i+j$ of $S$ splits into two `overlapping' $(I)$-subcurves, namely $S[s,r]$ from vertex $i$ to vertex $r=i+2^{\lfloor\log_2(j)\rfloor}$ and $S[l,t]$ from vertex $l=j-2^{\lfloor\log_2(j)\rfloor}$ to vertex $j$. As $s\leq l\leq r\leq t$, any interval $[a,b]$ such that there is a monotone path from $(a,s)$ to $(b,t)$ similarly splits into $[a,r']$ and $[l',b]$, such that $a\leq l'\leq r'\leq b$ and there are monotone paths from $(a,s)$ to $(r',r)$ and from $(l',l)$ to $(b,t)$, and thus 
    \[[a,b]=[a,r']\cup[l',r]\subset\Cov_{A_S}(S[s,r])\cup\Cov_{A_S}(S[l,t]).\]
    A similar argument shows that any subedge splits into two `overlapping' $(III)$-subcurves, which concludes the proof.
\end{proof}
\end{document}